\newif\ifanonym
\theoremstyle{plain}
\newtheorem{theorem}{Theorem}[section]
\newtheorem{lemma}[theorem]{Lemma}
\newtheorem{claim}[theorem]{Claim}
\newtheorem{corollary}[theorem]{Corollary}
\newtheorem{assumption}[theorem]{Assumption}
\newtheorem{proposition}[theorem]{Proposition}
\theoremstyle{definition}
\newtheorem{definition}[theorem]{Definition}
\newtheorem*{invariant*}{Invariant}
\theoremstyle{remark}
\newtheorem{remark}[theorem]{Remark}
\newtheoremstyle{restate}{}{}{\itshape}{}{\bfseries}{~(Restated).}{.5em}{\thmnote{#3}}
\theoremstyle{restate}
\newtheorem*{restate}{}
\crefname{theorem}{Theorem}{Theorems}
\crefname{lemma}{Lemma}{Lemmas}
\crefname{claim}{Claim}{Claims}
\crefname{corollary}{Corollary}{Corollaries}
\crefname{assumption}{Assumption}{Assumptions}
\crefname{observation}{Observation}{Observations}
\crefname{question}{Question}{Questions}
\crefname{proposition}{Proposition}{Propositions}
\crefname{fact}{Fact}{Facts}
\crefname{definition}{Definition}{Definitions}
\crefname{invariant}{Invariant}{Invariants}
\crefname{remark}{Remark}{Remarks}
\crefname{equation}{Equation}{Equations}
\crefname{figure}{Figure}{Figures}
\newcommand{\R}{\mathbb{R}}
\DeclareMathOperator{\poly}{poly}
\DeclareMathOperator{\polylog}{polylog}
\DeclareMathOperator{\ddim}{ddim}
\newcommand{\diam}{\mathrm{diam}}
\newcommand{\eqdef}{\triangleq}
\renewcommand{\Comment}[1]{\textnormal{\color{OliveGreen} $\triangleright$ {#1}}}
\newcommand{\LHS}{\mathrm{LHS}}
\newcommand{\RHS}{\mathrm{RHS}}
\renewcommand{\mid}{\,\vert\,}
\newcommand{\bigmid}{\,\big\vert\,\big.}
\newcommand{\biggmid}{\,\bigg\vert\,\bigg.}
\newcommand{\eps}{\varepsilon}
\renewcommand{\epsilon}{\varepsilon}
\newcommand{\cost}{\mathrm{cost}_{z}}
\newcommand{\OPT}{\mathrm{OPT}_{z}}
\newcommand{\dist}{\mathrm{dist}}
\newcommand{\wmin}{\underline{w}}
\newcommand{\wmax}{\Bar{w}}
\newcommand{\dmin}{\underline{d}}
\newcommand{\dmax}{\Bar{d}}
\newcommand{\DSc}{{\sf c}}
\newcommand{\DSd}{{\sf d}}
\newcommand{\DScost}{{\sf cost}_{z}}
\newcommand{\DSloss}{{\sf loss}_{z}}
\newcommand{\DSvolume}{{\sf volume}}
\newcommand{\Initialize}{\textnormal{\texttt{initialize}}}
\newcommand{\InitializeSubClusterings}{\textnormal{\texttt{initialize-subclusterings}}}
\newcommand{\Insert}{\textnormal{\texttt{insert}}}
\newcommand{\InsertSubclustering}{\textnormal{\texttt{insert-subclustering}}}
\newcommand{\Delete}{\textnormal{\texttt{delete}}}
\newcommand{\DeleteSubclustering}{\textnormal{\texttt{delete-subclustering}}}
\newcommand{\SampleNoncenter}{\textnormal{\texttt{sample-noncenter}}}
\newcommand{\Cinitial}{C^{\sf init}}
\newcommand{\Cterminal}{C^{\sf term}}
\newcommand{\cinsert}{c^{\sf ins}}
\newcommand{\cdelete}{c^{\sf del}}
\newcommand{\Tinitialize}{T^{\sf init}}
\newcommand{\Tinsert}{T^{\sf ins}}
\newcommand{\Tdelete}{T^{\sf del}}
\newcommand{\LocalSearch}{\textnormal{\texttt{local-search}}}
\newcommand{\TestEffectiveness}{\textnormal{\texttt{test-effectiveness}}}
\newcommand{\cluster}{V}
\newcommand{\failure}{\textsf{failure}}
\newcommand{\CGnumber}{t}
\newcommand{\DSnumber}{s}
\newcommand{\TLSH}{T^{\sf LSH}}
\newcommand{\calB}{\mathcal{B}}
\newcommand{\calD}{\mathcal{D}}
\newcommand{\calE}{\mathcal{E}}
\newcommand{\calG}{\mathcal{G}}
\newcommand{\calH}{\mathcal{H}}
\newcommand{\calJ}{\mathcal{J}}
\newcommand{\calQ}{\mathcal{Q}}
\newcommand{\calT}{\mathcal{T}}
\newcommand{\calX}{\mathcal{X}}
\renewcommand{\Pr}{\operatorname{{\bf Pr}}}
\newcommand{\ProblemName}[1]{\textsc{#1}}
\newcommand{\kMedian}{\ProblemName{$k$-Median}\xspace}
\newcommand{\kMeans}{\ProblemName{$k$-Means}\xspace}
\newcommand{\kCenter}{\ProblemName{$k$-Center}\xspace}
\newcommand{\kzC}{\ProblemName{$(k,z)$-Clustering}\xspace}
\def\colorful{1}
\newcommand{\white}[1]{{{\color{white}#1}}}
\renewcommand{\tilde}[1]{\widetilde{#1}}
\renewcommand{\hat}[1]{\widehat{#1}}
\def\term{\@ifnextchar[\term@optarg\term@noarg}\def\term@optarg[#1]#2{\textup{#1}\def\@currentlabel{#1}\def\cref@currentlabel{[][2147483647][]#1}\cref@label[term]{#2}}
\def\term@noarg#1{\refstepcounter{termcounter}\textup{\thetermcounter}\cref@label[term]{#1}}
\newcommand{\argmin}{\mathop{\mathrm{argmin}}}
\newcommand{\notni}{\not\owns}
\renewcommand{\emptyset}{\varnothing}
\title{Local Search for Clustering in Almost-linear Time}
\date{}
\author{Anonymous Authors}
\author{
    Shaofeng H.-C. Jiang\thanks{Peking University.
    Email: \texttt{shaofeng.jiang@pku.edu.cn}}
    \and
    Yaonan Jin\thanks{Huawei. Email: \texttt{jinyaonan@huawei.com}}
    \and
    Jianing Lou
    \thanks{Peking University. Email: \texttt{loujn@pku.edu.cn}}
    \and
    Pinyan Lu\thanks{Shanghai University of Finance and Economics, Laboratory of Interdisciplinary Research of Computation and Economics (SUFE), \& Huawei. Email: \texttt{lu.pinyan@mail.shufe.edu.cn}}
 }
\begin{document}
\maketitle

\begin{abstract}
We propose the first {\em local search} algorithm for Euclidean clustering that attains an $O(1)$-approximation in almost-linear time.
Specifically, for Euclidean {\kMeans}, our algorithm achieves an $O(c)$-approximation in $\tilde{O}(n^{1 + 1 / c})$ time, for any constant $c \ge 1$, maintaining the same running time as the previous (non-local-search-based) approach [la~Tour and Saulpic, arXiv'2407.11217] while improving the approximation factor from $O(c^{6})$ to $O(c)$. 
The algorithm generalizes to any metric space with sparse spanners, delivering
efficient constant approximation in $\ell_p$ metrics, doubling metrics, Jaccard metrics, etc.

This generality derives from our main technical contribution: a local search algorithm on general graphs that obtains an $O(1)$-approximation in almost-linear time.
We establish this through a new $1$-swap local search framework featuring a novel swap selection rule.
At a high level, this rule ``scores'' every possible swap, based on both its modification to the clustering and its improvement to the clustering objective, and then selects those high-scoring swaps.
To implement this, we design a new data structure for maintaining approximate nearest neighbors with amortized guarantees tailored to our framework.
\end{abstract}

 \thispagestyle{empty}
\newpage

\tableofcontents
\thispagestyle{empty}
\newpage
\setcounter{page}{1}

\begingroup
\renewcommand{\cost}{\mathrm{cost}}

\section{Introduction}

Euclidean {\kMeans} clustering, a fundamental problem in combinatorial optimization, constitutes a central research direction in approximation algorithms.
Given an $n$-point dataset $X \subseteq \R^d$ in $d$-dimensional Euclidean space, the {\kMeans} objective seeks $k$ centers $C \subseteq \R^d$ minimizing:
\begin{equation*}
    \cost(X, C) ~\eqdef~ \sum_{x \in X} \min_{c\in C} \|x - c\|_2^2.
\end{equation*}

In this work, we study {\em efficient} approximation algorithms for Euclidean {\kMeans} through the lens of {\em fine-grained} complexity, aiming to achieve sub-quadratic running time in the general parameter regime, where both $k \in [n]$ and $d \ge 1$ are part of the input.
W.l.o.g., we may assume $d = O(\log n)$, as dimensionality reduction via the Johnson-Lindenstrauss Transform \cite{JL84, MMR19} incurs only a $(1 + \epsilon)$-factor approximation error.

For this general parameter regime $k \in [n]$ and $d \ge 1$, the tight tradeoff between the approximation ratio and running time is still open,
albeit some basic lower bounds are known:
On the one hand, when $d = \Omega(\log n)$, Euclidean {\kMeans} is APX-hard~\cite{AwasthiCKS15},
so a constant approximation is the best achievable.
On the other hand, despite a long line of research on constant-approximation algorithms in general metrics \cite{CharikarGTS99, CharikarG99, GuhaMMO00, JainV01, JainMS02, JainMMSV03,  MettuP03, AryaGKMMP04, MettuP04, GuptaT08, CharikarL12, LiS16, ByrkaPRST17, ANSW20, Cohen-AddadGHOS22, Cohen-Addad0LS23, GowdaPST23, CGLSO25}, which apply to our Euclidean setting,
the state-of-the-art running time remains $\tilde{O}(nk)$ \cite{MettuP04, DBLP:journals/siamcomp/Chen09}.
This bound is quadratic in the worst case, as $k$ can be linear in $n$.
Indeed, it is shown that any constant approximation for {\kMeans} in general metrics cannot run in sub-quadratic $o(n^2)$ time when $k = \Omega(n)$~\cite{MettuP04}.

A natural benchmark for clustering in high dimensions is the batch approximate nearest neighbor (ANN) problem -- the simpler task of assigning the given data points for {\em pre-specified} centers, regardless of optimization.
The state-of-the-art ANN tradeoff between accuracy and running time, which is attributed to the Locality Sensitive Hashing techniques~\cite{AndoniI06,AndoniR15} and thus called the ``LSH tradeoff'' hereafter, attains $O(c)$-approximate assignments in $n^{1 + 1 / c^2}$ time for any $c \ge 1$.\footnote{When $c = 1 + \eps$ is close enough to $1$, non-LSH techniques can improve the running time to $n^{2 - \Tilde{\Omega}(\eps^{1/3})}$ \cite{DBLP:conf/focs/AlmanCW16}.}
Note that this LSH tradeoff requires almost-linear time for a constant approximation.
Instead, whether this is attainable in near-linear time $n \polylog(n)$ is an open problem.

Regarding Euclidean \kMeans, whose objective involves squared distances,  
the LSH tradeoff translates to an $O(c)$-approximation in $n^{1 + 1 / c}$ running time.
To attain this benchmark, one approach is to incorporate a benchmark-matching construction of Euclidean spanners (e.g., \cite{HIS13}) into a near-linear time clustering algorithm on graphs (e.g., \cite{Thorup04}).
Yet, only one such graph algorithm \cite{Thorup04} has been known in the literature, and it was devised for different but related clustering problems, including \kMedian and \kCenter. Whether this algorithm can be extended to {\kMeans} remains unclear (albeit plausible).
Indeed, no explicit tradeoff for Euclidean {\kMeans} had been established until the recent work by la Tour and Saulpic \cite{lTS24}, namely an $O(c^6)$-approximation in $n^{1 + 1/c}$ time.
However, this result still fails to meet the LSH tradeoff.
In sum, a benchmark-matching tradeoff for Euclidean {\kMeans} is of fundamental interest but remains open.

\paragraph{Local search for clustering.} 
To achieve the said sub-quadratic/almost-linear fine-grained running time,
it requires us to revisit classic algorithmic techniques for clustering through the lens of efficiency.
Indeed, the study of clustering problems have inspired the development of various algorithmic techniques, 
including primal-dual~\cite{CharikarG99, JainV01}, LP rounding~\cite{CharikarGTS02}, 
and local search~\cite{AryaGKMMP04}. 
However, these techniques for clustering have been mostly studied in the general metric setting (which inherently requires $\Omega(n^2)$ time to achieve constant approximation as mentioned),
and the focus was more on the approximation ratio side instead of efficiency.
Whether or not these techniques can be adapted to sub-quadratic time is an interesting aspect that is not well understood.

Technically, this paper aims to further develop these techniques in terms of efficiency,
and we focus on the local search paradigm,
a fundamental algorithm design paradigm that has been widely used not only for clustering but also for other well-known problems such as
\ProblemName{Max-Cut} \cite{KT06},
\ProblemName{Steiner forest} \cite{0001G0MS0V18}, and
\ProblemName{Submodular Maximization} \cite{FilmusW14}.
For {\kMeans} problem specifically,
local search stands out as a versatile approach, applicable across a wide range of settings even beyond Euclidean spaces, including: 
(i) achieving a $(25+\epsilon)$-approximation in general metric spaces~\cite{GuptaT08}, with the ratio improved to $(9+\epsilon)$ for Euclidean metric spaces~\cite{KanungoMNPSW04}; (ii) offering PTAS in low-dimensional Euclidean spaces~\cite{Cohen-Addad18, Cohen-AddadKM19, FriggstadRS19}, doubling metric spaces~\cite{FriggstadRS19}, and shortest-path metrics on minor-free graphs~\cite{Cohen-AddadKM19}; and (iii) demonstrating its applicability in fully dynamic settings~\cite{BhattacharyaCGLP24, BhattacharyaCF24}.

However, despite this flourish study, no known local search algorithm for clustering achieves sub-quadratic running time, even in low-dimensional Euclidean spaces.
Hence, we aim to break this quadratic barrier for local search,
which not only helps to achieve the explicit LSH tradeoff for Euclidean {\kMeans},
but also may lead to new algorithmic insights that could benefit local search in general.

\subsection{Our results}

We propose a novel variant of $1$-swap local search that breaks the quadratic time barrier.
Notably, this local search leads to the first explicit result that realizes the LSH tradeoff for Euclidean {\kMeans}, stated in \Cref{thm:main-informal-Euclidean}.
More generally, this local search also yields new efficient clustering results beyond Euclidean spaces.

\begin{theorem}[Euclidean {\kMeans}; see \Cref{cor:lsh}]
    \label{thm:main-informal-Euclidean}
    For any constant $c \ge 1$, there is an algorithm that computes an $O(c)$-approximation for Euclidean {\kMeans} on a given $n$-point dataset with aspect ratio $\Delta > 0$,\footnote{The aspect ratio of a dataset is defined as the ratio between the maximum and minimum pairwise distances among the data points.}
    running in time $\tilde{O}(dn^{1 + 1 / c}\log(\Delta))$ and succeeding with high probability. 
\end{theorem}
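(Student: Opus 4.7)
The plan is to derive the Euclidean corollary as a short reduction to the paper's main result -- a general-graph $1$-swap local search that attains $O(1)$-approximation in almost-linear edge-time. Concretely, I would first construct a $t$-spanner $G = (X, E)$ of $(X, \ell_2)$ with stretch $t = \sqrt{c}$ and $|E| = \tilde{O}(n^{1+1/c})$, built in $\tilde{O}(d \cdot n^{1+1/c} \log \Delta)$ time via LSH-based spanner constructions (e.g., \cite{HIS13}). The stretch-vs-sparsity trade-off here is precisely the ``LSH tradeoff'' discussed in the introduction ($|E| = \tilde{O}(n^{1+1/t^{2}})$ for stretch $t$), and the $\log \Delta$ factor comes from the standard decomposition of distances into $O(\log \Delta)$ geometric scales plus one LSH-based NN structure per scale.

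Next, I would lift the $k$-Means objective onto the shortest-path metric $d_G$. Because $d_{\ell_2}(x,y) \leq d_G(x,y) \leq t \cdot d_{\ell_2}(x,y)$ for all $x,y \in X$, squaring yields $d_{\ell_2}(x,y)^2 \leq d_G(x,y)^2 \leq c \cdot d_{\ell_2}(x,y)^2$, so for every center set $C \subseteq X$
\begin{equation*}
\mathrm{cost}_{\ell_2}(X, C) \;\leq\; \mathrm{cost}_{G}(X, C) \;\leq\; c \cdot \mathrm{cost}_{\ell_2}(X, C),
\end{equation*}
where $\mathrm{cost}_{\ell_2}$ and $\mathrm{cost}_G$ denote the $k$-Means cost (squared distances) under $\ell_2$ and $d_G$ respectively. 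In particular, $\mathrm{OPT}_G \leq c \cdot \mathrm{OPT}_{\ell_2}$.

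Then I would invoke the paper's graph local search on $G$ with squared shortest-path cost to obtain $C \subseteq X$ with $\mathrm{cost}_G(X, C) \leq O(1) \cdot \mathrm{OPT}_G$ in time $\tilde{O}(|E|)$, up to $\polylog$ factors in $\Delta$ and $d$. Chaining the inequalities gives
\begin{equation*}
\mathrm{cost}_{\ell_2}(X, C) \;\leq\; \mathrm{cost}_G(X, C) \;\leq\; O(c) \cdot \mathrm{OPT}_{\ell_2},
\end{equation*}
the claimed $O(c)$-approximation; restricting centers to $X$ rather than $\R^d$ costs only a standard $O(1)$-factor for $k$-Means, so the guarantee also holds for continuous centers.

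The main obstacle is not this compositional argument, but the underlying general-graph local-search primitive itself: making a $1$-swap local search run in almost-linear edge-time while still guaranteeing $O(1)$-approximation requires both the novel swap-scoring selection rule and the amortized-ANN data structure that the paper introduces. Granting those, the Euclidean corollary follows by choosing $t = \sqrt{c}$ so that squaring the spanner stretch matches the target approximation ratio, and the three running times ($\tilde{O}(d \cdot n^{1+1/c} \log \Delta)$ for the spanner, $\tilde{O}(|E|)$ for local search, plus $\polylog$ overhead) combine into the stated bound.
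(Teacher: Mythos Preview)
Your proposal is correct and follows essentially the same route as the paper: build an LSH-based $O(\sqrt{c})$-stretch Euclidean spanner (\cite{HIS13}, restated as \Cref{thm:spanner-lsh}) with $\tilde O(n^{1+1/c})$ edges, run the paper's graph local search (\Cref{cor:general-graph}) on it, and chain the stretch-squared distortion with the $O(1)$ graph guarantee and the standard $O(1)$ loss for restricting centers to $X$. The only cosmetic differences are that the paper parameterizes by the spanner stretch $c$ and squares at the end (so \Cref{cor:lsh} states $O(c^{2})$ in time $n^{1+1/c^{2}+o(1)}$), and that the graph primitive actually incurs an $m^{o(1)}=2^{O(\sqrt{\log n\log\log n})}$ overhead rather than a true $\polylog$, which the informal $\tilde O$ in \Cref{thm:main-informal-Euclidean} is absorbing.
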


We note that \Cref{thm:main-informal-Euclidean} (and all of our results presented here) work more generally to the {\kzC} problem (formally defined in \Cref{sec:prelim}), whose objective takes the $z$-th power sum of distances and
encompasses both {\kMedian} ($z = 1$) and {\kMeans} ($z = 2$).
For general {\kzC} in Euclidean space, the ratio-time tradeoff of \Cref{thm:main-informal-Euclidean} becomes $O(c^z)$ versus $\tilde{O}(dn^{1+1/c^2} \log \Delta)$.
Hence, our algorithm not only improves upon the previous $O(c^{6z})$-approximation in the same time regime~\cite{lTS24}, but also matches and generalizes the state-of-the-art for Euclidean {\kMedian} ($z = 1$), achieved by combining the graph {\kMedian} algorithm~\cite{Thorup04} with spanners~\cite{HIS13}.

Beyond the worst-case of $k = \Theta(n)$, our algorithm is also useful for moderately large values of $k = n^{1 - \epsilon}$ for any $0 <\epsilon < 1$, by running on top of coresets~\cite{Har-PeledM04, Har-PeledK07, DBLP:journals/siamcomp/Chen09, FeldmanL11, DBLP:conf/focs/SohlerW18, FeldmanSS20, HuangV20, Cohen-AddadSS21, Cohen-AddadLSS22, Cohen-AddadLSSS22, DraganovSS24, Huang0024, 0001CPSS24, Cohen-AddadD0SS25}.
Specifically, we achieve constant $O(\epsilon^{-1})$-approximation in {\em near-linear} time $\tilde{O}(nd + k^{1+\epsilon}) = \tilde{O}(nd)$,
by first constructing a coreset with size $\tilde{O}(k)$\footnote{
    Such coresets can be constructed in time $\tilde{O}(nd)$ for constant approximation, see e.g.,~\cite{DraganovSS24}.
} and then
applying our algorithm with $c = \epsilon^{-1}$.
We emphasize that previous $O(nk)$-time algorithms, even combined with coresets,
achieve near-linear time only when $k \le \tilde{O}(\sqrt{n})$,
whereas ours yields constant approximation in near-linear time for $k = n ^{1 - \epsilon}$ for {\em full range} of $\epsilon \in (0, 1)$.\footnote{While the recent almost-linear constant approximation of~\cite{lTS24} may yield similar results,
it however offers a worse ratio, i.e., $O(\epsilon^{-6})$-approximation in near-linear time.}

\paragraph{Beyond Euclidean spaces.}

\begin{table}[t]
    \centering
    \begin{tabular}{llll}
        \toprule 
        metric space & ratio & running time & reference \\
        \midrule
        Euclidean $\R^{d}$          & $O(c)$        & $dn^{1 + 1/c}$ & \Cref{cor:lsh} \\
                                    & $O(c^{6})$    & $dn^{1 + 1 / c}$ & \cite{lTS24} \\
graphs with $m$ edges & $O(1)$ & $m^{1 + o(1)}$ & \Cref{cor:general-graph}\\
        $\ell_{p}$ in $\R^{d}$, $\forall 1 \le p < 2$  & $O(c^{2})$ & $dn^{1 + 1/c}$ & \Cref{cor:lsh} \\
        doubling dimension $\ddim$  & $O(1)$        & $\Tilde{O}(2^{O(\ddim)} n)$ & \Cref{cor:doubling}  \\
                                    & $1 + \eps$ & $\Tilde{O}(2^{(1 / \eps)^{O(\ddim^{2})}}n)$ & \cite{Cohen-AddadFS21} \\
        Jaccard ($V = 2^U$)         & $O(c^{2})$    & $n^{1+ 1/c}\poly(|U|)$ & \Cref{cor:lsh}\\
\bottomrule
    \end{tabular}
    \caption{\label{tab:result}A summary of efficient algorithms for {\kMeans} in various families of metric spaces. 
    For ease of presentation, all bounds shown in this table omit $\log(\Delta)$ factors. This dependence on the aspect ratio $\Delta$ can be reduced/removed in certain metric spaces, such as doubling spaces; see \Cref{subsec:metric-clustering} for more details.}
\end{table}

In fact, our \Cref{thm:main-informal-Euclidean} for Euclidean {\kMeans} is a corollary
of a more general technical result: an almost-linear time constant approximation for {\kMeans} on a shortest-path metric of a weighted graph (see \Cref{thm:main-informal} which we discuss in more detail later).
The Euclidean result is obtained by running this graph clustering algorithm on a sparse metric spanner~\cite{HIS13}.
Moreover, this approach generalizes to other metric spaces as long as sparse spanners exist.
We give a summary of the various results we have in \Cref{tab:result},
and we highlight the following notable ones.
\begin{itemize} 
    \item (\Cref{cor:lsh}) Metric spaces that admit LSH. This includes $\ell_{p}$ metrics, $\forall 1 \le p < 2$, and Jaccard metrics. For these spaces, the construction from \cite{HIS13} for Euclidean spaces can be generalized and yields competitive parameters; see also \Cref{cor:metric-lsh} for a general LSH-parameterized statement.
    \item (\Cref{cor:doubling}) Metric spaces with $n$ points and doubling dimension $\ddim\ge 1$. We can plug in the spanner construction from, e.g.,~\cite{Har-PeledM06, Solomon14}, to obtain a constant-factor approximation for {\kMeans} in time $\Tilde{O}(2^{O(\ddim)} n)$. It is noteworthy that this running time only has a {\em singly exponential} dependence on $\ddim$, so our algorithm complements the $\mathrm{EPTAS}$ by \cite{Cohen-AddadFS21}, i.e., a better $(1+\epsilon)$-approximation but in worse time $\Tilde{O}(2^{(1 / \eps)^{O(\ddim^{2})}} n)$.
\end{itemize}

The following \Cref{thm:main-informal} presents our (technical) result for {\kMeans} on general weighted graphs $G = (V, E, w)$.
In this statement, the factor $m^{o(1)}$ in the running time is given by $2^{O(\sqrt{\log m \log\log m})} = 2^{O(\sqrt{\log n \log\log n})}$, which is smaller than any polynomial $\poly(n)$ but larger than polylogarithms $\polylog(n)$.

\begin{theorem}[\kMeans on graphs; see \Cref{cor:general-graph}]
\label{thm:main-informal}
There is an $O(1)$-approximation for {\kMeans} on (the shortest-path metric of) $m$-edge weighted undirected graphs 
that runs in time $m^{1 + o(1)}$ and succeeds with high probability.
\end{theorem}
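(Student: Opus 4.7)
The plan is to design a 1-swap local search whose correctness rests on an Arya-et-al.-style potential argument generalized to squared (more generally, $z$-th power) distances, and whose per-iteration cost together with the total number of iterations are both driven down to $m^{1+o(1)}$ via a scored, batched swap rule and a tailored approximate-nearest-neighbor (ANN) data structure.

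First I would pass from the input graph to a sparse low-stretch (hop-bounded) spanner of size $m^{1+o(1)}$ so that nearest-center queries and assignment updates can be routed through an auxiliary structure that distorts $\cost$ by only a constant factor. On top of this I would initialize with a cheap bicriteria or $O(1)$-approximate solution $C$ of size $O(k)$, maintain an assignment $\sigma : V \to C$ returned by the ANN data structure, and support the two primitives ``insert a candidate center'' and ``delete an existing center''. The ANN data structure must satisfy an \emph{amortized} guarantee: although a single center insertion/deletion can reassign many vertices, the total reassignment work summed over the whole execution of the local search stays within $m^{1+o(1)}$.

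Next I would implement the scored 1-swap local search. For each candidate swap $(c \to x)$, define a score that combines (i) the estimated decrease in $\cost$ from replacing $c$ with $x$ and (ii) the number of points whose assigned center would change if the swap were executed. High-scoring swaps are executed in batched rounds, with many disjoint high-scoring swaps processed simultaneously, and executions are organized into geometric buckets of score so that only $\polylog(n)$ rounds are needed. Once no remaining swap exceeds a small threshold score, a generalization of the Arya-et-al.\ potential argument to amortized approximate swaps (which already appears in the existing local-search literature for $z=2$) certifies that $C$ is an $O(1)$-approximation; combined with the spanner distortion, this yields the claimed $O(1)$-approximation on the original graph.

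The main obstacle is the amortization: a single executed swap could in principle reassign $\Theta(n)$ points, yet the sum of reassignments across the entire run must remain $m^{1+o(1)}$. I plan to resolve this by charging each reassignment to a commensurate decrease in $\cost$, using the score itself as the accounting device that forces every executed swap to ``pay for'' the ANN queries and reassignments it triggers; here the novelty of the scoring rule is precisely that it couples cost improvement with modification extent, so a high-scoring swap cannot be too expensive in reassignments relative to the cost decrease it delivers. The remaining pieces---building the ANN data structure with the required amortized guarantee, tuning the score threshold so that scored-local-optimality implies $O(1)$-approximation, and verifying the generalized Arya-style inequality for $z$-th power distances under approximate swaps---are where the bulk of the technical work lies.
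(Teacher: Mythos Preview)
Your high-level framework---a $1$-swap local search whose selection rule couples cost improvement with reassignment load, implemented over an ANN data structure with amortized guarantees---matches the paper's approach in spirit. But two concrete pieces diverge from the paper, and one of them is a genuine gap.

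\textbf{The gap: how do you find a high-scoring swap?} You write ``for each candidate swap $(c \to x)$, define a score'' and then execute high-scoring ones in batched rounds. But there are $\Theta(nk)$ candidate swaps, and you never say how to locate a high-scoring one without enumerating them. The paper's answer here is a key technical contribution you are missing: it does \emph{not} search over swap pairs. Instead, it samples the swap-in point $\cinsert$ via $D^2$-sampling (probability proportional to $\DSd^z[v]$, supported by the ANN structure in $O(\log n)$ time), and then proves (Lemma~4.5) that with probability $\geq \eps^{4z}$ such a $\cinsert$ admits \emph{some} swap-out $\cdelete$ making the pair ``super-effective.'' Given $\cinsert$, the swap-out is found by a single insertion into the data structure followed by enumerating $O(\log n)$ geometric buckets of centers (bucketed by deletion-volume, not by score) and taking the $\DSloss$-minimizer in each. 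Without a mechanism like this, your plan has no sub-quadratic way to locate the swap to execute.

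\textbf{A different route: batching versus sequential swaps.} You propose executing many disjoint high-scoring swaps simultaneously and bounding the number of rounds by $\polylog(n)$. The paper does neither: it performs one swap per iteration and allows up to $\tilde O(\eps^{-1}m)$ iterations (Equation~(4.3)). What keeps the total work at $m^{1+o(1)}$ is not a small iteration count but the super-effective rule $\frac{\DScost''}{\DScost}\le 1-(\eps/2)\cdot\DSvolume'[\cdelete]$, which forces the sum of deletion-volumes over all iterations to be $\tilde O(1)$, and a potential $\Phi=\sum \deg(v)\log(1+\DSd_J[v])$ that charges insertion time against deletion-volume. Your batching idea could perhaps be made to work, but you would need to show (i) multiple simultaneous swaps do not interfere in the Arya-style analysis and (ii) the ANN structure supports a batch update with the same amortized cost; neither is addressed.

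\textbf{Minor point.} For the graph setting the paper does not build a spanner (which would distort distances); it adds a \emph{hopset} (Elkin--Neiman) to the input graph, preserving shortest-path distances exactly while reducing the hop-diameter to $\beta=2^{O(\sqrt{\log n\log\log n})}$. The hop bound is what makes the edge-relaxation invariant (Invariant~A2) imply a $(1+\eps)$-ANN (Lemma~3.3). Your ``low-stretch spanner'' would introduce an extra constant-factor distortion you do not need here.
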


As a standalone result, \Cref{thm:main-informal} establishes the first explicit algorithm (thus the first local search algorithm) for {\kMeans} on shortest-path metrics with an $O(1)$-approximation in almost-linear time, and the result also extends to general {\kzC}.
We also provide concrete approximation ratios for {\kMeans} and {\kMedian} (instead of a generic $O(1)$ bound) in \Cref{remark:general-graph}, which are $\approx 44 + 16\sqrt{7} \approx 86.33$ and $\approx 6$, respectively.
Compared with best known approximation ratios in general metrics through $1$-swap local search algorithms, our new algorithm only incurs moderate degeneration, such as $86.33$ versus $81$ for {\kMeans}  \cite{GuptaT08}\footnote{For {\kMeans} in Euclidean spaces, \cite{KanungoMNPSW04} provides an improved ratio of $25$ for $1$-swap local search, which relies on properties specific to Euclidean space and is thus incomparable to our graph result.}
and $6$ versus $5$ for {\kMedian} \cite{AryaGKMMP04}.
In contrast, the speedup in running time -- from high-degree polynomial to almost-linear -- may be more significant.
Compared with the previous graph {\kMedian} algorithm \cite{Thorup04} (which does not use local search), our ratio of $6$ outperforms their ratio of $9$, although their running time $\tilde{O}(m)$ is slightly better.

\subsection{Technical contributions}
\label{sec:tech_overview}

Our main technical contribution is a new $1$-swap local search framework,
featuring a novel swap selection rule that explicitly relates the computational overhead of a swap to the improvement of the cost function.
We first present the new local search framework in \Cref{sec:techoverview:super-effective-swap},
where we focus on an intermediate complexity measure called clustering recourse,
which is a benchmark for the swap selection rule.
Our local search is the first to achieve a near-linear recourse bound, and this justifies the novelty and effectiveness of our selection rule.
Then to prove \Cref{thm:main-informal},
we discuss in \Cref{sec:techoverview:implementation-graph}
a dynamic approximate near neighbor data structure in shortest-path metrics of graphs
with an amortized complexity guarantee that is tailored to our local search framework.

\subsubsection{Local search with super-effective swaps}
\label{sec:techoverview:super-effective-swap}

Our local search framework works for any metric space $(V, \dist)$ (not necessarily Euclidean),
and we specifically consider the simple yet fundamental $1$-swap local search,
which has also been studied in e.g. ~\cite{AryaGKMMP04, KanungoMNPSW04, GuptaT08}.
Given an input dataset $X \subseteq V$, 
$1$-swap local search starts with an initial, say, $\poly(n)$-approximate solution $C \in V^k$, 
and iteratively refines it by selecting a center swap $(\cinsert, \cdelete) \in (V \setminus C) \times C$ (according to some rule) 
and updating $C \gets C \cup \{\cinsert\} \setminus \{\cdelete\}$, 
until reaching a (near) local optimum that guarantees an $O(1)$-approximation.  

To implement a $1$-swap local search, one often needs an auxiliary data structure to maintain the (approximate) clustering $\calX$ upon the change of center set in each iteration.
This can be a nontrivial task and even dominates the running time of the local search.
At a high level, our local search framework takes this running time into account, and we devise swap selection rules such that the overhead from maintaining the clustering is optimized.

\paragraph{Clustering oracle.}
Specifically, our local search is designed with respect to {\em any} given data structure that approximately maintains the clustering,
and in this discussion, we assume it is given as an oracle.
This oracle takes as input the swap sequence generated by local search, sequentially handles each center insertion/deletion, and 
maintains an assignment such that each data point is assigned to an {\em (approximate) nearest neighbor} in $C$.
For technical reason,
we further assume the oracle works in a natural well-behaved manner as follows.
\begin{enumerate}[label=(\alph*)]
    \item\label{behavior:insert} Upon inserting a non-center $\cinsert \in V \setminus C$ into $C$,
    the oracle ``lazily'' updates the clustering; that is, it reassigns a point $x \in X$ to $\cinsert$ (and to no other center) if its distance to its current center is larger than its distance to $\cinsert$ by a factor $ > 1+\epsilon$. Here $\epsilon > 0$ is a parameter.
    \item\label{behavior:deletion} Upon deleting a center $\cdelete \in C$, the oracle only reassigns points currently assigned to $\cdelete$.
\end{enumerate}
These two behaviors are natural to consider, as they minimize unnecessary reassignments while ensuring that the clustering remains $(1+\epsilon)$-accurate. 
Moreover, these natural conditions/properties turn out to be very useful in designing efficient clustering oracles,
as we will discuss in \Cref{sec:techoverview:implementation-graph}.

\paragraph{Clustering recourse.}
For the sake of presentation,
we focus the discussion on an intermediate performance measure called {\em clustering recourse}.
The clustering recourse of a local search (with respect to a clustering oracle) is defined as the total number of data point reassignments performed by the oracle as a result of the swaps selected by the local search algorithm,
ignoring the additional overhead in running time for maintaining this assignment.
A bounded clustering recourse is a necessary condition for bounded running time.
Moreover,
this is useful
for specifically benchmarking the swap selection rule,
as the reassignment is a direct reflection of the work introduced by the swap,
and the recourse isolates this complexity from the running time.
To the best of our knowledge, 
the swap selection rules employed by previous local search algorithms~\cite{AryaGKMMP04,KanungoMNPSW04, GuptaT08,Cohen-Addad18, Cohen-AddadKM19, FriggstadRS19, LattanziS19, ChooGPR20} are only known to achieve $O(n^2)$ recourse,
whereas ours is the first to achieve near-linear recourse.

\paragraph{Our swap selection rule.} 
In this setting, our new selection rule guides $1$-swap local search to choose a center swap based on the current state of the clustering oracle,
and the rule allows to choose $(\cinsert, \cdelete)$ only if
it satisfies the following condition which we call 
{\em super-effective} rule.
\begin{align}
    \label{eq:techoverview:super-effective}
    \frac{\cost(X, C \cup \{\cinsert\} \setminus \{\cdelete\})}{\cost(X, C)}
    ~\le~ \exp\big(-|\calX(\cdelete)| / n\big).
\end{align}
Here, $\calX(\cdelete)$ denotes the set of points assigned to the center $\cdelete$ in the clustering $\calX$ currently maintained by the oracle.
Importantly, due to Property~\ref{behavior:deletion}, $|\calX(\cdelete)|$ is exactly the  clustering recourse incurred by the deletion of $\cdelete$.

This super-effective Rule~\eqref{eq:techoverview:super-effective} intuitively relates the objective improvement to the (one-step) clustering recourse. Notably, the rule only accounts for the recourse caused by the deletion of $\cdelete$ (which we refer to as deletion-recourse), while ignoring the recourse caused by the insertion of $\cinsert$ (insertion-recourse); this is fine since, as shown soon after, the latter can be ``charged'' for the former.
There is also a technical reason for excluding the insertion-recourse in Rule~\eqref{eq:techoverview:super-effective}: the insertion-recourse is inherently {\em unpredictable}, as its value (the size of the newly formed cluster) can only be determined after the swap is performed, making it impossible to anticipate when selecting a swap. In contrast, the deletion-recourse is simply the size of the cluster being deleted, which is always known in advance.

We next explain how Rule~\eqref{eq:techoverview:super-effective} ensures near-linear clustering recourse, by separately bounding the total insertion-recourse and total deletion-recourse.
Suppose the $1$-swap local search performs a series $(\cinsert_1, \cdelete_1)$, 
$(\cinsert_2, \cdelete_2)$, $\ldots$ 
of super-effective swaps 
(such that each
$(\cinsert_i, \cdelete_i)$ satisfies Rule~\eqref{eq:techoverview:super-effective}).
Let $T_i$ denote the deletion-recourse w.r.t. $\cdelete_i$; by Property~\ref{behavior:deletion}, this equals the current cluster size $|\calX(\cdelete_i)|$.
The super-effectiven Rule~\eqref{eq:techoverview:super-effective} ensures that
\begin{equation*}
    \frac{1}{\poly(n)} ~\le~\frac{\cost(X, \Cterminal)}{\cost(X, \Cinitial)}
    ~\le~
    \exp\Big(-\sum_{i} T_{i} / n\Big).
\end{equation*}
Therefore, the total deletion-recourse is bounded by $\sum_{i} T_{i} = \Tilde{O}(n)$.
As for the total insertion-recourse, let us consider a single point $x\in X$. The ``lazy'' update strategy (Property~\ref{behavior:insert}) ensures that the point $x$ only experiences $O(\log_{1+\epsilon}\Delta) = O(\epsilon^{-1}\log\Delta)$ insertion-reassignments between every two deletion-reassignments, where $\Delta$ is the aspect ratio. As a result, up to an overhead factor of only $O(\epsilon^{-1} \log \Delta)$, the total insertion-recourse can be ``charged'' for the total deletion-recourse.

\paragraph{Comparison with other selection rules.}
Many previous local search algorithms \cite{AryaGKMMP04,KanungoMNPSW04,GuptaT08,Cohen-Addad18, Cohen-AddadKM19, FriggstadRS19,LattanziS19, ChooGPR20} adopt a simple selection rule that considers a swap $(\cinsert,\cdelete)$ selectable if it improves the objective by a {\em fixed} factor, specifically satisfying:
\begin{equation}
    \label{eq:techoverview:simple-rule}
    \frac{\cost(X, C \cup \{\cinsert\} \setminus \{\cdelete\})}{\cost(X, C)}
    ~\le~ 1 - \frac{1}{\poly(n)}.
\end{equation}
In contrast, our Rule~\eqref{eq:techoverview:super-effective} is more {\em adaptive}, as it incorporates the deletion-recourse into the selection criterion.
Under Rule~\eqref{eq:techoverview:simple-rule}, the clustering recourse of local search can be bounded only by the product of a) the total number of swaps and b) the worst-case number of reassignments per swap, leading to a quadratic bound.
Another line of work \cite{Alimonti96, KhannaMSV98, FilmusW12, FilmusW14, 0001G0MS0V18, Cohen-AddadGHOS22} investigates {\em non-oblivious} local search algorithms, which employ selection rules that are based on well-crafted variants of the clustering objective, rather than the naive $\cost(X, C)$, and can lead to better approximation ratios for {\kMedian} \cite{Cohen-AddadGHOS22}.
Still, these approaches provide no guarantee on clustering recourse.
In sum, our super-effectiveness Rule~\eqref{eq:techoverview:super-effective} is the first in the local search literature that achieves a near-linear clustering recourse.
We hope that the insight on our selection rule can benefit future local search algorithms in general.

\paragraph{Existence of super-effective swaps.} 
The final component of our framework is to ensure the existence of super-effective swaps, and we establish the following claim:

\begin{claim}[Informal; see \Cref{lem:sample}]
    \label{claim:techoverview:sample}
    If the current solution $C$ does not achieve some target constant approximation ratio, then for a random non-center $\cinsert \in X$ sampled via $D^2$-sampling \cite{ArthurV07}, there exists, with constant probability, a center $\cdelete \in C$ such that $(\cinsert, \cdelete)$ satisfies the super-effective Rule~\eqref{eq:techoverview:super-effective}.
\end{claim}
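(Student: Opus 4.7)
The plan is to combine the classical $1$-swap pair analysis of Arya et al.~\cite{AryaGKMMP04} with a $D^{2}$-sampling substitution in the style of \cite{ArthurV07}, and then extract a super-effective swap via a density/averaging argument relating the per-pair gain to the deletion cluster size $|\calX(\cdelete)|$. Fix an optimal solution $C^{*} = \{o^{*}_1, \ldots, o^{*}_k\}$ with induced clusters $X^{*}_1, \ldots, X^{*}_k$, and assume $\cost(X, C) \ge \alpha \cdot \OPT$ for a suitably large constant $\alpha$; otherwise the claim is vacuous because $C$ already meets the target approximation ratio. Applying the standard one-to-one/one-to-many pairing, assign to each $o^{*}_i$ a deletion candidate $\cdelete_i \in C$, with each $c \in C$ appearing $O(1)$ times across the pairs, yielding the aggregate inequality
\[
    \sum_i g_i \;\ge\; \bigl(1 - O(1/\alpha)\bigr)\,\cost(X, C), \qquad g_i \;\eqdef\; \cost(X, C) - \cost\bigl(X, C \cup \{o^{*}_i\} \setminus \{\cdelete_i\}\bigr),
\]
together with the per-pair bound $g_i \le \cost(X^{*}_i, C)$ arising because the only source of improvement in swap $i$ is reassigning the optimal cluster $X^{*}_i$ to $o^{*}_i$.

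Next, sample $\cinsert$ with probability proportional to $\cost(x, C)$. Conditioned on $\cinsert \in X^{*}_i$, an event of probability $\cost(X^{*}_i, C)/\cost(X, C)$, the Arthur-Vassilvitskii sampling lemma gives $\cost(X^{*}_i, \{\cinsert\}) = O(\cost(X^{*}_i, C^{*}))$ with constant probability, so substituting $\cinsert$ for $o^{*}_i$ in the paired swap degrades the gain by only $O(\cost(X^{*}_i, C^{*}))$, whose $i$-sum is $O(\OPT)$ and is absorbed by the slack $\cost(X, C)/\alpha$. Write $s_i \eqdef |\calX(\cdelete_i)|$, and note $\sum_i s_i = O(n)$ because each $c \in C$ is used $O(1)$ times. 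Call index $i$ \emph{super-effective} if the resulting swap $(\cinsert, \cdelete_i)$ satisfies Rule~\eqref{eq:techoverview:super-effective}; using $1 - e^{-\gamma s/n} \le \gamma s/n$ for a small absolute constant $\gamma$ to be tuned, the non-super-effective contribution to the aggregate gain is at most $\gamma \cdot \cost(X, C)\sum_i s_i/n = O(\gamma\,\cost(X, C))$. Choosing $\alpha$ large and $\gamma$ small then forces $\sum_{i \in \mathrm{SE}} g_i \ge \Omega(\cost(X, C))$, and the per-pair bound $g_i \le \cost(X^{*}_i, C)$ implies $\sum_{i \in \mathrm{SE}} \cost(X^{*}_i, C) \ge \Omega(\cost(X, C))$. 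Therefore, with constant probability the $D^{2}$-sample $\cinsert$ lands in a super-effective $X^{*}_i$, and combined with the Arthur-Vassilvitskii event, the paired $\cdelete_i$ witnesses Rule~\eqref{eq:techoverview:super-effective}.

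The main obstacle is the density/averaging step: the aggregate gain $\sum g_i \le \cost(X, C)$ and the potential non-super-effective ``budget'' $\cost(X, C)\sum s_i/n = O(\cost(X, C))$ are of the same order, so separating out a constant-fraction super-effective portion requires simultaneously tuning the approximation slack $\alpha$, the Arthur-Vassilvitskii substitution constant, and the exponent constant $\gamma$ in the super-effective threshold. The structural input that makes the argument close out is the bound $\sum_i s_i = O(n)$, which follows from each $c \in C$ being paired $O(1)$ times in the Arya et al.\ matching; without it, a few large deletion clusters could monopolize the entire budget. Picking $\gamma$ small enough for the density argument to succeed while still preserving the near-linear clustering-recourse analysis that motivates the super-effective rule is the delicate balancing act at the heart of the proof, and is the place where the novelty over prior sampling-based local search analyses lies.
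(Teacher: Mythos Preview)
Your high-level plan matches the paper's: use the Arya et al.\ matching, then a budget argument exploiting $\sum_i s_i = O(n)$ (each center paired $O(1)$ times) to show that a constant fraction of the $D^2$-mass lands on ``good'' optimal clusters. You have correctly located the key structural ingredient.

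However, there is a genuine gap in how you interleave the sampling with the aggregate bound. You define the super-effective index set through the swap $(\cinsert,\cdelete_i)$ for the \emph{single} sampled point $\cinsert$, making that set random and tied to one index, yet you then sum gains over all $i$ as though each index had received its own good substitute; the phrase ``whose $i$-sum is $O(\OPT)$'' is only meaningful for $k$ independent substitutions, not one. Separately, the bound $g_i\le\cost(X^*_i,C)$ holds for the Arya-reassignment gain (where exactly $X^*_i$ is moved to $o^*_i$), not for the true gain you wrote down---points outside $X^*_i$ can also profit from inserting $o^*_i$. The paper sidesteps both issues by never invoking an AV-type lemma: it fixes a \emph{deterministic} subset $S_i\subseteq X^*_i$ of points close to $o^*_i$, verifies directly that every point of $S_i$ is a super-effective noncenter, and then proves $\cost(S_i,C)\ge\eps^{z+1}\cost(X^*_i,C)$ for each good index by exploiting that the good-index condition already forces $o^*_i$ to be far from $C$. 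Your AV route can be repaired by defining the good indices deterministically via a ``virtual gain'' $\tilde g_i - O(\cost(X^*_i,C^*))$ (Arya gain minus the worst-case substitution loss) and postponing the actual sampling to the very last step; the per-index loss must be subtracted \emph{before} the averaging, not absorbed in aggregate afterward.
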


We actually proved a ``robust'' version of \Cref{claim:techoverview:sample}, such that it holds even when the {\em exact} clustering objective in Rule~\eqref{eq:techoverview:super-effective} is replaced by an approximation over $\calX$, provided that $\calX$ is, say, $1.01$-accurate.
This robustness is important for {\em implementing} our framework because it allows local search to use approximate objective values provided by the clustering oracle and thereby avoid the need for $\Omega(n)$-time exact computations. On the other hand, stringent accuracy is necessary, since large errors may mislead the selection of swaps and cause the search process to behave arbitrarily. This necessity also highlights a technical difficulty, since even a $(1 + 1 / n)$-error in the objective appears large enough to mislead the swap selection.

The proof for \Cref{claim:techoverview:sample} borrows ideas from \cite{LattanziS19, ChooGPR20, 0001CLP23}, which also analyze the $D^2$-sampling scheme of a swap-in non-center $\cinsert \notin C$, but under the simple ``fixed-progress'' Rule~\eqref{eq:techoverview:simple-rule} (rather than our ``adaptive-progress'' Rule~\eqref{eq:techoverview:super-effective}) with an {\em exact} clustering objective on the LHS. 
Our improved probabilistic analysis addresses both limitations.
In this way, we demonstrate that the $D^{2}$-sampling scheme is in fact more ``adaptable'' and ``robust'', and thus also works for our settings.
We have also carefully optimized all parameters involved in the our proof, and derived a concrete ratio (instead of the generic $O(1)$) for \Cref{claim:techoverview:sample} (see \Cref{remark:general-graph}), which is precisely the approximation ratio for our algorithms in \Cref{thm:main-informal} for graphs.
Beyond showing the existence, this claim also suggests a concrete way to find a feasible non-center $\cinsert$ via $D^2$-sampling. 
However, this $D^2$-sampling still requires an efficient implementation  which we will discuss in \Cref{sec:techoverview:implementation-graph}.

Finally, we emphasize that the above discussion concerns only clustering recourse. To extend our framework to running time, one key step is to account for the update time of a concrete clustering oracle. 
This involves replacing the cluster size in the RHS of Rule~\eqref{eq:techoverview:super-effective} with a time-related quantity. With this adjustment, a more complex but similar argument would also imply a bound on the overall running time.

\subsubsection{Implementing our framework with an ANN data structure}
\label{sec:techoverview:implementation-graph}

As discussed in \Cref{sec:techoverview:super-effective-swap}, our framework assumes a clustering oracle, which we implement in this section via a concrete data structure.
Besides maintaining the clustering efficiently, this data structure should also facilitate local search by enabling fast swap selection. In particular, it should support $D^2$-sampling to identify a swap-in candidate $\cinsert$ and provide sufficient information to determine a swap-out candidate $\cdelete$.
A technical observation here is that all these necessary functionalities reduce to, or can be built upon, the fundamental task of maintaining {\em approximate nearest neighbor (ANN)} information; henceforth, we focus on this task.

Specifically, we need a  data structure that {\em explicitly} maintains an {\em accurate} $(1 + \epsilon)$-approximate nearest neighbor ($(1 + \eps)$-ANN) $\DSc[x] \in C$ in the maintained center set $C$ of every point $x \in X$, such that
\begin{align}
    \label{eq:techoverview:ANN}
    \dist(x, \DSc[x])
    ~\le~ (1 + \eps) \cdot \dist(x, C),
    \qquad
    \forall x \in X.
\end{align}
Here, $0 < \epsilon$ is chosen to be sufficiently small (e.g., $\epsilon < 0.01$) to meet the accuracy requirement for the existence of a super-effective swap (as discussed in \Cref{sec:techoverview:super-effective-swap}). 
Moreover, the data structure should additionally support center insertions and deletions in $C$ while maintaining Condition~\eqref{eq:techoverview:ANN}. 

\paragraph{Bypassing worst-case guarantees.} However, it is difficult to achieve a worst-case time guarantee for the mentioned task: inserting or deleting a center may trigger as many as $\Omega(n)$ reassignments in the clustering, i.e., changes to the ANN array $(\DSc[x])_{x \in X}$, even for a large constant approximation (let alone $(1 + \epsilon)$).
Luckily, our framework already provides a way to bypass the need for worst-case guarantees, as long as the data structure satisfies the well-behaved properties~\ref{behavior:insert} and \ref{behavior:deletion} and its update time scales with the clustering recourse. Specifically:
\begin{itemize}
    \item For a deletion, the update must run in time {\em almost-linear} in $|\calX(\cdelete)|$, where $\cdelete$ is the to-delete center (recall that $\calX(\cdelete)$ is the cluster centered at $\cdelete$). If so, the total running time for all deletions is bounded by the total clustering recourse $\sum_{i} T_{i}$ (see \Cref{sec:techoverview:super-effective-swap}), which is near-linear, i.e., $\sum_{i} T_{i} = \tilde{O}(n)$ under the super-effective Rule~\eqref{eq:techoverview:super-effective}.
    
    \item For an insertion, the data structure must implement the ``lazy update'' strategy described in Property~\ref{behavior:insert}, with a running time proportional to the number of reassignments.
    In that case, the total running time for all insertions will again match the total clustering recourse $\sum_{i} T_{i}$ (see also \Cref{sec:techoverview:super-effective-swap}).
\end{itemize}

\paragraph{From Euclidean (\Cref{thm:main-informal-Euclidean}) to graph settings (\Cref{thm:main-informal}).} 
Unfortunately, in Euclidean space, even this amortized update time analysis is still difficult to realize, due to the stringent requirement of a very small $\epsilon$ in Condition~\eqref{eq:techoverview:ANN}, which renders well-studied ANN techniques/tools ineffective.
For instance, the LSH-based ANN scheme (e.g., \cite{AndoniI06}) achieves a query time of $n^{1/(1+\epsilon)^2} \approx n^{0.98}$, which is only marginally sublinear. As a result, the aforementioned amortized update would still result in a total running time of $n^{1.98}$, since each of the $\tilde{O}(n)$ reassignments requires at least one ANN query.

Interestingly, we observe that the shortest-path metric of a graph, despite lacking efficient ANN algorithms, turns out to be well-suited for our amortized analysis.
In graphs, a simple yet powerful fact 
is that each cluster (under the nearest-neighbor assignment with respect to the shortest-path metric) forms a connected component.
To see why this is useful, suppose we wish to insert a center $\cinsert$. 
Then a crucial goal is to identify the newly formed cluster and assign $\DSc[x] = \cinsert$ for all its points, with running time proportional to the cluster size.
Intuitively, this can be implemented  by a local exploration of a connected component starting from $\cinsert$,
similar to breadth-first search/Dijkstra's algorithm.

Luckily, it is possible to turn Euclidean spaces into shortest-path metrics by constructing a Euclidean spanner \cite{HIS13}.
Hence, from now on, we shift our focus from clustering in Euclidean space to clustering in the shortest-path metric of a general weighted undirected graph $G = (V, E, w)$, and we simply consider the dataset to be the vertex set itself, i.e., $X = V$.
We also assume that the graph has constant maximum degree. We note that this assumption is made solely for ease of presentation, and our formal proof does not rely on it.

\paragraph{Leveraging bounded hop-diameter.} 
We now turn to the description of our data structure, which builds upon another technical advantage of graphs -- one can reduce the {\em hop-diameter} of a graph with little overhead.
Roughly speaking, the hop-diameter $\beta = \beta(G)$ of the graph $G$ is the maximum number of edges on any shortest path between two vertices, and without loss of generality, we may assume $\beta = m^{o(1)}$ since we can always preprocess the graph using \cite[Theorem~3.8]{ElkinN19} to ensure this. We note that this is exactly where the $m^{o(1)}$ factor in \Cref{thm:main-informal} originates.

In our design of the ANN data structure, the hop-diameter is leveraged to establish the following stronger Condition~\eqref{eq:techoverview:relaxation-property}, which we call {\em edge relaxation property}, in place of Condition~\eqref{eq:techoverview:ANN}. 
The main advantage of considering Condition~\eqref{eq:techoverview:relaxation-property} is that it is easier for the algorithm to check.
\begin{align}
    \label{eq:techoverview:relaxation-property}
    \dist(u, \DSc[u])
    ~\le~ (1 + \tfrac{\eps}{2\beta})\cdot (\dist(v, \DSc[v]) + w(u, v)),
    \qquad \forall (u, v) \in E.
\end{align}
Condition~\eqref{eq:techoverview:relaxation-property} does imply Condition~\eqref{eq:techoverview:ANN}; given any shortest path from a vertex $v \in V$ to its {\em exact} nearest center $c_v \in C$ (of length at most $\beta$), an induction of Condition~\eqref{eq:techoverview:relaxation-property} along this path implies that
$\dist(v, \DSc[v]) \le (1 + \tfrac{\eps}{2\beta})^{\beta} \cdot \dist(v, c_{v}) \le (1 + \eps) \cdot \dist(v, C)$.
On the other hand, the stronger Condition~\eqref{eq:techoverview:relaxation-property} concerns only a local edge-level property, and is therefore technically more tractable than Condition~\eqref{eq:techoverview:ANN}.

We also note that this idea of designing algorithms parameterized by the hop-diameter $\beta$ had been also studied for many other graph problems and settings \cite{Bernstein09, Nanongkai14, MillerPVX15, HenzingerKN16, HenzingerKN18, ElkinN19, AndoniSZ20, ElkinS23}.

\paragraph{Handling center updates.} 
We are ready to describe how our ANN data structure handles insertions and deletions, while maintaining Condition~\eqref{eq:techoverview:relaxation-property} with running time almost-linear in the number of reassignments.
Still, let us first consider a deletion of center $\cdelete$. We run an adapted {\em Dijkstra's algorithm}:
\begin{itemize}
    \item In the initialization, we ``erase'' the maintained ANN $\DSc[v]$ for every vertex $v \in \calX(\cdelete)$ by setting $\DSc[v] \gets \perp$; that is, we mark all vertices in $\calX(\cdelete)$ as ``unassigned''. After this step, all edges $(u, v) \in E$ that violate Condition~\eqref{eq:techoverview:relaxation-property} must be adjacent to $\calX(\cdelete)$.
    \item To retain Condition~\eqref{eq:techoverview:relaxation-property} for these violating edges $(u, v)$, we regard the boundary $\partial \calX(\cdelete)$ -- namely all vertices adjacent to but outside the cluster of $\cdelete$ -- as the sources, and computes the distance $\Hat{d}[u]$ to the sources for every vertex $u\in \calX(\cdelete)$. 
    Whenever the distance $\Hat{d}[u]$ is updated from a neighbor $v$ (i.e., $\Hat{d}[u] \gets \Hat{d}[v] + w(u, v)$), we also propagate the ANN assignment by setting $\DSc[u] \gets \DSc[v]$, which progressively retains Condition~\eqref{eq:techoverview:relaxation-property}.

\end{itemize}
An important adaptation here is to replace Dijkstra’s update rule with the negation of Condition~\eqref{eq:techoverview:relaxation-property}, specifically $\Hat{d}[u] > (1 + \frac{\epsilon}{2\beta}) \cdot (\Hat{d}[v] + w(u, v))$.    
This adaptation is crucial, as it guarantees that Dijkstra's algorithm will only explore the region $\calX(\cdelete) \cup \partial \calX(\cdelete)$ and can therefore be run in time $\tilde{O}(|\calX(\cdelete)|)$, assuming a bounded maximum degree (again, this degree assumption is not necessary for the formal proof).

Also, we emphasize that the above discussion only delivers the high-level algorithmic idea but has omitted many technical details.
For example, the rigorous version of Condition~\eqref{eq:techoverview:relaxation-property} replaces the distances $\dist(u, \DSc[u])$ and $\dist(v, \DSc[v])$ with certain approximate surrogates $\DSd[u]$ and $\DSd[v]$, so as to make sure that our {\em work-space-restricted} algorithm indeed maintains Condition~\eqref{eq:techoverview:relaxation-property}.
(Those surrogates $(\DSd[v])_{v \in V}$ will also be maintained by our ANN data structure.)

Finally, for an insertion operation, we could similarly apply an adapted Dijkstra's algorithm (or even simplify the procedure to a depth-first search) to perform the required ``lazy'' update, and we omit the detailed discussion here.

\subsection{Further related works}
\label{subsec:related-works}

Polynomial (not necessarily sub-quadratic) time approximation algorithms for Euclidean \kMeans
have been extensively studied.
Due to the Euclidean structure, the status of upper and lower bounds differ significantly from that in the general metrics.

On the upper bound side, the state-of-the-art for \kMeans in general metrics is a $\approx 9$-approximation via primal-dual~\cite{ANSW20}. However, by leveraging the properties of Euclidean space, the same work \cite{ANSW20} obtains a better $\approx 6.36$-approximation for Euclidean $k$-Means, and the followup works \cite{GrandoniORSV22, Cohen-AddadEMN22} further improve this ratio to the state-of-the-art $\approx 5.91$-approximation.
Moreover, Euclidean \kMeans admits $\mathrm{PTAS}$ (or $\mathrm{EPTAS}$) when either the number of dimensions $d$ is fixed 
\cite{Cohen-Addad18, FriggstadRS19, Cohen-AddadKM19, Cohen-AddadFS21},
or the number of centers $k$ is fixed \cite{KumarSS04,FeldmanMS07,DBLP:journals/siamcomp/Chen09,DBLP:conf/focs/AbbasiBBCGKMSS23}.
There is also a line of research that focuses on near-linear $\tilde{O}(nd)$ running time,
however currently they come at a cost of super-constant $\Omega(\polylog(k))$ approximation ratio~\cite{CLNSS20, CharikarHHVW23}.

On the lower bound side, in general metrics, \kMeans is
$\mathrm{NP}$-hard to approximate within a factor better than $4$ \cite{Cohen-AddadSL21}. 
For Euclidean \kMeans, it has been shown to be $\mathrm{APX}$-hard when the dimension $d = \Omega(\log n)$ is large \cite{AwasthiCKS15}. 
After a series of improvements \cite{LeeSW17, Cohen-AddadS19, Cohen-AddadSL21, Cohen-AddadSL22}, the state-of-the-art lower bounds are $1.06$ assuming $\mathrm{P} \ne \mathrm{NP}$ \cite{Cohen-AddadSL22}, $1.07$ assuming the Unique Games Conjecture \cite{Cohen-AddadS19}, and $1.36$ assuming the Johnson Coverage Hypothesis \cite{Cohen-AddadSL22}.
For more hardness results, the interested reader can refer to \cite{Cohen-AddadSL22} and its references.

\endgroup

\section{Preliminaries}
\label{sec:prelim}

{\bf The metric {\kzC} problem.}
Given an underlying metric space $(V,\dist)$, and a point set $X\subseteq V$,
the metric {\kzC} problem,
with parameters $k \ge 1$ and $z \ge 1$, 
aims to find a size-$k$ {\em center set} $C \in V^{k}$ to minimize the following {\em clustering objective} $\cost(X, C)$.
\begin{align*}
    \cost(X, C)
    ~\eqdef~ \sum_{v \in X} \dist^{z}(v, C).
\end{align*}
Here, $\dist(v, C) \eqdef \min_{c \in C} \dist(v, c)$ is the distance from a vertex $v \in V$ to its nearest center $c \in C$, and $\dist^{z}(v, C) \eqdef (\dist(v, C))^{z}$ denotes its $z$-th power.
Likewise, given a real-valued array $(\DSd[v])_{v \in V}$ (say), we will denote by $(\DSd^{z}[v])_{v \in V}$ the array of entry-wise $z$-th powers.

For ease of presentation, we assume without loss of generality that all pairwise distances between points are distinct;\footnote{This can be ensured by adding a small noise to every distance or by using a secondary key, such as the index of every point or vertex, for tie-breaking.}
given a (generic) solution $C \in V^{k}$, we can thus identify the {\em unique} nearest center $\argmin_{c \in C} \dist(v, c) \in C$ of every vertex $v \in V$ in this solution $C \in V^{k}$. Likewise, we assume that there is a unique {\em optimal solution} $C^{*} \eqdef \argmin_{C \in V^{k}} \cost(X, C)$, and denote by $\OPT(X) \eqdef \cost(X, C^{*})$ the \emph{optimal objective}.

\vspace{.1in}
\noindent
{\bf Clustering on graphs.}
Without loss of generality, every graph $G = (V, E, w)$ to be considered is {\em undirected}, {\em connected}, and {\em non-singleton}, and its edges $(u, v) \in E$ have {\em nonnegative} weights $w(u, v) \ge 0$.
Such a graph $G$ induces a metric space $(V, \dist)$ based on the pairwise {\em shortest-path} distances $\dist(u, v) \ge 0$ for all $(u, v) \in V \times V$, and we consider the metric \kzC problem on this shortest-path metric, with the dataset to be clustered, $X = V$, being the entire vertex set. Without ambiguity, in this graph clustering context, we will simplify the notation by letting $\OPT \eqdef \min_{C \in V^k} \cost(V, C)$ denote the optimal objective.

We assume that the given graph $G$ is represented using an {\em adjacency list};
so for every vertex $v \in V$, we can access its {\em degree} $\deg(v)$ in time $O(1)$ and its {\em set of adjacent vertices} $N(v)$ in time $O(\deg(v))$.
As usual, we denote by $n = |V|$ the number of vertices and by $m = |E| \ge n - 1$ the number of edges.
In addition, we introduce the {\em hop-boundedness} (\Cref{def:hop_bounded}) of graphs, a concept crucial to all later materials.

\begin{definition}[Hop-boundedness]
\label{def:hop_bounded}
\begin{flushleft}
A graph $G = (V, E, w)$ is called {\em $(\beta, \eps)$-hop-bounded}, where parameters $\beta \ge 1$ and $0 < \eps < 1$, when every pair of vertices $u, v \in V$ admits a path that visits at most $\beta$ edges and has length $\le 2^{\eps} \cdot \dist(u, v)$.
\end{flushleft}
\end{definition}

The following \Cref{lem:triangle} (which restates \cite[Lemma~A.1]{MMR19}) provides a generalization of {\em triangle inequalities} and will be useful in many places.

\begin{proposition}[{Generalized triangle inequalities \cite{MMR19}}]
\label{lem:triangle}
\begin{flushleft}
Given any $a, b \ge 0$ and any $z \ge 1$, $(a + b)^{z} \le (1 + \lambda)^{z - 1} \cdot a^{z} + (1 + 1 / \lambda)^{z - 1} \cdot b^{z}$, for any parameter $\lambda > 0$.
\end{flushleft}
\end{proposition}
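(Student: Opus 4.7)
The plan is to derive this as a one-line consequence of Jensen's inequality applied to the convex function $x \mapsto x^{z}$ on $[0, \infty)$ (which is convex precisely because $z \ge 1$). The only real choice is which convex combination to use, and the coefficients on the right-hand side force that choice essentially uniquely.

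Concretely, I would set
\begin{equation*}
    \alpha \eqdef \frac{1}{1 + \lambda}, \qquad \beta \eqdef \frac{\lambda}{1 + \lambda},
\end{equation*}
so that $\alpha, \beta > 0$ and $\alpha + \beta = 1$. Then I would rewrite $a + b = \alpha \cdot (a / \alpha) + \beta \cdot (b / \beta)$ and apply convexity of $x^{z}$ to obtain
\begin{equation*}
    (a + b)^{z} \;=\; \left(\alpha \cdot \tfrac{a}{\alpha} + \beta \cdot \tfrac{b}{\beta}\right)^{z} \;\le\; \alpha \cdot \left(\tfrac{a}{\alpha}\right)^{z} + \beta \cdot \left(\tfrac{b}{\beta}\right)^{z} \;=\; \alpha^{1 - z}\, a^{z} + \beta^{1 - z}\, b^{z}.
\end{equation*}
Plugging in the chosen $\alpha, \beta$ gives $\alpha^{1-z} = (1 + \lambda)^{z-1}$ and $\beta^{1-z} = \big(\tfrac{1+\lambda}{\lambda}\big)^{z-1} = (1 + 1/\lambda)^{z-1}$, which is exactly the claimed bound.

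There is no real obstacle; the only subtle step is guessing the weights, but they are pinned down by requiring $\alpha^{1-z}$ and $\beta^{1-z}$ to match the stated coefficients, together with $\alpha + \beta = 1$. The edge cases $a = 0$ or $b = 0$ are immediate (the inequality reduces to $b^{z} \le (1 + 1/\lambda)^{z-1} b^{z}$ or $a^{z} \le (1 + \lambda)^{z-1} a^{z}$, both of which hold since $z \ge 1$ and $\lambda > 0$), and if one prefers to avoid dividing by $\alpha$ or $\beta$ when $a$ or $b$ vanishes, a short continuity argument (or interpreting $0 \cdot (0/0)^z$ as $0$) suffices.
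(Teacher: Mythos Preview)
Your proof is correct; the Jensen/convexity argument with weights $\alpha = \tfrac{1}{1+\lambda}$ and $\beta = \tfrac{\lambda}{1+\lambda}$ is exactly the standard way to establish this inequality. The paper itself does not prove this proposition but merely restates it from \cite[Lemma~A.1]{MMR19}, so there is no in-paper argument to compare against; your write-up would serve perfectly well as the supplied proof.
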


\noindent
{\bf Preprocessing.}
The following \Cref{prop:naive_solution} (whose proof is deferred to \Cref{subsec:naive_solution}) shows that it is easy to find a coarse $n^{z + 1}$-approximate solution $\Cinitial$ to the {\kzC} problem.
This naive solution is useful to many steps in our algorithm, such as to initialize our local search.

\begin{proposition}[Coarse approximation]
\label{prop:naive_solution}
\begin{flushleft}
An $n^{z + 1}$-approximate feasible solution $\Cinitial \in V^{k}$ to the {\kzC} problem can be found in time $O(m \log(n))$.
\end{flushleft}
\end{proposition}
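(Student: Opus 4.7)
I would build $\Cinitial$ from an MST-based partition. Compute a minimum spanning tree $T$ of $G$ in $O(m \log n)$ time (Kruskal or Prim), delete the $k-1$ heaviest edges of $T$ to obtain $k$ subtrees $T_1, \ldots, T_k$ on vertex sets $S_1, \ldots, S_k$, and in each $T_i$ compute its tree $1$-center $c_i$ via the standard double-DFS (which identifies a diameter of a weighted tree in linear time) followed by picking the vertex on the diameter path closest to its midpoint. The output is $\Cinitial \eqdef \{c_1, \ldots, c_k\}$; the corner case $k \ge n$ is handled trivially by letting every vertex be a center.

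\textbf{Upper bound on $\cost(V,\Cinitial)$.} Let $\omega$ denote the $k$-th largest weight of $T$, so every edge of every $T_i$ has weight at most $\omega$, and therefore the weighted diameter of $T_i$ is at most $(|S_i|-1)\,\omega$. A standard property of weighted trees implies that its vertex $1$-center has radius at most half the diameter plus half the heaviest edge on the diameter path, i.e., at most $\tfrac{|S_i|\,\omega}{2}$. Combined with $\sum_i |S_i|^{z+1} \le n^{z+1}$ (by convexity, since $\sum_i |S_i| = n$), this yields
\begin{equation*}
    \cost(V, \Cinitial)
    \;\le\; \sum_i |S_i| \cdot \Bigl(\frac{|S_i|\,\omega}{2}\Bigr)^{z}
    \;\le\; \Bigl(\frac{\omega}{2}\Bigr)^{z}\, n^{z+1}.
\end{equation*}

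\textbf{Lower bound on $\OPT$.} Now remove instead the \emph{top $k$} MST edges, producing $k+1$ components $S'_1, \ldots, S'_{k+1}$, and select one representative $v_i \in S'_i$. The MST cut property forces every $G$-edge between distinct components $S'_i, S'_j$ to have weight $\ge \omega$: any such edge of weight $< \omega$ could be swapped with a removed MST edge of weight $\ge \omega$ on the $T$-path connecting its endpoints, contradicting minimality of $T$. Hence any $G$-path between vertices in different components has length $\ge \omega$, so $\dist(v_i, v_j) \ge \omega$ for all $i \ne j$. For any size-$k$ center set $C^\ast$, the pigeonhole principle gives two representatives sharing a common nearest center $c \in C^\ast$, and the triangle inequality then forces one of them to lie at distance $\ge \omega/2$ from $c$; thus $\OPT \ge (\omega/2)^z$. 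Combining with the upper bound gives the $n^{z+1}$-approximation. The overall running time is dominated by the MST construction, giving $O(m \log n)$.

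\textbf{Main obstacle.} The subtle part is engineering the $2^z$ factor to cancel cleanly: the upper bound must exploit the tree $1$-center (giving the factor $\tfrac12$ in the per-cluster radius) so that the unavoidable $(1/2)^z$ loss in the pigeonhole lower bound is exactly absorbed, yielding the ratio $n^{z+1}$ rather than $2^z n^{z+1}$. Verifying the MST cut property rigorously for the lower bound and implementing the tree $1$-center in linear time via the standard double-DFS are the key technical ingredients; boundary cases ($k=1$ and $k=n-1$) fit the same analysis without modification.
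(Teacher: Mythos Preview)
Your proof is correct and uses the same core idea as the paper: run Kruskal/MST, stop at $k$ components, and take one center per component. The paper's version is simpler, however. It picks an \emph{arbitrary} vertex from each component and bounds $\dist(v,\Cinitial)\le (n-1)\tilde w$ directly, where $\tilde w$ is the largest edge added; for the lower bound it observes that the graph restricted to edges of weight $<\tilde w$ has $>k$ components, so some component contains no center and hence $\OPT\ge \tilde w^z$ with no factor $1/2$. This yields $\cost(V,\Cinitial)\le n(n-1)^z\tilde w^z\le n^{z+1}\OPT$ immediately.

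Your tree $1$-center computation and the two-representatives-share-a-center pigeonhole are both correct, but they introduce matching factors of $1/2$ that exactly cancel, so the extra work buys nothing. (A small quibble: the bound $\sum_i |S_i|^{z+1}\le n^{z+1}$ is not ``by convexity''; convexity gives the reverse Jensen inequality. The bound you want follows from $|S_i|^{z+1}\le |S_i|\cdot n^z$.) The simpler argument in the paper avoids the double-DFS, the midpoint selection, and the $1/2$-engineering entirely.
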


\noindent

Also, for ease of presentation, we would impose \Cref{assumption:edge-weight} throughout \Cref{sec:DS,sec:LS}.
Rather, the following \Cref{prop:edge-weight} (whose proof is deferred to \Cref{subsec:edge-weight}) shows how to avoid the reliance on \Cref{assumption:edge-weight}.

\begin{assumption}[Edge weights]
\label{assumption:edge-weight}
\begin{flushleft}
Every edge $(u, v) \in E$ has a bounded weight $w(u, v) \in [\wmin, \wmax]$, where (up to scale) the parameters $\wmin = 1$ and $\wmax \le n^{O(z)}$.
\end{flushleft}
\end{assumption}

\begin{proposition}[Removing \Cref{assumption:edge-weight}]
\label{prop:edge-weight}
\begin{flushleft}
For any $0 < \eps < 1$, a graph $G = (V, E, w)$
can be converted into a new graph $G' = (V, E, w')$
in time $O(m \log(n))$, such that:
\begin{enumerate}[font = {\em\bfseries}]
    \item \label{prop:edge-weight:bound}
    $G'$ satisfies \Cref{assumption:edge-weight}
    with parameters
$\wmin = 1$ and $\wmax \le 32z^{2} \eps^{-2} n^{z + 5}$.
    
    \item \label{prop:edge-weight:solution}
    Any $\alpha \in [1, n^{z + 1}]$-approximate solution $C$ to {\kzC} for $G'$
is a $2^{\eps}\alpha$-approximate solution
    to that for $G$.
\end{enumerate}
\end{flushleft}
\end{proposition}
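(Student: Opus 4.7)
The plan is to reshape the edge weights via three operations: \emph{inflating} ``too long'' edges so they become irrelevant, \emph{flooring} ``too short'' edges to lift the minimum off zero, and finally \emph{scaling} uniformly so that the minimum equals $1$. All thresholds are chosen based on the coarse bound from \Cref{prop:naive_solution}.

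First, in $O(m\log n)$ time I compute $\Cinitial$ via \Cref{prop:naive_solution} and set $\hat L := \cost(V, \Cinitial)$ and $D := \hat L^{1/z}$. The approximation guarantee gives $\OPT \in [\hat L / n^{z+1},\, \hat L]$; consequently, for every vertex $v$ and every $\alpha \le n^{z+1}$-approximation $C$ in $G$, $\dist_G(v, C) \le n^{(z+1)/z} D$. I then pick
\begin{equation*}
    \hat\ell \;:=\; \frac{\eps D}{4z\cdot n^{2(z+1)/z}},
    \qquad
    \hat U \;:=\; 2 n^{(z+1)/z} D,
\end{equation*}
and define an intermediate weight $w^{*}(u,v) := \hat U$ when $w(u,v) > D$, and $w^{*}(u,v) := \max(w(u,v), \hat\ell)$ otherwise. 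Setting $w'(u,v) := w^{*}(u,v)/\hat\ell$ yields $\wmin = 1$ and $\wmax = \hat U/\hat\ell = O(z n^{3 + 3/z}/\eps) \le 32 z^{2}\eps^{-2} n^{z+5}$, which is Item~\ref{prop:edge-weight:bound}. Since uniform scaling preserves every approximation ratio, I analyze $G^{*} := (V, E, w^{*})$ in place of $G'$; let $\mathrm{OPT}_{G^{*}}$ denote its optimum and $\dist_{G^{*}}$ its distances.

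For Item~\ref{prop:edge-weight:solution}, I establish (a) $\dist_G(v, C) \le \dist_{G^{*}}(v, C)$ for every $v$ whenever $C$ is an $\alpha \le n^{z+1}$-approximation in $G^{*}$, and (b) $\mathrm{OPT}_{G^{*}} \le 2^{\eps}\OPT$. Raising (a) to the $z$-th power and summing over $v$ then gives $\cost(V, C) \le \sum_{v}\dist_{G^{*}}^{z}(v, C) \le \alpha\,\mathrm{OPT}_{G^{*}} \le 2^{\eps}\alpha\,\OPT$, the required $2^{\eps}\alpha$-approximation. For (b), take $C^{*}$ optimal for $G$ and, for each $v$, consider a $G$-shortest path from $v$ to $C^{*}$: its total length is $\le D$, so every edge on it has weight $\le D$ and is therefore un-inflated in $G^{*}$, while flooring raises each such edge weight by at most $\hat\ell$; hence $\dist_{G^{*}}(v, C^{*}) \le \dist_G(v, C^{*}) + n\hat\ell$. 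Applying \Cref{lem:triangle} with $\lambda = \Theta(\eps/z)$ yields
\begin{equation*}
    \sum_{v \in V} \dist_{G^{*}}^{z}(v, C^{*}) \;\le\; (1+\lambda)^{z-1}\OPT + n(1+1/\lambda)^{z-1}(n\hat\ell)^{z} \;\le\; 2^{\eps}\OPT,
\end{equation*}
where the last step uses the choice of $\hat\ell$ together with $\OPT \ge \hat L / n^{z+1}$. For (a), by (b) any relevant distance in $G^{*}$ is at most $(\alpha\cdot 2^{\eps}\OPT)^{1/z} \le 2^{\eps/z} n^{(z+1)/z} D < \hat U$, so the $G^{*}$-shortest path from $v$ to $C$ cannot use an inflated edge; along such a path $w^{*}(u,v) \ge w(u,v)$, and the same path in $G$ therefore has length no larger, giving $\dist_G(v, C) \le \dist_{G^{*}}(v, C)$.

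The main obstacle is calibrating $(\hat\ell, \hat U, \lambda)$ so that the flooring error in (b) is absorbed by $2^{\eps} - 2^{\eps/2}$ once \Cref{lem:triangle} introduces its $(1 + 1/\lambda)^{z-1} \approx (z/\eps)^{z-1}$ blow-up, while the aspect ratio $\hat U/\hat\ell$ remains polynomial in $n$. The coarse lower bound $\OPT \ge \hat L / n^{z+1}$ is essential: it converts an abstract requirement on $\hat\ell$ (phrased in the unknown $\OPT$) into the explicit expression $\hat\ell \propto \eps D / (z n^{2(z+1)/z})$, and the $n^{z+1}$ factor it carries is precisely what forces the $n^{(z+1)/z}$-scale terms in both $\hat\ell$ and $\hat U$, hence the $\poly(n)$ aspect ratio in Item~\ref{prop:edge-weight:bound}.
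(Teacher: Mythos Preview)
Your proposal is correct and follows essentially the same approach as the paper: both use \Cref{prop:naive_solution} to obtain a coarse estimate $\hat L \approx \OPT$, then clip edge weights into a bounded interval calibrated to $\hat L^{1/z}$, and verify Item~\ref{prop:edge-weight:solution} via the two inequalities $\dist_G \le \dist_{G'}$ (on short paths) and $\mathrm{OPT}_{G'} \le 2^{\eps}\OPT$ (using \Cref{lem:triangle} with $\lambda = \Theta(\eps/z)$). The only cosmetic differences are that the paper truncates via $w' = \min(\max(w, w_{\min}), w_{\max})$ while you inflate edges above $D$ to a single value $\hat U$, and the paper argues the contrapositive of Item~\ref{prop:edge-weight:solution} whereas you argue it directly; neither change affects the substance of the argument.
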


\newcommand{\BST}{\calT}

\section{Data Structure for Local Search}
\label{sec:DS}

In this section, we present our data structures for local search, which will be denoted by $\calD$ or variant notations.
We will elaborate in \Cref{subsec:content} the contents of such a data structure $\calD$ and, for ease of reference, show in \Cref{fig:DS} a list of these contents.
Such a data structure $\calD$ can interact with our local search algorithm later in \Cref{sec:LS} only through the following four operations:\footnote{For ease of notation, we simply write $C + c' = C \cup \{c'\}$ and $C - c'' = C \setminus \{c''\}$ in the remainder of this paper.}
\begin{flushleft}
\begin{itemize}
    \item (\Cref{subsec:initialize}) $\Initialize(\Cinitial)$:
    On input an initial feasible solution $\Cinitial \in V^{k}$ (promised), this operation will initialize the data structure $\calD$; among other contents, the {\em center set} $C$ maintained by $\calD$ will be initialized to the input solution $C \gets \Cinitial$.

    \item (\Cref{subsec:insert}) $\Insert(\cinsert)$:
    On input a {\em current noncenter} $\cinsert \notin C$ (promised), this operation will insert $\cinsert \notin C$ into the maintained center set $C$, namely $C \gets C + \cinsert$, and modify other contents maintained by $\calD$ accordingly.
    
    \item (\Cref{subsec:delete}) $\Delete(\cdelete)$:
    On input a {\em current center} $\cdelete \in C$ (promised), this operation will delete $\cdelete \in C$ from the maintained center set $C$, namely $C \gets C - \cdelete$, and modify other contents maintained by $\calD$ accordingly.
    
    \item (\Cref{subsec:sample-noncenter}) $r \gets \SampleNoncenter()$:
    This operation will sample a random vertex $r \in V$ and return it (without modifying the data structure $\calD$); the distribution of $r \in V$ relies on the contents of $\calD$ but, most importantly, ensures that it is a noncenter $r \notin C$ almost surely.
\end{itemize}
\end{flushleft}
We emphasize that everything about our data structures $\calD$ is deterministic, except for the randomized operation $r \gets \SampleNoncenter()$.

\afterpage{
\begin{figure}[t]
\centering
\begin{mdframed}
{\bf Data Structure for Local Search.}

\vspace{.1in}
Our data structure, denoted by $\calD$, will be implemented based on an {\em isolation set cover} $\calJ \subseteq 2^{V}$, which is a certain collection of $|\calJ| = O(\log(n))$ vertex subsets $J \subseteq V$ that meets the conditions in \Cref{def:cover}; without ambiguity, a vertex subset $J \in \calJ$ will be called an {\em index}.

Then, our isolation-set-cover-based data structure $\calD$ will maintain the following contents.
\begin{flushleft}
\begin{itemize}
    \item $C \subseteq V$:
    a maintained {\em center set}.
    
    \item $C_{J} \eqdef C \cap J$:
    a maintained {\em center subset}, for every index $J \in \calJ$.
    
    \Comment{\Cref{def:cover,lem:cover} will ensure that $(\cup_{J \in \calJ} J = V) \implies (\cup_{J \in \calJ} C_{J} = C)$.}
    
    \item $(\DSc_{J}[v], \DSd_{J}[v])_{v \in V}$:
    a {\em subclustering} of all vertices $v \in V$, for every index $J \in \calJ$.
    
    \Comment{$\DSc_{J}[v] \in C_{J}$ identifies which {\em subcluster} contains vertex $v$, while $\DSd_{J}[v] \approx \dist(v, \DSc_{J}[v])$\\
    will approximate (imperfectly but well enough) vertex $v$'s distance to that center $\DSc_{J}[v]$.}
    
    \item $(\DSc[v], \DSd[v])_{v \in V}$: a {\em clustering} of all vertices $v \in V$.
    
    \Comment{$(\DSc[v], \DSd[v])$ will approximate (imperfectly but well enough) the {\em optimal/minimum subcluster}, among all the considered ones $(\DSc_{J}[v], \DSd_{J}[v])_{J \in \calJ}$.}
    
    \Comment{$\DSc[v] \in \cup_{J \in \calJ} C_{J} = C$ identifies which {\em cluster} contains vertex $v$, while $\DSd[v] \approx \dist(v, \DSc[v])$\\
    will approximate (imperfectly but well enough) vertex $v$'s distance to that center $\DSc[v]$.}
    
    \item $\BST$: a {\em binary search tree} \cite[Chapter~12]{CLRS22} of (say) size $|\BST| \le 2^{\lceil \log_{2}(n) \rceil + 1} - 1 = O(n)$ and height $= \lceil \log_{2}(n) \rceil = O(\log(n))$.
    
    \Comment{We will leverage this binary search tree $\BST$ to efficiently implement the randomized operation $\SampleNoncenter()$, akin to \cite[Lemma~4.2]{CLNSS20}.}
    
    \item $\DScost \eqdef \sum_{v \in V} \DSd^{z}[v]$: an {\em objective estimator} for the maintained center set $C \subseteq V$.
    
    \Comment{$\DScost$ will approximate (imperfectly but well enough) the actual objective $\cost(V, C)$ of the maintained center set $C \subseteq V$.}
    
    \item $(\DSloss[c], \DSvolume[c])_{c \in C}$: a {\em deletion estimator} of all centers $c \in C$.
    
    \Comment{Suppose that we would delete a current center $c \in C$ (promised) from the center set $C$ and reassign vertices in the current center-$c$ cluster to other survival centers $c' \in C - c$:\\
    $\DSloss[c] \geq 0$ will approximate the {\em change} of the objective $\cost(V, C)$ by this deletion.\\
    $\DSvolume[c] \geq 0$ will measure the {\em running time} to modify our data structure $\calD$ by this deletion.}
    
    \item $\{(G_{\tau}, \calG_{\tau})\}_{\tau \in [\CGnumber]}$: a {\em grouping} of all centers $c \in C$ into a number of $\CGnumber = O(\log(n))$ groups.
    
    \Comment{The disjoint {\em center groups} $\{G_{\tau}\}_{\tau \in [\CGnumber]}$ form a partition of the center set $C = \cup_{\tau \in [\CGnumber]} G_{\tau}$ (see \Cref{subsec:content} for the grouping criterion).
    Every $\calG_{\tau}$ for $\tau \in [\CGnumber]$ is a {\em red-black tree} of size $|\calG_{\tau}| = |G_{\tau}| \le |C|$ that keeps track of the index-$\tau$ center group $G_{\tau}$ \cite[Chapter~13]{CLRS22}}.
\end{itemize}
The {\em potential} $\Phi \eqdef \sum_{(J,\, v) \in \calJ \times V} \deg(v) \cdot \log_{2}(1 + \DSd_{J}[v])$ will help in establishing the performance guarantees of our data structure $\calD$ (although we need not maintain this {\em potential} $\Phi \ge 0$).
\end{flushleft}
\end{mdframed}
\caption{\label{fig:DS}A list of the contents of our data structure $\calD$.}
\end{figure}
\clearpage}

Regarding the underlying graph $G = (V, E, w)$, we would impose \Cref{assumption:DS:hop-bounded} (in addition to \Cref{assumption:edge-weight} about edge weights) throughout \Cref{sec:DS}.
Then, \Cref{prop:distance} follows directly.

\begin{assumption}[Hop-boundedness]
\label{assumption:DS:hop-bounded}
\begin{flushleft}
The underlying graph $G = (V, E, w)$ is $(\beta, \eps)$-hop-bounded, with known parameters $\beta \in [n]$ and $0 < \eps < 1$.
\end{flushleft}
\end{assumption}

\begin{proposition}[Bounded distances]
\label{prop:distance}
\begin{flushleft}
Every pair of vertices $(u, v \in V \colon u \ne v)$ has a bounded distance $\dist(u, v) \in [\dmin, \dmax]$, where the parameters $\dmin \eqdef \wmin = 1$ and $\dmax \eqdef 2^{\eps} \cdot \beta \cdot \wmax \leq n^{O(z)}$, provided \Cref{assumption:edge-weight,assumption:DS:hop-bounded}.
\end{flushleft}
\end{proposition}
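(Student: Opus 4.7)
The plan is to derive both sides of the bound $\dist(u,v) \in [\dmin, \dmax]$ directly from the two assumptions: \Cref{assumption:edge-weight} pins down the edge-weight range as $[1, \wmax]$ with $\wmax \le n^{O(z)}$, and \Cref{assumption:DS:hop-bounded} provides a short (at most $\beta$-edge) path for every vertex pair. Since both assumptions are standing hypotheses of \Cref{sec:DS}, no additional machinery is needed.

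For the lower bound, I would argue as follows. Fix distinct $u, v \in V$. Since $G$ is connected, there exists at least one path from $u$ to $v$, so $\dist(u,v)$ is finite and is attained by some path $P$ using at least one edge (because $u \neq v$). Each edge on $P$ has weight at least $\wmin = 1$ by \Cref{assumption:edge-weight}, so the total length of $P$ is at least $1$, giving $\dist(u,v) \ge 1 = \dmin$.

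For the upper bound, I would invoke \Cref{assumption:DS:hop-bounded}: there is a $u$-$v$ path $P'$ visiting at most $\beta$ edges and of total length at most $2^{\eps} \cdot \dist(u,v)$. But the crucial point is simply that $P'$ is \emph{some} $u$-$v$ path, so $\dist(u,v)$ is upper bounded by its length, and in turn by $\beta \cdot \wmax$ since each of its at most $\beta$ edges has weight at most $\wmax$. Hence $\dist(u,v) \le \beta \cdot \wmax \le 2^{\eps} \cdot \beta \cdot \wmax = \dmax$. (The factor $2^{\eps}$ is harmless slack; I keep it to match the stated form of $\dmax$.)

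Finally, to justify $\dmax \le n^{O(z)}$, combine $\beta \le n$ (by \Cref{assumption:DS:hop-bounded}, since $\beta \in [n]$), $2^{\eps} \le 2$ (since $0 < \eps < 1$), and $\wmax \le n^{O(z)}$ (by \Cref{assumption:edge-weight}). Multiplying these yields $\dmax \le 2n \cdot n^{O(z)} = n^{O(z)}$. There is no real obstacle here; the proposition is essentially a bookkeeping consequence of the two preceding assumptions, and the only mild subtlety is being explicit that connectedness of $G$ together with $u \ne v$ ensures at least one edge is traversed on a shortest path, which is what supplies the $\wmin = 1$ lower bound.
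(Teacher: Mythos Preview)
Your proof is correct and matches the paper's approach: the paper itself gives no detailed proof, stating only that \Cref{prop:distance} ``follows directly'' from \Cref{assumption:edge-weight,assumption:DS:hop-bounded}, and your argument is exactly the direct derivation the paper has in mind.
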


\subsection{Contents of our data structure}
\label{subsec:content}

This subsection elaborates on the contents maintained by our data structure $\calD$, and we provide a list of these contents in \Cref{fig:DS}.
However, before all else, we shall introduce the notion of {\em isolation set cover} $\calJ$.
An isolation set cover $\calJ$ is not a content of $\calD$; rather, our data structure $\calD$ will be implemented based on it.

\subsection*{The isolation set cover $\calJ$}

An {\em isolation set cover} $\calJ \subseteq 2^{V}$ is a collection of at most $|\calJ| = O(\log(n))$ vertex subsets $J \subseteq V$ that satisfies the conditions in \Cref{def:cover}.

\begin{definition}[Isolation set cover]
\label{def:cover}
\begin{flushleft}
An {\em isolation set cover} $\calJ \subseteq 2^{V}$ is a collection of at most $|\calJ| = O(\log(n))$ vertex subsets $J \subseteq V$ that
(i)~every pair of vertices $v \ne u \in V$ can be {\em isolated}, $(v \in J) \wedge (u \notin J)$ for some vertex subset $J \in \calJ$, and
(ii)~all vertices can be {\em covered}, $\cup_{J \in \calJ} J = V$.
\end{flushleft}
\end{definition}

\noindent
The following \Cref{lem:cover} and its proof explicitly construct such an isolation set cover $\calJ \subseteq 2^{V}$.

\begin{lemma}[Isolation set cover]
\label{lem:cover}
\begin{flushleft}
An isolation set cover $\calJ$ can be found in time $O(n \log(n))$.
\end{flushleft}
\end{lemma}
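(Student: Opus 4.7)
The plan is to give an explicit combinatorial construction based on binary labeling, which is a standard trick for this kind of separator/isolation family.

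First, I would fix an arbitrary bijection $\phi \colon V \to \{0, 1, \ldots, n - 1\}$, so each vertex receives a distinct binary label of length $\ell \eqdef \lceil \log_{2}(n) \rceil$. For each bit position $i \in \{0, 1, \ldots, \ell - 1\}$ and each bit value $b \in \{0, 1\}$, I would define
\begin{equation*}
    J_{i}^{b} ~\eqdef~ \{\, v \in V \,\colon\, \text{the $i$-th bit of $\phi(v)$ equals $b$} \,\}.
\end{equation*}
The collection $\calJ \eqdef \{\, J_{i}^{b} \,\colon\, i \in \{0, \ldots, \ell - 1\},\ b \in \{0, 1\} \,\}$ then has cardinality at most $2\ell = O(\log(n))$, matching the size bound in \Cref{def:cover}.

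Next, I would verify the two defining conditions of an isolation set cover. For the isolation condition, given any two distinct vertices $u \ne v$, their labels $\phi(u)$ and $\phi(v)$ must differ in at least one bit position $i$; if $\phi(v)$ has bit $b$ at position $i$, then $v \in J_{i}^{b}$ while $u \notin J_{i}^{b}$, which gives the required isolating set. For the covering condition, note that $J_{0}^{0} \cup J_{0}^{1} = V$ by construction, so $\cup_{J \in \calJ} J = V$.

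For the running time, the bijection $\phi$ can be produced in $O(n)$ time by a single pass over $V$, and then each set $J_{i}^{b}$ is determined by inspecting one bit of each label, for a total of $O(n \cdot \ell) = O(n \log(n))$ work; storing the sets as, say, sorted lists (or using the index structure assumed by the rest of the data structure) fits within the same budget. Since the construction is purely combinatorial and the conditions in \Cref{def:cover} follow directly from the standard property of binary representations, I do not anticipate any technical obstacle; the only care needed is to make sure the complement sets $J_{i}^{0}$ are included so that both the isolation requirement (which is asymmetric in $v$ and $u$) and the covering requirement are satisfied without increasing the asymptotic size beyond $O(\log(n))$.
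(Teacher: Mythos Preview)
Your proposal is correct and is essentially identical to the paper's proof: both assign each vertex a $\lceil \log_2 n \rceil$-bit binary label and take $\calJ$ to be the $2\lceil \log_2 n \rceil$ sets $J_{i}^{b}$ of vertices whose $i$-th bit equals $b$, verifying isolation via the fact that distinct labels differ in some bit and noting the $O(n\log n)$ construction time.
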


\begin{proof}
We can identify every vertex $v \in V$ by a binary string $v[\cdot] \in \{0, 1\}^{\lceil \log_{2}(n) \rceil}$ and then construct $2\lceil \log_{2}(n) \rceil$ vertex subsets $J_{i,\, b} = \{v \in V \mid v[i] = b\}$, for $1 \leq i \leq \lceil \log_{2}(n) \rceil$ and $b \in \{0, 1\}$; let $\calJ$ be the collection of these $J_{i,\, b}$'s.
This construction clearly takes time $O(n \log(n))$ and $\calJ$ satisfies all desired properties, i.e., every pair of vertices $u \ne v \in V$ must differ $u[i] \ne v[i]$ in some bit $1 \leq i \leq \lceil \log_{2}(n) \rceil$, thus being isolated by either vertex subset $J_{i,\, u[i]}$ or $J_{i,\, v[i]}$. This finishes the proof.
\end{proof}

Our data structure $\calD$ is constructed based on this isolation set cover $\calJ$, i.e., its defining conditions in \Cref{def:cover} will help in efficiently maintaining the contents of data structure $\calD$, under the operations {\Initialize}, {\Insert}, and {\Delete}.\footnote{\label{footnote:cover}Can we build the clustering $(\DSc[v], \DSd[v])_{v \in V}$ directly, rather than indirectly, based on the isolation set cover $\calJ$ and the subclusterings $(\DSc_{J}[v], \DSd_{J}[v])_{(J,\, v) \in \calJ \times V}$?
The answer is yes, and that approach may shave a $\log(n)$ or $\log^{2}(n)$ factor from the running time of our algorithm. However, it will also significantly complicate the design and analysis of our data structure $\calD$, so we decide to adopt the current approach for ease of presentation.}
Further, without ambiguity, we would call every vertex subset $J \in \calJ$ an {\em index} in the remainder of \Cref{sec:DS}.

\subsection*{The center set $C$ and the center subsets $\{C_{J}\}_{J \in \calJ}$}

Firstly, our data structure $\calD$ maintains a {\em center set} $C \subseteq V$ and, for every index $J \in \calJ$, a {\em center subset} $C_{J} \eqdef C \cap J$; we observe that $\cup_{J \in \calJ} C_{J} = C$ and $\cup_{C_{J} \notni c} C_{J}
= C - c$, for every center $c \in C$.\footnote{\label{footnote:center-subset}We have $\cup_{J \in \calJ} C_{J} = \cup_{J \in \calJ} (C \cap J) = C \cap V = C \impliedby \cup_{J \in \calJ} J = V$ (\Cref{def:cover}) and, for every center $c \in C$, $\cup_{C_{J} \notni c} C_{J}
= C \cap (\cup_{J \notni c} J)
= C \cap (V - c)
= C - c \impliedby \cup_{J \notni c} J = V - c$ (\Cref{def:cover}).}
These {\em isolation-set-cover-based} center subsets $\{C_{J}\}_{J \in \calJ}$ will simplify the implementation of our data structure $\calD$. In more detail:

\begin{remark}[Center subsets]
By deleting a current center $c \in C$ from the maintained center set $C$, every vertex $v$ in the center-$c$ cluster shall move to another center-$c'$ cluster; in spirit, they are the nearest $c_{1}[v] = c$ and second-nearest $c_{2}[v] = c'$ centers of $v$.

Suppose that we have identified for a specific vertex $v \in V$ its nearest center $c_{J}[v]$ in every center subset $C_{J}$, $\forall J \in \calJ$, and would further identify its nearest $c_{1}[v]$ and second-nearest $c_{2}[v]$ centers in the whole center set $C$.
The underlying {\em isolation set cover} $\calJ$ (\Cref{def:cover}) ensures that:\\
(i)~$c_{1}[v]$ must be the the nearest one among $\{c_{J}[v]\}_{J \in \calJ} \impliedby \cup_{J \in \calJ} C_{J} = C$, and\\
(ii)~$c_{2}[v]$ must be the nearest one among $\{c_{J}\}_{J \in \calJ \colon C_{J} \notin c} \impliedby \cup_{C_{J} \notni c} C_{J}
= C - c$.\\
So we can easily identify the nearest $c_{1}[v]$ and second-nearest $c_{2}[v]$ centers by enumerating all indices $J \in \calJ$ and all indices $(J \in \calJ \colon C_{J} \notin c)$, respectively, in time $O(|\calJ|) = O(\log(n))$.

In sum, this approach gets rid of (the harder task of) finding second-nearest centers $c_{2}[v]$ and reduces to (the easier task of) finding nearest centers $\{c_{J}[v]\}_{J \in \calJ}$, which will simplify the design and analysis of our data structure $\calD$ (cf.\ \Cref{footnote:cover}).
Also, the actual simplification, as we will consider {\em approximate} nearest and second-nearest centers, will be more significant.
\end{remark}

\subsection*{The subclusterings $(\DSc_{J}[v], \DSd_{J}[v])_{(J,\, v) \in \calJ \times V}$}

Secondly, our data structure $\calD$ maintains a {\em subclustering} $(\DSc_{J}[v], \DSd_{J}[v])_{v \in V}$, for every index $J \in \calJ$.
I.e., regarding every index-$(J \in \calJ)$ center subset $C_{J} \subseteq C$, we maintain the vertex-wise {\em approximate nearest centers} $\DSc_{J}[v] \in C_{J}$, for which $\dist(v, \DSc_{J}[v]) \gtrsim \dist(v, C_{J})$, and the vertex-wise {\em approximate distances} $\DSd_{J}[v] \gtrsim \dist(v, \DSc_{J}[v])$ to those approximate nearest centers $\DSc_{J}[v]$.
(Here, although we have twofold approximations, both will be accurate enough; see \Cref{lem:subclusterings}.)

\begin{invariant*}
\begin{flushleft}
For every index-$(J \in \calJ)$ subclustering $(\DSc_{J}[v], \DSd_{J}[v])_{v \in V}$:\\
In case of a nonempty maintained center subset $C_{J} \ne \emptyset$:
\begin{enumerate}
    \item[{\bf \term[A1]{invar:subclusterings:center}.}]
    $(\DSc_{J}[c], \DSd_{J}[c]) = (c, 0)$, for every center $c \in C_{J}$.
    
    \item[{\bf \term[A2]{invar:subclusterings:edge}.}]
    $\DSd_{J}[u] \le 2^{\eps / \beta} \cdot (\DSd_{J}[v] + w(u, v))$, for every edge $(u, v) \in E$.
    
    \item[{\bf \term[A3]{invar:subclusterings:vertex}.}]
    $\DSc_{J}[v] \in C_{J}$ and $\dist(v, C_{J}) \le \dist(v, \DSc_{J}[v]) \le \DSd_{J}[v]$, for every vertex $v \in V$.
\end{enumerate}
In case of an empty maintained center subset $C_{J} = \emptyset$:
\begin{enumerate}
\setcounter{enumi}{3}
    \item[{\bf \term[A4]{invar:subclusterings:empty}.}]
    $(\DSc_{J}[v], \DSd_{J}[v]) = (\perp, \dmax)$, for every vertex $v \in V$.\footnote{The notation $\perp$ represents a badly-defined ``center'', and we let $\dist(v, \perp) = +\infty$.}
\end{enumerate}
\end{flushleft}
\end{invariant*}

The following \Cref{lem:subclusterings} shows that the subclusterings $(\DSc_{J}[v], \DSd_{J}[v])_{v \in V}$ can well approximate the ``groundtruth'', provided Invariants~\ref{invar:subclusterings:center} to \ref{invar:subclusterings:empty} (and \Cref{assumption:edge-weight,assumption:DS:hop-bounded}).

\begin{lemma}[Maintained subclusterings]
\label{lem:subclusterings}
$\DSd_{J}[v] \le \min(\dmax,\ 2^{2\eps} \cdot \dist(v, C_{J}))$, for every entry $(J, v) \in \calJ \times V$, provided Invariants~\ref{invar:subclusterings:center} to \ref{invar:subclusterings:empty} (as well as \Cref{assumption:edge-weight,assumption:DS:hop-bounded}).
\end{lemma}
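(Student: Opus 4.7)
The plan is to case-split on whether the maintained center subset $C_{J}$ is empty.

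In the easy case $C_{J} = \emptyset$, Invariant~\ref{invar:subclusterings:empty} directly gives $\DSd_{J}[v] = \dmax$. With the natural convention $\dist(v, \emptyset) = +\infty$, the right-hand side $\min(\dmax,\ 2^{2\eps} \dist(v, C_{J})) = \dmax$, so the claim is immediate.

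In the main case $C_{J} \ne \emptyset$, first dispatch the trivial subcase $v \in C_{J}$ by Invariant~\ref{invar:subclusterings:center}, which yields $\DSd_{J}[v] = 0$. For $v \notin C_{J}$, I would let $c^{*} \in C_{J}$ be the (unique) nearest center to $v$, so that $\dist(v, C_{J}) = \dist(v, c^{*})$. By hop-boundedness (\Cref{def:hop_bounded} plus \Cref{assumption:DS:hop-bounded}), there exists a path $v = v_{0}, v_{1}, \ldots, v_{\ell} = c^{*}$ with $\ell \le \beta$ hops and total weight $\sum_{i=1}^{\ell} w(v_{i-1}, v_{i}) \le 2^{\eps} \dist(v, c^{*})$. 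Using $\DSd_{J}[c^{*}] = 0$ (Invariant~\ref{invar:subclusterings:center}) as the base case, I would iterate the edge relaxation from Invariant~\ref{invar:subclusterings:edge},
\[
\DSd_{J}[v_{i-1}] ~\le~ 2^{\eps/\beta} \cdot \bigl(\DSd_{J}[v_{i}] + w(v_{i-1}, v_{i})\bigr),
\]
unrolling from $v_{\ell} = c^{*}$ down to $v_{0} = v$. The resulting telescope gives
\[
\DSd_{J}[v] ~\le~ \sum_{i=1}^{\ell} 2^{i\eps/\beta} \cdot w(v_{i-1}, v_{i}),
\]
and since $i \le \ell \le \beta$ implies $2^{i\eps/\beta} \le 2^{\eps}$, I conclude
\[
\DSd_{J}[v] ~\le~ 2^{\eps} \sum_{i=1}^{\ell} w(v_{i-1}, v_{i}) ~\le~ 2^{\eps} \cdot 2^{\eps} \dist(v, c^{*}) ~=~ 2^{2\eps} \dist(v, C_{J}).
\]
The $\dmax$ side of the minimum is then absorbed by combining this bound with \Cref{prop:distance}, which gives $\dist(v, C_{J}) \le \dmax$ whenever $C_{J} \ne \emptyset$, together with the fact that the maintenance rules (traceable to Invariant~\ref{invar:subclusterings:empty} as the initial state) only ever update $\DSd_{J}$ within $[0, \dmax]$.

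The main obstacle I anticipate is the telescope bookkeeping: the $2^{\eps/\beta}$ factor compounds once per hop, and I need to make sure that capping $\ell$ at $\beta$ (via hop-boundedness) controls the accumulated exponent to exactly $2^{\eps}$, so that together with the length bound $2^{\eps} \dist(v, c^{*})$ on the path I obtain a $2^{2\eps}$ rather than a larger factor. Everything else in the argument is a direct invocation of the stated invariants plus the hop-boundedness assumption.
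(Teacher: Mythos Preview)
Your approach is essentially the paper's: case-split on $C_J = \emptyset$ (Invariant~\ref{invar:subclusterings:empty}), and otherwise telescope Invariant~\ref{invar:subclusterings:edge} along a hop-bounded path to the nearest center in $C_J$, anchored by $\DSd_J[c^*]=0$ from Invariant~\ref{invar:subclusterings:center}. Your unrolled bound $\DSd_{J}[v] \le \sum_{i=1}^{\ell} 2^{i\eps/\beta}\, w(v_{i-1}, v_{i}) \le 2^{\eps}\sum_i w(v_{i-1},v_i)$ is correct and matches the paper's induction.

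One small gap: your handling of the $\dmax$ side does not go through as written. Combining $\DSd_J[v]\le 2^{2\eps}\dist(v,C_J)$ with $\dist(v,C_J)\le\dmax$ only yields $2^{2\eps}\dmax$, and the appeal to ``maintenance rules only ever update $\DSd_J$ within $[0,\dmax]$'' steps outside the hypotheses --- the lemma is conditioned purely on the invariants, not on how the data structure was built. The clean fix is already in your hands: from your intermediate bound $\DSd_{J}[v] \le 2^{\eps}\sum_{i=1}^{\ell} w(v_{i-1}, v_{i})$, bound each edge weight by $\wmax$ (\Cref{assumption:edge-weight}) and the number of hops by $\beta$ to get $\DSd_J[v]\le 2^{\eps}\beta\,\wmax=\dmax$ directly (this is exactly how the paper closes it, via \Cref{prop:distance}).
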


\begin{proof}
The case of an empty maintained center subset $C_{J} = \emptyset$ is trivial (Invariant~\ref{invar:subclusterings:empty}); below we address the other case of a nonempty maintained center subset $C_{J} \ne \emptyset$.

Without loss of generality, let us consider a specific vertex $v \in V$ and its nearest center $c_{J,\, v}^{*} \eqdef \argmin_{c \in C_{J}} \dist(v, c) \in C_{J}$.
Since our graph $G = (V, E, w)$ is $(\beta, \eps)$-hop-bounded (\Cref{assumption:DS:hop-bounded}), there exists a $v$-to-$c_{J,\, v}^{*}$-path $(v \equiv x_{0}), x_{1}, \dots, (x_{\beta'} \equiv c_{J,\, v}^{*})$  with $\beta' \le \beta$ edges such that
\begin{align}
\label{eq:subclusterings}
    \sum_{i \in [\beta']} w(x_{i - 1}, x_{i})
    ~\le~ 2^{\eps} \cdot \dist(v, c_{J,\, v}^{*})
    ~=~ 2^{\eps} \cdot \dist(v, C_{J}).
\end{align}
Moreover, Invariant~\ref{invar:subclusterings:edge} ensures that, for every $i \in [\beta']$:
\begin{align*}
    2^{(\eps / \beta) \cdot (i - 1)} \cdot \DSd_{J}[x_{i - 1}]
    ~\le~ 2^{(\eps / \beta) \cdot i} \cdot (\DSd_{J}[x_{i}] + w(x_{i - 1}, x_{i}))
    ~\le~ 2^{(\eps / \beta) \cdot i} \cdot \DSd_{J}[x_{i}] + 2^{\eps} \cdot w(x_{i - 1}, x_{i}).
\end{align*}
Since $\DSd_{J}[x_{\beta'}] \equiv \DSd_{J}[c_{J,\, v}^{*}] = 0$ (Invariant~\ref{invar:subclusterings:center}), we infer from an induction over $i \in [\beta']$ that
\begin{align*}
    \DSd_{J}[v]
    ~\le~ 2^{(\eps / \beta) \cdot \beta'} \cdot \DSd_{J}[x_{\beta'}] + \sum_{i \in [\beta']} 2^{\eps} \cdot w(x_{i - 1}, x_{i})
    ~=~ \sum_{i \in [\beta']} 2^{\eps} \cdot w(x_{i - 1}, x_{i}).
\end{align*}
This equation, in combination with \Cref{eq:subclusterings,assumption:edge-weight,prop:distance}, implies the claimed bounds $\DSd_{J}[v] \le 2^{2\eps} \cdot \dist(v, C_{J})$ and $\DSd_{J}[v] \le 2^{2\eps} \cdot \beta \cdot \wmax = \dmax$.

This finishes the proof of \Cref{lem:subclusterings}.
\end{proof}

We also remark that all possible modifications of the subclusterings $(\DSc_{J}[v], \DSd_{J}[v])_{v \in V}$ (due to the operations in \Cref{subsec:initialize,subsec:insert,subsec:delete}) will always maintain Invariants~\ref{invar:subclusterings:center} to \ref{invar:subclusterings:empty}.

\begin{remark}[Subclusterings]
Remarkably, we are directly maintaining the stronger Invariant~\ref{invar:subclusterings:edge} rather than directly maintaining (\Cref{lem:subclusterings}) its weaker implication $\DSd_{J}[v] \leq 2^{2\eps} \cdot \dist(v, C_{J})$. This is because Invariant~\ref{invar:subclusterings:edge} is easy to maintain -- we only need to detect edge-wise violations. In contrast, when its weaker implication $\DSd_{J}[v] \leq 2^{2\eps} \cdot \dist(v, C_{J})$ is violated by an entry $(J, v) \in \calJ \times V$, we have to recompute $\DSd_{J}[v]$, which is far less efficient.
\end{remark}

\subsection*{The clustering $(\DSc[v], \DSd[v])_{v \in V}$}

Thirdly, our data structure $\calD$ maintains a {\em clustering} $(\DSc[v], \DSd[v])_{v \in V}$.
Formally:

\begin{invariant*}
\begin{flushleft}
For the clustering $(\DSc[v], \DSd[v])_{v \in V}$:
\begin{enumerate}
    \item[{\bf \term[B]{invar:clustering}.}]
    Every pair $(\DSc[v], \DSd[v])$, for $v \in V$, is the $\DSd_{J}[v]$-minimizer among the pairs $(\DSc_{J}[v], \DSd_{J}[v])_{J \in \calJ}$.\footnote{\label{footnote:break-tie}If two pairs $(\DSc_{J}[u], \DSd_{J}[u])$ and $(\DSc_{J}[v], \DSd_{J}[v])$ have the same $\DSd_{J}[u] = \DSd_{J}[v]$ values, we break ties in favor of the pair with $\DSc_{J}[u], \DSc_{J}[v] \ne \perp$ (if any) but otherwise arbitrarily.}
\end{enumerate}
\end{flushleft}
\end{invariant*}

\noindent
Thus, we are trying to maintain the vertex-wise {\em approximate nearest centers} $\DSc[v] \in C$ in the {\em entire} maintained center set $C$, for which $\dist(v, \DSc[v]) \gtrsim \dist(v, C)$, and the vertex-wise {\em approximate distances} $\DSd[v] \gtrsim \dist(v, \DSc[v])$ to those approximate nearest centers $\DSc[v]$.\textsuperscript{\ref{footnote:cover}}
These are formalized into the following \Cref{lem:clustering}, provided Invariant~\ref{invar:clustering}.

\begin{lemma}[Maintained clustering]
\label{lem:clustering}
\begin{flushleft}
For the clustering $(\DSc[v], \DSd[v])_{v \in V}$, provided Invariant~\ref{invar:clustering} (as well as \Cref{assumption:edge-weight,assumption:DS:hop-bounded} and Invariants~\ref{invar:subclusterings:center} to \ref{invar:subclusterings:empty}):
\begin{enumerate}[font = {\em\bfseries}]
    \item \label{lem:clustering:center}
    $(\DSc[c], \DSd[c]) = (c, 0)$, for every center $c \in C$.
    
    \item \label{lem:clustering:vertex}
    $\DSc[v] \in C$ and $\dist(v, C) \le \dist(v, \DSc[v]) \le \DSd[v] \le 2^{2\eps} \cdot \dist(v, C)$, for every vertex $v \in V$.
\end{enumerate}
\end{flushleft}
\end{lemma}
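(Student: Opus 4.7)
The plan is to derive both claims directly from Invariant~\ref{invar:clustering} together with the subclustering Invariants~\ref{invar:subclusterings:center}--\ref{invar:subclusterings:empty} and \Cref{lem:subclusterings}, exploiting the set-cover property $\cup_{J \in \calJ} C_{J} = C$ (\Cref{def:cover} and \Cref{footnote:center-subset}).

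For Property~\ref{lem:clustering:center}, fix a center $c \in C$. By the cover property there exists some index $J^{\star} \in \calJ$ with $c \in C_{J^{\star}}$, so Invariant~\ref{invar:subclusterings:center} gives $(\DSc_{J^{\star}}[c], \DSd_{J^{\star}}[c]) = (c, 0)$. Since $\DSd_{J}[v] \ge 0$ for every entry, $0$ is the minimum value that the quantity $\DSd_{J}[c]$ can attain as $J$ ranges over $\calJ$, and by the tie-breaking rule of Invariant~\ref{invar:clustering} (which favors pairs with a valid center over $\perp$; see \Cref{footnote:break-tie}) the minimizer is a pair $(\DSc_{J}[c], \DSd_{J}[c]) = (c, 0)$, yielding $(\DSc[c], \DSd[c]) = (c, 0)$.

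For Property~\ref{lem:clustering:vertex}, fix an arbitrary vertex $v \in V$ and let $J \in \calJ$ be the index attaining the minimum in Invariant~\ref{invar:clustering}. Because $C \ne \emptyset$ in any meaningful situation, the cover property provides at least one index $J'$ with $C_{J'} \ne \emptyset$; for such $J'$ Invariant~\ref{invar:subclusterings:vertex} gives a finite $\DSd_{J'}[v] \le 2^{2\eps}\dist(v, C_{J'}) \le \dmax$ (via \Cref{lem:subclusterings}), whereas Invariant~\ref{invar:subclusterings:empty} only offers the worst value $\dmax$ for empty center subsets. Hence the minimizer $J$ must satisfy $C_{J} \ne \emptyset$ (modulo the tie-breaking rule), so Invariant~\ref{invar:subclusterings:vertex} applies and yields $\DSc[v] = \DSc_{J}[v] \in C_{J} \subseteq C$ together with $\dist(v, C) \le \dist(v, \DSc[v]) \le \DSd_{J}[v] = \DSd[v]$, establishing the two lower bounds.

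Finally, to obtain the upper bound $\DSd[v] \le 2^{2\eps}\dist(v, C)$, let $c^{\star} \in C$ realize $\dist(v, C)$ and, again using $\cup_{J \in \calJ} C_{J} = C$, pick an index $J^{\star} \in \calJ$ with $c^{\star} \in C_{J^{\star}}$. Then $\dist(v, C_{J^{\star}}) \le \dist(v, c^{\star}) = \dist(v, C)$, and applying \Cref{lem:subclusterings} to this particular entry gives
\begin{equation*}
    \DSd_{J^{\star}}[v]
    ~\le~ 2^{2\eps} \cdot \dist(v, C_{J^{\star}})
    ~\le~ 2^{2\eps} \cdot \dist(v, C).
\end{equation*}
Invariant~\ref{invar:clustering} then forces $\DSd[v] \le \DSd_{J^{\star}}[v] \le 2^{2\eps} \cdot \dist(v, C)$. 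I expect no real obstacle here; the only subtle point is to justify that the $J$ chosen by the minimization rule lands in a nonempty $C_{J}$, which the tie-breaking in \Cref{footnote:break-tie} together with the above witness $J'$ takes care of.
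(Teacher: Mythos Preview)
Your argument for \Cref{lem:clustering:vertex} is essentially identical to the paper's. For \Cref{lem:clustering:center}, the paper instead derives it \emph{from} \Cref{lem:clustering:vertex}: since $\dist(c,C)=0$, \Cref{lem:clustering:vertex} gives $\DSd[c]=0$ and $\dist(c,\DSc[c])\le \DSd[c]=0$, which forces $\DSc[c]=c$ by \Cref{prop:distance}.

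Your direct route to \Cref{lem:clustering:center} has a small gap: the tie-breaking rule in \Cref{footnote:break-tie} only guarantees that the selected minimizer has $\DSc_{J}[c]\ne\perp$, not that it equals $c$. Several indices $J$ may achieve $\DSd_{J}[c]=0$, and among those the choice is arbitrary; you still need to argue that \emph{every} such index yields $\DSc_{J}[c]=c$. This follows immediately from Invariant~\ref{invar:subclusterings:vertex} (which gives $\dist(c,\DSc_{J}[c])\le\DSd_{J}[c]=0$) together with \Cref{prop:distance}, but you should say so explicitly.
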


\begin{proof}
Consider a specific vertex $v \in V$ and its defining index $J_{v} \eqdef \argmin_{J \in \calJ} \DSd_{J}[v]$ for Invariant~\ref{invar:clustering}, namely $(\DSc[v], \DSd[v]) = (\DSc_{J_{v}}[v], \DSd_{J_{v}}[v])$.
The index-$J_{v}$ center subset must be nonempty $C_{J_{v}} \ne \emptyset$, since $\cup_{J \in \calJ} C_{J} = C \ne \emptyset$ (\Cref{def:cover}),\textsuperscript{\ref{footnote:center-subset}}
$\DSd_{J}[v] \le \dmax$ for every index $J \in \calJ$ (\Cref{lem:subclusterings}), where the equality holds whenever $C_{J} = \emptyset$ (Invariant~\ref{invar:subclusterings:empty}), and we break ties in favor of the pairs $(\DSc_{J}[v], \DSd_{J}[v])$ with $\DSc_{J}[v] \ne \perp$ (\Cref{footnote:break-tie}).
Then, we know from Invariant~\ref{invar:subclusterings:vertex} that $\DSc[v] = \DSc_{J_{v}}[v] \in C_{J_{v}} \subseteq C$ and
\begin{align*}
    \DSd[v]
    ~=~ \DSd_{J_{v}}[v]
    ~\ge~ \dist(v, \DSc_{J_{v}}[v])
    ~=~ \dist(v, \DSc[v])
    ~\ge~ \dist(v, C).
\end{align*}
Further, consider the actual nearest center $c_{v}^{*} = \argmin_{c \in C} \dist(v, c)$ in the maintained center set $C$ and a specific center subset $C_{J_{v}^{*}} \ni c_{v}^{*}$ containing it; once again, such a center subset $C_{J_{v}^{*}}$ must exist, given that $\cup_{J \in \calJ} C_{J} = C \ni c_{v}^{*}$.
Then, we can deduce that
\begin{align*}
    \DSd[v]
    ~=~ \DSd_{J_{v}}[v]
    ~\le~ \DSd_{J_{v}^{*}}[v]
    ~\le~ 2^{2\eps} \cdot \dist(v, C_{J_{v}^{*}})
    ~=~ 2^{2\eps} \cdot \dist(v, C).
\end{align*}
Here, the second step applies $J_{v} = \argmin_{J \in \calJ} \DSd_{J}[v]$, and the third step applies \Cref{lem:subclusterings} (notice that $C_{J_{v}^{*}} \ne \emptyset \impliedby C_{J_{v}^{*}} \ni c_{v}^{*}$).
Combining everything together gives \Cref{lem:clustering:vertex}.

Further, \Cref{lem:clustering:center} follows from the observation $\dist(c, C) = 0 \implies \DSd[c] = 0 \implies \DSc[c] \in c$; here, the first step applies \Cref{lem:clustering:vertex}, and the second step applies \Cref{prop:distance}, namely $\dist(c, v) \ge \dmin = 1$ for any other vertex $v \in V - c$.
This finishes the proof of \Cref{lem:clustering}.
\end{proof}

All possible modifications to the clustering $(\DSc[v], \DSd[v])_{v \in V}$ (due to the operations in \Cref{subsec:initialize,subsec:insert,subsec:delete}) will always maintain Invariant~\ref{invar:clustering}.
In this regard, the following \Cref{lem:maintain-clustering} shows that these two contents can be modified efficiently.
\Cref{lem:maintain-clustering} allows us to focus on maintaining the subclusterings $(\DSc_{J}[v], \DSd_{J}[v])_{(J,\, v) \in \calJ \times V}$. I.e., every time when they are modified, we can maintain -- or {\em synchronize} -- the clustering $(\DSc[v], \DSd[v])_{v \in V}$ efficiently.

\begin{lemma}[Synchronization]
\label{lem:maintain-clustering}
\begin{flushleft}
For the clustering $(\DSc[v], \DSd[v])_{v \in V}$:
\begin{enumerate}[font = {\em\bfseries}]
    \item \label{lem:maintain-clustering:initialize}
    It can be built on the subclusterings $(\DSc_{J}[v], \DSd_{J}[v])_{(J,\, v) \in \calJ \times V}$ in time $O(n \log(n))$.
    
    \item \label{lem:maintain-clustering:modify}
    It can be synchronized to maintain Invariant~\ref{invar:clustering} in time $O(\log(n))$, every time when the subclusterings $(\DSc_{J}[v], \DSd_{J}[v])_{(J,\, v) \in \calJ \times V}$ are modified at a single entry $(J, v) \in \calJ \times V$.
\end{enumerate}
\end{flushleft}
\end{lemma}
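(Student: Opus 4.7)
The plan is to invoke Invariant~\ref{invar:clustering} directly as the algorithmic recipe. That invariant stipulates that for every vertex $v \in V$, the clustering entry $(\DSc[v], \DSd[v])$ is the $\DSd_J[v]$-minimizer, with the tie-break convention of Footnote~\ref{footnote:break-tie}, among the collection $\{(\DSc_J[v], \DSd_J[v])\}_{J \in \calJ}$, whose size is $|\calJ| = O(\log n)$ by \Cref{lem:cover}. Consequently, both parts of the statement reduce to computing (or re-computing) such a single-vertex minimum.

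For Part~\ref{lem:maintain-clustering:initialize}, I would iterate over all $n$ vertices $v \in V$ and, for each $v$, perform one linear scan across the $O(\log n)$ indices $J \in \calJ$, tracking the running minimizer under the lexicographic key $(\DSd_J[v],\,\mathbf{1}[\DSc_J[v] = \perp])$. The secondary coordinate enforces the convention of Footnote~\ref{footnote:break-tie}, namely that among pairs with equal $\DSd_J[v]$-value we prefer any pair whose center component is not $\perp$. This produces the correct $(\DSc[v], \DSd[v])$ satisfying Invariant~\ref{invar:clustering} in $O(|\calJ|) = O(\log n)$ time per vertex, totaling $O(n \log n)$.

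For Part~\ref{lem:maintain-clustering:modify}, the key observation is a locality property: a modification at a single entry $(J, v) \in \calJ \times V$ changes only the collection $\{(\DSc_{J'}[v], \DSd_{J'}[v])\}_{J' \in \calJ}$ at the single vertex $v$; for every other vertex $v' \ne v$, the pairs defining $(\DSc[v'], \DSd[v'])$ are untouched and Invariant~\ref{invar:clustering} is automatically preserved at $v'$. Hence it suffices to recompute $(\DSc[v], \DSd[v])$ alone, by rerunning the same $O(\log n)$-time lexicographic scan from Part~\ref{lem:maintain-clustering:initialize} over the (at most) $|\calJ|$ pairs at vertex $v$.

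The main thing to watch is the tie-breaking rule: when the subcluster $C_{J'}$ is empty we have $(\DSc_{J'}[v], \DSd_{J'}[v]) = (\perp, \dmax)$ by Invariant~\ref{invar:subclusterings:empty}, and it is entirely possible that some nonempty index $J''$ also attains $\DSd_{J''}[v] = \dmax$; the lexicographic comparison above picks the valid $\DSc_{J''}[v] \in C$ over $\perp$, which is exactly what Footnote~\ref{footnote:break-tie} and the proof of \Cref{lem:clustering} rely on. Beyond this, there is no real obstacle: since $|\calJ|$ is already polylogarithmic, a naive rescan meets both claimed time bounds, and no per-vertex heap or auxiliary structure is needed.
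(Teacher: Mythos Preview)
Your proposal is correct and takes essentially the same approach as the paper: the paper's proof is a one-liner noting that each $(\DSc[v],\DSd[v])$ is the $\DSd_J[v]$-minimizer among the $|\calJ|=O(\log n)$ subclustering pairs and can thus be built or synchronized by enumerating them in time $O(|\calJ|)$. Your write-up simply spells out the locality observation for Part~\ref{lem:maintain-clustering:modify} and the tie-breaking convention in more detail than the paper does.
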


\begin{proof}
Obvious; every pair $(\DSc[v], \DSd[v])$, as the $\DSd_{J}[v]$-minimizer among the pairs $(\DSc_{J}[v], \DSd_{J}[v])_{J \in \calJ}$, can be built or synchronized by enumerating $(\DSc_{J}[v], \DSd_{J}[v])_{J \in \calJ}$ in time $O(|\calJ|) = O(\log(n))$.
This finishes the proof of \Cref{lem:maintain-clustering}.
\end{proof}

\subsection*{The binary search tree $\BST$ and the objective estimator $\DScost$}

Fourthly, our data structure $\calD$ maintains a {\em binary search tree} $\BST$ \cite[Chapter~12]{CLRS22}, (built from the clustering $(\DSc[v], \DSd[v])_{v \in V}$) and an {\em objective estimator} $\DScost$.
Formally:

\newcommand{\sfvalue}{{\sf value}}

\begin{invariant*}
\begin{flushleft}
For the binary search tree $\BST$ and the objective estimator $\DScost$:
\begin{enumerate}
    \item[{\bf \term[C]{invar:binary-search-tree}.}]
    $\BST = \BST((v, \DSd^{z}[v])_{v \in V})$ is a binary search tree storing abstract pairs $(U, \sfvalue[U])$. Concretely:\\
    (i)~Its every {\em leaf} takes the form $(U, \sfvalue[U]) = (\{v\}, \DSd^{z}[v])$, i.e., corresponding one-to-one to every vertex $v \in V$ and storing the $z$-th power of this vertex $v$'s approximate distance $\DSd^z[v]$.\\
    (ii)~Its every {\em nonleaf} takes the form $(U, \sfvalue[U]) = (U_{\sf left} \cup U_{\sf right}, \sfvalue[U_{\sf left}] + \sfvalue[U_{\sf right}])$, i.e., ``merging'' this nonleaf's left child $(U_{\sf left}, \sfvalue[U_{\sf left}])$ and right child $(U_{\sf right}, \sfvalue[U_{\sf right}])$.\\
    (iii)~We ensure the disjointness $U_{\sf left} \cap U_{\sf right} = \emptyset$, for every nonleaf $(U, \sfvalue[U])$.\footnote{The parent $(U, \sfvalue[U])$ of two leaves $(\{v_{\sf left}\}, \DSd^{z}[v_{\sf left}])$ and $(\{v_{\sf right}\}, \DSd^{z}[v_{\sf right}])$ -- at most one dummy leaf $(\emptyset, 0)$ -- satisfies this property $\{v_{\sf left}\} \cap \{v_{\sf right}\} = \emptyset$, since all leaves $(\{v\}, \DSd^{z}[v])$ correspond one-to-one to all vertices $v \in V$. Then, by induction, other nonleaves $(U, \sfvalue[U])$ can be implemented to maintain this property.}\\
    \Comment{Accordingly, this binary search tree $\BST$ ensures that $\BST.\texttt{root} = (V, \sum_{v \in V} \DSd^{z}[v])$ and can be implemented to have size $|\BST| \le 2^{\lceil \log_{2}(n) \rceil + 1} - 1 = O(n)$ and height $= \lceil \log_{2}(n) \rceil = O(\log(n))$.}
    
    \item[{\bf \term[D]{invar:DScost}.}]
    $\DScost = \BST.\texttt{root}.\texttt{value} = \sum_{v \in V} \DSd^{z}[v]$.
\end{enumerate}
\end{flushleft}
\end{invariant*}

The following \Cref{lem:deletion-estimator} gives useful upper and lower bounds on the objective estimator $\DScost$, provided Invariants~\ref{invar:binary-search-tree} and \ref{invar:DScost}.

\begin{lemma}[Maintained objective estimator]
\label{lem:DScost}
\begin{flushleft}
$\cost(V, C) \le \DScost \le 2^{2\eps z} \cdot \cost(V, C)$, provided Invariants~\ref{invar:binary-search-tree} and \ref{invar:DScost} (as well as \Cref{assumption:edge-weight,assumption:DS:hop-bounded}, Invariants~\ref{invar:subclusterings:center} to \ref{invar:subclusterings:empty}, and Invariant~\ref{invar:clustering}).
\end{flushleft}
\end{lemma}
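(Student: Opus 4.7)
The plan is to unpack $\DScost$ using its invariant and then apply the per-vertex accuracy bound already established for the maintained clustering. First, by Invariant~\ref{invar:DScost}, I can rewrite $\DScost = \sum_{v \in V} \DSd^{z}[v]$, reducing the problem to a pointwise comparison between $\DSd[v]$ and $\dist(v, C)$. Crucially, all the hypotheses needed by \Cref{lem:clustering} are assumed, so I can invoke item~\ref{lem:clustering:vertex} of that lemma directly.

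Next, for every vertex $v \in V$, \Cref{lem:clustering:vertex} yields the sandwich $\dist(v, C) \le \DSd[v] \le 2^{2\eps} \cdot \dist(v, C)$. Since $z \ge 1$ and both sides are nonnegative, raising to the $z$-th power preserves the inequalities and gives
\begin{align*}
    \dist^{z}(v, C) ~\le~ \DSd^{z}[v] ~\le~ 2^{2\eps z} \cdot \dist^{z}(v, C).
\end{align*}

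Finally, summing over all $v \in V$ and using the definition $\cost(V, C) = \sum_{v \in V} \dist^{z}(v, C)$ together with the identity $\DScost = \sum_{v \in V} \DSd^{z}[v]$ from the first step immediately yields $\cost(V, C) \le \DScost \le 2^{2\eps z} \cdot \cost(V, C)$, as claimed. There is no real obstacle here — the bulk of the work has already been done in \Cref{lem:clustering}, and this lemma is essentially a pointwise-to-aggregate extension of that bound. The only subtlety worth flagging is that the monotonicity step relies on the $z \ge 1$ assumption built into the \kzC setup; this is automatic from \Cref{sec:prelim}, so no additional care is needed.
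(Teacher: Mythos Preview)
Your proof is correct and follows essentially the same approach as the paper: both invoke Invariant~\ref{invar:DScost} to write $\DScost = \sum_{v \in V} \DSd^{z}[v]$ and then apply \Cref{lem:clustering:vertex} of \Cref{lem:clustering} pointwise before summing. The paper's proof is a one-liner, but your more detailed unpacking is equivalent.
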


\begin{proof}
This follows immediately from \Cref{lem:clustering:vertex} of \Cref{lem:clustering}, since $\cost(V, C) = \sum_{v \in V} \dist^{z}(v, C)$ and $\DScost = \sum_{v \in V} \DSd^{z}[v]$ (Invariant~\ref{invar:DScost}).
\end{proof}

All possible modifications to the binary search tree $\BST$ and the objective estimator $\DScost$ (due to the operations in \Cref{subsec:initialize,subsec:insert,subsec:delete}) will always maintain Invariants~\ref{invar:binary-search-tree} and \ref{invar:DScost}.
In this regard, the following \Cref{lem:maintain-DScost} shows that these two contents can be modified efficiently.

\begin{lemma}[Synchronization]
\label{lem:maintain-DScost}
\begin{flushleft}
For the binary search tree $\BST$ and the objective estimator $\DScost$:
\begin{enumerate}[font = {\em\bfseries}]
    \item \label{lem:maintain-DScost:initialize}
    Both can be built on the clusterings $(\DSc[v], \DSd[v])_{v \in V}$ in time $O(n)$.
    
    \item \label{lem:maintain-DScost:modify}
    Both can be synchronized to maintain Invariants~\ref{invar:binary-search-tree} and \ref{invar:DScost} in time $O(\log(n))$, every time when the subclusterings $(\DSc_{J}[v], \DSd_{J}[v])_{(J,\, v) \in \calJ \times V}$ are modified at a single entry $(J, v) \in \calJ \times V$.
\end{enumerate}
\end{flushleft}
\end{lemma}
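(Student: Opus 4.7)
The plan is to handle the two items separately, leveraging the fact that $\BST$ is a balanced binary tree of height $\lceil \log_{2}(n)\rceil$ in which every internal node's \texttt{value} is the sum of its children's \texttt{value}'s, and that $\DScost$ is simply $\BST.\texttt{root}.\texttt{value}$.

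For \textbf{Item 1}, I would build $\BST$ bottom-up. First allocate $n$ leaves, where the $v$-th leaf is the pair $(\{v\}, \DSd^{z}[v])$; this reads $\DSd[v]$ once per vertex, takes $O(n)$ time and produces all the leaves required by Invariant~\ref{invar:binary-search-tree}(i). Then repeatedly pair adjacent nodes (padding by at most one dummy $(\emptyset, 0)$ leaf per level to accommodate odd counts) to form internal nodes of the form $(U_{\sf left} \cup U_{\sf right}, \sfvalue[U_{\sf left}] + \sfvalue[U_{\sf right}])$, as required by Invariant~\ref{invar:binary-search-tree}(ii)-(iii). Since level $i$ has at most $\lceil n/2^{i} \rceil + 1$ nodes, the total work is $\sum_{i=0}^{\lceil \log_{2}(n)\rceil}(\lceil n/2^{i}\rceil+1) = O(n)$, and the resulting tree has the claimed size $\leq 2^{\lceil \log_{2}(n)\rceil + 1}-1$ and height $\lceil \log_{2}(n)\rceil$. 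Finally, set $\DScost \gets \BST.\texttt{root}.\texttt{value}$; by construction this equals $\sum_{v \in V} \DSd^{z}[v]$, so Invariant~\ref{invar:DScost} holds.

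For \textbf{Item 2}, suppose the subclusterings are modified at a single entry $(J,v) \in \calJ \times V$. I would first invoke \Cref{lem:maintain-clustering:modify} of \Cref{lem:maintain-clustering} to resynchronize the clustering pair $(\DSc[v], \DSd[v])$ in $O(\log n)$ time; note that this can change only $(\DSc[v], \DSd[v])$ and no other clustering pair, since Invariant~\ref{invar:clustering} is a per-vertex condition and only the $J$-column of the row indexed by $v$ has changed. If $\DSd[v]$ is unchanged then $\DSd^{z}[v]$ is unchanged, so $\BST$ and $\DScost$ already satisfy their invariants and there is nothing to do. Otherwise, update the $v$-th leaf's \texttt{value} to the new $\DSd^{z}[v]$, and walk from that leaf to the root, recomputing each ancestor's \texttt{value} as the sum of its children's \texttt{value}'s. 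The path has length $\lceil \log_{2}(n)\rceil = O(\log n)$ and each step is $O(1)$, so this recomputation takes $O(\log n)$. The sum property is restored along the entire path (and unchanged everywhere else, since the $U$ components of internal nodes never change), re-establishing Invariant~\ref{invar:binary-search-tree}. Finally set $\DScost \gets \BST.\texttt{root}.\texttt{value}$, which now equals $\sum_{v \in V} \DSd^{z}[v]$ and so maintains Invariant~\ref{invar:DScost}.

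The only non-routine point is ensuring that a single entry change propagates to $O(\log n)$ work in $\BST$; this is immediate because the leaves correspond one-to-one to vertices and the tree is height-balanced by construction. No element of the proof is expected to be an obstacle, and the argument is essentially a direct consequence of the bottom-up summation structure enforced by Invariants~\ref{invar:binary-search-tree} and \ref{invar:DScost} together with \Cref{lem:maintain-clustering}.
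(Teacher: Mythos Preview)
Your proposal is correct and follows essentially the same approach as the paper: build the tree bottom-up in $O(n)$ time, observe (via Invariant~\ref{invar:clustering}) that a single subclustering-entry change can affect only the clustering pair at that one vertex $v$, and then update the single leaf and its $O(\log n)$ ancestors. The paper's proof is terser (it defers the construction and path-update details to \cite[Chapter~12]{CLRS22}), but the content is the same.
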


\begin{proof}
This binary search tree $\BST$ has size $|\BST| \le 2^{\lceil \log_{2}(n) \rceil + 1} - 1 = O(n)$, so we can build it in time $O(|\BST|) = O(n)$ \cite[Chapter~12]{CLRS22}.
When the subclusterings $(\DSc_{J}[v], \DSd_{J}[v])_{(J,\, v) \in \calJ \times V}$ are modified at a single entry $(J, v) \in \calJ \times V$, the clustering $(\DSc[v], \DSd[v])_{v \in V}$ may therefore be modified but, more importantly, can only be modified at this single vertex $v \in V$ (Invariant~\ref{invar:clustering}).
Thus by construction, this binary search tree $\BST$ can be modified to maintain Invariant~\ref{invar:binary-search-tree} in time $O(\log(|\BST|)) = O(\log(n))$ \cite[Chapter~12]{CLRS22}.
Further, the objective estimator $\DScost = \BST.\texttt{root}.\texttt{value}$ will be synchronized to automatically maintain Invariant~\ref{invar:DScost}. This finishes the proof of \Cref{lem:maintain-DScost}.
\end{proof}

\subsection*{The deletion estimator $(\DSloss[c], \DSvolume[c])_{c \in C}$}

Fifthly, our data structure $\calD$ maintains a {\em deletion estimator} $(\DSloss[c], \DSvolume[c])_{c \in C}$ (built from the subclusterings $(\DSc_{J}[v], \DSd_{J}[v])_{(J,\, v) \in \calJ \times V}$).
Formally:

\begin{invariant*}
For the deletion estimator $(\DSloss[c], \DSvolume[c])_{c \in C}$:
\begin{enumerate}
    \item[{\bf \term[E]{invar:DSloss}.}]
    $\DSloss[c] = \sum_{v \in V \colon \DSc[v] = c} ((\min_{J \in \calJ \colon J \notni c} \DSd_{J}^{z}[v]) - \DSd^{z}[v])$, for every center $c \in C$.
    
    \item[{\bf \term[F]{invar:DSvolume}.}]
    $\DSvolume[c] = \frac{1}{2m |\calJ|} \sum_{(J,\, v) \in \calJ \times V \colon \DSc_{J}[v] = c} \deg(v)$, for every center $c \in C$.
\end{enumerate}
\end{invariant*}

The following \Cref{lem:deletion-estimator} gives useful bounds on the deletion estimator $(\DSloss[c], \DSvolume[c])_{c \in C}$, provided Invariants~\ref{invar:DSloss} and \ref{invar:DSvolume}.

\begin{lemma}[Maintained deletion estimator]
\label{lem:deletion-estimator}
\begin{flushleft}
For the deletion estimator $(\DSloss[c], \DSvolume[c])_{c \in C}$, provided Invariants~\ref{invar:DSloss} and \ref{invar:DSvolume} (as well as \Cref{assumption:edge-weight,assumption:DS:hop-bounded}, Invariants~\ref{invar:subclusterings:center} to \ref{invar:subclusterings:empty}, and Invariant~\ref{invar:clustering}):
\begin{enumerate}[font = {\em\bfseries}]
    \item \label{lem:deletion-estimator:DSloss}
    $\DSloss[c] \le \sum_{v \in V \colon \DSc[v] = c} (2^{2\eps z} \cdot \dist^{z}(v, C - c) - \DSd^{z}[v])$, for every center $c \in C$.
    
    \item \label{lem:deletion-estimator:DSvolume}
    $\sum_{c \in C} \DSvolume[c] = 1$ and $\frac{1}{2m |\calJ|} \le \DSvolume[c] \le 1$, for every center $c \in C$.
\end{enumerate}
\end{flushleft}
\end{lemma}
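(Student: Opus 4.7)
The plan is to dispatch the two parts separately. The first claim is a witness-by-isolation argument layered on \Cref{lem:subclusterings}, while the second is a double-counting argument that uses Invariants~\ref{invar:subclusterings:vertex}, \ref{invar:subclusterings:empty}, and~\ref{invar:subclusterings:center} together with the degree-sum identity $\sum_{v \in V} \deg(v) = 2m$.

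For the first claim, I fix a center $c \in C$ and a vertex $v \in V$ with $\DSc[v] = c$, and aim to exhibit a single ``good'' index $J^{*} \in \calJ$ with $J^{*} \notni c$ that witnesses the required upper bound on $\min_{J \in \calJ \colon J \notni c} \DSd_{J}^{z}[v]$. Let $c^{**} \eqdef \argmin_{c' \in C - c} \dist(v, c')$ be the actual nearest center to $v$ in $C - c$. By the isolation property of $\calJ$ (\Cref{def:cover} and Footnote~\ref{footnote:center-subset}) we have $\cup_{J \notni c} C_{J} = C - c$, so some $J^{*} \in \calJ$ satisfies both $J^{*} \notni c$ and $c^{**} \in C_{J^{*}}$. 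In particular $C_{J^{*}} \ne \emptyset$, so \Cref{lem:subclusterings} applies and yields $\DSd_{J^{*}}[v] \le 2^{2\eps} \cdot \dist(v, C_{J^{*}}) \le 2^{2\eps} \cdot \dist(v, c^{**}) = 2^{2\eps} \cdot \dist(v, C - c)$. Taking $z$-th powers and then the minimum over $J \notni c$, $\min_{J \notni c} \DSd_{J}^{z}[v] \le 2^{2\eps z} \cdot \dist^{z}(v, C - c)$; substituting into Invariant~\ref{invar:DSloss} and summing over all $v$ with $\DSc[v] = c$ proves the claim.

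For the second claim, the key combinatorial observation is that whenever $C_{J} \ne \emptyset$, Invariant~\ref{invar:subclusterings:vertex} assigns each $v \in V$ to exactly one $\DSc_{J}[v] \in C_{J}$, so $v$ contributes $\deg(v)$ to exactly one $\DSvolume[c]$; whereas for $J$ with $C_{J} = \emptyset$, Invariant~\ref{invar:subclusterings:empty} forces $\DSc_{J}[v] = \perp$ and no contribution occurs. Swapping the order of summation in Invariant~\ref{invar:DSvolume} therefore yields
\[
    \sum_{c \in C} \DSvolume[c]
    ~=~ \frac{1}{2m|\calJ|} \sum_{J \in \calJ \colon C_{J} \ne \emptyset} \sum_{v \in V} \deg(v)
    ~=~ \frac{|\{J \in \calJ \colon C_{J} \ne \emptyset\}|}{|\calJ|},
\]
which equals~$1$ when every $C_{J}$ is nonempty. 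The per-center upper bound $\DSvolume[c] \le 1$ is a single-$c$ instance of the same calculation, and the lower bound $\DSvolume[c] \ge \tfrac{1}{2m|\calJ|}$ follows because $c \in C = \cup_{J} C_{J}$ selects some $J$ with $c \in C_{J}$, whereupon Invariant~\ref{invar:subclusterings:center} makes the pair $(J, c)$ contribute $\deg(c) \ge 1$ to the defining sum of $\DSvolume[c]$.

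The main subtlety is the exact equality $\sum_{c \in C} \DSvolume[c] = 1$: it genuinely requires every $C_{J}$ to be nonempty, a property not implied by Invariants~\ref{invar:DSloss} and~\ref{invar:DSvolume} on their own. I would discharge this either by tracking ``$C_{J} \ne \emptyset$ for all $J \in \calJ$'' as an implicit side invariant maintained by {\Initialize}, {\Insert}, and {\Delete} (plausible from the construction of the isolation set cover together with $|C|$ being large whenever the lemma is invoked), or by weakening the equality to $\le 1$ after checking that the local search analysis in \Cref{sec:LS} only consumes the upper bound. All remaining steps are routine consequences of the stated invariants and \Cref{lem:subclusterings}, so no new machinery is required.
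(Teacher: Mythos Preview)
Your argument for \Cref{lem:deletion-estimator:DSloss} is correct and is exactly the paper's proof: isolate the true second-nearest center via $\cup_{J \notni c} C_{J} = C - c$, then apply \Cref{lem:subclusterings}.

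For \Cref{lem:deletion-estimator:DSvolume}, your double-counting argument is also the paper's argument, and you have in fact spotted a genuine gap that the paper glosses over: the paper's derivation silently assumes every $C_{J}$ is nonempty when collapsing the double sum to $|\calJ| \cdot \sum_{v} \deg(v)$. Your option~(a) will not rescue this, however. The isolation property itself guarantees that for any fixed $c \in C$ there is some $J \in \calJ$ with $c \notin J$; hence when $k = 1$ (or more generally whenever all centers share a bit in the explicit construction of \Cref{lem:cover}) some $C_{J}$ is empty, and this is not prevented by {\Initialize}, {\Insert}, or {\Delete}. So ``$C_{J} \ne \emptyset$ for all $J$'' is not an invariant one can maintain. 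Your option~(b) is the correct fix: the sum is in general only $\frac{|\{J \colon C_{J} \ne \emptyset\}|}{|\calJ|} \le 1$, and the sole downstream use (the volume term in the proof of \Cref{claim:rmcost:2}) needs only this upper bound. The per-center bounds $\tfrac{1}{2m|\calJ|} \le \DSvolume[c] \le 1$ that you derive are unaffected.
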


\begin{proof}
Without loss of generality, let us consider a specific center $c \in C$.

\vspace{.1in}
\noindent
{\bf \Cref{lem:deletion-estimator:DSloss}.}
Provided Invariant~\ref{invar:DSloss}, it suffices to prove that $\min_{J \in \calJ \colon J \notni c} \DSd_{J}[v] \le 2^{2\eps} \cdot \dist(v, C - c)$, for every vertex $(v \in V \colon \DSc[v] = c)$, as follows:
\begin{align*}
    \min_{J \in \calJ \colon J \notni c} \DSd_{J}[v]
    ~\le~ \min_{J \in \calJ \colon J \notni c} 2^{2\eps} \cdot \dist(v, C_{J})
    ~=~ 2^{2\eps} \cdot \dist(v, C - c)
\end{align*}
Here, the first step applies \Cref{lem:subclusterings}, and the second step applies $\cup_{C_{J} \notni c} C_{J} = C - c$.\textsuperscript{\ref{footnote:center-subset}}

\vspace{.1in}
\noindent
{\bf \Cref{lem:deletion-estimator:DSvolume}.}
The first part $\sum_{c \in C} \DSvolume[c] = 1$ is a direct consequence of Invariant~\ref{invar:DSvolume}:
\begin{align*}
    \sum_{c \in C} \DSvolume[c]
    ~=~ \sum_{c \in C} \Big(\sum_{(J,\, v) \in \calJ \times V \colon \DSc_{J}[v] = c} \tfrac{\deg(v)}{2m |\calJ|}\Big)
    ~=~ |\calJ| \cdot \Big(\sum_{v \in V} \tfrac{\deg(v)}{2m |\calJ|}\Big) 
    ~=~ 1.
\end{align*}
Then the second part $\frac{1}{2m |\calJ|} \le \DSvolume[c] \le 1$ follows directly, since $\deg(v) \ge 1$ for every vertex $v \in V$ in the considered {\em connected} graph $G$ and $\{(J,\, v) \in \calJ \times V \mid \DSc_{J}[v] = c\} \supseteq \{J \in \calJ \mid C_{J} \ni c\} \times \{c\} \ne \emptyset$, where the first step applies Invariant~\ref{invar:subclusterings:center} and the second step applies $\cup_{J \in \calJ} C_{J} = C \ni c$.\textsuperscript{\ref{footnote:center-subset}}

This finishes the proof of \Cref{lem:deletion-estimator}.
\end{proof}

\begin{remark}[Deletion estimator]
\label{rem:deletion-estimator}
Recall \Cref{lem:subclusterings,lem:clustering} that the approximate distances $\DSd_{J}[v]$ and $\DSd[v]$ well approximate the ``groundtruth''.
Likewise, the deletion objective estimator $\DSloss[c]$ has the same spirit.
Namely, $\DSloss[c] \approx \cost(\cluster_{c}, C - c) - \cost(\cluster_{c}, C) \approx \cost(V, C - c) - \cost(V, C)$ well approximates the change of the objective $\cost(V, C) = \sum_{v \in V} \dist^{z}(v, C)$ by the deletion of a current center $c \in C$ and the reassignment of vertices in the center-$c$ cluster $\cluster_{c} = \{v \in V \mid \DSc[v] = c\}$ to other survival centers $c' \in C - c$.
\end{remark}

All our modifications to the deletion estimator $(\DSloss[c], \DSvolume[c])_{c \in C}$ (due to the operations in \Cref{subsec:initialize,subsec:insert,subsec:delete}) will always maintain Invariants~\ref{invar:binary-search-tree} and \ref{invar:DScost}.
In this regard, \Cref{lem:maintain-deletion-estimator} shows that this deletion estimator $(\DSloss[c], \DSvolume[c])_{c \in C}$ can be modified efficiently.

\begin{lemma}[Synchronization]
\label{lem:maintain-deletion-estimator}
\begin{flushleft}
For the deletion estimator $(\DSloss[c], \DSvolume[c])_{c \in C}$:
\begin{enumerate}[font = {\em\bfseries}]
    \item \label{lem:maintain-deletion-estimator:initialize}
    It can be built on the subclusterings $(\DSc_{J}[v], \DSd_{J}[v])_{(J,\, v) \in \calJ \times V}$ in time $O(n \log(n))$.
    
    \item \label{lem:maintain-deletion-estimator:modify}
    It can be synchronized to maintain Invariants~\ref{invar:DSloss} and \ref{invar:DSvolume} in time $O(\log(n))$, every time when the subclusterings $(\DSc_{J}[v], \DSd_{J}[v])_{(J,\, v) \in \calJ \times V}$ are modified at a single entry $(J, v) \in \calJ \times V$.
\end{enumerate}
\end{flushleft}
\end{lemma}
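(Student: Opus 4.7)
The plan is to exploit the additive structure of both $\DSloss[c]$ and $\DSvolume[c]$, together with the fact that modifying a single entry of the subclusterings perturbs only one vertex $v_0$'s contribution to the deletion estimator.

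\textbf{Initialization (Part~1).} I would build $\DSvolume$ directly from Invariant~\ref{invar:DSvolume}: iterate over every pair $(J, v) \in \calJ \times V$ and add $\deg(v)/(2m|\calJ|)$ to $\DSvolume[\DSc_J[v]]$, costing $O(n|\calJ|) = O(n \log n)$ in total. For $\DSloss$, I would first build the clustering $(\DSc[v], \DSd[v])_{v \in V}$ in $O(n \log n)$ time via \Cref{lem:maintain-clustering:initialize} of \Cref{lem:maintain-clustering}; then, for each vertex $v \in V$, scan the $|\calJ|$ indices to compute $\min_{J \in \calJ \,:\, J \notni \DSc[v]} \DSd_J^z[v]$ and accumulate $(\min_{J \notni \DSc[v]} \DSd_J^z[v]) - \DSd^z[v]$ into $\DSloss[\DSc[v]]$. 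The per-vertex cost is $O(|\calJ|) = O(\log n)$, summing to the claimed $O(n \log n)$.

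\textbf{Synchronization (Part~2).} Suppose the modification changes entry $(J_0, v_0)$ from $(c_o, d_o)$ to $(c_n, d_n)$. Since all other $\DSd_J[v]$ values are untouched, for every vertex $v \ne v_0$ neither its clustering assignment $\DSc[v]$ nor the quantity $\min_{J \notni \DSc[v]} \DSd_J^z[v]$ is affected, so its contribution to $\DSloss$ stays the same; and only the pair $(J_0, v_0)$ is a summand of $\DSvolume$ that is altered. Thus all the work reduces to updating $v_0$'s contributions. I would first invoke \Cref{lem:maintain-clustering:modify} of \Cref{lem:maintain-clustering} to obtain both the pre- and post-modification values $c_a, c_b$ of $\DSc[v_0]$ (and of $\DSd[v_0]$) in $O(\log n)$ time. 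For $\DSvolume$, subtract $\deg(v_0)/(2m|\calJ|)$ from $\DSvolume[c_o]$ (if $c_o \in C$) and add the same amount to $\DSvolume[c_n]$ (if $c_n \in C$), in $O(1)$. For $\DSloss$, recompute $v_0$'s old contribution to $\DSloss[c_a]$ by scanning the $|\calJ|$ indices (reading the stored $\DSd_J[v_0]$ for $J \ne J_0$ and substituting $d_o$ for $J = J_0$), subtract it from $\DSloss[c_a]$, and symmetrically compute and add $v_0$'s new contribution to $\DSloss[c_b]$; this costs $O(|\calJ|) = O(\log n)$.

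\textbf{Main obstacle.} The only nontrivial point is the careful bookkeeping of ``old'' versus ``new'' states when recomputing $v_0$'s old contribution: the indices $J \ne J_0$ still hold the pre-modification values of $\DSd_J[v_0]$, whereas for $J = J_0$ we must use $d_o$ and for $\DSd[v_0]$ we must use its pre-modification value returned by \Cref{lem:maintain-clustering}. Since all these old values are either provided as inputs to the synchronization or returned alongside the clustering update, the bookkeeping is routine and the per-modification cost remains $O(\log n)$, establishing both parts of the lemma.
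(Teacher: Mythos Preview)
Your proposal is correct and follows essentially the same approach as the paper's own proof: both parts exploit the additive structure of $\DSloss$ and $\DSvolume$, build them by a single pass over $\calJ \times V$ in time $O(n|\calJ|) = O(n\log n)$, and synchronize by subtracting $v_0$'s old contribution and adding its new one, with the clustering update handled via \Cref{lem:maintain-clustering}. Your treatment of the old-versus-new bookkeeping (in particular, substituting $d_o$ at index $J_0$ when recomputing the old minimum) is slightly more explicit than the paper's, but the argument is the same.
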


\begin{proof}
Both \Cref{lem:maintain-deletion-estimator:initialize,lem:maintain-deletion-estimator:modify} are rather obvious.

\vspace{.1in}
\noindent
{\bf \Cref{lem:maintain-deletion-estimator:initialize}.}
Starting from the {\em empty} deletion estimator $(\DSloss[c], \DSvolume[c])_{c \in C} \gets (0, 0)^{|C|}$, let us iterate the following for every vertex $v \in V$:
\begin{itemize}
    \item $\DSloss[\DSc[v]] \gets \DSloss[\DSc[v]] + ((\min_{J \in \calJ \colon J \notni \DSc[v]} \DSd_{J}[v]^{z}) - \DSd^{z}[v])$.\\
    \Comment{This requires finding the minimum $(\min_{J \in \calJ \colon J \notni \DSc[v]} \DSd_{J}[v]^{z})$ and takes time $O(|\calJ|)$.}
    
    \item $\DSvolume[\DSc_{J}[v]] \gets \DSvolume[\DSc_{J}[v]] + \frac{\deg(v)}{2m |\calJ|}$, for every index $J \in \calJ$.\\
    \Comment{This requires enumerating all indices $J \in \calJ$ and also takes time $O(|\calJ|)$.}
\end{itemize}
After the iteration terminates, we get Invariants~\ref{invar:DSloss} and \ref{invar:DSvolume} maintained by construction. Further, the total running time for this initialization $= O(|C|) + n \cdot O(|\calJ|) = O(n \log(n))$.

\vspace{.1in}
\noindent
{\bf \Cref{lem:maintain-deletion-estimator:modify}.}
Regarding every modification of the subclustering, from $(\DSc_{J}[v], \DSd_{J}[v])$ to $(\DSc'_{J}[v], \DSd'_{J}[v])$ (say), at a single entry $(J, v) \in \calJ \times V$, we would synchronize the deletion estimator as follows:
\begin{flushleft}
\begin{itemize}
    \item The clustering can only be modified at {\em one} vertex $v$, from $(\DSc[v], \DSd[v])$ to $(\DSc'[v], \DSd'[v])$ (say).\\
    \Comment{This synchronization takes time $O(\log(n))$, by \Cref{lem:maintain-clustering:modify} of \Cref{lem:maintain-clustering}.}

    \item $\DSloss[\DSc[v]] \gets \DSloss[\DSc[v]] - ((\min_{J \in \calJ \colon J \notni \DSc[v]} \DSd_{J}[v]^{z}) - \DSd^{z}[v])$.\\
    $\DSloss[\DSc'[v]] \gets \DSloss[\DSc'[v]] + ((\min_{J \in \calJ \colon J \notni \DSc'[v]} \DSd'_{J}{}^{z}[v]) + \DSd'{}^{z}[v])$.\\
    \Comment{This requires finding two minimums $(\min_{J \in \calJ \colon J \notni \DSc[v]} \DSd_{J}[v]^{z})$ and $(\min_{J \in \calJ \colon J \notni \DSc'[v]} \DSd'_{J}{}^{z}[v])$, thus taking time $O(|\calJ|) + O(|\calJ|) = O(|\calJ|)$.}
    
    \item $\DSvolume[\DSc_{J}[v]] \gets \DSvolume[\DSc_{J}[v]] - \frac{\deg(v)}{2m |\calJ|}$.\\
    $\DSvolume[\DSc'_{J}[v]] \gets \DSvolume[\DSc'_{J}[v]] + \frac{\deg(v)}{2m |\calJ|}$.\\
    \Comment{Clearly, this takes time $O(1)$.}
\end{itemize}
\end{flushleft}
Afterward, we get Invariants~\ref{invar:DSloss} and \ref{invar:DSvolume} maintained by construction. Also, the total running time for this synchronization $= O(\log(n)) + O(|\calJ|) + O(1) = O(\log(n))$.

This finishes the proof of \Cref{lem:maintain-deletion-estimator}.
\end{proof}

\subsection*{The grouping $\{(G_{\tau}, \calG_{\tau})\}_{\tau \in [\CGnumber]}$}

Sixthly, our data structure maintains a {\em grouping} $\{(G_{\tau}, \calG_{\tau})\}_{\tau \in [\CGnumber]}$ of the maintained centers $c \in C$, where the number of groups $\CGnumber \eqdef \lfloor \log_{2}(2m|\calJ|) + 1\rfloor = O(\log(n))$, based on the deletion estimator $(\DSloss[c], \DSvolume[c]){c \in C}$. Formally:

\begin{invariant*}
\begin{flushleft}
For the grouping $\{(G_{\tau}, \calG_{\tau})\}_{\tau \in [\CGnumber]}$:
\begin{enumerate}
    \item[{\bf \term[G]{invar:grouping:group}.}]
    Every {\em center group} $G_{\tau} = \big\{c \in C \bigmid \lfloor-\log_{2}(\DSvolume[c]) + 1\rfloor = \tau\big\}$, for $\tau \in [\CGnumber]$, includes all centers $c \in C$ whose deletion volume estimator are bounded between $\frac{1}{2^{\tau}} < \DSvolume[c] \le \frac{1}{2^{\tau - 1}}$.
    
    \item[{\bf \term[H]{invar:grouping:tree}.}]
    Every $\calG_{\tau} = \calG_{\tau}((c, \DSloss[c])_{c \in G_{\tau}})$, for $\tau \in [\CGnumber]$, is a {\em red-black tree} storing pairs $(c, \DSloss[c])_{c \in G_{\tau}}$, using the $\DSloss[c]$-minimizing priority \cite[Chapter~13]{CLRS22}.
    \hfill
    \Comment{$|\calG_{\tau}| = |G_{\tau}| \le |C| \le n$.}
\end{enumerate}
\end{flushleft}
\end{invariant*}

\begin{remark}[Grouping]
\label{remark:grouping}
As mentioned, suppose that we would delete a current center $c \in C$ from the maintained center set $C$, roughly speaking
(i)~its deletion loss estimator $\DSloss[c] \ge 0$ can measure the {\em progress} on minimizing the clustering objective $\cost(V, C)$ and
(ii)~its deletion volume estimator $\frac{1}{2m |\calJ|} \le \DSvolume[c] \le 1$ can measure the {\em running time} to modify our data structure $\calD$.

Accordingly, (Invariant~\ref{invar:grouping:group}) every center group $G_{\tau}$ for $\tau \in [\CGnumber]$ includes, up to a factor $\le 2$, those {\em almost equally time-consuming centers $c \in C$}, and (Invariant~\ref{invar:grouping:tree}) its associated red-black tree $\calG_{\tau}$ can help us {\em efficiently identify the most progressive center $c \in G_{\tau}$} therein.
\end{remark}

The following \Cref{lem:grouping} presents useful properties of the grouping $\{(G_{\tau}, \calG_{\tau})\}_{\tau \in [\CGnumber]}$, provided Invariants~\ref{invar:grouping:group} and \ref{invar:grouping:tree}.

\begin{lemma}[Maintained grouping]
\label{lem:grouping}
\begin{flushleft}
$C = \cup_{\tau \in [\CGnumber]} G_{\tau}$ and every red-black tree $\calG_{\tau}$, $\forall \tau \in [\CGnumber]$, supports searching in time $O(\log(n))$, provided Invariants~\ref{invar:grouping:group} and \ref{invar:grouping:tree} (as well as Invariants~\ref{invar:subclusterings:center} to \ref{invar:subclusterings:empty} and Invariant~\ref{invar:DSvolume}).
\end{flushleft}
\end{lemma}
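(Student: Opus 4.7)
The plan is to prove the two claims separately. The first claim, $C = \bigcup_{\tau \in [\CGnumber]} G_{\tau}$, reduces to verifying that the index $\tau$ assigned by Invariant~\ref{invar:grouping:group} to each center $c \in C$ always lies in the valid range $[1, \CGnumber]$. The second claim, that each $\calG_{\tau}$ supports $O(\log n)$-time search, is an immediate invocation of the standard red-black tree guarantee.

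First I would handle the covering. By Invariant~\ref{invar:grouping:group}, each $c \in C$ is placed in the unique group $G_{\tau}$ with $\tau = \lfloor -\log_{2}(\DSvolume[c]) + 1\rfloor$. Invoking \Cref{lem:deletion-estimator:DSvolume} of \Cref{lem:deletion-estimator} (whose hypotheses include Invariants \ref{invar:subclusterings:center}--\ref{invar:subclusterings:empty} and Invariant~\ref{invar:DSvolume}), we know $\frac{1}{2m|\calJ|} \le \DSvolume[c] \le 1$, hence
\begin{equation*}
    1 ~\le~ -\log_{2}(\DSvolume[c]) + 1 ~\le~ \log_{2}(2m|\calJ|) + 1.
\end{equation*}
Taking floors gives $\tau \in \{1, 2, \ldots, \lfloor \log_{2}(2m|\calJ|) + 1\rfloor\} = [\CGnumber]$ by the very definition $\CGnumber \eqdef \lfloor \log_{2}(2m|\calJ|) + 1\rfloor$. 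Thus every center $c \in C$ is covered by some group $G_{\tau}$ with $\tau \in [\CGnumber]$, and since the groups are defined by a floor partition, they are also pairwise disjoint, yielding $C = \bigcup_{\tau \in [\CGnumber]} G_{\tau}$.

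For the second claim, Invariant~\ref{invar:grouping:tree} states that $\calG_{\tau}$ is a red-black tree storing the $|G_{\tau}|$ pairs $(c, \DSloss[c])_{c \in G_{\tau}}$, and clearly $|\calG_{\tau}| = |G_{\tau}| \le |C| \le n$. The standard result for red-black trees \cite[Chapter~13]{CLRS22} gives search time $O(\log |\calG_{\tau}|) = O(\log n)$.

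There is essentially no hard step here; the only thing to be careful about is that the bounds in \Cref{lem:deletion-estimator:DSvolume} are tight at both endpoints, so the chosen value of $\CGnumber$ is exactly what is needed -- any smaller choice would risk missing centers whose $\DSvolume[c]$ is near $\frac{1}{2m|\calJ|}$. I would therefore present the computation of the range of $-\log_2(\DSvolume[c]) + 1$ explicitly to justify the specific definition of $\CGnumber$.
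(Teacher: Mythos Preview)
Your proposal is correct and follows essentially the same approach as the paper's proof: bound $\DSvolume[c]$ via \Cref{lem:deletion-estimator} to place the index $\tau_c = \lfloor -\log_2(\DSvolume[c]) + 1 \rfloor$ in $[\CGnumber]$, then invoke the standard red-black tree search bound using $|\calG_\tau| \le |C| \le n$. You spell out the range computation for $\tau$ a bit more explicitly than the paper does, but the argument is identical.
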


\begin{proof}
The maintained center set $C$ can be covered $C = \cup_{\tau \in [\CGnumber]} G_{\tau}$, because every center $c \in C$ has a deletion volume estimator bounded between $\frac{1}{2m |\calJ|} \le \DSvolume[c] \le 1$ (\Cref{lem:deletion-estimator:DSvolume} of \Cref{lem:deletion-estimator}), hence belonging to the index-($\tau_{c} = \lfloor - \log_{2}(\DSvolume[c]) + 1\rfloor \in [\CGnumber]$) center group $G_{\tau_{c}}$.
Every red-black tree $\calG_{\tau}$ for $\tau \in [\CGnumber]$ has size $|\calG_{\tau}| \le n$ and, therefore, can support searching in time $O(\log(|\calG_{\tau}|)) = O(\log(n))$ \cite[Chapter~13]{CLRS22}. This finishes the proof of \Cref{lem:grouping}.
\end{proof}

All possible modifications to the grouping $\{(G_{\tau}, \calG_{\tau})\}_{\tau \in [\CGnumber]}$ (due to the operations in \Cref{subsec:initialize,subsec:insert,subsec:delete}) will always maintain Invariants~\ref{invar:grouping:group} and \ref{invar:grouping:tree}.
In this regard, the following \Cref{lem:maintain-grouping} shows that this grouping $\{(G_{\tau}, \calG_{\tau})\}_{\tau \in [\CGnumber]}$ can be modified efficiently.

\begin{lemma}[Synchronization]
\label{lem:maintain-grouping}
\begin{flushleft}
For the grouping $\{(G_{\tau}, \calG_{\tau})\}_{\tau \in [\CGnumber]}$:
\begin{enumerate}[font = {\em\bfseries}]
    \item \label{lem:maintain-grouping:initialize}
    It can be built on the deletion estimator $(\DSloss[c], \DSvolume[c])_{c \in C}$ in time $O(n \log(n))$.
    
    \item \label{lem:maintain-grouping:modify}
    It can be synchronized to maintain Invariants~\ref{invar:grouping:group} and \ref{invar:grouping:tree} in time $O(\log(n))$, every time when the subclusterings $(\DSc_{J}[v], \DSd_{J}[v])_{(J,\, v) \in \calJ \times V}$ are modified at a single entry $(J, v) \in \calJ \times V$.
\end{enumerate}
\end{flushleft}
\end{lemma}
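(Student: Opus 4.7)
My plan is to piggyback on the synchronization argument for the deletion estimator $(\DSloss[c], \DSvolume[c])_{c \in C}$ established in \Cref{lem:maintain-deletion-estimator}, observing that at each point in time the identity of the group $G_{\tau}$ that contains a center $c$ is a simple $O(1)$-computable function of the stored value $\DSvolume[c]$, namely $\tau = \lfloor -\log_{2}(\DSvolume[c]) + 1\rfloor$, and the content stored in $\calG_{\tau}$ at the key $c$ is just $\DSloss[c]$. Consequently, any change to the grouping is driven entirely by changes to the entries $(\DSloss[c], \DSvolume[c])_{c \in C}$, and the running time is governed by the cost of $O(\log n)$-time red-black tree operations \cite[Chapter~13]{CLRS22}.

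For \textbf{Part \ref{lem:maintain-grouping:initialize}}, I would initialize every $G_{\tau}$ and $\calG_{\tau}$ to be empty and then iterate over each center $c \in C$: compute $\tau_{c} \eqdef \lfloor -\log_{2}(\DSvolume[c]) + 1\rfloor$ in $O(1)$ time (note $\tau_{c} \in [\CGnumber]$ by \Cref{lem:deletion-estimator:DSvolume} of \Cref{lem:deletion-estimator}), append $c$ to $G_{\tau_{c}}$, and insert $(c, \DSloss[c])$ into $\calG_{\tau_{c}}$ in $O(\log n)$ time. After all $|C| \le n$ centers are processed, Invariants~\ref{invar:grouping:group} and \ref{invar:grouping:tree} hold by construction, and the total time is $|C| \cdot O(\log n) = O(n \log n)$.

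For \textbf{Part \ref{lem:maintain-grouping:modify}}, the key observation is that a single-entry modification of the subclustering at $(J, v) \in \calJ \times V$ causes only $O(1)$ entries of the deletion estimator to change, as recorded in the proof of \Cref{lem:maintain-deletion-estimator}: at most two entries of $\DSloss$ (for $\DSc[v]$ before and $\DSc'[v]$ after the resulting clustering update) and at most two entries of $\DSvolume$ (for $\DSc_{J}[v]$ before and $\DSc'_{J}[v]$ after). I would augment the synchronization procedure of \Cref{lem:maintain-deletion-estimator} so that for each affected center $c$, before overwriting $(\DSloss[c], \DSvolume[c])$, we use the \emph{old} value of $\DSvolume[c]$ to locate the old group index $\tau_{\text{old}}$ and delete $(c, \DSloss[c]_{\text{old}})$ from $G_{\tau_{\text{old}}}$ and $\calG_{\tau_{\text{old}}}$ in $O(\log n)$ time; then, after the overwrite, we compute the new index $\tau_{\text{new}}$ from the updated $\DSvolume[c]$ and insert $(c, \DSloss[c]_{\text{new}})$ into $G_{\tau_{\text{new}}}$ and $\calG_{\tau_{\text{new}}}$ in $O(\log n)$ time. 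Because only $O(1)$ centers are touched per single-entry modification, the total synchronization cost is $O(\log n)$, and Invariants~\ref{invar:grouping:group} and \ref{invar:grouping:tree} are preserved by construction.

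I do not anticipate a serious obstacle here: the lemma is essentially a bookkeeping statement reducing to standard red-black tree bounds, given the $O(1)$-many-changes property supplied by \Cref{lem:maintain-deletion-estimator}. The only mild care needed is to read the \emph{pre-update} $\DSvolume[c]$ before applying the increment/decrement, so the old group is correctly identified; this is handled trivially by interleaving the grouping update with the estimator update in the right order.
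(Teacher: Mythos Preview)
Your proposal is correct and follows essentially the same approach as the paper's proof: both initialize by iterating over all centers and inserting into the appropriate red-black tree, and both handle synchronization by observing (via the proof of \Cref{lem:maintain-deletion-estimator}) that only $O(1)$ centers have their $(\DSloss,\DSvolume)$ entries changed, then performing a delete-from-old-group/insert-into-new-group for each such center. The paper explicitly names the at most four affected centers $\{\DSc[v],\DSc'[v],\DSc_{J}[v],\DSc'_{J}[v]\}$, but this is the same bookkeeping you describe.
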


\begin{proof}
Both \Cref{lem:maintain-grouping:initialize,lem:maintain-grouping:modify} are rather obvious.

\vspace{.1in}
\noindent
{\bf \Cref{lem:maintain-grouping:initialize}.}
Starting with the {\em empty} grouping $\{(G_{\tau}, \calG_{\tau})\}_{\tau \in [\CGnumber]} \gets (\emptyset,\ \emptyset)^{|\CGnumber|}$, we enumerate every center $c \in C$, identify which center group $\tau_{c} = \lfloor -\log_{2}(\DSvolume[c]) + 1 \rfloor \in [\CGnumber]$ it belongs to, and insert this center $c$ into the identified center group $G_{\tau_{c}}$ and the pair $(c, \DSloss[c])$ into the identified red-black tree $\calG_{\tau_{c}}$.
After the enumeration terminates, we get Invariants~\ref{invar:grouping:group} and \ref{invar:grouping:tree} maintained by construction. Further, the total running time for this initialization is dominated by the $|C| \le n$ insertions of $(c, \DSloss[c])$ into the size-($\le |C|$) red-black trees $\calG_{\tau}$, thus the total running time $O(|C| \log(|C|)) = O(n \log(n))$.

\vspace{.1in}
\noindent
{\bf \Cref{lem:maintain-grouping:modify}.}
Regarding every modification of the subclustering, from $(\DSc_{J}[v], \DSd_{J}[v])$ to, say, $(\DSc'_{J}[v], \DSd'_{J}[v])$, at a single entry $(J, v) \in \calJ \times V$, we would synchronize the grouping $\{(G_{\tau}, \calG_{\tau})\}_{\tau \in [\CGnumber]}$ as follows:
\begin{flushleft}
\begin{itemize}
    \item The clustering can only be modified at {\em one} vertex $v$, from $(\DSc[v], \DSd[v])$ to, say, $(\DSc'[v], \DSd'[v])$.\\
    The deletion loss estimator can only be modified at {\em two} centers $\DSc[v]$ and $\DSc'[v]$ (possibly identical), from $\DSloss[\DSc[v]]$ and $\DSloss[\DSc'[v]]$ to, say, $\DSloss'[\DSc[v]]$ and $\DSloss'[\DSc'[v]]$.\\
    The deletion volume estimator can only be modified at {\em two} centers $\DSc_{J}[v]$ and $\DSc'_{J}[v]$ (possibly identical), from $\DSvolume[\DSc_{J}[v]]$ and $\DSvolume[\DSc'_{J}[v]]$ to, say, $\DSvolume'[\DSc_{J}[v]]$ and $\DSvolume'[\DSc'_{J}[v]]$.\\
    \Comment{See the proof of \Cref{lem:maintain-deletion-estimator} for more details. This synchronization takes time $O(\log(n))$, by \Cref{lem:maintain-clustering:modify} of \Cref{lem:maintain-clustering} and \Cref{lem:maintain-deletion-estimator:modify} of \Cref{lem:maintain-deletion-estimator}.}
    
    \item Thus, the deletion estimator can only be modified at {\em four} centers $c \in \{\DSc[v], \DSc'[v], \DSc_{J}[v], \DSc'_{J}[v]\}$ (possibly a multiset), center-wise from $(\DSloss[c], \DSvolume[c])$ to, say, $(\DSloss'[c], \DSvolume'[c])$. Then for every such center $c \in \{\DSc[v], \DSc'[v], \DSc_{J}[v], \DSc'_{J}[v]\}$:
    \begin{itemize}
        \item We identify which {\em original} center group $\tau_{c} = \lfloor -\log_{2}(\DSvolume[c]) + 1 \rfloor \in [\CGnumber]$ it belongs to,\\
        delete it from this {\em original} center group $G_{\tau_{c}} \gets G_{\tau_{c}} - c$, and\\
        delete the {\em original} pair $(c, \DSloss[c])$ from the associated {\em original} red-black tree $\calG_{\tau_{c}}$.\\
        \Comment{This requires {\em one} deletion from a size-($\le |C|$) red-black tree $\calG_{\tau_{c}}$, thus taking time $O(1) + O(1) + O(\log(|C|)) = O(\log(n))$ \cite[Chapter~13]{CLRS22}.}
        
        \item We identify which {\em new} center group $\tau'_{c} = \lfloor -\log_{2}(\DSvolume'[c]) + 1 \rfloor \in [\CGnumber]$ it belongs to,\\
        insert it into this {\em new} center group $G_{\tau'_{c}} \gets G_{\tau'_{c}} + c$, and\\
        insert the {\em new} pair $(c, \DSloss'[c])$ into the associated {\em original} red-black tree $\calG_{\tau_{c}}$.\\
        \Comment{This requires {\em one} insertion into a size-($\le |C|$) red-black tree $\calG_{\tau'_{c}}$, thus taking time $O(1) + O(1) + O(\log(|C|)) = O(\log(n))$ \cite[Chapter~13]{CLRS22}.}
    \end{itemize}
\end{itemize}
\end{flushleft}
Afterward, both Invariants~\ref{invar:grouping:group} and \ref{invar:grouping:tree} are maintained by construction. Moreover, the total running time is $O(\log(n)) + 4 \cdot (O(\log(n)) + O(\log(n))) = O(\log(n))$.

This finishes the proof of \Cref{lem:maintain-grouping}.
\end{proof}

\subsection*{The potential $\Phi$}

Finally, as mentioned, the potential $\Phi = \sum_{(J,\, v) \in \calJ \times V} \deg(v) \cdot \log_{2}(1 + \DSd_{J}[v])$ defined in \Cref{fig:DS} will helps with our analysis (although our data structure $\calD$ need not maintain it).

\begin{lemma}[Potential]
\label{lem:potential}
\begin{flushleft}
$0 \leq \Phi \le \Phi_{\max}$, for the parameter $\Phi_{\max} \eqdef 2m |\calJ| \cdot \log_{2}(1 + \dmax)$, provided \Cref{assumption:edge-weight,assumption:DS:hop-bounded}.

\Comment{$\Phi_{\max} = O(z m \log^{2}(n)) \impliedby \dmax \le n^{O(z)},\ |\calJ| = O(\log(n))$ (\Cref{prop:distance,def:cover}).}
\end{flushleft}
\end{lemma}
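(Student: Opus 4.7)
The plan is to establish the two bounds separately, leveraging only elementary properties of the subclusterings together with the handshake lemma.

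\textbf{Lower bound.} The nonnegativity $\Phi \ge 0$ is immediate: each term $\deg(v) \cdot \log_2(1 + \DSd_J[v])$ is a product of two nonnegative factors, since $\deg(v) \ge 1$ (the graph $G$ is connected and non-singleton) and $\DSd_J[v] \ge 0$ (whether $\DSc_J[v] \in C_J$ as in Invariant~\ref{invar:subclusterings:vertex} with distance at most $\DSd_J[v]$, or $C_J = \emptyset$ and $\DSd_J[v] = \dmax \ge 0$ by Invariant~\ref{invar:subclusterings:empty}).

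\textbf{Upper bound.} For each entry $(J, v) \in \calJ \times V$, I would apply \Cref{lem:subclusterings} together with Invariant~\ref{invar:subclusterings:empty} to conclude $\DSd_J[v] \le \dmax$ in both the nonempty and empty cases of $C_J$. Hence $\log_2(1 + \DSd_J[v]) \le \log_2(1 + \dmax)$. Summing over $v \in V$ and using the handshake lemma $\sum_{v \in V} \deg(v) = 2m$, we obtain
\begin{equation*}
    \sum_{v \in V} \deg(v) \cdot \log_2(1 + \DSd_J[v]) \;\le\; 2m \cdot \log_2(1 + \dmax),
\end{equation*}
for every fixed index $J \in \calJ$. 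Summing this over the $|\calJ|$ indices yields $\Phi \le 2m |\calJ| \cdot \log_2(1 + \dmax) = \Phi_{\max}$.

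\textbf{Remark bound.} To verify the parenthetical $\Phi_{\max} = O(z m \log^2(n))$, I would just plug in $\dmax \le n^{O(z)}$ from \Cref{prop:distance} to get $\log_2(1 + \dmax) = O(z \log n)$, and $|\calJ| = O(\log n)$ from \Cref{def:cover,lem:cover}. There is no real obstacle here -- the only mild subtlety is remembering to invoke Invariant~\ref{invar:subclusterings:empty} to cover the case $C_J = \emptyset$, where \Cref{lem:subclusterings} still gives $\DSd_J[v] = \dmax$ (the bound is tight in that regime), so the uniform upper bound $\DSd_J[v] \le \dmax$ indeed holds across all entries.
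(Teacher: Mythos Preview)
Your proof is correct and follows the same approach as the paper, which simply notes that the bound is ``obvious'' from the handshake lemma $\sum_{v \in V} \deg(v) = 2m$ together with $\DSd_J[v] \le \dmax$ for all entries (citing \Cref{lem:subclusterings}). Your version just spells out the details more explicitly; note that \Cref{lem:subclusterings} already covers the empty-$C_J$ case internally, so the separate invocation of Invariant~\ref{invar:subclusterings:empty} is not strictly needed.
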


\begin{proof}
Obvious; $\sum_{v \in V} \deg(v) = 2m$ and $\DSd_{J}[v] \leq \dmax$, for every entry $(J, v) \in \calJ \times V$ (\Cref{lem:subclusterings}).
\end{proof}

\subsection{The operation {\Initialize}}
\label{subsec:initialize}

This subsection shows the operations {\Initialize} and {\InitializeSubClusterings} -- see \Cref{fig:initialize} for their implementation -- which initialize our data structure $\calD$, based on an initial feasible solution $\Cinitial \in V^{k}$ (promised).
Essentially, we will utilize Dijkstra's algorithm \cite[Chapter~22.3]{CLRS22}.

\begin{figure}[t]
\centering
\begin{mdframed}
\begin{flushleft}
Operation $\term[\Initialize]{alg:initialize}(\Cinitial)$

\vspace{.1in}
{\bf Input:}
An initial feasible solution $\Cinitial \in V^{k}$ (promised).
\begin{enumerate}
    \item \label{alg:initialize:center}
    $C \gets \Cinitial$ and $C_{J} \gets C \cap J$, for every index $J \in \calJ$.
    
    \item \label{alg:initialize:subclusterings}
    $\InitializeSubClusterings()$.
    
    \item \label{alg:initialize:rest}
    Initialize the rest of our data structure $\calD$, based on the subclusterings from Line~\ref{alg:initialize:subclusterings}, using \Cref{lem:maintain-clustering,lem:maintain-DScost,lem:maintain-deletion-estimator,lem:maintain-grouping}.
\end{enumerate}
\end{flushleft}
\end{mdframed}

\begin{mdframed}
\begin{flushleft}
Suboperation $\term[\InitializeSubClusterings]{alg:initialize-subclusterings}()$
\begin{enumerate}
\setcounter{enumi}{3}
    \item \label{alg:initialize-subclusterings:for-nonempty}
    For every nonempty maintained center subset $C_{J} \ne \emptyset$:
    
    \item \label{alg:initialize-subclusterings:modify-nonempty}
    \qquad $(\DSc_{J}[v], \DSd_{J}[v])_{v \in V} \gets (\argmin_{c \in C_{J}} \dist(v, c),\ \dist(v, C_{J}))_{v \in V}$.
    \hfill
    \Comment{Dijkstra's algorithm.}
    
    \item \label{alg:initialize-subclusterings:for-empty}
    For every empty maintained center subset $C_{J} = \emptyset$:
    
    \item \label{alg:initialize-subclusterings:modify-empty}
    \qquad $(\DSc_{J}[v], \DSd_{J}[v])_{v \in V} \gets (\perp, \dmax)^{|V|}$.
\end{enumerate}
\end{flushleft}
\end{mdframed}
\caption{\label{fig:initialize}The operations {\Initialize} and {\InitializeSubClusterings}.}
\end{figure}

The following \Cref{lem:initialize} shows the performance guarantees of the operation {\Initialize}.

\begin{lemma}[{\Initialize}]
\label{lem:initialize}
\begin{flushleft}
Given as input an (initial) feasible solution $\Cinitial \in V^{k}$ (promised), after the operation $\Initialize(\Cinitial)$:
\begin{enumerate}[font = {\em\bfseries}]
    \item \label{lem:initialize:center}
    $C = \Cinitial$ and $C_{J} = C \cap J$, for every index $J \in \calJ$.
    
    \item \label{lem:initialize:DScost}
    The objective estimator $\DScost = \cost(V, \Cinitial)$.
    
    \item \label{lem:initialize:invar}
    The initialized data structure $\calD$ maintains Invariants~\ref{invar:subclusterings:center} to \ref{invar:subclusterings:empty} and thus Invariants~\ref{invar:clustering} to \ref{invar:DSvolume}.
    
    \item \label{lem:initialize:runtime}
    The worst-case running time $\Tinitialize = O(m \log(n) + n \log^{2}(n))$.
\end{enumerate}
\end{flushleft}
\end{lemma}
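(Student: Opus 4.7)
The plan is to verify the four claims in order, using the correctness of Dijkstra's algorithm for the subclustering step and invoking the four synchronization lemmas (\Cref{lem:maintain-clustering,lem:maintain-DScost,lem:maintain-deletion-estimator,lem:maintain-grouping}) for the remaining contents. Claim~1 is immediate from Line~\ref{alg:initialize:center}. For claim~3, I would show that after \InitializeSubClusterings{} the subclusterings satisfy Invariants~\ref{invar:subclusterings:center}--\ref{invar:subclusterings:empty}: for nonempty $C_J$, multi-source Dijkstra produces the exact values $\DSd_J[v] = \dist(v, C_J)$ with witness $\DSc_J[v] \in C_J$, so \ref{invar:subclusterings:center} and \ref{invar:subclusterings:vertex} are immediate, while \ref{invar:subclusterings:edge} reduces to the ordinary triangle inequality $\dist(u,C_J) \le \dist(v,C_J) + w(u,v)$, comfortably absorbed by the slack factor $2^{\eps/\beta} \ge 1$; for empty $C_J$, Line~\ref{alg:initialize-subclusterings:modify-empty} enforces \ref{invar:subclusterings:empty} by construction. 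Given these, the remaining Invariants~\ref{invar:clustering}--\ref{invar:grouping:tree} are established by Line~\ref{alg:initialize:rest} via the four synchronization-initialization subroutines.

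For claim~2, the key point is that Dijkstra yields \emph{exact} rather than approximate distances, so $\DSd_J[v] = \dist(v, C_J)$ whenever $C_J \ne \emptyset$. Combined with Invariant~\ref{invar:clustering} (the pointwise minimum over $J \in \calJ$) and the cover property $\cup_J C_J = C$ from \Cref{def:cover}, this forces $\DSd[v] = \dist(v, C)$ exactly: the empty-$C_J$ entries contribute $\dmax \ge \dist(v, C)$ by \Cref{prop:distance} and thus cannot be the minimizer as long as $C = \Cinitial \ne \emptyset$ (which holds since $k \ge 1$). Summing $z$-th powers then gives $\DScost = \sum_{v \in V} \DSd^z[v] = \cost(V, \Cinitial)$. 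This exactness is worth highlighting because elsewhere in the paper $\DScost$ is only promised within a $2^{2\eps z}$ factor of the truth (cf.\ \Cref{lem:DScost}); at initialization we get the stronger identity for free.

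For claim~4, Line~\ref{alg:initialize:center} costs $O(n \log n)$ across the $|\calJ| = O(\log n)$ intersections. Line~\ref{alg:initialize-subclusterings:modify-nonempty} runs $|\calJ|$ independent multi-source Dijkstras, and Line~\ref{alg:initialize:rest} adds $O(n \log n)$ via the four synchronization initializations. The main quantitative obstacle -- and really the only delicate point of the proof -- is keeping the $|\calJ|$-fold Dijkstra cost within $O(m \log n + n \log^2 n)$: using a Fibonacci-heap implementation each run costs $O(m + n \log n)$, giving exactly the desired total, whereas a binary heap would degrade the bound to $O(m \log^2 n)$. Once this choice of data structure is made, summing the contributions yields $\Tinitialize = O(m \log n + n \log^2 n)$, as claimed.
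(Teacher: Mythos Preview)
Your proposal is correct and follows essentially the same route as the paper's proof: verify Invariants~\ref{invar:subclusterings:center}--\ref{invar:subclusterings:empty} from the exactness of multi-source Dijkstra (and the explicit empty case), derive $\DScost = \cost(V,\Cinitial)$ from that exactness together with the cover property $\cup_J C_J = C$, invoke the four initialization lemmas for the remaining contents, and sum the per-step running times. Your explicit remark that a Fibonacci-heap Dijkstra is needed to stay within $O(m\log n + n\log^2 n)$ is a point the paper leaves implicit in its citation of \cite[Chapter~22.3]{CLRS22}.
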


\begin{proof}
Let us go through the operation {\Initialize} step by step.

\noindent
First, we initialize the center set $C \gets \Cinitial$ and the center subsets $C_{J} \gets C \cap J$ for $J \in \calJ$, based on the isolation set cover $\calJ$ (\Cref{def:cover,lem:cover}).
\hfill
(Line~\ref{alg:initialize:center})

\noindent
\Comment{This ensures $C = \Cinitial$ by construction.}

\noindent
\Comment{The running time $= O(n \log(n)) + (1 + |\calJ|) \cdot O(n) = O(n \log(n))$.}

\noindent
Afterward, we invoke the suboperation {\InitializeSubClusterings} (Lines~\ref{alg:initialize-subclusterings:for-nonempty} to \ref{alg:initialize-subclusterings:modify-empty}) to  initialize the subclusterings $(\DSc_{J}[v], \DSd_{J}[v])_{(J,\, v) \in \calJ \times V}$.
\hfill
(Line~\ref{alg:initialize:subclusterings})

\noindent
Specifically, every index-$(J \in \calJ)$ subclustering falls into either {\bf Case~1} or {\bf Case~2}:

\vspace{.1in}
\noindent
{\bf Case~1:}
The index-$J$ center subset is nonempty $C_{J} \ne \emptyset$.
\hfill
(Line~\ref{alg:initialize-subclusterings:for-nonempty})

\noindent
In this case, we assign $(\DSc_{J}[v], \DSd_{J}[v]) \gets (\argmin_{c \in C_{J}} \dist(v, c),\ \dist(v, C_{J}))$ the {\em (exact) nearest center} and the {\em (exact) distance}, for every vertex $v \in V$.
\hfill
(Line~\ref{alg:initialize-subclusterings:modify-nonempty})

\noindent
\Comment{This maintains Invariants~\ref{invar:subclusterings:center} to \ref{invar:subclusterings:vertex} by construction (since $(\DSc_{J}[v], \DSd_{J}[v])_{v \in V}$ is the ``groundtruth'').}

\noindent
\Comment{The running time $= O(m + n \log(n))$, using Dijkstra's algorithm \cite[Chapter~22.3]{CLRS22}.}

\vspace{.1in}
\noindent
{\bf Case~2:}
The index-$J$ center subset is empty $C_{J} = \emptyset$.
\hfill
(Line~\ref{alg:initialize-subclusterings:for-empty})

\noindent
In this case, we simply assign $(\DSc_{J}[v], \DSd_{J}[v]) \gets (\perp, \dmax)$, for every vertex $v \in V$.
\hfill
(Line~\ref{alg:initialize-subclusterings:modify-empty})

\noindent
\Comment{This maintains Invariant~\ref{invar:subclusterings:empty} by construction.}

\noindent
\Comment{The running time $= O(n)$.}

\vspace{.1in}
\noindent
Eventually, we initialize the rest of our data structure $\calD$, based on the subclusterings from Line~\ref{alg:initialize:subclusterings}, using the respective ``initialization'' parts of \Cref{lem:maintain-clustering,lem:maintain-DScost,lem:maintain-deletion-estimator,lem:maintain-grouping}.
\hfill
(Line~\ref{alg:initialize:rest})

\noindent
\Comment{This maintains Invariant~\ref{invar:clustering} to \ref{invar:DSvolume} by construction (\Cref{lem:maintain-clustering,lem:maintain-DScost,lem:maintain-deletion-estimator,lem:maintain-grouping}).}

\noindent
\Comment{The running time $= O(n \log(n)) + O(n) + O(n \log(n)) = O(n \log(n))$.}

\noindent
In sum, the operation {\Initialize} satisfies both \Cref{lem:initialize:center,lem:initialize:DScost,lem:initialize:invar}; for \Cref{lem:initialize:DScost} in particularly, we have
\begin{align*}
    \DScost
    & ~=~ \sum_{v \in V} \min_{J \in \calJ} \DSd_{J}^{z}[v] \\
    & ~=~ \sum_{v \in V} \min_{J \in \calJ} \Big(\dist^{z}(v, C_{J}) \cdot \mathbb{I}(C_{J} \ne \emptyset) + \dmax^{z} \cdot \mathbb{I}(C_{J} = \emptyset)\Big) \\
    & ~=~ \sum_{v \in V} \dist^{z}(v, C) \\
    & ~=~ \cost(V, C).
\end{align*}
Here, the first step applies Invariants~\ref{invar:clustering} and \ref{invar:DScost}.
The second step applies the constructions in Lines~\ref{alg:initialize-subclusterings:modify-nonempty} and \ref{alg:initialize-subclusterings:modify-empty}.
And the third step applies $\cup_{J \in \calJ} C_{J} = C$ (\Cref{footnote:center-subset}) and \Cref{prop:distance}.
Further, (\Cref{lem:initialize:runtime}) the total running time
\begin{align*}
    \Tinitialize
    ~=~ \underbrace{O(n \log(n))}_{\text{Line~\ref{alg:initialize:center}}}
    + \underbrace{|\calJ| \cdot O(m + n \log(n))}_{\text{Line~\ref{alg:initialize:subclusterings}}}
    + \underbrace{O(n \log(n))}_{\text{Line~\ref{alg:initialize:rest}}}
    ~=~ O(m \log(n) + n \log^{2}(n)).
\end{align*}
This finishes the proof of \Cref{lem:initialize}.
\end{proof}

\subsection{The operation {\Insert}}
\label{subsec:insert}

This subsection presents the operations {\Insert} and {\InsertSubclustering} -- see \Cref{fig:insert} for their implementation -- which address the insertion of a {\em current noncenter} $\cinsert \notin C$ (promised) into the maintained center set $C$.
Without ambiguity, throughout \Cref{subsec:insert} we denote by $\calD$ and $\calD'$ the data structures before and after the operation $\Insert(\cinsert)$, respectively; likewise for their contents, the maintained center sets $C$ versus $C'$, the maintained center subsets $C_{J}$ versus $C'_{J}$, etc.
Essentially, we will utilize an adaptation of depth-first search \cite[Chapter~20.3]{CLRS22}.

\begin{figure}[t]
\centering
\begin{mdframed}
\begin{flushleft}
Operation $\term[\Insert]{alg:insert}(\cinsert)$

\vspace{.1in}
{\bf Input:}
A current noncenter $\cinsert \notin C$ (promised).
\begin{enumerate}
    \item \label{alg:insert:for}
    For every index $(J \in \calJ \colon J \ni \cinsert)$:
    
    \item \label{alg:insert:subclustering}
    \qquad $\InsertSubclustering(\cinsert, 0, J, \cinsert)$.
    \hfill
    \Comment{Adapted from depth-first search.}
    
    \item \label{alg:insert:center-subset}
    \qquad $C_{J} \gets C_{J} + \cinsert$.
    
    \item \label{alg:insert:center-set}
    $C \gets C + \cinsert$.
\end{enumerate}
\end{flushleft}
\end{mdframed}

\begin{mdframed}
\begin{flushleft}
Suboperation $\term[\InsertSubclustering]{alg:insert-subclustering}(\cinsert, d, J, v)$

\vspace{.1in}
{\bf Input:}
A pair $(\cinsert, d) \in (V \setminus C) \times [0,\ \dmax]$ and an entry $(J, v) \in \calJ \times V \colon J \ni \cinsert$ (promised).

\Comment{Synchronize the rest of our data structure $\calD$ accordingly (\Cref{lem:maintain-clustering,lem:maintain-DScost,lem:maintain-deletion-estimator,lem:maintain-grouping}), every time when the index-$(J \in \calJ)$ subclustering $(\DSc_{J}[v], \DSd_{J}[v])_{v \in V}$ is modified in Line~\ref{alg:insert-subclustering:modify}.}
\begin{enumerate}
\setcounter{enumi}{4}
    \item \label{alg:insert-subclustering:modify}
    $(\DSc_{J}[v], \DSd_{J}[v]) \gets (\cinsert, d)$.
    
    \item \label{alg:insert-subclustering:for}
    For every neighbor $\big(u \in N(v) \colon \DSd_{J}[u] > 2^{\eps / \beta} \cdot (\DSd_{J}[v] + w(u, v))\big)$:
    
    \item \label{alg:insert-subclustering:recursion}
    \qquad $\InsertSubclustering(\cinsert, \DSd_{J}[v] + w(u, v), J, u)$.
\end{enumerate}
\end{flushleft}
\end{mdframed}
\caption{\label{fig:insert}The (sub)operations {\Insert} and {\InsertSubclustering}.}
\end{figure}

The following \Cref{lem:insert} shows the performance guarantees of the operation {\Insert}.

\begin{lemma}[{\Insert}]
\label{lem:insert}
\begin{flushleft}
Invoke the operation $\Insert(\cinsert)$, on input a current noncenter $\cinsert \notin C$ (promised), for a data structure $\calD$ that maintains Invariants~\ref{invar:subclusterings:center} to \ref{invar:subclusterings:empty} and Invariants~\ref{invar:clustering} to \ref{invar:DSvolume}:
\begin{enumerate}[font = {\em\bfseries}]
    \item \label{lem:insert:center}
    $C' = C + \cinsert$ and $C'_{J} = C' \cap J$, for every index $J \in \calJ$.
    
    \item \label{lem:insert:DSd}
    $\DSd'[v] \le \DSd[v]$, for every vertex $v \in V$.
    
    \item \label{lem:insert:DSc}
    $(\DSc'[v], \DSd'[v]) = (\DSc[v], \DSd[v])$, for every vertex $(v \in V \colon \DSc'[v] \ne \cinsert)$.
    
    \item \label{lem:insert:DSvolume}
    $\DSvolume'[c] \le \DSvolume[c]$, for every center $c \in C = C' - \cinsert$.
    
    \item \label{lem:insert:invar}
    The modified data structure $\calD'$ maintains Invariants~\ref{invar:subclusterings:center} to \ref{invar:subclusterings:empty} and thus Invariants~\ref{invar:clustering} to \ref{invar:DSvolume}.
    
    \item \label{lem:insert:runtime}
    The worst-case running time $\Tinsert = (\Phi - \Phi') \cdot O(\eps^{-1} \beta \log(n)) \le \Phi_{\max} \cdot O(\eps^{-1} \beta \log(n))$.
\end{enumerate}
\end{flushleft}
\end{lemma}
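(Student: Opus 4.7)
The plan is to verify the six parts in turn, with the heavy lifting concentrated on the structural Invariants~\ref{invar:subclusterings:center}--\ref{invar:subclusterings:empty} and the potential-based runtime analysis. Part~\textbf{1} is immediate from Lines~\ref{alg:insert:for}--\ref{alg:insert:center-set}, which explicitly update $C$ and the subsets $\{C_J\}_{J \in \calJ}$. For Part~\textbf{4}, I would observe that {\InsertSubclustering} only overwrites $\DSc_J[v]$ by setting it to $\cinsert$; hence for any pre-existing center $c \in C$, the set $\{(J, v) : \DSc_J[v] = c\}$ can only shrink, and the formula in Invariant~\ref{invar:DSvolume} yields the monotonicity. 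For Part~\textbf{2}, every modification in Line~\ref{alg:insert-subclustering:modify} writes a value $d$ strictly smaller than the previous $\DSd_J[v]$ (by the triggering condition in Line~\ref{alg:insert-subclustering:for} of the caller), so each subclustering distance is nonincreasing; taking the per-index minimum yields $\DSd'[v] \le \DSd[v]$.

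The heart of the argument lies in Part~\textbf{5}. Given Invariants~\ref{invar:subclusterings:center}--\ref{invar:subclusterings:empty}, the synchronization clauses of \Cref{lem:maintain-clustering,lem:maintain-DScost,lem:maintain-deletion-estimator,lem:maintain-grouping} (applied after every single-entry modification at Line~\ref{alg:insert-subclustering:modify}, as the suboperation's comment mandates) deliver Invariants~\ref{invar:clustering}--\ref{invar:DSvolume}. Invariant~\ref{invar:subclusterings:center} is preserved: $\cinsert$ is handled by the initial call $\InsertSubclustering(\cinsert, 0, J, \cinsert)$, while any pre-existing $c \in C_J$ has $\DSd_J[c] = 0$ and cannot be further lowered (the trigger would require a strictly negative value). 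Invariant~\ref{invar:subclusterings:empty} is vacuous post-operation, since $\cinsert \in C'_J$ for every $J \ni \cinsert$ while indices with $J \not\ni \cinsert$ retain their previous state. For Invariant~\ref{invar:subclusterings:vertex}, induction on the recursion depth shows that every written $d$ at vertex $v$ equals the length of some $\cinsert$-to-$v$ walk in $G$, whence $d \ge \dist(v, \cinsert) \ge \dist(v, C'_J)$. The subtlest is Invariant~\ref{invar:subclusterings:edge}: fixing an edge $(u, v)$, consider the last invocation that writes $\DSd_J[v]$; at that moment the for-loop inspects $u$ and, if the invariant is violated, recurses with $d = \DSd_J[v] + w(u, v)$, forcing $\DSd_J[u] \le d$ thereafter. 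Because $\DSd_J[\cdot]$ values are monotonically nonincreasing, any later decrease of $\DSd_J[v]$ re-fires the check, so no violating edge survives to termination; termination itself holds because each call shrinks $\DSd_J[v]$ by a factor $\ge 2^{\eps/\beta}$ within the bounded range $[0, \dmax]$ given by \Cref{prop:distance}. Part~\textbf{3} then follows: if $\DSc'[v] \ne \cinsert$, then the post-update minimizing index $J^*$ was untouched by the DFS, so $(\DSc'_{J^*}[v], \DSd'_{J^*}[v]) = (\DSc_{J^*}[v], \DSd_{J^*}[v])$, and combining with Part~\textbf{2} together with consistent tie-breaking forces $(\DSc'[v], \DSd'[v]) = (\DSc[v], \DSd[v])$.

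The main obstacle is Part~\textbf{6}, where the runtime must be charged against $\Phi - \Phi'$. Each invocation of $\InsertSubclustering(\cinsert, d, J, v)$ performs $O(\deg(v))$ work scanning neighbors in Line~\ref{alg:insert-subclustering:for}, plus an $O(\log n)$ per-modification synchronization cost from the modify clauses of \Cref{lem:maintain-clustering,lem:maintain-DScost,lem:maintain-deletion-estimator,lem:maintain-grouping}, totalling $O((\deg(v)+1)\log n)$ self-work. To amortize this against the drop in $\Phi = \sum_{(J, v) \in \calJ \times V} \deg(v) \log_2(1 + \DSd_J[v])$, I would exploit the recursion condition $\DSd_J[v]^{\text{new}} < 2^{-\eps/\beta} \cdot \DSd_J[v]^{\text{old}}$ together with the lower bound $\DSd_J[v]^{\text{new}} \ge \dmin = 1$ at every non-root call (since the written value $d = \DSd_J[v'] + w(v,v')$ accumulates at least one edge weight $\ge \wmin = 1$ by \Cref{assumption:edge-weight}). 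Under this regime $1 + \DSd_J[v] \asymp \DSd_J[v]$, so the per-call drop in $\Phi$ is $\Omega(\deg(v) \cdot \eps/\beta)$. Matching this against the self-work gives $\Tinsert = (\Phi - \Phi') \cdot O(\eps^{-1} \beta \log n)$, and \Cref{lem:potential} bounds this further by $\Phi_{\max} \cdot O(\eps^{-1} \beta \log n)$. The delicate point is that the $+1$ inside $\log_2$ does not obstruct the amortization precisely because the only call with $d=0$ is the root $\InsertSubclustering(\cinsert, 0, J, \cinsert)$, which contributes an additive $O(\log n)$ per index $J \ni \cinsert$ and is absorbed into the final bound.
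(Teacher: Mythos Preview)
Your approach matches the paper's almost exactly: same induction/DFS argument for Invariants~\ref{invar:subclusterings:center}--\ref{invar:subclusterings:empty}, same observation that only entries with $\DSc'_J[v]=\cinsert$ change (yielding Parts~\ref{lem:insert:DSc} and~\ref{lem:insert:DSvolume}), and the same potential-based amortization for Part~\ref{lem:insert:runtime}.

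There is one concrete slip in Part~\ref{lem:insert:runtime}. You say the root call $\InsertSubclustering(\cinsert,0,J,\cinsert)$ ``contributes an additive $O(\log n)$ per index $J\ni\cinsert$'', but by your own accounting its self-work is $O((\deg(\cinsert)+1)\log n)$, since it too scans all neighbors in Line~\ref{alg:insert-subclustering:for}. An additive $O(\deg(\cinsert)\log^2 n)$ term is \emph{not} obviously dominated by $(\Phi-\Phi')\cdot O(\eps^{-1}\beta\log n)$, and the sharper form $\Tinsert=(\Phi-\Phi')\cdot O(\eps^{-1}\beta\log n)$ is what is actually used later in the total-runtime telescoping (\Cref{thm:LS-runtime}). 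The paper fixes this by noting that the root call also incurs a large potential drop: since $\cinsert\notin C_J$ we have $\DSd_J[\cinsert]\ge\dist(\cinsert,C_J)\ge\dmin=1$ (Invariant~\ref{invar:subclusterings:vertex} and \Cref{prop:distance}), so the drop is $\deg(\cinsert)\cdot\log_2(1+\DSd_J[\cinsert])\ge\deg(\cinsert)\ge\deg(\cinsert)\cdot\Omega(\eps/\beta)$. Thus the root call satisfies the \emph{same} $\Omega(\deg(v)\cdot\eps/\beta)$ drop bound as non-root calls, and no separate bookkeeping is needed.
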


\begin{proof}
The operation {\Insert} iterates the following for every index $(J \in \calJ \colon J \ni \cinsert)$:
\hfill
(Line~\ref{alg:insert:for})

\noindent
We invoke the suboperation {\InsertSubclustering} (Lines~\ref{alg:insert-subclustering:modify} to \ref{alg:insert-subclustering:recursion}), which is an adaptation of depth-first search \cite[Chapter~20.3]{CLRS22}, to modify the index-$J$ subclustering $(\DSc_{J}[v], \DSd_{J}[v])_{v \in V}$.
\hfill
(Line~\ref{alg:insert:subclustering})

\vspace{.1in}
\noindent
Specifically, starting with the input noncenter $\cinsert \in J \setminus C_{J}$ itself,\footnote{We have $\cinsert \in J \setminus C_{J} = J \setminus C \impliedby (\cinsert \notin C) \wedge (\cinsert \in J)$.} we move this $\cinsert$ to the {\em inserted} index-$J$ center-$\cinsert$ subcluster, namely $(\DSc'_{J}[\cinsert], \DSd'_{J}[\cinsert]) \gets (\cinsert, 0)$.
\hfill
(Line~\ref{alg:insert-subclustering:modify})

\noindent
Afterward, we test Invariant~\ref{invar:subclusterings:edge} for every neighbor $u \in N(\cinsert)$:
\hfill
(Line~\ref{alg:insert-subclustering:for})

\noindent
A violating neighbor $\big(u \in N(\cinsert) \colon \DSd_{J}[u] > 2^{\eps / \beta} \cdot (\DSd'_{J}[\cinsert] + w(u, \cinsert)) = 2^{\eps / \beta} \cdot w(u, \cinsert)\big)$ shall also move to the {\em inserted} index-$J$ center-$\cinsert$ cluster, i.e., its violation to Invariant~\ref{invar:subclusterings:edge} (roughly speaking) shows that the input noncenter $\cinsert \in J \setminus C_{J}$ is much nearer than its current center $\DSc_{J}[u] \in C_{J}$.
We thus recursively invoke the suboperation {\InsertSubclustering} to further modify the entry $(J, u)$, namely $(\DSc'_{J}[u], \DSd'_{J}[u]) \gets (\cinsert, \DSd'_{J}[\cinsert] + w(u, \cinsert)) = (\cinsert, w(u, \cinsert))$.
\hfill
(Line~\ref{alg:insert-subclustering:recursion})

\noindent
\Comment{Throughout the recursion of {\InsertSubclustering}, every time when Line~\ref{alg:insert-subclustering:modify} modifies the index-$J$ subclustering $(\DSc_{J}[v], \DSd_{J}[v])_{v \in V}$, in time $O(1)$, we would synchronize the rest of our data structure $\calD$ (\Cref{lem:maintain-clustering,lem:maintain-DScost,lem:maintain-deletion-estimator,lem:maintain-grouping}), in time $O(\log(n)) + O(\log(n)) + O(\log(n)) = O(\log(n))$.}

\vspace{.1in}
\noindent
After the above recursion of {\InsertSubclustering} terminates -- which we assume for the moment but will prove later in \Cref{lem:insert:runtime} -- we complete our modifications for this index $(J \in \calJ \colon J \ni \cinsert)$ by maintaining the index-$J$ center subset $C'_{J} \gets C_{J} + \cinsert$.
\hfill
(Line~\ref{alg:insert:center-subset})

\noindent
After the iteration of all indices $(J \in \calJ \colon J \ni \cinsert)$ terminates, we complete our modifications to the whole data structure $\calD$ by maintaining the center set $C' \gets C + \cinsert$.
\hfill
(Line~\ref{alg:insert:center-set})

\vspace{.1in}
As the whole process of the operation {\Insert} now are clear, we are ready to present the proof of \Cref{lem:insert}.
For ease of notation, we denote by $\ell \ge 1$ the total number of invocations (due to either Line~\ref{alg:insert:subclustering} or \ref{alg:insert-subclustering:recursion}) of the suboperation {\InsertSubclustering} and by $(\cinsert, d_{i}, J_{i}, v_{i})$ the input to every invocation $i \in [\ell]$; we observe that the first argument $\cinsert$ is always the same.

We begin with the following \Cref{claim:insert-subclustering}, which will be useful in several places. (Recall \Cref{prop:distance} for the parameters $\dmin = 1$ and $\dmax \le n^{O(z)}$.)

\begin{claim}
\label{claim:insert-subclustering}
\begin{flushleft}
For the input $(\cinsert, d_{i}, J_{i}, v_{i})$ to every invocation $i \in [\ell]$ of {\InsertSubclustering}:
\begin{enumerate}[font = {\em\bfseries}]
    \item \label{claim:insert-subclustering:di}
    $d_{i} \in \{0\} \cup [\dmin, +\infty)$.
    
    \item \label{claim:insert-subclustering:dist}
    $d_{i}\ge \dist(v_{i}, \cinsert)$.
    
    \item \label{claim:insert-subclustering:dJ}
    $\DSd_{J_{i}}[v_{i}] \ge \max(\dmin,\ 2^{\eps / \beta} \cdot d_{i})$.
    \hfill
    \Comment{Here $\DSd_{J_{i}}[v_{i}]$ denotes the one just before this invocation $i \in [\ell]$.}
\end{enumerate}
\end{flushleft}
\end{claim}

\begin{proof}
We prove \Cref{claim:insert-subclustering} by induction on the order of all invocations $i \in [\ell]$.

\vspace{.1in}
\noindent
{\bf Base Case: an invocation $i \in [\ell]$ by Line~\ref{alg:insert:subclustering}.}
Such an invocation $i \in [\ell]$ is the first invocation for some index $(J \in \calJ \colon J \ni \cinsert)$ and has input $(\cinsert, d_{i}, J_{i}, v_{i}) = (\cinsert, 0, J, \cinsert)$; both \Cref{claim:insert-subclustering:di,claim:insert-subclustering:dist} are trivial $d_{i} = 0 = \dist(\cinsert, \cinsert)$.
Also, we can conclude with \Cref{claim:insert-subclustering:dJ}, as follows:
\begin{align*}
    \DSd_{J_{i}}[v_{i}]
    ~=~ \DSd_{J}[\cinsert]
    ~\ge~ \dist(\cinsert, C_{J})
    ~\ge~ \dmin
    ~=~ \max(\dmin,\ 0)
    ~=~ \max(\dmin,\ 2^{\eps / \beta} \cdot d_{i}).
\end{align*}
Here, the second step applies Invariant~\ref{invar:subclusterings:vertex} for the original data structure $\calD$ (i.e., this invocation $i \in [\ell]$ is the first invocation for the index $J$), and the third step applies \Cref{prop:distance} to the input {\em noncenter} $\cinsert \notin C_{J} = C \cap J \impliedby \cinsert \notin C$.

\vspace{.1in}
\noindent
{\bf Induction Step: an invocation $i \in [\ell]$ by Line~\ref{alg:insert-subclustering:recursion}.}
Such a {\em recursive} invocation $i \in [\ell]$ is invoked by an earlier invocation $i' \in [i - 1]$, which considers an adjacent vertex $(v_{i'} \in V \colon (v_{i}, v_{i'}) \in E)$ (Line~\ref{alg:insert-subclustering:for}) for the same index $(J_{i'} = J_{i} = J \colon J \ni \cinsert)$ (Line~\ref{alg:insert:for}) and makes the modification $(\DSc_{J}[v_{i'}], \DSd_{J}[v_{i'}]) \gets (\cinsert, d_{i'})$.
Without loss of generality ({\bf induction hypothesis}), we have $\DSd_{J}[v_{i'}] = d_{i'} \ge \dist(v_{i'}, \cinsert)$.
And to truly invoke the {\em recursive} invocation $i \in [\ell]$, we must have $\DSd_{J}[v_{i}] > 2^{\eps / \beta} \cdot (\DSd_{J}[v_{i'}] + w(v_{i}, v_{i'}))$ (Line~\ref{alg:insert-subclustering:for}) and $d_{i} = \DSd_{J}[v_{i'}] + w(v_{i}, v_{i'})$ (Line~\ref{alg:insert-subclustering:recursion}).
Given these, we can deduce \Cref{claim:insert-subclustering:di,claim:insert-subclustering:dJ,claim:insert-subclustering:dist} as follows.

\noindent
\Cref{claim:insert-subclustering:di}:
$d_{i} = \DSd_{J}[v_{i'}] + w(v_{i}, v_{i'}) \ge w(v_{i}, v_{i'}) \ge \wmin = \dmin = 1$.
\hfill
\Comment{Cf.\ \Cref{prop:distance}.}

\noindent
\Cref{claim:insert-subclustering:dist}:
$d_{i} \ge \dist(v_{i'}, \cinsert) + w(v_{i'}, v_{i}) \ge \dist(v_{i}, \cinsert)$.
\hfill
\Comment{Triangle inequality.}

\noindent
\Cref{claim:insert-subclustering:dJ}:
$\DSd_{J}[v_{i}] > 2^{\eps / \beta} \cdot (\DSd_{J}[v_{i'}] + w(v_{i}, v_{i'})) = 2^{\eps / \beta} \cdot d_{i} = \max(\dmin,\ 2^{\eps / \beta} \cdot d_{i})$.
\hfill
\Comment{$d_{i} \ge \dmin = 1$ (\Cref{claim:insert-subclustering:di})}

This finishes the proof of \Cref{claim:insert-subclustering}.
\end{proof}

Now we move back to the proof of \Cref{lem:insert}.

\noindent
{\bf \Cref{lem:insert:center}.}
Lines~\ref{alg:insert:center-subset} and \ref{alg:insert:center-set} trivially imply that $C' = C + \cinsert$ and $C'_{J} = C' \cap J$, for every index $J \in \calJ$.

\vspace{.1in}
\noindent
{\bf \Cref{lem:insert:DSd}.}
By \Cref{claim:insert-subclustering:dJ} of \Cref{claim:insert-subclustering}, we have $\DSd'_{J}[v] \le \DSd_{J}[v]$, for every entry $(J, v) \in \calJ \times V$, which implies that $\DSd'[v] = \min_{J \in \calJ} \DSd'_{J}[v] \le \min_{J \in \calJ} \DSd_{J}[v] = \DSd[v]$, for every vertex $v \in V$.
(More rigorously, the step $\DSd'[v] = \min_{J \in \calJ} \DSd'_{J}[v]$ assumes Invariant~\ref{invar:clustering} for the modified data structure $\calD$, which we will prove later in \Cref{lem:insert:invar}.)

\vspace{.1in}
\noindent
{\bf \Cref{lem:insert:DSc,lem:insert:DSvolume}.}
Only Line~\ref{alg:insert-subclustering:modify} can modify ``$(\DSc'_{J}[v], \DSd'_{J}[v]) \gets (\cinsert, d)$'' an entry $(J, v) \in \calJ \times V$ of the subclusterings, after which the considered vertex $v$ must locate in the inserted index-$J$ center-$\cinsert$ subcluster. This observation implies that:
\begin{align*}
    (\DSc'_{J}[v], \DSd'_{J}[v])
    & ~=~ (\DSc_{J}[v], \DSd_{J}[v]),
    && \forall (J, v) \in \calJ \times V \colon \DSc'_{J}[v] \ne \cinsert. \\
    \{(J,\, v) \in \calJ \times V \mid \DSc'_{J}[v] = c\}
    & ~\subseteq~ \{(J,\, v) \in \calJ \times V \mid \DSc_{J}[v] = c\},
    && \forall c \in C = C' - \cinsert.
\end{align*}
\Cref{lem:insert:DSc} follows directly from the first equation above, and
\Cref{lem:insert:DSvolume} follows directly from the second equation above:
\begin{align*}
    \DSvolume'[c]
    ~=~ \sum_{(J,\, v) \in \calJ \times V \colon \DSc'_{J}[v] = c} \tfrac{\deg(v)}{2m |\calJ|}
    ~\le~ \sum_{(J,\, v) \in \calJ \times V \colon \DSc_{J}[v] = c} \tfrac{\deg(v)}{2m |\calJ|}
    ~=~ \DSvolume[c].
\end{align*}
Here, the first step holds if the modified data structure $\calD'$ maintains Invariant~\ref{invar:DSvolume}, which we assume for the moment but will prove later in \Cref{lem:insert:invar}.

\vspace{.1in}
\noindent
{\bf \Cref{lem:insert:invar}.}
We would prove that the modified subclusterings $(\DSc'_{J}[v], \DSd'_{J}[v])_{(J,\, v) \in \calJ \times V}$ maintain Invariants~\ref{invar:subclusterings:center} to \ref{invar:subclusterings:empty}; suppose so, the rest of the modified data structure $\calD'$ maintains Invariants~\ref{invar:clustering} to \ref{invar:DSvolume} by construction, given the respective ``synchronization'' parts of \Cref{lem:maintain-clustering,lem:maintain-DScost,lem:maintain-deletion-estimator,lem:maintain-grouping}.
Invariant~\ref{invar:subclusterings:empty} is rather trivial.\footnote{Invariant~\ref{invar:subclusterings:empty} asserts that: If $C'_{J} = \emptyset$, then $(\DSc'_{J}[v], \DSd'_{J}[v])_{v \in V} = (\perp, \dmax)^{n}$. We observe that $C'_{J} = \emptyset$ means $J \notni \cinsert$, so this index-$J$ subclustering is unmodified $(\DSc'_{J}[v], \DSd'_{J}[v])_{v \in V} = (\DSc_{J}[v], \DSd_{J}[v])_{v \in V} = (\perp, \dmax)^{n}$.}
Below, we would establish Invariants~\ref{invar:subclusterings:center} to \ref{invar:subclusterings:vertex} for a specific index $(J \in \calJ \colon J \ni \cinsert)$; notice that $C'_{J} = C_{J} + \cinsert \ne \emptyset$.

\vspace{.1in}
\noindent
{\bf Invariant~\ref{invar:subclusterings:center}:}
$(\DSc'_{J}[c], \DSd'_{J}[c]) = (c, 0)$, for every center $c \in C'_{J} = C_{J} + \cinsert$.

\noindent
The first invocation (due to Line~\ref{alg:insert:subclustering}) of the suboperation {\InsertSubclustering} modifies (Line~\ref{alg:insert-subclustering:modify}) the entry $(J, \cinsert)$ from $(\DSc_{J}[\cinsert], \DSd_{J}[\cinsert])$ to $(\DSc'_{J}[\cinsert], \DSd'_{J}[\cinsert]) = (\cinsert, 0)$.
At this moment, Invariant~\ref{invar:subclusterings:center} holds for both this inserted center $\cinsert$ and every original center $(\DSc'_{J}[c], \DSd'_{J}[c]) = (\DSc_{J}[c], \DSd_{J}[c]) = (c, 0)$, $\forall c \in C_{J}$.
But thereafter, an inserted/original center $c \in C'_{J} = C_{J} + \cinsert$ can never pass the test in Line~\ref{alg:insert-subclustering:for}, i.e., $\DSd'_{J}[c] = 0 \ngtr 2^{\eps / \beta} \cdot (\DSd'_{J}[v] + w(c, v))$. Hence, the subsequent recursions (due to Line~\ref{alg:insert-subclustering:recursion}) of the suboperation {\InsertSubclustering} cannot modify those pairs $(\DSc'_{J}[c], \DSd'_{J}[c]) = (c, 0)$, $\forall c \in C'_{J}$.

\vspace{.1in}
\noindent
{\bf Invariant~\ref{invar:subclusterings:edge}:}
$\DSd'_{J}[u] \le 2^{\eps / \beta} \cdot (\DSd'_{J}[v] + w(u, v))$, for every edge $(u, v) \in E$.

\noindent
(i)~If the operation {\Insert} does not modify $(\DSc'_{J}[v], \DSd'_{J}[v]) = (\DSc_{J}[v], \DSd_{J}[v])$ the entry $(J, v)$, we have
\begin{align*}
    \DSd'_{J}[u]
~\le~ \DSd_{J}[u]
    ~\le~ 2^{\eps / \beta} \cdot (\DSd_{J}[v] + w(u, v))
    ~=~ 2^{\eps / \beta} \cdot (\DSd'_{J}[v] + w(u, v)).
\end{align*}
Here, the first step applies \Cref{lem:insert:DSd}, and the second step holds (the premise of \Cref{lem:insert}) since the original data structure $\calD$ maintains Invariant~\ref{invar:subclusterings:edge}.

\noindent
(ii)~Otherwise, consider the last modification ``$(\DSc'_{J}[v], \DSd'_{J}[v]) \gets (\cinsert, d_{i})$'' to this entry $(J, v) = (J_{i}, v_{i})$, by some invocation $i \in [\ell]$ of the suboperation {\InsertSubclustering}.
As long as the considered edge $(u, v)$ violates Invariant~\ref{invar:subclusterings:edge} at this moment, i.e., $\DSd'_{J}[u] > 2^{\eps / \beta} \cdot (\DSd'_{J}[v] + w(u, v))$, Lines~\ref{alg:insert-subclustering:for} and \ref{alg:insert-subclustering:recursion} will detect this violation and recursively invoke the suboperation {\InsertSubclustering} to modify $(\DSc'_{J}[v], \DSd'_{J}[v])$, after which the considered edge $(u, v)$ will maintain Invariant~\ref{invar:subclusterings:edge}.

Combining both cases gives Invariant~\ref{invar:subclusterings:edge}. (Rigorously, here we assume that the operation {\Insert} will terminate -- this will be proved later in \Cref{lem:insert:runtime}.)

\vspace{.1in}
\noindent
{\bf Invariant~\ref{invar:subclusterings:vertex}:}
$\DSc'_{J}[v] \in C'_{J}$ and $\dist(v, C'_{J}) \le \dist(v, \DSc'_{J}[v]) \le \DSd'_{J}[v]$, for every vertex $v \in V$.

\noindent
(i)~If the operation {\Insert} does not modify $(\DSc'_{J}[v], \DSd'_{J}[v]) = (\DSc_{J}[v], \DSd_{J}[v])$ the considered entry $(J, v)$, it trivially maintains Invariant~\ref{invar:subclusterings:vertex}; notice that $C'_{J} = C_{J} + \cinsert \supseteq C_{J} \implies \dist(v, C'_{J}) \le \dist(v, C_{J})$.

\noindent
(ii)~Otherwise, consider the last modification ``$(\DSc'_{J}[v], \DSd'_{J}[v]) \gets (\cinsert, d_{i})$'' to this entry $(J, v) = (J_{i}, v_{i})$, by some invocation $i \in [\ell]$ of the suboperation {\InsertSubclustering}.
We thus have $\DSc'_{J}[v] = \cinsert \in C_{J} + \cinsert = C'_{J} \implies \dist(v, C'_{J}) \le \dist(v, \DSc'_{J}[v])$.
And it follows directly from \Cref{claim:insert-subclustering:dist} of \Cref{claim:insert-subclustering} that $\dist(v, \DSc'_{J}[v]) = \dist(v, \cinsert) \le d_{i} = \DSd'_{J}[v]$.

Combining both cases gives Invariant~\ref{invar:subclusterings:vertex}.

\vspace{.1in}
\noindent
{\bf \Cref{lem:insert:runtime}.}
The running time of the operation $\Insert(\cinsert)$ is dominated by the total running time of all invocations of the suboperation {\InsertSubclustering} (Lines~\ref{alg:insert:subclustering} and \ref{alg:insert-subclustering:recursion}).\footnote{In contrast, maintaining the center subsets $C_{J}$ (Line~\ref{alg:insert:center-subset}) and the center set $C$ (Line~\ref{alg:insert:center-set}) each takes time $O(1)$.}
Hence, we can infer \Cref{lem:insert:runtime} from a combination of two observations.

\vspace{.1in}
\noindent
(i)~Every invocation $i \in [\ell]$ takes time $O(\log(n)) + O(\deg(v_{i})) = O(\deg(v_{i}) \cdot \log(n))$. Specifically:

\noindent
First, we modify a single entry $(J_{i}, v_{i})$ of the subclusterings, in time $O(1)$, and synchronize the rest of our data structure $\calD$ using \Cref{lem:maintain-clustering,lem:maintain-DScost,lem:maintain-deletion-estimator,lem:maintain-grouping}, in time $O(\log(n))$.
\hfill
(Line~\ref{alg:insert-subclustering:modify})

\noindent
Then, we enumerate all neighbors of vertex $v_{i}$ in time $O(\deg(v_{i}))$.
\hfill
(Line~\ref{alg:insert-subclustering:for})

\noindent
The possible recursive invocations should not be counted to this invocation $i \in [\ell]$.
\hfill
(Line~\ref{alg:insert-subclustering:recursion})

\vspace{.1in}
\noindent
(ii)~Every invocation $i \in [\ell]$ changes the potential by $-\deg(v_{i}) \cdot \Omega(\eps / \beta)$.

\noindent
Recall that the potential formula $\Phi = \sum_{(J,\, v) \in \calJ \times V} \deg(v) \cdot \log_{2}(1 + \DSd_{J}[v])$; likewise for $\Phi'$ (\Cref{fig:DS}).
Regarding the modification in the considered invocation $i \in [\ell]$ (Line~\ref{alg:insert-subclustering:modify}), say from $(\DSc_{J_{i}}[v_{i}], \DSd_{J_{i}}[v_{i}])$ to $(\DSc'_{J_{i}}[v_{i}], \DSd'_{J_{i}}[v_{i}]) \gets (\cinsert_{i}, d_{i})$, we address either case $\{\DSc'_{J_{i}}[v_{i}] = d_{i} = 0\}$ or $\{\DSc'_{J_{i}}[v_{i}] = d_{i} > 0\}$ separately.

\vspace{.1in}
\noindent
{\bf Case~1: $\DSd'_{J_{i}}[v_{i}] = d_{i} = 0$.}
The potential change in this case is
\begin{align*}
    \deg(v_{i}) \cdot \log_{2}\Big(\tfrac{1 + \DSd'_{J_{i}}[v_{i}]}{1 + \DSd_{J_{i}}[v_{i}]}\Big)
    & ~\le~ -\deg(v_{i}) \cdot \log_{2}(1 + \dmin)
    \hspace{0.75cm}
    \tag{\Cref{claim:insert-subclustering:dJ} of \Cref{claim:insert-subclustering}} \\
    & ~\le~ -\deg(v_{i}) \cdot \Omega(\eps / \beta).
    \tag{$\dmin = 1$, $z \ge 1$, and $0 < \eps < 1$}
\end{align*}

\vspace{.1in}
\noindent
{\bf Case~2: $\DSd'_{J_{i}}[v_{i}] = d_{i} > 0$.}
The potential change in this case is
\begin{align*}
    \deg(v_{i}) \cdot \log_{2}\Big(\tfrac{1 + \DSd'_{J_{i}}[v_{i}]}{1 + \DSd_{J_{i}}[v_{i}]}\Big)
    & ~\le~ -\deg(v_{i}) \cdot \log_{2}\Big(\tfrac{1 + 2^{\eps / \beta} \cdot d_{i}}{1 + d_{i}}\Big)
    \tag{\Cref{claim:insert-subclustering:dJ} of \Cref{claim:insert-subclustering}} \\
    & ~\le~ -\deg(v_{i}) \cdot \log_{2}\Big(\tfrac{1 + 2^{\eps / \beta}}{2}\Big)
    \tag{\Cref{claim:insert-subclustering:di} of \Cref{claim:insert-subclustering}} \\
    & ~\le~ -\deg(v_{i}) \cdot \Omega(\eps / \beta).
\end{align*}
Combining both cases gives observation~(ii) and thus \Cref{lem:insert:runtime}.

This finishes the proof of \Cref{lem:insert}.
\end{proof}

\subsection{The operation {\Delete}}
\label{subsec:delete}

\newcommand{\vol}{\mathrm{vol}}

This subsection presents the operations {\Delete} and {\DeleteSubclustering} -- see \Cref{fig:delete} for their implementation -- which address the deletion of a {\em current center} $\cdelete \in C$ (promised) from the maintained center set $C$.
Without ambiguity, throughout \Cref{subsec:delete} we denote by $\calD$ and $\calD'$ the data structures before and after the operation $\Delete(\cdelete)$, respectively; likewise for their contents, the maintained center sets $C$ versus $C'$, the maintained center subsets $C_{J}$ versus $C'_{J}$, etc.
Essentially, we will utilize an adaptation of Dijkstra's algorithm \cite[Chapter~22.3]{CLRS22}.

The following \Cref{lem:delete} shows the performance guarantees of the operation {\Delete}.

\begin{lemma}[{\Delete}]
\label{lem:delete}
\begin{flushleft}
Invoke the operation $\Delete(\cdelete)$, on input a current center $\cdelete \in C$ (promised), for a data structure $\calD$ that maintains Invariants~\ref{invar:subclusterings:center} to \ref{invar:subclusterings:empty} and Invariants~\ref{invar:clustering} to \ref{invar:DSvolume}:
\begin{enumerate}[font = {\em\bfseries}]
    \item \label{lem:delete:center}
    $C' = C - \cdelete$ and $C'_{J} = C' \cap J$, for every index $J \in \calJ$.
    
    \item \label{lem:delete:loss}
    $\DScost' \le \DScost + \DSloss[\cdelete]$.
    
    \item \label{lem:delete:potential}
    $\Phi' - \Phi \le \DSvolume[\cdelete] \cdot \Phi_{\max}$.
    
    \item \label{lem:delete:invar}
    The modified data structure $\calD'$ maintains Invariants~\ref{invar:subclusterings:center} to \ref{invar:subclusterings:empty} and thus Invariants~\ref{invar:clustering} to \ref{invar:DSvolume}.
    
    \item \label{lem:delete:runtime}
    The worst-case running time $\Tdelete = \DSvolume[\cdelete] \cdot O(m \log^{3}(n))$.
\end{enumerate}
\end{flushleft}
\end{lemma}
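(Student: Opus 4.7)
The plan is to implement $\Delete(\cdelete)$ by iterating over all indices $J \in \calJ$ with $J \ni \cdelete$ and, for each such $J$, running an adapted Dijkstra on the index-$J$ subclustering, then setting $C_J \gets C_J - \cdelete$; afterwards we set $C \gets C - \cdelete$. For the per-index routine, first I would ``erase'' the center-$\cdelete$ subcluster by assigning $(\DSc_J[v], \DSd_J[v]) \gets (\perp, \dmax)$ at every $v$ with $\DSc_J[v] = \cdelete$; if $C_J - \cdelete = \emptyset$ I keep the whole subclustering at $(\perp, \dmax)^{|V|}$ as required by Invariant~\ref{invar:subclusterings:empty} and stop. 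Otherwise I run Dijkstra in which relaxation of $u$ through a neighbor $v$ is triggered by the negation of Invariant~\ref{invar:subclusterings:edge}, namely $\DSd_J[u] > 2^{\eps / \beta} \cdot (\DSd_J[v] + w(u,v))$, and sets $(\DSc_J[u], \DSd_J[u]) \gets (\DSc_J[v], \DSd_J[v] + w(u,v))$. Every single-entry modification triggers an $O(\log n)$-time synchronization of the downstream contents via \Cref{lem:maintain-clustering,lem:maintain-DScost,lem:maintain-deletion-estimator,lem:maintain-grouping}.

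The heart of the argument is a \emph{localization claim}: for each index $J \ni \cdelete$, the Dijkstra only modifies entries $(J, v)$ with $v$ in the erased region $\{v : \DSc_J[v] = \cdelete\}$. For an edge $(u, v)$ with both endpoints outside this region, both $\DSd_J$-values are unchanged and Invariant~\ref{invar:subclusterings:edge} already held. For a boundary edge $(u, v)$ with $u$ outside and $v$ inside, once Dijkstra sets $\DSd_J[v] \gets \DSd_J[u] + w(u,v)$ (possibly via a different path at the same total length), the reverse-direction relaxation condition $\DSd_J[u] > 2^{\eps / \beta}(\DSd_J[v] + w(u,v))$ rearranges to $\DSd_J[u]\,(1 - 2^{\eps / \beta}) > 2^{\eps / \beta + 1}\, w(u,v)$, which fails because the left-hand side is non-positive. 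Granted localization, Invariants~\ref{invar:subclusterings:center}--\ref{invar:subclusterings:empty} follow by arguments parallel to those used for $\Insert$ in \Cref{lem:insert} (a variant of \Cref{claim:insert-subclustering} giving Invariant~\ref{invar:subclusterings:vertex} by induction along the Dijkstra predecessor tree), and the downstream Invariants~\ref{invar:clustering}--\ref{invar:grouping:tree} follow from the synchronization lemmas. This establishes \Cref{lem:delete:center,lem:delete:invar}.

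For \Cref{lem:delete:loss}, I partition $V$ into $\{v : \DSc[v] \ne \cdelete\}$ and $\{v : \DSc[v] = \cdelete\}$. For $v$ in the first set, the defining index $J^* \in \calJ$ of Invariant~\ref{invar:clustering} satisfies $\DSc_{J^*}[v] = \DSc[v] \ne \cdelete$, so by localization the entry $(J^*, v)$ is untouched; hence $\DSd'[v] \le \DSd'_{J^*}[v] = \DSd_{J^*}[v] = \DSd[v]$. For $v$ in the second set, entries $(J, v)$ with $J \notni \cdelete$ are also untouched, giving $\DSd'[v] \le \min_{J \colon J \notni \cdelete} \DSd_J[v]$. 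Taking $z$-th powers, summing, and invoking Invariant~\ref{invar:DSloss} yields $\DScost' - \DScost \le \DSloss[\cdelete]$. For \Cref{lem:delete:potential}, localization restricts all potential-affecting modifications to entries $(J, v)$ with $J \ni \cdelete$ and originally $\DSc_J[v] = \cdelete$, and the new values are at most $\dmax$; so $\Phi' - \Phi \le \log_2(1 + \dmax) \cdot \sum_{(J, v) : \DSc_J[v] = \cdelete} \deg(v) = \DSvolume[\cdelete] \cdot \Phi_{\max}$ by Invariant~\ref{invar:DSvolume}.

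For \Cref{lem:delete:runtime}, localization caps the work per index $J$ at $O(\log n)$ priority-queue operations plus $O(\log n)$ synchronization per edge incident to the erased region, totalling $O(\log^2 n) \cdot \sum_{v : \DSc_J[v] = \cdelete} \deg(v)$. Summing over the $|\calJ| = O(\log n)$ relevant indices and using Invariant~\ref{invar:DSvolume} delivers $\DSvolume[\cdelete] \cdot O(m \log^3 n)$. The main obstacle is the localization claim itself: in particular, certifying that no boundary entry is ever relaxed regardless of the order in which Dijkstra processes the erased region, and that even when an updated $\DSd'_J[v]$ ends up smaller than the pre-deletion $\DSd_J[v]$ (because $v$ finds a better surviving center via the boundary), Invariant~\ref{invar:subclusterings:edge} on every boundary edge is preserved --- both cases reduce to the same algebraic inequality $(1 - 2^{\eps / \beta}) < 0$.
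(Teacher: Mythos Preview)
Your overall plan matches the paper's: iterate over indices $J \ni \cdelete$, erase the center-$\cdelete$ subcluster $U_J$, repair via an adapted Dijkstra from the boundary, then update $C_J$ and $C$. Your arguments for \Cref{lem:delete:loss,lem:delete:potential} and the structure of \Cref{lem:delete:runtime} are essentially the paper's. But the localization step --- which everything else rests on --- has a real gap.

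Your algebra for localization only handles the case $\DSd'_J[v] = \DSd_J[u] + w(u,v)$, i.e.\ when the inside vertex $v$ gets its final value through that particular outside neighbor $u$. If $v$ is instead relaxed via a different boundary vertex $x_0$ and a multi-hop path inside $U_J$, you only know $\DSd'_J[v] \le \DSd_J[u] + w(u,v)$, possibly strictly smaller; substituting that into the relaxation test for $u$ no longer yields your inequality $(1 - 2^{\eps/\beta}) < 0$. So the argument does not rule out relaxing $u$, and once $u$ is relaxed the update can cascade outside $U_J$, breaking both the potential bound and the runtime bound.

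The paper closes this in two coupled ways. First, it \emph{enforces} localization by construction: the neighbor scan is restricted to $N(v^*) \cap U_J$, so vertices outside $U_J$ are never touched. Second --- and this is the idea you are missing --- it uses the update rule $\DSd_J[u] \gets 2^{\eps/\beta}\,(\DSd_J[v^*] + w(u,v^*))$ \emph{with} the multiplicative factor, not your additive $\DSd_J[v^*] + w(u,v^*)$. That factor gives, by a one-line induction along the Dijkstra predecessor path from the boundary, the monotonicity $\DSd'_J[v] \ge \DSd_J[v]$ for every $v \in U_J$. Combined with the pre-deletion Invariant~\ref{invar:subclusterings:edge} bound $\DSd_J[u] \le 2^{\eps/\beta}(\DSd_J[v] + w(u,v))$, this immediately yields $\DSd_J[u] \le 2^{\eps/\beta}(\DSd'_J[v] + w(u,v))$ on every boundary edge, so Invariant~\ref{invar:subclusterings:edge} survives even though the algorithm never examined $u$. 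With your additive update the monotonicity fails, so even if you restricted Dijkstra to $U_J$ by fiat, Invariant~\ref{invar:subclusterings:edge} on boundary edges could break and \Cref{lem:delete:invar} would not go through. (A smaller secondary issue: erasing to $\dmax$ rather than $+\infty$ can leave a deep vertex unrelaxed and stuck at $(\perp, \dmax)$, violating Invariant~\ref{invar:subclusterings:vertex}; the paper erases to $+\infty$.)
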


\afterpage{
\begin{figure}[t]
\centering
\begin{mdframed}
\begin{flushleft}
Operation $\term[\Delete]{alg:delete}(\cdelete)$

\vspace{.1in}
{\bf Input:}
A current center $\cdelete \in C$ (promised).
\begin{enumerate}
    \item\label{alg:delete:for}
    For every index $(J \in \calJ \colon J \ni \cdelete)$:
    
    \item \label{alg:delete:subclustering}
    \qquad $\DeleteSubclustering(J, \cdelete)$.
    
    \item \label{alg:delete:center-subset}
    \qquad $C_{J} \gets C_{J} - \cdelete$.
    
    \item \label{alg:delete:center-set}
    $C \gets C - \cdelete$.
\end{enumerate}
\end{flushleft}
\end{mdframed}

\begin{mdframed}
\begin{flushleft}
Suboperation $\term[\DeleteSubclustering]{alg:delete-subclustering}(J, \cdelete)$
\hfill
\Comment{Adapted from Dijkstra's algorithm.}

\vspace{.1in}
{\bf Input:}
An index $J \in \calJ$ and a current center $\cdelete \in C_{J}$ (promised).

\Comment{Synchronize the rest of our data structure $\calD$ accordingly (\Cref{lem:maintain-clustering,lem:maintain-DScost,lem:maintain-deletion-estimator,lem:maintain-grouping}), every time when the index-$(J \in \calJ)$ subclustering $(\DSc_{J}[v], \DSd_{J}[v])_{v \in V}$ is modified in Line~\ref{alg:insert-subclustering:modify}.}
\begin{enumerate}
\setcounter{enumi}{4}
    \item \label{alg:delete-subclustering:subcluster}
    $U_{J} \gets \{u \in V \mid \DSc_{J}[u] = \cdelete\}$.
    \hfill
    \Comment{The index-$J$ center-$\cdelete$ subcluster.}
    
    \item \label{alg:delete-subclustering:non-universe}
    If $(U_{J} \ne V) \iff (C_{J} \supsetneq \{\cdelete\})$:
    
    \item \label{alg:delete-subclustering:erase}
    \qquad $(\DSc_{J}[v], \DSd_{J}[v])_{v \in U_{J}} \gets (\perp, +\infty)^{|U_{J}|}$.
    
    \item \label{alg:delete-subclustering:boundary}
    \qquad $\partial U_{J} \gets N(U_{J}) \setminus U_{J}$.
    \hfill
    \Comment{The outer boundary of $U_{J}$.}
    
    \item \label{alg:delete-subclustering:queue}
    \qquad Build a $\DSd_{J}[v]$-minimizing priority queue $\calQ \eqdef \calQ((\DSc_{J}[v], \DSd_{J}[v])_{v \in U_{J} \cup \partial U_{J}})$.\textsuperscript{\ref{footnote:break-tie}}
    
    \item \label{alg:delete-subclustering:while}
    \qquad While the priority queue $\calQ$ is nonempty:
    
    \item \label{alg:delete-subclustering:pop-min}
    \qquad $(\DSc_{J}[v^{*}], \DSd_{J}[v^{*}]) \gets \calQ.\texttt{pop-min}()$.
    
    \item \label{alg:delete-subclustering:for}
    \qquad\qquad For every neighbor $\big(u \in N(v^{*}) \cap U_{J} \colon \DSd_{J}[u] > 2^{\eps / \beta} \cdot (\DSd_{J}[v^{*}] + w(u, v^{*}))\big)$:
    
    \item \label{alg:delete-subclustering:modify}
    \qquad\qquad\qquad $(\DSc_{J}[u], \DSd_{J}[u]) \gets (\DSc_{J}[v^{*}],\ 2^{\eps / \beta} \cdot (\DSd_{J}[v^{*}] + w(u, v^{*})))$.
    
    \item \label{alg:delete-subclustering:priority}
    \qquad\qquad\qquad $\calQ.\texttt{update-priority}(u, (\DSc_{J}[u], \DSd_{J}[u]))$.
    
    \item \label{alg:delete-subclustering:universe}
    Otherwise, $(U_{J} = V) \iff (C_{J} = \{\cdelete\})$:
    
    \item \label{alg:delete-subclustering:reset}
    \qquad $(\DSc_{J}[v], \DSd_{J}[v])_{v \in V} \gets (\perp, \dmax)^{n}$.
\end{enumerate}
\end{flushleft}
\end{mdframed}
\caption{\label{fig:delete}The operations {\Delete} and {\DeleteSubclustering}.}
\end{figure}
\clearpage}

\begin{proof}
The operation {\Delete} iterates the following for every index $(J \in \calJ \colon J \ni \cdelete)$:
\hfill
(Line~\ref{alg:delete:for})

\noindent
\Comment{$C_{J} = C \cap J \supseteq \{\cdelete\}$, since $C \ni \cdelete$ and $J \ni \cdelete$ (promised).}

\noindent
We invoke the suboperation {\DeleteSubclustering} (Lines~\ref{alg:delete-subclustering:subcluster} to \ref{alg:delete-subclustering:priority}), which adapts Dijkstra's algorithm \cite[Chapter~22.3]{CLRS22}, to modify the index-$J$ subclustering $(\DSc_{J}[v], \DSd_{J}[v])_{v \in V}$.
\hfill
(Line~\ref{alg:delete:subclustering})

\vspace{.1in}
\noindent
We first identify the index-$J$ center-$\cdelete$ {\em subcluster} $U_{J} = \{v \in V \mid \DSc_{J}[v] = \cdelete\} \ne \emptyset$.
\hfill
(Line~\ref{alg:delete-subclustering:subcluster})

\noindent
(i)~If this subcluster $U_{J}$ is {\em not} the universe $(U_{J} \ne V) \iff (C_{J} \supsetneq \{\cdelete\})$:
\hfill
(Line~\ref{alg:delete-subclustering:non-universe})

\noindent
We would ``erase'' its maintenance $(\DSc_{J}[v], \DSd_{J}[v])_{v \in U_{J}} \gets (\perp, +\infty)^{|U_{J}|}$,
\hfill
(Line~\ref{alg:delete-subclustering:erase})\\
identify its {\em outer boundary} $\partial U_{J} = N(U_{J}) \setminus U_{J} \ne \emptyset$, and
\hfill
(Line~\ref{alg:delete-subclustering:boundary})\\
build a $\DSd_{J}[v]$-minimizing {\em priority queue} $\calQ = \calQ((\DSc_{J}[v], \DSd_{J}[v])_{v \in U_{J} \cup \partial U_{J}})$.
\hfill
(Line~\ref{alg:delete-subclustering:queue})

\noindent
\Comment{$\DSc_{J}[v] \in C_{J} - \cdelete$, for every vertex $v \in \partial U_{J}$ in the outer boundary. (This is vacuously true in case of $C_{J} = \{\cdelete\}$, by which $\partial U_{J} = \emptyset$.)}

\noindent
\Comment{If two pairs $(\DSc_{J}[u], \DSd_{J}[u]), (\DSc_{J}[v], \DSd_{J}[v])$ in the priority queue $\calQ$ have the same $\DSd_{J}[u] = \DSd_{J}[v]$ values, we break ties in favor of the pair with $\DSc_{J}[u], \DSc_{J}[v] \ne \perp$ (if any) but otherwise arbitrarily (\Cref{footnote:break-tie}).}

\noindent
Afterward, we move on to the iteration of Lines~\ref{alg:delete-subclustering:pop-min} to \ref{alg:delete-subclustering:priority}, until the priority queue gets empty $\calQ = \emptyset$ (\`{a} la Dijkstra's algorithm).
Specifically, a single iteration works as follows:
\hfill
(Line~\ref{alg:delete-subclustering:while})

\noindent
First, pop off the current $\DSd_{J}[v]$-minimizer, say the pair $(\DSc_{J}[v^{*}], \DSd_{J}[v^{*}])$.
\hfill
(Line~\ref{alg:delete-subclustering:pop-min})

\noindent
\Comment{$\DSc_{J}[v^{*}] \in C_{J} - \cdelete$, namely we must have $\DSc_{J}[v^{*}] \ne \perp$, since the considered pair $(\DSc_{J}[v^{*}], \DSd_{J}[v^{*}])$ is the $\DSd_{J}[v]$-minimizer of the priority queue $\calQ$ and we break ties in favor of the pairs $(\DSc_{J}[v], \DSd_{J}[v])$ with $\DSc_{J}[v] \ne \perp$ (\Cref{footnote:break-tie}).}

\noindent
Then, test Invariant~\ref{invar:subclusterings:edge} for every neighbor $u \in N(v^{*}) \cap U_{J}$:
\hfill
(Line~\ref{alg:delete-subclustering:for})

\noindent
A violating neighbor $\big(u \in N(v^{*}) \cap U_{J} \colon \DSd_{J}[u] > 2^{\eps / \beta} \cdot (\DSd_{J}[v^{*}] + w(u, v^{*}))\big)$ shall move to the index-$J$ center-$\DSc_{J}[v^{*}]$ cluster, i.e., given its violation to Invariant~\ref{invar:subclusterings:edge},
the center $\DSc_{J}[v^{*}] \in C_{J} - \cdelete$ is much nearer than its current center $\DSc_{J}[u] \in C_{J}$.
So we modify the entry $(J, u)$ of the subclustering, namely $(\DSc'_{J}[u], \DSd'_{J}[u]) \gets (\DSc_{J}[v^{*}],\ 2^{\eps / \beta} \cdot (d + w(u, v^{*})))$, and update this violating neighbor $u$'s priority, namely $\calQ.\texttt{update-priority}(u, (\DSc_{J}[u], \DSd_{J}[u]))$.
\hfill
(Lines~\ref{alg:delete-subclustering:modify} and \ref{alg:delete-subclustering:priority})

\noindent
(ii)~Otherwise, this subcluster $U_{J}$ {\em is} the universe $(U_{J} = V) \iff (C_{J} = \{\cdelete\})$:
\hfill
(Line~\ref{alg:delete-subclustering:universe})

\noindent
We would ``reset'' this index-$J$ subclustering $(\DSc_{J}[v], \DSd_{J}[v])_{v \in V} \gets (\perp, \dmax)^{n}$.
\hfill
(Line~\ref{alg:delete-subclustering:reset})

\noindent
\Comment{Throughout the suboperation {\DeleteSubclustering}, every time when Line~\ref{alg:delete-subclustering:modify} modifies the index-$J$ subclustering $(\DSc_{J}[v], \DSd_{J}[v])_{v \in V}$, we would synchronize the rest of our data structure $\calD$ using \Cref{lem:maintain-clustering,lem:maintain-DScost,lem:maintain-deletion-estimator,lem:maintain-grouping}, in time $O(\log(n)) + O(\log(n)) + O(\log(n)) = O(\log(n))$.}

\vspace{.1in}
\noindent
After the suboperation {\DeleteSubclustering} terminates -- which we assume for the moment but will establish later (cf.\ \Cref{lem:delete:runtime}) -- we complete our modifications for this index $(J \in \calJ \colon J \ni \cdelete)$ by maintaining the index-$J$ center subset $C'_{J} \gets C_{J} - \cdelete$.
\hfill
(Line~\ref{alg:delete:center-subset})

\noindent
After the iteration of all indices $(J \in \calJ \colon J \ni \cdelete)$ terminates, we complete our modifications to the whole data structure $\calD$ by maintaining the center set $C' \gets C - \cdelete$.
\hfill
(Line~\ref{alg:delete:center-set})

As the whole process of the operation {\Delete} now are clear, we are ready to show \Cref{lem:delete}.

\vspace{.1in}
\noindent
{\bf \Cref{lem:delete:center}.}
Lines~\ref{alg:delete:center-subset} and \ref{alg:delete:center-set} trivially imply that $C' = C - \cdelete$ and $C'_{J} = C' \cap J$, for every index $J \in \calJ$.

\vspace{.1in}
\noindent
{\bf \Cref{lem:delete:loss}.}
Recall that the objective estimator $\DScost = \sum_{v \in V} \DSd^{z}[v]$ (likewise for $\DScost'$) and the deletion objective estimator $\DSloss[\cdelete] = \sum_{v \in V \colon \DSc[v] = \cdelete} ((\min_{J \in \calJ \colon J \notni \cdelete} \DSd_{J}^{z}[v]) - \DSd^{z}[v])$ (Invariants~\ref{invar:DScost} and \ref{invar:DSloss}).
\begin{align*}
    \DScost + \DSloss[\cdelete] &~=~ \sum_{v \in V \colon \DSc[v] \neq \cdelete} \DSd^z[v] + \sum_{v \in V \colon \DSc[v] = \cdelete} \min_{J \in \calJ \colon J \notni \cdelete} \DSd_{J}^{z}[v].
\end{align*}
We compare the formulae of $\DScost'$ and $\DScost + \DSloss[\cdelete]$, for every vertex-$(v \in V)$, as follows.

\noindent
{\bf Case~1: $\DSc[v] \neq \cdelete$.}
Let $J_{v} \eqdef \argmin_{J \in \calJ} \DSd_{J}[v]$, for which $(\DSc[v], \DSd[v]) = (\DSc_{J_{v}}[v], \DSd_{J_{v}}[v])$ (Invariant~\ref{invar:clustering}).
The considered vertex $v$ is outside the index-$J_{v}$ center-$\cdelete$ subcluster $U_{J_{v}} = \{u \in V \mid \DSc_{J_{v}}[u] = \cdelete\}$.
Thus, the index-$J_{v}$ suboperation $\DeleteSubclustering(J_{v}, \cdelete)$ does not modify the entry $(J_{v}, v)$, namely  $(\DSc_{J_{v}}'[v], \DSd'_{J_{v}}[v]) = (\DSc_{J_{v}}[v], \DSd_{J_{v}}[v])$, which implies that 
\begin{align*}
    \DSd'[v]
    ~=~ \min_{J \in \calJ} \DSd'_{J}[v]
    ~\le~ \DSd'_{J_{v}}[v]
    ~=~ \DSd_{J_{v}}[v]
    ~=~ \DSd[v].
\end{align*}
Here the first step applies Invariant~\ref{invar:clustering} to the modified data structure $\calD'$, which we assume for the moment but will prove later in \Cref{lem:delete:invar}.

\noindent
{\bf Case~2: $\DSc[v] = \cdelete$.}
Let $K_{v} \eqdef \argmin_{J \in \calJ \colon J \notni \cdelete} \DSd_{J}[v]$,\footnote{This $K_{v}$ is well-defined, since $\cup_{J \in \calJ \colon J \notni \cdelete} J = V - \cdelete$ (\Cref{def:cover}).} for which $\DSc_{K_{v}}[v] \in C_{K_{v}} = C \cap K_{v} \implies \DSc_{K_{v}}[v] \neq \cdelete$ (Invariant~\ref{invar:subclusterings:vertex}).
Hence, the index-$K_{v}$ suboperation $\DeleteSubclustering(K_{v}, \cdelete)$ does not modify the entry $(K_{v}, v)$, namely $(\DSc_{K_{v}}'[v], \DSd_{K_{v}}'[v]) = (\DSc_{K_{v}}[v], \DSd_{K_{v}}[v])$, which implies that 
\begin{align*}
    \DSd'[v] ~=~ \min_{J \in \calJ} \DSd'_{J}[v] ~\le~ \DSd_{K_{v}}'[v] ~=~ \DSd_{K_{v}}[v] ~=~ \min_{J \in \calJ \colon J \notni \cdelete} \DSd_{J}[v],
\end{align*}
Here the first step also applies Invariant~\ref{invar:clustering} to the modified data structure $\calD'$, which we assume for the moment but will prove later in \Cref{lem:delete:invar}.

Combining both cases proves \Cref{lem:delete:loss}.

\vspace{.1in}
\noindent
{\bf \Cref{lem:delete:potential}.}
Recall that the potential formula $\Phi = \sum_{(J,\, v) \in \calJ \times V} \deg(v) \cdot \log_{2}(1 + \DSd_{J}[v])$; likewise for $\Phi'$ (\Cref{fig:DS}).
Moreover, the operation {\Delete} only modifies the entries $\big((J, v) \in \calJ \times V \colon \DSc_{J}[v] = \cdelete\big)$, from $(\DSc_{J}[v] ,\DSd_{J}[v])$ to $(\DSc'_{J}[v] ,\DSd'_{J}[v])$ (say).
Thus, the operation {\Delete} induces a potential change of
\begin{align*}
    \Phi' - \Phi
    & ~=~ \sum_{(J,\, v) \in \calJ \times V} \deg(v) \cdot \log_{2}\Big(\tfrac{1 + \DSd'_{J}[v]}{1 + \DSd_{J}[v]}\Big) \\
    & ~=~ \sum_{(J,\, v) \in \calJ \times V \colon \DSc_{J}[v] = \cinsert} \deg(v) \cdot \log_{2}\Big(\tfrac{1 + \DSd'_{J}[v]}{1 + \DSd_{J}[v]}\Big) \\
    & ~\le~ \sum_{(J,\, v) \in \calJ \times V \colon \DSc_{J}[v] = \cinsert} \deg(v) \cdot \log_{2}(1 + \dmax)
    \tag{\Cref{lem:subclusterings}} \\
    & ~=~ \DSvolume[\cdelete] \cdot \Phi_{\max}.
    \tag{Invariant~\ref{invar:DSvolume} and \Cref{lem:potential}}
\end{align*}

\noindent
{\bf \Cref{lem:delete:invar}.}
We would prove that the modified subclusterings $(\DSc'_{J}[v], \DSd'_{J}[v])_{(J,\, v) \in \calJ \times V}$ maintain Invariants~\ref{invar:subclusterings:center} to \ref{invar:subclusterings:empty}; suppose so, the rest of the modified data structure $\calD'$ maintains Invariants~\ref{invar:clustering} to \ref{invar:DSvolume} by construction, given the respective ``synchronization'' parts of \Cref{lem:maintain-clustering,lem:maintain-DScost,lem:maintain-deletion-estimator,lem:maintain-grouping}.
Invariant~\ref{invar:subclusterings:empty} is rather trivial.\footnote{Invariant~\ref{invar:subclusterings:empty} asserts that: If $(C'_{J} = \emptyset) \iff (C_{J} = \emptyset) \vee (C_{J} = \{\cdelete\})$, then $(\DSc'_{J}[v], \DSd'_{J}[v])_{v \in V} = (\perp, \dmax)^{n}$.\\
(i)~If $C_{J} = \emptyset$, then this index-$J$ subclustering is unmodified $(\DSc'_{J}[v], \DSd'_{J}[v])_{v \in V} = (\DSc_{J}[v], \DSd_{J}[v])_{v \in V} = (\perp, \dmax)^{n}$.\\
(ii)~If $C_{J} = \{\cdelete\}$, then this index-$J$ subclustering (Line~\ref{alg:delete-subclustering:reset}) is ``reset'' to $(\DSc'_{J}[v], \DSd'_{J}[v])_{v \in V} \gets (\perp, \dmax)^{n}$.}
Below, we would establish Invariants~\ref{invar:subclusterings:center} to \ref{invar:subclusterings:vertex} for a specific index $(J \in \calJ \colon C'_{J} = C_{J} - \cdelete \neq \emptyset)$. We safely assume that $C_{J} \ni \cdelete$; otherwise, Invariants~\ref{invar:subclusterings:center} to \ref{invar:subclusterings:vertex} also become trivial, as the index-$J$ subclustering is unmodified $(\DSc'_{J}[v], \DSd'_{J}[v])_{v \in V} = (\DSc_{J}[v], \DSd_{J}[v])_{v \in V}$.

We first establish the following \Cref{claim:delete-subclustering} for this index $J \in \calJ$.

\begin{claim}
\label{claim:delete-subclustering}
\begin{flushleft}
Every vertex $u \in U_{J}$ admits a $\partial U_{J}$-to-$u$-path $(\partial U_{J} \ni x_{0}), x_{1}, \dots, (x_{\ell} \equiv u)$ such that $\DSc'_{J}[u] = \DSc'_{J}[x_{0}]$ and $\DSd'_{J}[x_{i}] = 2^{\eps / \beta} \cdot(\DSd'_{J}[x_{i - 1}] + w(x_{i - 1}, x_{i}))$, $\forall i \in [\ell]$.
\end{flushleft}
\end{claim}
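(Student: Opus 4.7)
The plan is to trace the adapted Dijkstra execution inside \DeleteSubclustering{} and then read off the desired path by reversing a parent chain. For each $u \in U_J$ that gets updated, let $\parent(u)$ denote the vertex $v^*$ whose pop at Line~\ref{alg:delete-subclustering:pop-min} triggered the last execution of Line~\ref{alg:delete-subclustering:modify} on the entry $(J, u)$. That update sets $\DSc_J[u] \gets \DSc_J[v^*]$ and $\DSd_J[u] \gets 2^{\eps/\beta}(\DSd_J[v^*] + w(u, v^*))$ at the time of the update. If I can additionally show that the value of $v^*$ is no longer changed after its own pop, then the desired identities $\DSd'_J[u] = 2^{\eps/\beta}(\DSd'_J[v^*] + w(u, v^*))$ and $\DSc'_J[u] = \DSc'_J[v^*]$ follow immediately, because $\parent(u)$ was defined to be the \emph{last} such update.

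The crucial technical step is a monotonicity lemma: after the priority queue pops a vertex, its stored value is never modified again. The adapted relaxation rule at Line~\ref{alg:delete-subclustering:modify} assigns $\DSd_J[u] \gets 2^{\eps/\beta}(\DSd_J[v^*] + w(u, v^*))$, which is strictly greater than $\DSd_J[v^*]$ since $w(u, v^*) \ge \wmin = 1$; hence every vertex ever inserted into (or repriced inside) the queue has value strictly exceeding that of the popper, so pop-values strictly increase over time. If a popped vertex $v$ were later relaxed by a pop of some $w$, monotonicity would give $\DSd_J[w] \ge \DSd_J[v]$, whence $2^{\eps/\beta}(\DSd_J[w] + w(v, w)) > \DSd_J[v]$, contradicting the relaxation condition of Line~\ref{alg:delete-subclustering:for}. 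Combined with the observation that vertices in $\partial U_J$ are never touched by Line~\ref{alg:delete-subclustering:modify} (which only updates neighbors in $U_J$), this lemma pins every popped vertex's value to its final $\DSd'_J$ value.

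It follows that $\DSd'_J[u] > \DSd'_J[\parent(u)]$, so the chain $u, \parent(u), \parent(\parent(u)), \dots$ is strictly decreasing and therefore cycle-free; it terminates at some $x_0$ with no parent. A short connectivity argument, parallel to the one used for Invariant~\ref{invar:subclusterings:vertex}, shows that every $u \in U_J$ does receive at least one update: the connected component of $u$ in the subgraph induced by $U_J$ must touch $\partial U_J$ (else $G$ would be disconnected), and the initial value $+\infty$ on $U_J$ guarantees that the first relaxation from the boundary succeeds and then propagates. Hence $x_0 \notin U_J$, which forces $x_0 \in \partial U_J$; reversing the chain yields the required path, whose consecutive pairs are edges of $G$ and satisfy the claimed edge-wise identity, and an induction on $i$ then gives $\DSc'_J[u] = \DSc'_J[x_0]$. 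The main obstacle will be the monotonicity lemma, since the extra factor $2^{\eps/\beta}$ in the relaxation differs from the textbook Dijkstra invariant and one must verify that it does not cause a popped vertex to be re-added to the queue; the rest of the argument is bookkeeping.
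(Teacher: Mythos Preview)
Your proposal is correct and follows essentially the same approach as the paper: define a predecessor for each $u\in U_J$ via the last execution of Line~\ref{alg:delete-subclustering:modify}, argue that both endpoints' values are frozen from that point on, and then chase predecessors back to $\partial U_J$. The paper compresses your monotonicity lemma into the single clause ``since it had already been popped off from the priority queue,'' whereas you spell out why the $2^{\eps/\beta}$-scaled relaxation still guarantees that pop values are nondecreasing and hence popped vertices are never re-relaxed; this is a genuine detail the paper glosses over, and your treatment is the more careful one.
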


\begin{proof}
We claim that every vertex $u \in U_{J}$ admits a neighbor $v \in N(u)$ such that $\DSc'_{J}[u] = \DSc'_{J}[v]$ and $\DSd'_{J}[u] = 2^{\eps / \beta} \cdot (\DSd'_{J}[v] + w(u, v))$, which we call a {\em predecessor} $v \in N(u)$ of $u \in U_{J}$.\footnote{Notice that if a neighbor $v \in N(u)$ is a predecessor of a vertex $u \in U_{J}$, then $u$ cannot conversely be a predecessor of $v$, namely $\DSd'_{J}[u] = 2^{\eps / \beta} \cdot (\DSd'_{J}[v] + w(u, v)) \implies \DSd'_{J}[v] \ne 2^{\eps / \beta} \cdot (\DSd'_{J}[u] + w(u, v))$.}
Suppose so, then we can obtain a desirable $\partial U_{J}$-to-$u$-path $(\partial U_{J} \ni x_{0}), x_{1}, \dots, (x_{\ell} \equiv u)$, by starting from this vertex $v \in U_{J}$ and finding the predecessors recursively.

During the iteration of Lines~\ref{alg:delete-subclustering:pop-min} to \ref{alg:delete-subclustering:priority}, every pair $(\DSc'_{J}[u], \DSd'_{J}[u])$ for $u \in U_{J}$ must be modified at least once.
Namely, those pairs $(\DSc_{J}[v], \DSd_{J}[v])_{v \in U_{J}} \gets (\perp, +\infty)^{|U_{J}|}$ are ``erased'' initially (Line~\ref{alg:delete-subclustering:erase}), and the underlying graph $G = (V, E, w)$ is connected.
Now, let us consider a specific vertex $u \in U_{J}$ and the last modification to the pair $(\DSc'_{J}[u], \DSd'_{J}[u])$ (Line~\ref{alg:delete-subclustering:modify}), say
\begin{align*}
    (\DSc'_{J}[u], \DSd'_{J}[u])
    ~\gets~ (\DSc'_{J}[v^{*}],\ 2^{\eps / \beta} \cdot (\DSd'_{J}[v^{*}] + w(u, v^{*}))).
\end{align*}
After this last modification, the pair $(\DSc'_{J}[u], \DSd'_{J}[u])$ always keeps the same, and so does the other pair $(\DSc'_{J}[v^{*}], \DSd'_{J}[v^{*}])$, since it had already been popped off from the priority queue $\calQ$ (Line~\ref{alg:delete-subclustering:pop-min}).
Hence, this neighbor $v^{*} \in N(u)$ is a predecessor of the considered vertex $u \in U_{J}$.

This finishes the proof of \Cref{claim:delete-subclustering}.
\end{proof}

Now we move back to verifying Invariants~\ref{invar:subclusterings:center} to \ref{invar:subclusterings:vertex}.

\vspace{.1in}
\noindent
{\bf Invariant~\ref{invar:subclusterings:center}:}
$(\DSc'_{J}[c], \DSd'_{J}[c]) = (c, 0)$, for every center $c \in C'_{J} = C_{J} - \cdelete$.

\noindent
Every survival center $c \in C'_{J} = C_{J} - \cdelete$, for which $(\DSc_{J}[c], \DSd_{J}[c]) = (c, 0)$, is outside the index-$J$ center-$\cdelete$ subcluster $c \notin U_{J} = \{v \in V \mid \DSc_{J}[v] = \cdelete\}$, so the entry $(\DSc'_{J}[c], \DSd'_{J}[c]) = (\DSc_{J}[c], \DSd_{J}[c]) = (c, 0)$ is unmodified, thus trivially maintaining Invariant~\ref{invar:subclusterings:center}.

\vspace{.1in}
\noindent
{\bf Invariant~\ref{invar:subclusterings:edge}:}
$\DSd'_{J}[u] \le 2^{\eps / \beta} \cdot (\DSd'_{J}[v] + w(u, v))$, for every edge $(u, v) \in E$.

\vspace{.1in}
\noindent
{\bf Case~1: $(u \notin U_{J}) \wedge (v \notin U_{J})$.}
Since both vertices $u$ and $v$ are outside to the subcluster $U_{J}$, both approximate distances $\DSd'_{J}[v] = \DSd_{J}[v]$ and $\DSd'_{J}[u] = \DSd_{J}[u]$ are unmodified, thus trivially maintaining Invariant~\ref{invar:subclusterings:edge}.

\vspace{.1in}
\noindent
{\bf Case~2: $(u \notin U_{J}) \wedge (v \in U_{J})$.}
There exists a $\partial U_{J}$-to-$v$-path $(\partial U_{J} \ni x_{0}),x_{1}, \dots, (x_{\ell} \equiv v)$ such that $\DSd'_{J}[x_{i}] = 2^{\eps / \beta} \cdot (\DSd'_{J}[x_{i - 1}] + w(x_{i - 1}, x_{i}))$, $\forall i \in [\ell]$ (\Cref{claim:delete-subclustering}).
The original data structure $\calD$ satisfies $\DSd_{J}[x_{i}] \le 2^{\eps / \beta} \cdot (\DSd_{J}[x_{i - 1}] + w(x_{i - 1}, x_{i}))$, $\forall i \in [\ell]$ (Invariant~\ref{invar:subclusterings:edge}).
The ``source'' $x_{0} \in \partial U_{J}$ is outside the subcluster $U_{J}$, so its approximate distance $\DSd'_{J}[x_{0}] = \DSd_{J}[x_{0}]$ is unmodified.
By induction over the path, we have $\DSd_{J}[x_{\ell}] \le \DSd'_{J}[x_{\ell}] \iff \DSd_{J}[v] \le \DSd'_{J}[v]$ and can deduce Invariant~\ref{invar:subclusterings:edge} as follows:
\begin{align*}
    \DSd'_{J}[u]
    ~=~ \DSd_{J}[u]
    ~\le~ 2^{\eps / \beta} \cdot (\DSd_{J}[v] + w(u, v))
    ~\le~ 2^{\eps / \beta} \cdot (\DSd'_{J}[v] + w(u, v)).
\end{align*}
Here, the first step holds since the vertex $u$ is outside the subcluster $U_{J}$, and the second step applies Invariant~\ref{invar:subclusterings:edge} to the original data structure $\calD$.

\vspace{.1in}
\noindent
{\bf Case~3: $u \in U_{J}$.}
When the priority queue $\calQ$ is built (Line~\ref{alg:delete-subclustering:queue}), it includes the pair $(\DSc_{J}[v], \DSd_{J}[v])$ since $v \in N(U_{J}) \subseteq U_{J} \cup \partial U_{J} \impliedby u \in U_{J}$.
In the moment that this pair $(\DSc_{J}[v], \DSd_{J}[v])$ is popped off (Line~\ref{alg:delete-subclustering:pop-min}), if the edge $(u, v)$ violates Invariant~\ref{invar:subclusterings:edge}, namely $\DSd'_{J}[u] > 2^{\eps / \beta} \cdot (\DSd'_{J}[v] + w(u, v))$, we will detect this violation (Line~\ref{alg:delete-subclustering:for}) and modify this approximate distance ``$\DSd'_{J}[u] \gets 2^{\eps / \beta} \cdot (\DSd'_{J}[v] + w(u, v))$'' to retain Invariant~\ref{invar:subclusterings:edge} (Line~\ref{alg:delete-subclustering:modify}).
Hereafter, $\DSd'_{J}[u]$ can only decrease (Lines~\ref{alg:delete-subclustering:for} to \ref{alg:delete-subclustering:priority}) and $\DSd'_{J}[v]$ always keeps the same (since the pair $(\DSc_{J}[v], \DSd_{J}[v])$ had already been popped off); thus this edge $(u, v)$ will keep maintaining Invariant~\ref{invar:subclusterings:edge}.

Combining all three cases gives Invariant~\ref{invar:subclusterings:edge}.

\vspace{.1in}
\noindent
{\bf Invariant~\ref{invar:subclusterings:vertex}:}
$\DSc'_{J}[v] \in C'_{J}$ and $\dist(v, C'_{J}) \le \dist(v, \DSc'_{J}[v]) \le \DSd'_{J}[v]$, for every vertex $v \in V$.

\vspace{.1in}
\noindent
{\bf Case~1: $v \notin U_{J}$.}
The vertex $v$ is outside the subcluster $U_{J}$, so the entry $(\DSc'_{J}[v], \DSd'_{J}[v]) = (\DSc_{J}[v], \DSd_{J}[v])$ is unmodified, thus trivially maintaining Invariant~\ref{invar:subclusterings:vertex}, namely $\DSc'_{J}[v] = \DSc_{J}[v] \in C_{J} - \cdelete = C'_{J} \implies \dist(v, C'_{J}) \le \dist(v, \DSc'_{J}[v]) = \dist(v, \DSc_{J}[v]) \le \DSd_{J}[v] = \DSd'_{J}[v]$.

\vspace{.1in}
\noindent
{\bf Case~2: $v \in U_{J}$.}
There exists a $\partial U_{J}$-to-$v$-path $(\partial U_{J} \ni x_{0}), x_{1}, \dots, (x_{\ell} \equiv v)$ such that $\DSc'_{J}[v] = \DSc'_{J}[x_{0}]$ (\Cref{claim:delete-subclustering}).
The ``source'' $x_{0} \in \partial U_{J}$ is outside the subcluster $U_{J}$; as we have established in {\bf Case~1}, $(\DSc'_{J}[x_{0}] \in C'_{J}) \wedge (\dist(x_{0}, \DSc'_{J}[x_{0}]) \le \DSd'_{J}[x_{0}])$.

Therefore, we have $\DSc'_{J}[v] = \DSc'_{J}[x_{0}] \in C'_{J} \implies \dist(v, C'_{J}) \le \dist(v, \DSc'_{J}[v])$.
\Cref{claim:delete-subclustering} also asserts that $\DSd'_{J}[x_{i}] = 2^{\eps / \beta} \cdot (\DSd'_{J}[x_{i - 1}] + w(x_{i - 1}, x_{i})) \ge \DSd'_{J}[x_{i - 1}] + w(x_{i - 1}, x_{i})$, $\forall i \in [\ell]$.
By induction over the path $(\partial U_{J} \ni x_{0}), x_{1}, \dots, (x_{\ell} \equiv v)$, we can deduce Invariant~\ref{invar:subclusterings:vertex} as follows:
\begin{align*}
    \DSd'_{J}[v]
    & ~\ge~ \DSd'_{J}[x_{0}] + \sum_{i\in [\ell]} w(x_{i - 1}, x_{i}) \\
    & ~\ge~ \dist(x_{0}, \DSc'_{J}[x_{0}]) + \sum_{i\in [\ell]} w(x_{i - 1}, x_{i})
    \tag{$\DSd'_{J}[x_{0}] \ge \dist(x_{0}, \DSc'_{J}[x_{0}])$} \\
    &~\ge~ \dist(x_{\ell}, \DSc'_{J}[x_{0}])
    \tag{Triangle inequality} \\
    &~=~ \dist(v, \DSc'_{J}[v]).
    \tag{$x_{\ell} = v$ and $\DSc'_{J}[x_{0}] = \DSc'_{J}[v]$}
\end{align*}

Combining both cases gives Invariant~\ref{invar:subclusterings:vertex}.

\vspace{.1in}
\noindent
{\bf \Cref{lem:delete:runtime}.}
Dijkstra's algorithm has running time $O(m + n \log(n))$ on a $n$-vertex $m$-edge connected graph \cite[Chapter~22.3]{CLRS22}.
Regarding our adaptation, instead, every index-$(J \in \calJ \colon J \ni \cdelete)$ invocation of the suboperation {\DeleteSubclustering} involves $n_{J} \eqdef |U_{J} \cup \partial U_{J}| \le 2\vol(U_{J})$ vertices and $m_{J} \eqdef |E \cap (U_{J} \times (U_{J} \cup \partial U_{J}))| \le \vol(U_{J})$ edges.
These facts in combination with \Cref{lem:maintain-clustering,lem:maintain-DScost,lem:maintain-deletion-estimator,lem:maintain-grouping} imply the running time of the operation {\Delete}, as follows.
\begin{align*}
    \Tdelete
    & ~=~ \sum_{J \in \calJ \colon J \ni \cdelete} (m_{J} + n_{J} \log(n_{J})) \cdot O(\log(n)) \\
    & ~=~ \Big(\sum_{J \in \calJ \colon J \ni \cdelete} \vol(U_{J})\Big) \cdot O(\log^{2}(n)) \\
    & ~=~ \DSvolume[\cdelete] \cdot 2m |\calJ| \cdot O(\log^{2}(n))
    \tag{Invariant~\ref{invar:DSvolume}} \\
    & ~=~ \DSvolume[\cdelete] \cdot O(m \log^{3}(n)).
    \tag{\Cref{def:cover,lem:cover}}
\end{align*}
This finishes the proof of \Cref{lem:delete}.
\end{proof}

\subsection{The operation {\SampleNoncenter}}
\label{subsec:sample-noncenter}

This subsection presents the operation {\SampleNoncenter} -- see \Cref{fig:sample-noncenter} for its implementation -- which leverages the binary search tree $\BST$ (Invariant~\ref{invar:binary-search-tree}) to sample a random vertex $r \in V$ that is ensured to be a noncenter $r \notin C$, almost surely.
The following \Cref{lem:sample-noncenter} provides the performance guarantees of this operation.
(Herein, everything is naively adapted from \cite[Algorithm~2 and Lemma~4.2]{CLNSS20} and is included just for completeness.)

\begin{figure}[t]
\centering
\begin{mdframed}
\begin{flushleft}
Operation $\term[\SampleNoncenter]{alg:sample-noncenter}()$

\vspace{.1in}
{\bf Output:}
A random vertex $r \in V$.
\begin{enumerate}
    \item \label{alg:sample-noncenter:root} $(U, \sfvalue(U))\gets \BST.\texttt{root} = (V, \sum_{v \in V} \DSd^{z}[v])$.
    
    \item \label{alg:sample-noncenter:while} While $(U, \sfvalue(U))$ is not a leaf:
    
    \item \label{alg:sample-noncenter:go-down} 
    \qquad $(U, \sfvalue(U)) \gets 
    \begin{cases}
        (U_{\sf left}, \sfvalue(U_{\sf left})) & \text{w.p. } \frac{\sfvalue(U_{\sf left})}{\sfvalue(U)}\\
        (U_{\sf right}, \sfvalue(U_{\sf right})) & \text{w.p. } \frac{\sfvalue(U_{\sf right})}{\sfvalue(U)}
    \end{cases}$.
    \hfill
    \Comment{The left/right child.}

    \item \label{alg:sample-noncenter:return}
    Return $r \gets$ ``the unique vertex in the singleton $U$''.
    \hfill
    \Comment{$(U, \sfvalue(U))$ is a leaf.}
\end{enumerate}
\end{flushleft}
\end{mdframed}
\caption{\label{fig:sample-noncenter}The operation {\SampleNoncenter}.}
\end{figure}

\begin{lemma}[{\SampleNoncenter}]
\label{lem:sample-noncenter}
\begin{flushleft}
The randomized operation $\SampleNoncenter()$ has worst- case running time $O(\log(n))$ and returns a specific vertex $v \in V$ with probability $\frac{\DSd^{z}[v]}{\DScost} = \frac{\DSd^{z}[v]}{\sum_{v' \in V} \DSd^{z}[v']}$.\\
\Comment{It never returns a current center $c \in C$, by which $\DSd^{z}[c] = 0$ (\Cref{lem:clustering:vertex} of \Cref{lem:clustering}), almost surely.}
\end{flushleft}
\end{lemma}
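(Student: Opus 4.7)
The plan is to split the verification into two parts: the $O(\log(n))$ worst-case running time and the correctness of the sampling distribution. The former is essentially a structural observation about the binary search tree $\BST$, while the latter is a straightforward induction on the tree structure.

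For the running time, the while loop in Line~\ref{alg:sample-noncenter:while} moves the pointer $(U, \sfvalue(U))$ strictly one level down the tree per iteration, and each iteration (Line~\ref{alg:sample-noncenter:go-down}) performs only a constant number of arithmetic operations and a single biased coin flip. By Invariant~\ref{invar:binary-search-tree}, the tree $\BST$ has height $\lceil \log_{2}(n) \rceil = O(\log(n))$, so the loop terminates after $O(\log(n))$ iterations, yielding the claimed worst-case bound.

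For the distribution, I would prove by induction on the height of the subtree rooted at any $(U, \sfvalue(U))$ that, conditioned on the descent reaching this node, the algorithm returns a specific vertex $v \in U$ with probability exactly $\DSd^{z}[v] / \sfvalue(U)$. The base case is a leaf $(U, \sfvalue(U)) = (\{v\}, \DSd^{z}[v])$ (guaranteed to be of this form by Invariant~\ref{invar:binary-search-tree}(i)): it returns $v$ with probability $1 = \DSd^{z}[v] / \DSd^{z}[v]$. For the inductive step, let $(U, \sfvalue(U))$ be a non-leaf with children $(U_{\sf left}, \sfvalue(U_{\sf left}))$ and $(U_{\sf right}, \sfvalue(U_{\sf right}))$; by Invariant~\ref{invar:binary-search-tree}(ii--iii), we have $U = U_{\sf left} \sqcup U_{\sf right}$ and $\sfvalue(U) = \sfvalue(U_{\sf left}) + \sfvalue(U_{\sf right})$. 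For any $v \in U_{\sf left}$ (symmetrically for $U_{\sf right}$), the induction hypothesis combined with Line~\ref{alg:sample-noncenter:go-down} gives
\begin{equation*}
    \Pr[\text{return } v \mid \text{at } U]
    ~=~ \tfrac{\sfvalue(U_{\sf left})}{\sfvalue(U)} \cdot \tfrac{\DSd^{z}[v]}{\sfvalue(U_{\sf left})}
    ~=~ \tfrac{\DSd^{z}[v]}{\sfvalue(U)}.
\end{equation*}
Applied at the root $\BST.\texttt{root} = (V, \DScost)$ (Invariants~\ref{invar:binary-search-tree} and \ref{invar:DScost}), this yields the claimed marginal probability $\DSd^{z}[v] / \DScost$ for each $v \in V$.

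The only real subtlety is the ``almost surely'' qualifier regarding centers. By \Cref{lem:clustering:center} of \Cref{lem:clustering}, any center $c \in C$ satisfies $\DSd^{z}[c] = 0$, so the sampling probability of such a $c$ is exactly $0$, making the return of a noncenter an almost-sure event (it is actually a sure event whenever $\DScost > 0$, which holds whenever $C \ne V$ by \Cref{lem:DScost} together with $\cost(V, C) > 0$ in the non-trivial regime). No real obstacle is expected here; the lemma is an essentially mechanical consequence of the tree's aggregation property encoded in Invariants~\ref{invar:binary-search-tree} and \ref{invar:DScost}.
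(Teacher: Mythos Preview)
Your proposal is correct and takes essentially the same approach as the paper. The paper's own proof is a one-liner (``Obvious. By construction \ldots'') that appeals to the height bound from Invariant~\ref{invar:binary-search-tree} and the telescoping of the per-level probabilities; you have simply spelled out those details via the natural induction, which is exactly what underlies the paper's claim.
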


\begin{proof}
Obvious. By construction (Invariant~\ref{invar:binary-search-tree}), the binary search tree $\BST$ has height $O(\log(n))$ and a specific vertex $v \in V$ will be returned with probability $\Pr_{r}[r = v] = \frac{\DSd^{z}[v]}{\sum_{v' \in V} \DSd^{z}[v']} = \frac{\DSd^{z}[v]}{\DScost}$.
\end{proof}

\section{Local Search for \texorpdfstring{the {\kzC} Problem}{}}
\label{sec:LS}

In this section, we will show our algorithm {\LocalSearch} and the subroutine {\TestEffectiveness} -- see \Cref{fig:LocalSearch} for their implementation.
Regarding the underlying graph $G = (V, E, w)$, we would impose \Cref{assumption:LS:hop-bounded} (in addition to \Cref{assumption:edge-weight} about edge weights) throughout \Cref{sec:LS}.

\afterpage{
\begin{figure}[t]
\centering
\begin{mdframed}
\begin{flushleft}
Procedure $\term[\LocalSearch]{alg:LS}(G, k)$

\vspace{.1in}
{\bf Input:}
A $(\beta, \eps)$-hop-bounded graph $G$, with $0 < \eps < \frac{1}{10z}$, and a variable $k \in [n]$.

{\bf Setup:}
A large enough integer parameter $\DSnumber = \eps^{-\Theta(z)} \log(n)$.

\white{\bf Setup:}
A number of $(\DSnumber + 1)$ data structures $\calD$ and $\{\calD'_{\sigma}\}_{\sigma \in [\DSnumber]}$.

\Comment{Only during (Line~\ref{alg:LS:schedule}) the schedule of the subroutines, $\TestEffectiveness(\calD, \calD'_{\sigma}, \cinsert_{\sigma})$ \\
for $\sigma \in [\DSnumber]$, these data structures $\calD$ and $\{\calD'_{\sigma}\}_{\sigma \in [\DSnumber]}$ can be non-identical.}

{\bf Output:}
A terminal solution $\Cterminal \in V^{k}$.
\begin{enumerate}
    \item \label{alg:LS:naive}
    Find an initial $n^{z + 1}$-approximate feasible solution $\Cinitial \in V^{k}$, based on \Cref{prop:naive_solution}.
    
    \item \label{alg:LS:initialize}
    $\calD.\Initialize(G, \Cinitial)$; likewise for $\{\calD'_{\sigma}\}_{\sigma \in [\DSnumber]}$.
    \hfill
    \Comment{Then $\calD'_{\sigma} \equiv \calD$.}

    \item \label{alg:LS:repeat}
    Repeat Lines~\ref{alg:LS:sample} to \ref{alg:LS:return}, until Line~\ref{alg:LS:return} returns a solution ``$\Cterminal \gets \calD.C$'':
    
    \item \label{alg:LS:sample}
    \qquad Sample a number of $\DSnumber$ noncenters, $\cinsert_{\sigma} \gets \calD.\SampleNoncenter()$ for $\sigma \in [\DSnumber]$.
    
    \item \label{alg:LS:schedule}
    \qquad Schedule a number of $\DSnumber$ subroutines, $\TestEffectiveness(\calD, \calD'_{\sigma}, \cinsert_{\sigma})$ for $\sigma \in [\DSnumber]$, \\
    \qquad step by step using the round-robin algorithm; during this process:
    
    \item \label{alg:LS:positive}
    \qquad\qquad If any one of the $\DSnumber$ subroutines, say $\sigma^{*} \in [\DSnumber]$, first returns a pair $(\cinsert_{\sigma^{*}}, \cdelete_{\sigma^{*}})$:
    
    \item \label{alg:LS:terminate}
    \qquad\qquad\qquad Terminate all (ongoing) subroutines.
    \hfill
    \Comment{Then $\calD'_{\sigma} \equiv \calD$.}
    
    \item \label{alg:LS:swap}
    \qquad\qquad\qquad $\calD.\Insert(\cinsert_{\sigma^{*}})$ and $\calD.\Delete(\cdelete_{\sigma^{*}})$; likewise for $\{\calD'_{\sigma}\}_{\sigma \in [\DSnumber]}$.
    \hfill
    \Comment{Then $\calD'_{\sigma} \equiv \calD$.}
    
    \item \label{alg:LS:negative}
    \qquad\qquad Else, all of the $\DSnumber$ subroutines (terminate and) return {\failure}'s:

    \item \label{alg:LS:return}
    \qquad\qquad\qquad Return $\Cterminal \gets \calD.C$.
\end{enumerate}
\end{flushleft}
\end{mdframed}

\begin{mdframed}
Subroutine $\term[\TestEffectiveness]{alg:pair}(\calD, \calD', \cinsert)$

\begin{flushleft}
{\bf Input:}
Two identical data structures $\calD \equiv \calD'$ and a noncenter $\cinsert \notin \calD.C = \calD'.C$ (promised).

\Comment{Only the second data structure $\calD'$ will be modified (in Line~\ref{alg:test:insert}); when this subroutine terminates (because of Line~\ref{alg:LS:terminate}, Line~\ref{alg:test:return-pair}, or Line~\ref{alg:test:return-failure}), $\calD'$ backtracks such that $\calD \equiv \calD'$ again.}

{\bf Output:}
A pair $(\cinsert, \cdelete)$ or a ``{\failure}''.
\begin{enumerate}
\setcounter{enumi}{10}
    \item \label{alg:test:insert}
    $\calD'.\Insert(\cinsert)$.
    \hfill
    \Comment{Then $\calD' \equiv \calD.\Insert(\cinsert)$.}
    
    \item \label{alg:test:for}
    For every center group $\tau \in [\CGnumber]$:
    \hfill
    \Comment{$\CGnumber = \lceil \log_{2}(2m|\calJ|) + 1\rceil$.}
    
    \item \label{alg:test:cdelete}
    \qquad $\cdelete_{\tau} \gets \argmin_{c \in \calD'.G_{\tau} - \cinsert} \calD'.\DSloss[c]$.
    \hfill
    \Comment{Recall Invariant~\ref{invar:grouping:group} for $\calD'.G_{\tau}$.}

    \item \label{alg:test:if}
    \qquad If $\frac{\calD'.\DScost + \calD'.\DSloss[\cdelete_{\tau}]}{\calD.\DScost}
    \le 1 - (\eps / 2) \cdot \calD'.\DSvolume[\cdelete_{\tau}]$:
    
    \item \label{alg:test:return-pair}
    \qquad\qquad Return $(\cinsert, \cdelete) \gets (\cinsert, \cdelete_{\tau})$.
    
    \item \label{alg:test:return-failure}
    Return a {\failure}.
\end{enumerate}
\end{flushleft}
\end{mdframed}
\caption{\label{fig:LocalSearch}The algorithm {\LocalSearch} and the subroutine {\TestEffectiveness}.}
\end{figure}
\clearpage}

\begin{assumption}[Hop-boundedness]
\label{assumption:LS:hop-bounded}
\begin{flushleft}
The underlying graph $G = (V, E, w)$ is $(\beta, \eps)$-hop-bounded, 
with known parameters $\beta \in [n]$ and $0 < \eps < \frac{1}{10z}$.
\end{flushleft}
\end{assumption}

\noindent
Since \Cref{assumption:LS:hop-bounded} (with $0 < \eps < \frac{1}{10z}$) is more restricted than \Cref{assumption:DS:hop-bounded} (with $0 < \eps < 1$), everything about our data structure $\calD$ established in \Cref{sec:DS} still holds.

The following \Cref{thm:LS} gives the performance guarantees of our algorithm {\LocalSearch}.

\begin{theorem}[{\LocalSearch}]
\label{thm:LS}
\begin{flushleft}
Provided \Cref{assumption:LS:hop-bounded,assumption:edge-weight}, the randomized algorithm {\LocalSearch} has worst-case running time $\eps^{-O(z)} m \beta \log^{5}(n)$ and, with probability $\ge 1 - n^{-\Theta(1)}$,\footnote{\label{footnote:exponent}Namely, (the absolute value of) the exponent $\Theta(1)$ can be an arbitrarily large but given constant.} returns an $\alpha_{z}(\eps)$-approximate feasible solution $\Cterminal \in V^{k}$ to the {\kzC} problem.\footnote{\label{footnote:alpha}The only constraint on (the feasible space of) this minimization problem is that {\em the objective must be nonnegative}:
$3 - \eps - 2^{1 + 2\eps z} \cdot ((1 + 1 / \lambda)^{z - 1} + \eps) > 0
\iff \heartsuit \eqdef (3 - \eps) \cdot 2^{-(1 + 2\eps z)} - \eps > (1 + 1 / \lambda)^{z - 1}
\iff \lambda > (\heartsuit^{1 / (z - 1)} - 1)^{-1}$.
This feasible space must be nonempty since $\heartsuit \ge (3 - \frac{1}{10}) \cdot 2^{-6/5} - \frac{1}{10} \approx 1.1623 > 1 \impliedby (z \ge 1) \wedge (0 < \eps < \frac{1}{10z})$.}
\begin{align*}
    \alpha_{z}(\eps)
    ~\eqdef~ \min_{\lambda} \bigg\{\,\frac{2^{1 + z + 2\eps z} \cdot (1 + \lambda)^{z - 1} + 2^{(1 + 4\eps) z}}{3 - \eps - 2^{1 + 2\eps z} \cdot ((1 + 1 / \lambda)^{z - 1} + \eps)} \,\biggmid\, \lambda > \frac{1}{((3 - \eps) / 2^{1 + 2\eps z} - \eps)^{1 / (z - 1)} - 1}\,\bigg\}.
\end{align*}
\end{flushleft}
\end{theorem}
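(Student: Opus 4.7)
The plan is to analyze the approximation guarantee and the running time of \LocalSearch separately, both driven by the super-effectiveness test on Line~\ref{alg:test:if}, which links the per-swap cost improvement to the deletion-volume estimator $\calD'.\DSvolume[\cdelete]$. I will use: Lemma~\ref{lem:sample-noncenter} to control the $D^2$-sample distribution; Lemmas~\ref{lem:initialize}, \ref{lem:insert}, and \ref{lem:delete} to control the evolution of the maintained data structures; Lemmas~\ref{lem:DScost} and~\ref{lem:deletion-estimator} to pass between estimates and true values; and a forthcoming ``robust sampling'' claim (the rigorous form of Claim~\ref{claim:techoverview:sample}, here denoted Lemma~\ref{lem:sample}) to establish existence of a super-effective swap whenever $C$ is far from optimal.

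\textbf{Correctness.} I argue by contrapositive: if the returned $\Cterminal$ satisfies $\cost(V,\Cterminal) > \alpha_{z}(\eps) \cdot \OPT$, then on every iteration, with constant probability a $D^2$-sampled noncenter $\cinsert$ admits some center $\cdelete \in \calD'.C$ that satisfies Line~\ref{alg:test:if}; after $\DSnumber = \eps^{-\Theta(z)}\log(n)$ independent samples this probability is boosted to $1 - n^{-\Theta(1)}$, contradicting the fact that Line~\ref{alg:LS:return} fires. The delicate points are (i)~the test on Line~\ref{alg:test:if} uses the estimates $\DScost$ and $\DSloss$ rather than true objective values, which by Lemmas~\ref{lem:DScost} and~\ref{lem:deletion-estimator} distort quantities by a $2^{O(\eps z)}$ factor that I absorb into $\alpha_{z}(\eps)$; and (ii)~the subroutine selects the $\DSloss$-minimizer $\cdelete_{\tau}$ inside each group $G_{\tau}$ rather than the single ideal target predicted by Lemma~\ref{lem:sample}. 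Since $\{G_{\tau}\}_{\tau \in [\CGnumber]}$ partitions $\calD'.C$ (Lemma~\ref{lem:grouping}) and all members of a given $G_{\tau}$ share $\DSvolume$-values up to a factor $2$ (Invariant~\ref{invar:grouping:group}), the group containing the ideal target delivers an equally effective swap, up to constants again folded into $\eps$.

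\textbf{Running time.} I will amortize using two separate budgets. The \emph{cost budget}: each accepted swap multiplies $\calD.\DScost$ by at most $1 - (\eps/2)\calD'.\DSvolume[\cdelete]$ (combining Line~\ref{alg:test:if} with the monotonicity $\DScost' \leq \DScost$ from Lemma~\ref{lem:insert:DSd} and the deletion bound $\DScost' \leq \DScost + \DSloss[\cdelete]$ from Lemma~\ref{lem:delete:loss}); the initial objective is at most $n^{O(z)}$ times any feasible one (Proposition~\ref{prop:naive_solution} and Lemma~\ref{lem:DScost}), so telescoping yields $\sum_{\text{swaps}} \calD'.\DSvolume[\cdelete] = O(\eps^{-1} z \log(n))$. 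Combined with Lemma~\ref{lem:delete:runtime}, total \Delete time is $O(\eps^{-1} z m \log^{4}(n))$. The \emph{potential budget}: Lemma~\ref{lem:insert:runtime} charges each \Insert its running time against a $\Phi$-decrease, while Lemma~\ref{lem:delete:potential} bounds the $\Phi$-increase of each \Delete by $\calD'.\DSvolume[\cdelete] \cdot \Phi_{\max}$; combining with the cost-budget bound yields total \Insert time $\Phi_{\max} \cdot O(\eps^{-1}\beta\log(n)) \cdot (1 + O(\eps^{-1} z \log(n))) = \eps^{-O(1)} z \cdot m\beta \log^{5}(n)$ by Lemma~\ref{lem:potential}. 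The round-robin schedule on Line~\ref{alg:LS:schedule} inflates the per-iteration cost by the factor $\DSnumber = \eps^{-\Theta(z)}\log(n)$, yielding the claimed overall bound $\eps^{-O(z)} m \beta \log^{5}(n)$.

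\textbf{Main obstacle.} The crux is the robust sampling lemma (the rigorous form of Claim~\ref{claim:techoverview:sample}) that underpins both the approximation ratio and the per-iteration success probability. One must show that, whenever $\calD.\DScost$ exceeds $\alpha_{z}(\eps) \cdot \OPT$, a $D^2$-sampled noncenter admits, with constant probability, a center whose swap passes Line~\ref{alg:test:if} after all the estimator distortions and all grouping slackness. Propagating these errors forces a delicate application of the generalized triangle inequality (Proposition~\ref{lem:triangle}) with an auxiliary parameter $\lambda$, which in turn is exactly what appears as the optimization variable in the definition of $\alpha_{z}(\eps)$; the feasibility constraint on $\lambda$ noted in \Cref{footnote:alpha} arises from requiring this probabilistic argument to yield strictly positive progress.
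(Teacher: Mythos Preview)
Your proposal is correct and follows essentially the same approach as the paper. The paper likewise splits into a correctness theorem (\Cref{thm:LS-correctness}) and a running-time theorem (\Cref{thm:LS-runtime}), uses exactly your two-budget amortization (telescoping $\DScost$ to bound $\sum \calD'.\DSvolume[\cdelete]$, then the potential $\Phi$ for the \Insert time), and isolates the robust sampling statement as \Cref{lem:sample}, whose proof via \Cref{lem:noncenter-subset} and \Cref{claim:rmloss} is precisely the $\lambda$-parameterized application of \Cref{lem:triangle} you anticipate. One small point to make explicit when you write it up: your per-iteration success probability $1 - n^{-\Theta(1)}$ must be union-bounded over all iterations to obtain the overall guarantee, and this requires first invoking your cost-budget bound on $\sum \calD'.\DSvolume[\cdelete]$ together with the pointwise lower bound $\calD'.\DSvolume[\cdelete] \ge \frac{1}{2m|\calJ|}$ (\Cref{lem:deletion-estimator}) to cap the number of positive iterations at $\eps^{-O(1)} m \log^{2}(n)$---the paper does exactly this in \Cref{eq:positive-iteration}.
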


Generally, the algorithm {\LocalSearch} starts with an initial feasible solution $C = \Cinitial \in V^{k}$ (\Cref{prop:naive_solution}) and, based on our data structure $\calD$, iteratively conducts {\em $(\cinsert, \cdelete)$-swaps}. Namely, a single $(\cinsert, \cdelete)$-swap modifies the center set $C \gets C + \cinsert - \cdelete$ by replacing a current {\em noncenter} $\cinsert \notin C$ with a current {\em center} $\cdelete \in C$.
In this regard, we will introduce in \Cref{subsec:effective} two concepts, {\em super-effectiveness} of such a pair $(\cinsert, \cdelete)$ (\Cref{def:effective}) and {\em super-effectiveness} of a noncenter $\cinsert \notin C$ (\Cref{def:super-effective}); both concepts are crucial to all materials in subsequent subsections.
Then, we will establish two technical lemmas:

\vspace{.1in}
\noindent
\Cref{lem:test} in \Cref{subsec:test} shows that, on input a {\em super-effective} noncenter $\cinsert \notin C$ (if promised), the subroutine {\TestEffectiveness} always can find an {\em super-effective} pair $(\cinsert, \cdelete)$.

\noindent
\Cref{lem:sample} in \Cref{subsec:sample} shows that, by utilizing the operation {\SampleNoncenter} of our data structure $\calD$, we can sample a {\em super-effective} noncenter $\cinsert \notin C$ with ``sufficiently high'' probability.

\vspace{.1in}
\noindent
By combining both lemmas with additional arguments, we will finally accomplish the above \Cref{thm:LS} in \Cref{subsec:local-search}.

\subsection{The concepts of super-effective pairs \texorpdfstring{$(\cinsert, \cdelete)$}{} and noncenters \texorpdfstring{$\cinsert$}{}}
\label{subsec:effective}

This subsection introduces the concepts of {\em super-effective pairs} (\Cref{def:effective}) and {\em super-effective noncenters} (\Cref{def:super-effective}); all materials in subsequent subsections crucially rely on these concepts.

First of all, let us introduce the concept of {\em super-effective pairs}, as follows (\Cref{def:effective}).

\begin{definition}[Super-effective pairs]
\label{def:effective}
\begin{flushleft}
For a data structure $\calD$ given in \Cref{fig:DS}, a pair of noncenter and center $(\cinsert, \cdelete) \colon (\cinsert \notin \calD.C) \wedge (\cdelete \in \calD.C)$ determines:
\begin{itemize}
    \item The $\cinsert$-inserted data structure $\calD' \gets \calD.\Insert(\cinsert)$.
    \hfill
    \Comment{Cf.\ \Cref{fig:insert}.}
    
    \item The $(\cinsert, \cdelete)$-swapped data structure $\calD'' \gets \calD'.\Delete(\cdelete)$.
    \hfill
    \Comment{Cf.\ \Cref{fig:delete}.}
\end{itemize}
We call this pair $(\cinsert, \cdelete)$ an {\em super-effective pair} when
\begin{align*}
    \frac{\calD''.\DScost}{\calD.\DScost}
    ~\le~ 1 - (\eps / 2) \cdot \calD'.\DSvolume[\cdelete].
\end{align*}
\end{flushleft}
\end{definition}

\begin{remark}[Super-effective pairs]
Here is the intuition behind a super-effective pair $(\cinsert, \cdelete)$:
\begin{flushleft}
\begin{itemize}
    \item $\frac{\calD''.\DScost}{\calD.\DScost}$ well estimates ``the progress on minimizing the clustering objective $\cost(V, C)$ by the $(\cinsert, \cdelete)$-swap''.\\
    \Comment{$\cost(V, \calD.C) \le \calD.\DScost \le 2^{2\eps z} \cdot \cost(V, \calD.C)$; likewise for $\calD''.\DScost$ (\Cref{lem:DScost}).}
    
    \item $\calD'.\DSvolume[\cdelete]$ (up to scale) well estimates ``the running time of the $(\cinsert, \cdelete)$-swap''.\\
    \Comment{Rigorously, (\Cref{lem:delete:runtime} of \Cref{lem:delete}) it well estimates ``the running time of the $\cdelete$-deletion'', i.e., $\Tdelete = \calD'.\DSvolume[\cdelete] \cdot O(m \log^{2}(n))$. Nonetheless, for the moment, do not worry about the running time of the $\cinsert$-insertion.}
\end{itemize}
\end{flushleft}
Hence, this particular super-effective pair $(\cinsert, \cdelete)$ is ``$\ge \eps / 2$-competitive'' with other pairs, making $\ge \eps / 2$ unit of progress per unit of running time.
(The bound $\ge \eps / 2$ {\em is} sufficient for our purpose.)
\end{remark}

However, the super-effectiveness condition for swap pairs is technically difficult to use directly -- it involves three data structures, the given one $\calD$, the $\cinsert$-inserted one $\calD'$, and the $(\cinsert, \cdelete)$-swapped one $\calD''$.
Instead, we would introduce the concept of {\em super-effective noncenters} (\Cref{def:super-effective}), which will be technically more tractable surrogate of {\em super-effective pair}.

\begin{definition}[Super-effective noncenters]
\label{def:super-effective}
\begin{flushleft}
For a data structure $\calD$ given in \Cref{fig:DS}, every pair of noncenter and center $\forall (p, q) \colon (p \notin \calD.C) \wedge (q \in \calD.C)$ determines the following {\em $(p, q)$-swap distoid} $\calD.\DSd_{p,\, q} \colon V \mapsto [0, +\infty)$.\footnote{\label{footnote:no-need-to-maintain}This $\calD.\DSd_{p,\, q}$ just helps with our analysis and need not be actually maintained. Namely, (in spirit) it upper-bounds the approximate distance $\calD''.\DSd[v]$, $\forall v \in V$, maintained by the modified data structure $\calD''$.}
\begin{align*}
    \calD.\DSd_{p,\, q}(v) ~\eqdef~
    \begin{cases}
        2^{2\eps} \cdot \dist(v, \calD.C + p - q)
        & \calD.\DSc[v] = q \\
        \min\big(\calD.\DSd[v],\ 2^{2\eps} \cdot \dist(v, p)\big)
        & \calD.\DSc[v] \ne q
    \end{cases},
    && \forall v \in V.
\end{align*}
A noncenter $\cinsert \notin \calD.C$ is called {\em super-effective} noncenter when
\begin{align*}
    \exists q \in \calD.C
    \colon\qquad
    \frac{\sum_{v \in V} \calD.\DSd_{\cinsert,\, q}^{z}(v)}{\calD.\DScost}
    ~\le~ 1 - \eps \cdot \calD.\DSvolume[q],
\end{align*}
\end{flushleft}
\end{definition}

We emphasize that the super-effectiveness condition for noncenters only involves one data structure $\calD$ (thus independent of its $\cinsert$-inserted and $(\cinsert, \cdelete)$-swapped counterparts $\calD'$ and $\calD''$).
This accounts for why this condition is technically more tractable than super-effectiveness for swap pairs.

Regarding the subroutine {\TestEffectiveness} (\Cref{fig:LocalSearch}) and the concepts of super-effective pairs and super-effective noncenters (\Cref{def:effective,def:super-effective}), we will establish the following \Cref{lem:test,lem:sample} in \Cref{subsec:test,subsec:sample}, respectively.
(Recall that  of the {\kzC} problem.)

\begin{lemma}[{\TestEffectiveness}]
\label{lem:test}
\begin{flushleft}
For the subroutine $\TestEffectiveness(\calD, \calD', \cinsert)$:
\begin{enumerate}[font = {\em\bfseries}]
    \item \label{lem:test:1}
    Any possible pair $(\cinsert, \cdelete)$ returned in Line~\ref{alg:test:return-pair} is super-effective.
    
    \item \label{lem:test:2}
    The subroutine will return an super-effective pair $(\cinsert, \cdelete)$, provided that the input noncenter $\cinsert \notin \calD.C$ (promised) is super-effective.
\end{enumerate}
\end{flushleft}
\end{lemma}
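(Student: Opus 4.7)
Both parts are handled by directly relating the test in Line~\ref{alg:test:if} to the two notions of super-effectiveness. For Part~\ref{lem:test:1}, observe that after Line~\ref{alg:test:insert} the data structure $\calD'$ equals $\calD.\Insert(\cinsert)$, so the $(\cinsert,\cdelete)$-swapped data structure $\calD'' = \calD'.\Delete(\cdelete)$ satisfies $\calD''.\DScost \le \calD'.\DScost + \calD'.\DSloss[\cdelete]$ by \Cref{lem:delete:loss} of \Cref{lem:delete}. Substituting into the inequality of Line~\ref{alg:test:if} yields $\calD''.\DScost / \calD.\DScost \le 1 - (\eps/2) \cdot \calD'.\DSvolume[\cdelete]$, which is exactly the super-effectiveness condition of \Cref{def:effective}.

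For Part~\ref{lem:test:2}, let $q \in \calD.C$ be a witness for the super-effectiveness of $\cinsert$, i.e.
\[
    \sum_{v \in V} \calD.\DSd_{\cinsert, q}^{z}(v)
    ~\le~ (1 - \eps \cdot \calD.\DSvolume[q]) \cdot \calD.\DScost.
\]
Since $q \ne \cinsert$ and $q \in \calD'.C$, there is a unique $\tau^{\star} \in [\CGnumber]$ with $q \in \calD'.G_{\tau^{\star}}$ (Invariant~\ref{invar:grouping:group}). By the $\DSloss$-minimizing choice in Line~\ref{alg:test:cdelete}, we have $\calD'.\DSloss[\cdelete_{\tau^{\star}}] \le \calD'.\DSloss[q]$. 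The plan is to upper bound $\calD'.\DScost + \calD'.\DSloss[q]$ by the witness quantity $\sum_{v} \calD.\DSd_{\cinsert, q}^{z}(v)$, and then convert the $\eps$-slack of the witness into the $\eps/2$-slack required by Line~\ref{alg:test:if}.

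The key technical step is a pointwise comparison. Expanding $\calD'.\DSloss[q]$ via Invariant~\ref{invar:DSloss} and cancelling the $\calD'.\DSd^{z}[v]$ terms for $\calD'.\DSc[v] = q$, we obtain
\[
    \calD'.\DScost + \calD'.\DSloss[q]
    ~=~ \sum_{v \colon \calD'.\DSc[v] \ne q} \calD'.\DSd^{z}[v]
    + \sum_{v \colon \calD'.\DSc[v] = q} \min_{J \in \calJ \colon J \notni q} \calD'.\DSd_{J}^{z}[v].
\]
For $v$ in the first sum, \Cref{lem:insert:DSc} forces $\calD.\DSc[v] \ne q$ (as $q \ne \cinsert$), and \Cref{lem:clustering} together with the monotonicity $\dist(v, \calD.C + \cinsert) \le \dist(v, \calD.C + \cinsert - q)$ gives $\calD'.\DSd[v] \le \calD.\DSd_{\cinsert, q}(v)$ in both regimes of \Cref{def:super-effective}. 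For $v$ in the second sum, the covering identity $\cup_{J \notni q} J = V - q$ of \Cref{def:cover} combined with \Cref{lem:subclusterings} -- and the fact that the insertion of $\cinsert$ only decreases the subcluster distances -- yields $\min_{J \notni q} \calD'.\DSd_{J}[v] \le 2^{2\eps} \dist(v, \calD.C + \cinsert - q) = \calD.\DSd_{\cinsert, q}(v)$. Raising to the $z$-th power and summing completes the desired comparison.

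It remains to convert $\eps \cdot \calD.\DSvolume[q]$ into $(\eps/2) \cdot \calD'.\DSvolume[\cdelete_{\tau^{\star}}]$. By \Cref{lem:insert:DSvolume} we have $\calD'.\DSvolume[q] \le \calD.\DSvolume[q]$, and Invariant~\ref{invar:grouping:group} implies $\calD'.\DSvolume[\cdelete_{\tau^{\star}}] \le 2 \cdot \calD'.\DSvolume[q]$ since both centers lie in the same group $\calD'.G_{\tau^{\star}}$. Chaining these inequalities gives $(\eps/2) \cdot \calD'.\DSvolume[\cdelete_{\tau^{\star}}] \le \eps \cdot \calD.\DSvolume[q]$, so the test in Line~\ref{alg:test:if} succeeds at iteration $\tau = \tau^{\star}$ and the subroutine returns a pair, which is super-effective by Part~\ref{lem:test:1}. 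I expect the main obstacle to be the pointwise comparison in the previous paragraph, where the two regimes of the swap distoid, the index-wise subcluster distances across $\calJ$, and the covering identity of the isolation set cover must be aligned tightly enough to avoid any multiplicative slack that would spoil the final $\eps$-to-$\eps/2$ conversion.
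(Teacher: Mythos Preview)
Your proposal follows essentially the same approach as the paper's proof: Part~\ref{lem:test:1} via \Cref{lem:delete:loss} of \Cref{lem:delete} plugged into the Line~\ref{alg:test:if} test, and Part~\ref{lem:test:2} via the pointwise comparison $\calD'.\DScost + \calD'.\DSloss[q] \le \sum_v \calD.\DSd^z_{\cinsert,q}(v)$, then the group-based factor-$2$ volume bound to convert $\eps$ to $\eps/2$.

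One inaccuracy to flag: in the first sum (vertices with $\calD'.\DSc[v]\ne q$) you assert that \Cref{lem:insert:DSc} of \Cref{lem:insert} forces $\calD.\DSc[v]\ne q$. That item only applies when $\calD'.\DSc[v]\ne\cinsert$; a vertex with $\calD.\DSc[v]=q$ that gets reassigned to $\cinsert$ after the insertion satisfies $\calD'.\DSc[v]=\cinsert\ne q$ yet has $\calD.\DSc[v]=q$. Fortunately your next clause (``in both regimes of \Cref{def:super-effective}'') already covers this, and the paper's argument does likewise: one simply bounds $\calD'.\DSd[v]\le\min(\calD.\DSd[v],\,2^{2\eps}\dist(v,\calD.C+\cinsert))\le\min(\calD.\DSd[v],\,2^{2\eps}\dist(v,\calD.C+\cinsert-q))$, which is $\le\calD.\DSd_{\cinsert,q}(v)$ regardless of whether $\calD.\DSc[v]=q$ or not. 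So drop the incorrect appeal to \Cref{lem:insert:DSc} in that sum; it is neither needed nor valid there. (By contrast, in the \emph{second} sum the implication $\calD'.\DSc[v]=q\Rightarrow\calD.\DSc[v]=q$ \emph{does} follow from \Cref{lem:insert:DSc}, since $q\ne\cinsert$; the paper uses exactly this.) You should also state explicitly that if the loop returns before reaching $\tau^{\star}$, Part~\ref{lem:test:1} still gives the conclusion.
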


\begin{lemma}[Bounding the probability of super-effective sampling]
\label{lem:sample}
\begin{flushleft}
A single noncenter sampled in Line \ref{alg:LS:sample},  $\cinsert_{\sigma} \gets \calD.\SampleNoncenter()$ for $\sigma \in [\DSnumber]$, is super-effective with probability $\ge \eps^{4z}$, provided $\cost(V, \calD.C) \ge \alpha_{z}(\eps) \cdot \OPT$.
\hfill
\Comment{The optimal objective $\OPT = \min_{C \in V^{k}} \cost(V, C)$.}
\end{flushleft}
\end{lemma}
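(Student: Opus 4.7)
The plan is to adapt the $D^{2}$-sampling analysis of \cite{LattanziS19, ChooGPR20, 0001CLP23} to our super-effectiveness condition (Definition~\ref{def:super-effective}), which departs from the usual setup in two ways: (i)~it uses the approximate distances $\DSd[\cdot]$ maintained by $\calD$ instead of exact distances, and (ii)~it requires \emph{adaptive} progress $1-\eps\cdot\DSvolume[q]$ rather than a fixed $1-1/\poly(n)$. The strategy will be constructive: fix an optimal solution $C^{*}=\{c_{1}^{*},\dots,c_{k}^{*}\}$ with optimal clusters $O_{1}^{*},\dots,O_{k}^{*}$, match each $O_{i}^{*}$ to a candidate deletion target $q_{i}\in\calD.C$ (via a bipartite matching argument akin to \cite{AryaGKMMP04, GuptaT08} so that no current center is overloaded), and then lower bound the total $D^{2}$-sampling probability that $\cinsert\gets\calD.\SampleNoncenter()$ lands inside some $O_{i}^{*}$ at a ``representative'' position that makes $(\cinsert,q_{i})$ super-effective.

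Concretely, I would bound $\sum_{v}\DSd_{\cinsert,q_{i}}^{z}(v)$ by splitting on whether $\DSc[v]=q_{i}$: on the complement $\{v:\DSc[v]\ne q_{i}\}$ the definition gives $\DSd_{\cinsert,q_{i}}(v)\le\DSd[v]$, so these contribute at most $\DScost-\sum_{v:\DSc[v]=q_{i}}\DSd^{z}[v]$; on the current cluster $\{v:\DSc[v]=q_{i}\}$ I would use $\DSd_{\cinsert,q_{i}}(v)\le 2^{2\eps}(\dist(v,c_{i}^{*})+\dist(c_{i}^{*},\cinsert))$ together with the generalized triangle inequality (Proposition~\ref{lem:triangle}) to split the $z$-th power with a free parameter $\lambda>0$. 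The first piece is then controlled by $\cost(O_{i}^{*},C^{*})$, while the second piece involves $|\{v:\DSc[v]=q_{i}\}|\cdot\dist^{z}(c_{i}^{*},\cinsert)$. A Markov-type argument on the $D^{2}$-sampling distribution within $O_{i}^{*}$ then shows that, conditional on $\cinsert\in O_{i}^{*}$ (an event of probability $\sum_{v\in O_{i}^{*}}\DSd^{z}[v]/\DScost$ by Lemma~\ref{lem:sample-noncenter}), the sample satisfies $\dist^{z}(c_{i}^{*},\cinsert)=O(\cost(O_{i}^{*},C^{*})/|O_{i}^{*}|)$ with probability $\Omega(\eps^{4z})$. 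Summing these contributions over $i$, using Lemma~\ref{lem:DScost} to pass between $\DScost$ and $\cost(V,\calD.C)$, and exploiting the hypothesis $\cost(V,\calD.C)\ge\alpha_{z}(\eps)\cdot\OPT$ would yield the desired $\ge\eps^{4z}$ lower bound.

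The hard part will be the adaptive threshold $\eps\cdot\DSvolume[q]$, which forces a volume-weighted accounting: large-volume centers must be compensated by proportionally larger absolute cost reductions when deleted. Getting this balance right requires coupling the matching $i\mapsto q_{i}$ to the volumes, so that the aggregated weighted savings $\sum_{i}\DSvolume[q_{i}]\cdot(\text{savings from }(\cinsert,q_{i}))$ dominates $\eps\cdot\DScost$. A secondary challenge is the slippage between $\DSd[\cdot]$ and true distances, which injects the $2^{2\eps z}$ and $2^{(1+4\eps)z}$ multiplicative losses appearing in the definition of $\alpha_{z}(\eps)$; optimizing the free parameter $\lambda>0$ in the triangle split then produces precisely the closed-form ratio in Theorem~\ref{thm:LS}, including its feasibility constraint on $\lambda$.
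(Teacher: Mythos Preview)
Your high-level structure (matching optimal clusters to deletion targets, $D^{2}$-sampling into an optimal cluster, Markov-type localization near $c_{i}^{*}$) is the right skeleton, but the two-way split you describe has a concrete gap. You propose to bound, on the current cluster $\{v:\DSc[v]=q_{i}\}$, the distoid via $\DSd_{\cinsert,q_{i}}(v)\le 2^{2\eps}(\dist(v,c_{i}^{*})+\dist(c_{i}^{*},\cinsert))$ and then claim the ``first piece'' $\sum_{v:\DSc[v]=q_{i}}\dist^{z}(v,c_{i}^{*})$ is controlled by $\cost(O_{i}^{*},C^{*})$. It is not: the summation ranges over the \emph{current} cluster of $q_{i}$, which may contain points belonging to arbitrary other optimal clusters $O_{j}^{*}$ ($j\ne i$), and for those points $\dist(v,c_{i}^{*})$ is unrelated to $\dist(v,C^{*})$. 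Likewise, the ``second piece'' carries the factor $|\{v:\DSc[v]=q_{i}\}|$, not $|O_{i}^{*}|$, so the Markov bound $\dist^{z}(c_{i}^{*},\cinsert)=O(\cost(O_{i}^{*},\{c_{i}^{*}\})/|O_{i}^{*}|)$ does not neutralize it. Symmetrically, on the complement $\{v:\DSc[v]\ne q_{i}\}$ you discard the gain from points of $O_{i}^{*}$ that currently lie outside the $q_{i}$-cluster, and this gain is exactly what drives the cost decrease.

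The paper repairs this with a \emph{three-way} split $V=(V\setminus(\cluster_{i}\cup\cluster_{i}^{*}))\cup(\cluster_{i}\setminus\cluster_{i}^{*})\cup\cluster_{i}^{*}$. On $\cluster_{i}\setminus\cluster_{i}^{*}$ one cannot route through $c_{i}^{*}$; instead one routes each $v$ through its own nearest optimal center $c_{v}^{*}$ and then to the maintained center that \emph{captures} $c_{v}^{*}$, which lies in $C-q_{i}$ precisely because of the matched/lonely classification (this is the content of the $\rmloss$ bound, Claim~\ref{claim:rmloss}). On all of $\cluster_{i}^{*}$ (not just $\cluster_{i}^{*}\cap\cluster_{i}$) one routes through $\cinsert$, and the $|\cluster_{i}^{*}|$ factor matches the Markov bound; this is packaged as $\rmgain$. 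The adaptive $\eps\cdot\DSvolume[q]$ threshold is then handled not by ``coupling the matching to the volumes'' but by defining good index sets $M',\Bar{M}'$ where $\rmloss+\rmgain\le-2^{2\eps z}\eps\,\DSvolume[\cdot]\cdot\cost(V,C)$, and using $\sum_{c}\DSvolume[c]=1$ together with $|L|\ge|\Bar{M}|/2$ to show the complementary indices carry at most a $(1-\eps)$-fraction of $\cost(V,C)$ (Claim~\ref{claim:rmcost:2}). Your plan would go through once you adopt this finer decomposition and the capture-based reassignment on $\cluster_{i}\setminus\cluster_{i}^{*}$.
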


\subsection{A super-effective noncenter \texorpdfstring{$\cinsert$}{} ensures a super-effective pair \texorpdfstring{$(\cinsert, \cdelete)$}{}}
\label{subsec:test} 

This subsection completes the proof of \Cref{lem:test} (which is restated below for ease of reference).
The reader can reference \Cref{fig:LocalSearch} for the implementation of the subroutine {\TestEffectiveness}.

\begin{restate}[\Cref{lem:test}]
\begin{flushleft}
For the subroutine $\TestEffectiveness(\calD, \calD', \cinsert)$:
\begin{enumerate}[font = {\em\bfseries}]
    \item
    Any possible pair $(\cinsert, \cdelete)$ returned in Line~\ref{alg:test:return-pair} is super-effective.
    
    \item
    The subroutine will return an super-effective pair $(\cinsert, \cdelete)$, provided that the input noncenter $\cinsert \notin \calD.C$ (promised) is super-effective.
\end{enumerate}
\end{flushleft}
\end{restate}

\begin{proof}
The subroutine modifies the second input data structure $\calD'.\Insert(\cinsert)$ at first (Line~\ref{alg:test:insert}); given this, throughout this proof, we would use the notation $\calD'$ to actually denote the $\cinsert$-inserted data structure $\calD' \gets \calD'.\Insert(\cinsert)$.
We thus have $\calD' \equiv \calD.\Insert(\cinsert)$ and $\calD'.C = \calD.C + \cinsert$, since the two input data structures are identical (as Lines~\ref{alg:LS:initialize}, \ref{alg:LS:terminate}, and \ref{alg:LS:swap} promise).
These notations are more consistent with the notations in \Cref{sec:DS}, facilitating references to our results therein.

\vspace{.1in}
\noindent
{\bf \Cref{lem:test:1}.}
That any possible pair $(\cinsert, \cdelete) \gets (\cinsert, \cdelete_{\tau})$ returned in Line~\ref{alg:test:return-pair} is an super-effective pair is a direct consequence of a combination of \Cref{lem:delete:loss} of \Cref{lem:delete} and the condition in Line~\ref{alg:test:if}:
\begin{align*}
    \calD''.\DScost
    & ~\le~ \calD'.\DScost + \calD'.\DSloss[\cdelete_{\tau}]
    \tag{\Cref{lem:delete:loss} of \Cref{lem:delete}} \\
    & ~\le~ \big(1 - (\eps / 2) \cdot \calD'.\DSvolume[\cdelete_{\tau}]\big) \cdot \calD.\DScost.
    \tag{Condition in Line~\ref{alg:test:if}}
\end{align*}
This is exactly the defining condition of an super-effective pair (\Cref{def:effective}).

\vspace{.1in}
\noindent
{\bf \Cref{lem:test:2}.}
It suffices to show that the subroutine will return a pair $(\cinsert, \cdelete)$ in Line~\ref{alg:test:return-pair} (which must be an super-effective pair (\Cref{lem:test:1})), provided that the input noncenter $\cinsert \notin \calD.C$ is super-effective.
Consider a specific center $q \in \calD.C = \calD'.C - \cinsert$ that satisfies this super-effectiveness (\Cref{def:super-effective}):
\begin{align}
\label{eq:lem:test:1}
    \sum_{v \in V} \calD.\DSd_{\cinsert,\, q}^{z}(v)
    ~\le~ \big(1 - \eps \cdot \calD.\DSvolume[q]\big) \cdot \calD.\DScost.
\end{align}
Also, assume the following \Cref{eq:lem:test:2} for the moment, which we will prove later.
\begin{align}
    \label{eq:lem:test:2}
    \calD'.\DScost + \calD'.\DSloss[q]
    ~\le~ \sum_{v \in V} \calD.\DSd_{\cinsert,\, q}^{z}(v).
\end{align}
We index by $\tau_{q} \eqdef \lfloor -\log_{2}(\calD'.\DSvolume[q]) + 1\rfloor \in [\CGnumber]$ the center group $\calD'.G_{\tau_{q}}$ that includes the considered center $q \in \calD.C = \calD'.C - \cinsert$ (Invariant~\ref{invar:grouping:group}).\footnote{Recall that $\frac{1}{2m |\calJ|} \le \calD'.\DSvolume[q] \le 1$ (\Cref{lem:deletion-estimator:DSvolume} of \Cref{lem:deletion-estimator}) and $\CGnumber = \lceil \log_2(2m|\calJ|) + 1\rceil$ (Invariant~\ref{invar:grouping:group}). Also, all {\em disjoint} center groups $\{\calD'.G_{\tau}\}_{\tau \in [\CGnumber]}$ together cover the center set $\calD'.C = \cup_{\tau \in [\CGnumber]} \calD'.G_{\tau}$ (Invariant~\ref{invar:grouping:group} and \Cref{lem:grouping}).}
Without loss of generality, we can assume that the subroutine will arrive this center group $\tau_{q} \in [\CGnumber]$ (Line~\ref{alg:test:for});
otherwise, the subroutine must have already returned an (super-effective) pair $(\cinsert, \cdelete) \gets (\cinsert, \cdelete_{\tau})$ for an earlier group $\tau < \tau_{q}$ (Line~\ref{alg:test:return-pair}).

For this center group $\tau_{q} \in [\CGnumber]$, the subroutine will identify the ``$\cinsert$-excluded $\calD'.\DSloss[c]$-minimizer'' $\cdelete_{\tau_{q}} \eqdef \argmin_{c \in \calD'.G_{\tau_{q}} - \cinsert} \calD'.\DSloss[c]$ (Line~\ref{alg:test:cdelete}),\footnote{This center $\cdelete_{\tau_{q}} \in \calD'.G_{\tau_{q}} - \cinsert$ is well defined, since $\calD'.G_{\tau_{q}} - \cinsert \supseteq \{q\} \ne \emptyset \impliedby (\calD'.C - \cinsert \ni q) \wedge (\calD'.G_{\tau_{q}} \ni q)$.}
then test the condition in Line~\ref{alg:test:if} for this center $\cdelete_{\tau_{q}} \in \calD'.G_{\tau_{q}} - \cinsert$, and
finally return the pair $(\cinsert, \cdelete) \gets (\cinsert, \cdelete_{\tau_{q}})$ if the test passes (Line~\ref{alg:test:return-pair}).
Thus, it suffices to check the condition in Line~\ref{alg:test:if} for this center $\cdelete_{\tau_{q}} \in \calD'.G_{\tau_{q}} - \cinsert$:
\begin{align*}
    \calD'.\DScost + \calD'.\DSloss[\cdelete_{\tau_{q}}]
    & ~\le~ \calD'.\DScost + \calD'.\DSloss[q]
    \tag{Definition of $\cdelete_{\tau_{q}}$} \\
    & ~\le~ \big(1 - \eps \cdot \calD.\DSvolume[q]\big) \cdot \calD.\DScost
    \tag{\Cref{eq:lem:test:1,eq:lem:test:2}} \\
    & ~\le~ \big(1 - \eps \cdot \calD'.\DSvolume[q]\big) \cdot \calD.\DScost
    \tag{\Cref{lem:insert:DSvolume} of \Cref{lem:insert}} \\
    & ~\le~ \big(1 - (\eps / 2) \cdot \calD'.\DSvolume[\cdelete_{\tau_{q}}]\big) \cdot \calD.\DScost.
\end{align*}
Here, the last step applies $\calD'.\DSvolume[\cdelete_{\tau_{q}}] \le 2\calD'.\DSvolume[q]$, i.e., both centers $\cdelete_{\tau_{q}}$ and $q$ belong to the center subset $\calD'.G_{\tau_{q}} - \cinsert$, and all centers $\in \calD'.G_{\tau_{q}} = \{c \in \calD'.C \mid \lfloor-\log_{2}(\calD'.\DSvolume[c]) + 1\rfloor = \tau_{q}\}$ (Invariant~\ref{invar:grouping:group}) differ in their $\calD'.\DSvolume[c]$ values by a factor of $\le 2$.

It remains to verify \Cref{eq:lem:test:2}.
By Invariant~\ref{invar:DScost} and \Cref{lem:deletion-estimator:DSloss} of \Cref{lem:deletion-estimator}, we can upper-bound $\LHS~\text{of}~\eqref{eq:lem:test:2}
= \calD'.\DScost + \calD'.\DSloss[q]$ as follows:
\begin{align*}
    \LHS~\text{of}~\eqref{eq:lem:test:2}
    & ~\le~ \sum_{v \in V} \calD'.\DSd^{z}[v]
    + \sum_{v \in V \colon \calD'.\DSc[v] = q} \big(2^{2\eps z} \cdot \dist^{z}(v, \calD'.C - q) - \calD'.\DSd^{z}[v]\big) \\
    & ~=~ \sum_{v \in V \colon \calD'.\DSc[v] \ne q} \calD'.\DSd^{z}[v] + \sum_{v \in V \colon \calD'.\DSc[v] = q} 2^{2\eps z} \cdot \dist^{z}(v, \calD'.C - q).
\end{align*}
It suffices to prove that this formula
$\le \RHS~\text{of}~\eqref{eq:lem:test:2}
= \sum_{v \in V} \calD.\DSd_{\cinsert,\, q}^{z}(v)$, and we would address every vertex-$(v \in V)$ term in either case $\{\calD'.\DSc[v] \ne q\}$ or $\{\calD'.\DSc[v] = q\}$ separately.

\vspace{.1in}
\noindent
{\bf Case~1: $\calD'.\DSc[v] \ne q$.}
Following a combination of \Cref{lem:clustering:vertex} of \Cref{lem:clustering} and \Cref{lem:insert:DSd} of \Cref{lem:insert},
\begin{align*}
    \calD'.\DSd[v]
    & ~\le~ \min\big(\calD.\DSd[v],\ 2^{2\eps} \cdot \dist(v, \calD'.C)\big) \\
    & ~=~ \min\big(\calD.\DSd[v],\ 2^{2\eps} \cdot \dist(v, \calD.C + \cinsert)\big) \\
    & ~\le~ \min\big(\calD.\DSd[v], 2^{2\eps} \cdot \dist(v, \calD.C + \cinsert - q)\big) \\
    & ~\le~ \DSd_{\cinsert,\, q}(v)
    ~=~
    \begin{cases}
        2^{2\eps} \cdot \dist(v, \calD.C + \cinsert - q)
        & \calD.\DSc[v] = q \\
        \min\big(\calD.\DSd[v],\ 2^{2\eps} \cdot \dist(v, \cinsert)\big)
        & \calD.\DSc[v] \ne q
    \end{cases}.
\end{align*}
Here, the second step applies $\calD'.C = \calD.C + \cinsert$, and the last step applies $\cinsert \notin \calD.C$ and $q \in \calD.C$ and restates (\Cref{def:effective}) the defining formula of the function $\DSd_{\cinsert,\, q}(v)$.

\vspace{.1in}
\noindent
{\bf Case~2: $\calD'.\DSc[v] = q$.}
In this case, we must have $\cinsert \ne \calD'.\DSc[v] = q \in \calD.C \impliedby \cinsert \notin \calD.C$, which makes \Cref{lem:insert:DSc} of \Cref{lem:insert} applicable $\implies \calD.\DSc[v] = \calD'.\DSc[v] = q$.
We thus have
\begin{align*}
    2^{2\eps} \cdot \dist(v, \calD'.C - q)
    ~=~ 2^{2\eps} \cdot \dist(v, \calD.C + \cinsert - q)
    ~=~ \calD.\DSd_{\cinsert,\, q}(v).
\end{align*}
Combining both cases gives \Cref{eq:lem:test:2}.

This finishes the proof of \Cref{lem:test}.
\end{proof}

\subsection{Bounding the probability of super-effective sampling}
\label{subsec:sample}

\newcommand{\rmloss}{\mathrm{loss}_{z}}
\newcommand{\rmgain}{\mathrm{gain}_{z}}

This subsection completes the proof of \Cref{lem:sample} (which is restated below for ease of reference).
Essentially, this is a direct consequence of the following \Cref{lem:noncenter-subset}.

\begin{restate}[{\Cref{lem:sample}}]
\begin{flushleft}
A single noncenter sampled in Line \ref{alg:LS:sample},  $\cinsert_{\sigma} \gets \calD.\SampleNoncenter()$ for $\sigma \in [\DSnumber]$, is super-effective with probability $\ge \eps^{4z}$, provided $\cost(V, \calD.C) \ge \alpha_{z}(\eps) \cdot \OPT$.
\end{flushleft}
\end{restate}

\newcommand{\BarC}{S}

\begin{lemma}[A ``costly'' super-effective noncenter subset]
\label{lem:noncenter-subset}
\begin{flushleft}
The premise of \Cref{lem:sample}, namely $\cost(V, \calD.C) \ge \alpha_{z}(\eps) \cdot \OPT$, ensures that there exists such a noncenter subset $\BarC_{\calD} \subseteq V \setminus \calD.C$:
\begin{enumerate}[font = {\em\bfseries}]
    \item \label{lem:noncenter-subset:1}
    Every noncenter $\cinsert \in \BarC_{\calD} \subseteq V \setminus \calD.C$ is super-effective.
    
    \item \label{lem:noncenter-subset:2}
    $\cost(\BarC_{\calD}, \calD.C) \ge \eps^{z + 2} \cdot \cost(V, \calD.C)$.
\end{enumerate}
\end{flushleft}
\end{lemma}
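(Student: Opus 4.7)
I construct $\BarC_{\calD}$ via a two-layer $D^{2}$-sampling argument in the style of \cite{LattanziS19, ChooGPR20, 0001CLP23}, tailored to the distoid formulation. Fix the unique optimal solution $C^{*} \in V^{k}$, and for each $c^{*} \in C^{*}$ let $O_{c^{*}} \eqdef \{v \in V : c^{*} = \argmin_{c \in C^{*}} \dist(v, c)\}$ be its optimal cluster and $\pi(c^{*}) \eqdef \argmin_{q \in \calD.C} \dist(c^{*}, q)$ its nearest current center. The candidate set is built by pairing each noncenter $\cinsert \in O_{c^{*}} \setminus \calD.C$ with the fixed swap-out center $q = \pi(c^{*})$, analyzing the distoid $\calD.\DSd_{\cinsert, q}$, and extracting $\BarC_{\calD}$ as those noncenters whose corresponding swap satisfies \Cref{def:super-effective}.

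The core single-swap analysis first controls $\calD.\DSd_{\cinsert, q}^{z}(v)$ pointwise. For $v \in \calX(q) \eqdef \{v' : \calD.\DSc[v'] = q\}$, the distoid equals $2^{2\eps} \cdot \dist(v, \calD.C + \cinsert - q) \le 2^{2\eps}(\dist(v, c^{*}(v)) + \dist(c^{*}(v), \cinsert))$, which we route through $c^{*}$ when $c^{*}(v) = c^{*}$ and bound via $\calD.\DSd[v]$ otherwise; for $v \notin \calX(q)$ the distoid is already at most $\min(\calD.\DSd[v], 2^{2\eps} \dist(v, \cinsert))$. Applying \Cref{lem:triangle} with a parameter $\lambda > 0$ expands each $z$-th power into $(1+\lambda)^{z-1}\dist^{z}(v, c^{*}) + (1+1/\lambda)^{z-1}\dist^{z}(c^{*}, \cinsert)$ terms. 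Next, a second $D^{2}$-averaging -- sampling $\cinsert$ from $O_{c^{*}}$ with density proportional to $\dist^{z}(\cinsert, \calD.C)$ -- combined with a further use of \Cref{lem:triangle} rewrites $\E[\dist^{z}(c^{*}, \cinsert)]$ as a linear combination of $\cost(O_{c^{*}}, C^{*})$ and $\cost(O_{c^{*}}, \calD.C)$. Summing over $c^{*}$ with $\calD.\DSvolume[\pi(c^{*})]$-weights (which normalize to $1$ via \Cref{lem:deletion-estimator:DSvolume}) and absorbing the $2^{O(\eps z)}$ distortions from \Cref{lem:clustering,lem:DScost}, the $\DSvolume$-weighted expected gain $\calD.\DScost - \E[\sum_{v} \calD.\DSd_{\cinsert, q}^{z}(v)]$ admits a lower bound of the form $[3 - \eps - 2^{1+2\eps z}((1+1/\lambda)^{z-1} + \eps)] \cdot \cost(V, \calD.C) - [2^{1+z+2\eps z}(1+\lambda)^{z-1} + 2^{(1+4\eps)z}] \cdot \OPT$. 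This is precisely the expression whose ratio defines $\alpha_{z}(\eps)$, so the hypothesis $\cost(V, \calD.C) \ge \alpha_{z}(\eps) \cdot \OPT$ forces this expected gain to be at least $2\eps \cdot \calD.\DScost$.

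A standard Markov-type split then yields $\BarC_\calD$: let $\BarC_\calD$ be the set of noncenters $\cinsert$ for which the individual gain exceeds the super-effectiveness threshold $\eps \cdot \calD.\DSvolume[\pi(c^{*}(\cinsert))] \cdot \calD.\DScost$, so that its complement contributes at most $\eps \cdot \calD.\DScost$ to the expected gain and $\BarC_\calD$ carries at least an $\Omega(\eps)$-fraction of the $D^{2}$-mass. Converting back from $\calD.\DSd^{z}[\cinsert]$ to $\dist^{z}(\cinsert, \calD.C)$ via \Cref{lem:clustering}, and accounting for the $\eps$-factors arising from the super-effectiveness margin and from the $z$-th-power losses in \Cref{lem:triangle}, this translates to $\cost(\BarC_\calD, \calD.C) \ge \eps^{z+2} \cdot \cost(V, \calD.C)$.

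The main obstacle will be threading the $\DSvolume[q]$-weights -- intrinsic to the super-effectiveness threshold -- through the two-case distoid formula without losing the algebraic structure. Super-effectiveness requires per-$\cinsert$ gain above a $q$-dependent threshold, so the averaging over candidate swap-outs $q = \pi(c^{*})$ must be $\DSvolume$-weighted in a way that telescopes via $\sum_{q \in \calD.C} \calD.\DSvolume[q] = 1$ (\Cref{lem:deletion-estimator:DSvolume}); otherwise, optimal clusters mapping to small-$\DSvolume$ current centers could distort the bound in the wrong direction. Simultaneously, fine-tuning $\lambda$ so that the $(1+\lambda)^{z-1}$ numerator contribution and the $(1+1/\lambda)^{z-1}$ denominator contribution balance out is exactly what produces the explicit optimization defining $\alpha_{z}(\eps)$, with the admissible range $\lambda > ((3-\eps)/2^{1+2\eps z} - \eps)^{-1/(z-1)} - 1$ arising from the positivity of the denominator.
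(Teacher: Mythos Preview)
There is a genuine gap in your routing of the ``reassignment'' term. When you swap in $\cinsert \in O_{c^{*}}$ and swap out $q = \pi(c^{*})$, the vertices $v \in \calX(q)$ with $c^{*}(v) \neq c^{*}$ must be sent to some center in $\calD.C + \cinsert - q$. You propose to ``bound via $\calD.\DSd[v]$'', but $\calD.\DSd[v] \approx \dist(v, q)$ is the distance to the center being \emph{deleted}; it does not control $\dist(v, \calD.C - q)$. The natural fix is to route $v \to c^{*}(v) \to \pi(c^{*}(v))$ and use $\dist(c^{*}(v), \pi(c^{*}(v))) \le \dist(c^{*}(v), q)$, but this requires $\pi(c^{*}(v)) \neq q$, which fails precisely when $q$ captures two or more optimal centers (including both $c^{*}$ and $c^{*}(v)$). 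Your fixed choice $q = \pi(c^{*})$ never avoids such ``heavy'' current centers, so the reassignment cost can blow up and the per-swap inequality does not hold.

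The paper closes this gap by the classical matched/lonely classification from \cite{AryaGKMMP04,LattanziS19}: a current center that captures exactly one optimal center is \emph{matched} (swap it out for its unique captured $c_i^{*}$, and the routing above works because every other $c^{*}(v)$ is captured elsewhere); a current center that captures none is \emph{lonely} (swap it out for \emph{any} uncaptured optimal center, and again the routing works). Crucially, centers capturing $\ge 2$ optimal centers are \emph{never} deleted---their ``partner'' optimal centers are instead paired with lonely centers, and a counting argument gives $|L| \ge |\Bar{M}|/2$ so that each lonely center is reused at most twice. This same classification is what makes the $\DSvolume$-normalization go through: your claim that $\sum_{c^{*}} \calD.\DSvolume[\pi(c^{*})]$ telescopes to $1$ is false in general (the map $c^{*} \mapsto \pi(c^{*})$ need not be a bijection), whereas the paper's sum $\sum_{i \in M} \DSvolume[c_i] + \tfrac{|\Bar{M}|}{|L|}\sum_{j \in L} \DSvolume[c_j] \le 2\sum_{c} \DSvolume[c] = 2$ uses exactly the matched/lonely structure. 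Without this ingredient, neither the pointwise distoid bound nor the volume normalization in your outline is valid.
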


\begin{proof}[Proof of \Cref{lem:sample} (Assuming \Cref{lem:noncenter-subset})]
\Cref{lem:sample} stems from a combination of \Cref{lem:noncenter-subset} and the performance guarantees of our data structure $\calD$ shown in \Cref{sec:DS}, as follows:
\begin{align*}
    \Pr_{\cinsert_{i}}\big[\, \text{$\cinsert_{i}$ is super-effective} \,\big]
    & ~\ge~ \Pr_{\cinsert_{i}}\big[\, \cinsert_{i} \in \BarC_{\calD} \,\big]
    \tag{\Cref{lem:noncenter-subset:1} of \Cref{lem:noncenter-subset}} \\
    & ~=~ \frac{\sum_{v \in \BarC_{\calD}} \calD.\DSd^{z}[v]}{\sum_{v \in V} \calD.\DSd^{z}[v]}
    \tag{\Cref{lem:sample-noncenter}} \\
& ~\ge~ 2^{-2\eps z} \cdot \frac{\cost(\BarC_{\calD}, \calD.C)}{\cost(V, \calD.C)}
    \tag{\Cref{lem:clustering:vertex} of \Cref{lem:clustering}} \\
    & ~\ge~ 2^{-2\eps z} \cdot \eps^{z + 2}
    \tag{\Cref{lem:noncenter-subset:2} of \Cref{lem:noncenter-subset}} \\
    & ~\ge~ \eps^{4z}.
    \tag{$z \ge 1$ and $0 < \eps < \frac{1}{10z}$}
\end{align*}
This finishes the proof of \Cref{lem:sample}.
\end{proof}

In the remainder of this subsection, we explicitly construct a ``costly'' super-effective noncenter subset $\BarC_{\calD} \subseteq V \setminus \calD.C$ (\Cref{def:noncenter-subset}) and verify \Cref{lem:noncenter-subset:1,lem:noncenter-subset:2} of \Cref{lem:noncenter-subset} for this $\BarC_{\calD}$.\footnote{The construction of our noncenter subset $\BarC$ and the proof of \Cref{lem:noncenter-subset} are refinements of \cite{LattanziS19}.}
Since we are considering on a single (unmodified) data structure $\calD$, without ambiguity, we simplify the notations by writing $C = \calD.C$, $\BarC = \BarC_{\calD}$, etc.

\begin{definition}[A ``costly'' super-effective noncenter subset $\BarC$]
\label{def:noncenter-subset}
\begin{flushleft}
Our construction of $\BarC \subseteq V \setminus C$ relies on several additional concepts:
\begin{itemize}
    \item Enumerate and index the {\em maintained centers} $C = \{c_{i}\}_{i \in [k]}$ in the maintained solution and the corresponding {\em maintained clusters} $\cluster_{i} \eqdef \{v \in V \mid \calD.\DSc[v] = c_{i}\}$, for $i \in [k]$.
    
    \item Enumerate and index the {\em optimal centers} $C^{*} = \{c_{i}^{*}\}_{i \in [k]}$ in the optimal solution and the corresponding {\em optimal clusters} $\cluster_{i}^{*} \eqdef \{v \in V \mid \argmin_{c^{*} \in C^{*}} \dist(v, c^{*}) = c_{i}^{*}\}$, $\forall i \in [k]$.
    
    \item We say an optimal center $c_{i}^{*} \in C^{*}$ is {\em captured} by its nearest maintained center $c = \argmin_{c' \in C} \dist(c^{*}, c')$.
    We can classify all maintained centers $c_{i} \in C$ into three kinds:
    
    (i)~A maintained center $c_{i} \in C$ that captures a {\em unique} optimal center is called {\em matched}; \\
    we denote by $C_{M} \subseteq C$ all {\em matched} maintained centers and by $M \subseteq [k]$ their indices and (without loss of generality) reindex $C$ and $C^{*}$ such that every {\em matched} maintained center $c_{i} \in C_{M}$ and its {\em unique} captured optimal center $c_{i}^{*} \in C^{*}$ have the same index $i \in M$.
    
    All other maintained centers $C_{\Bar{M}} \eqdef C \setminus C_{M}$ and their indices $\Bar{M} \eqdef [k] \setminus M$ can further be classified into two kinds:
    
    (ii)~A maintained center $c_{i} \in C$ that captures {\em no} optimal center is called {\em lonely}; \\
    we denote by $C_{L} \subseteq C_{\Bar{M}} \subseteq C$ all {\em lonely} maintained centers and by $L \subseteq \Bar{M} \subseteq [k]$ their indices.
    
    (iii)~Every other maintained centers $c_{i} \notin C_{M} \cup C_{L}$, which is {\em neither matched nor lonely}, captures {\em two or more} optimal centers.
\end{itemize}
Then, based on these notations, we can define a function $\rmloss(c_{i})$ on all maintained center $c_{i} \in C$ and a function $\rmgain(c_{i}^{*})$ on all optimal center $c_{i}^{*} \in C^{*}$, as follows;\footnote{Our construction of $\BarC$ does not rely on the function values $\rmloss(c_{i})$ on the {\em neither-matched-nor-lonely} maintained centers $c_{i} \notin C_{M} \cup C_{L}$, which thus can be defined arbitrarily -- we simply reuse the definition on the {\em lonely} maintained centers $c_{i} \in C_{L}$.}
we observe that $\rmloss(c_{i}) \ge 0$, for every maintained center $c_{i} \in C$ (of any kind).\footnote{Namely, $\rmloss(c_{i}) \ge \cost(\cluster_{i} \setminus \cluster_{i}^{*}, C - c_{i}) - \cost(\cluster_{i} \setminus \cluster_{i}^{*}, C) \ge 0$ in case of a {\em matched} maintained center $c_{i} \in C_{M}$, and $\rmloss(c_{i}) \ge \cost(\cluster_{i}, C - c_{i}) - \cost(\cluster_{i}, C) \ge 0$ in the opposite case.}
\begin{align*}
    \rmloss(c_{i})
    & ~\eqdef~
    \begin{cases}
        2^{2\eps z} \cdot \cost(\cluster_{i} \setminus \cluster_{i}^{*}, C - c_{i}) - \cost(\cluster_{i} \setminus \cluster_{i}^{*}, C)
        &\qquad \forall i \in M \\
        2^{2\eps z} \cdot \cost(\cluster_{i}, C - c_{i}) - \cost(\cluster_{i}, C)
        &\qquad \forall i \in L \\
        2^{2\eps z} \cdot \cost(\cluster_{i}, C - c_{i}) - \cost(\cluster_{i}, C)
        &\qquad \forall i \notin M \cup L
    \end{cases}, \\
    \rmgain(c_{i}^{*})
    & ~\eqdef~ 2^{(1 + 4\eps) z} \cdot \cost(\cluster_{i}^{*}, \{c_{i}^{*}\}) - \cost(\cluster_{i}^{*}, C),
    \hspace{2.52cm} \forall i \in [k].
\end{align*}
Now we define the noncenter subset $\BarC \subseteq V \setminus C$, based on the {\em noncenter subclusters} $\BarC_{i} \subseteq \cluster_{i}^{*} \setminus C$, for $i \in [k]$, the {\em matched index subset} $M' \subseteq M$, and the {\em unmatched index subset} $\Bar{M}' \subseteq \Bar{M}$.
(Recall Invariant~\ref{invar:DSvolume} for the deletion volume estimator $\DSvolume[c_{i}]$.)
\begin{align*}
    \BarC
    & ~\eqdef~ \cup_{i \in M' \cup \Bar{M}'} \BarC_{i}, \\
    \BarC_{i}
    & ~\eqdef~ \big\{v \in \cluster_{i}^{*} \setminus C \bigmid \dist^{z}(v, c_{i}^{*}) \le 2^{\eps z} \cdot \tfrac{1}{|\cluster_{i}^{*}|} \cdot \cost(\cluster_{i}^{*}, \{c_{i}^{*}\})\big\},
    \qquad\qquad \forall i \in [k], \\
    M' & ~\eqdef~ \big\{i \in M \bigmid \rmloss(c_{i}) + \rmgain(c_{i}^{*}) \le -2^{2\eps z} \cdot \eps \cdot \DSvolume[c_{i}] \cdot \cost(V, C)\big\}, \\
    \Bar{M}' & ~\eqdef~ \big\{i \in \Bar{M} \bigmid \exists j \in L \colon \rmloss(c_{j}) + \rmgain(c_{i}^{*}) \le -2^{2\eps z} \cdot \eps \cdot \DSvolume[c_{j}] \cdot \cost(V, C)\big\}.
\end{align*}
\end{flushleft}
\end{definition}

The remainder of this subsection is devoted to establishing \Cref{lem:noncenter-subset} for our noncenter subset $\BarC \subseteq V \setminus C$ from \Cref{def:noncenter-subset}; we begin with its \Cref{lem:noncenter-subset:1} (which is rephrased for ease of reference).

\begin{restate}[\Cref{lem:noncenter-subset}, \Cref{lem:noncenter-subset:1}]
Every noncenter $\cinsert \in \BarC \subseteq V \setminus C$ is super-effective, namely
\begin{align}
\label{eq:P}
    \exists q \in C
    \colon\qquad
    \sum_{v \in V} \DSd_{\cinsert,\, q}^{z}(v)
    ~\le~ (1 - \eps \cdot \DSvolume[q]) \sum_{v \in V} \DSd^{z}[v].
\end{align}
\end{restate}

\begin{proof}
Regarding our construction of $\BarC \subseteq V \setminus C$ (\Cref{def:noncenter-subset}), the considered noncenter $\cinsert \in \BarC$ locates in a unique index-$(i \in M' \cup \Bar{M}')$ noncenter subcluster $\BarC_{i} \ni \cinsert$.
We investigate either case, \{the index is matched $M' \ni i$\} versus \{the index is unmatched $\Bar{M}' \ni i$\}, separately.

\vspace{.1in}
\noindent
{\bf Case~1: the index is matched $M' \ni i$.}
The definition of $M'$ (\Cref{def:noncenter-subset}) ensures that
\begin{align*}
    \rmloss(c_{i}) + \rmgain(c_{i}^{*})
    ~\le~ -2^{2\eps z} \cdot \eps \cdot \DSvolume[c_{i}] \cdot \cost(V, C).
\end{align*}
\Cref{eq:P} turns out to hold simply for the index-$i$ maintained center $q = c_{i}$; we would decompose the $\LHS$ of \Cref{eq:P} into three parts and upper-bound them one by one, as follows.

\noindent
Firstly, the vertices $v \notin \cluster_{i} \cup \cluster_{i}^{*}$ (\Cref{def:super-effective}) each satisfy $\DSd_{\cinsert,\, c_{i}}^{z}(v) \le \DSd^{z}[v]$ and thus
\begin{align*}
    \sum_{v \notin \cluster_{i} \cup \cluster_{i}^{*}} \DSd_{\cinsert,\, c_{i}}^{z}(v)
    ~\le~ \sum_{v \notin \cluster_{i} \cup \cluster_{i}^{*}} \DSd^{z}[v].
    \hspace{7.01cm}
\end{align*}

\noindent
Secondly, the vertices $v \in \cluster_{i} \setminus \cluster_{i}^{*}$ (\Cref{def:super-effective}) each satisfy $\DSd_{\cinsert,\, c_{i}}(v) \le 2^{2\eps} \cdot \dist(v, C - c_{i})$ and
\begin{align*}
    \sum_{v \in \cluster_{i} \setminus \cluster_{i}^{*}} \DSd_{\cinsert,\, c_{i}}^{z}(v)
    ~\le~ 2^{2\eps z} \cdot \cost(\cluster_{i} \setminus \cluster_{i}^{*}, C - c_{i}).
    \hspace{4.68cm}
\end{align*}

\noindent
Thirdly, the vertices $v \in \cluster_{i}^{*}$ (\Cref{def:super-effective}) each satisfy $\DSd_{\cinsert,\, c_{i}}(v) \le 2^{2\eps} \cdot \dist(v, \cinsert)$ and thus
\begin{align*}
    \sum_{v \in \cluster_{i}^{*}} \DSd_{\cinsert,\, c_{i}}^{z}(v)
    & ~\le~ 2^{2\eps z} \cdot \cost(\cluster_{i}^{*}, \{\cinsert\}) \\
    & ~\le~ 2^{2\eps z} \cdot \big((1 + 2^{\eps})^{z - 1} \cdot \cost(\cluster_{i}^{*}, \{c_{i}^{*}\})\big. \\
    & \phantom{~=~}\qquad\qquad \big.+ (1 + 2^{-\eps})^{z - 1} \cdot |\cluster_{i}^{*}| \cdot \dist^{z}(c_{i}^{*}, \cinsert)\big) \\
    & ~\le~ 2^{2\eps z} \cdot \big((1 + 2^{\eps})^{z - 1}
    + (1 + 2^{-\eps})^{z - 1} \cdot 2^{\eps z}\big) \cdot \cost(\cluster_{i}^{*}, \{c_{i}^{*}\}) \\
    & ~=~ 2^{2\eps z} \cdot (1 + 2^{\eps})^{z} \cdot \cost(\cluster_{i}^{*}, \{c_{i}^{*}\}) \\
    & ~\le~ 2^{(1 + 4\eps) z} \cdot \cost(\cluster_{i}^{*}, \{c_{i}^{*}\}) \\
    & ~=~ \cost(\cluster_{i}^{*}, C)
    + \rmgain(c_{i}^{*}).
\end{align*}
Here, the second step applies \Cref{lem:triangle}, by setting the parameter $\lambda = 2^{\eps}$.
The third step uses the defining condition of the noncenter subcluster $\BarC_{i} \ni \cinsert$ (\Cref{def:noncenter-subset}), namely $\dist^{z}(\cinsert, c_{i}^{*}) \le 2^{\eps z} \cdot \tfrac{1}{|\cluster_{i}^{*}|} \cdot \cost(\cluster_{i}^{*}, \{c_{i}^{*}\})$.
The last step applies the formula of $\rmgain(c_{i}^{*})$ (\Cref{def:noncenter-subset}).

\noindent
A combination of the above three equations gives \Cref{eq:P}, for $q = c_{i}$, as follows.
\begin{align*}
    \LHS~\text{of}~\eqref{eq:P}
    & ~\le~ \sum_{v \notin \cluster_{i} \cup \cluster_{i}^{*}} \DSd^{z}[v]
    + 2^{2\eps z} \cdot \cost(\cluster_{i} \setminus \cluster_{i}^{*}, C - c_{i})
    + \cost(\cluster_{i}^{*}, C)
    + \rmgain(c_{i}^{*})
    \hspace{.09cm} \\
    & ~=~ \sum_{v \notin \cluster_{i} \cup \cluster_{i}^{*}} \DSd^{z}[v]
    + \cost(\cluster_{i} \cup \cluster_{i}^{*}, C)
    + \rmloss(c_{i}) + \rmgain(c_{i}^{*}) \\
    & ~\le~ \sum_{v \notin \cluster_{i} \cup \cluster_{i}^{*}} \DSd^{z}[v]
    + \cost(\cluster_{i} \cup \cluster_{i}^{*}, C)
    - 2^{2\eps z} \cdot \eps \cdot \DSvolume[c_{i}] \cdot \cost(V, C) \\
    & ~\le~ \sum_{v \notin \cluster_{i} \cup \cluster_{i}^{*}} \DSd^{z}[v]
    + \sum_{v \in \cluster_{i} \cup \cluster_{i}^{*}} \DSd^{z}[v]
    - \eps \cdot \DSvolume[c_{i}] \sum_{v \in V} \DSd^{z}[v] \\
    & ~=~ \RHS~\text{of}~\eqref{eq:P}.
\end{align*}
Here, the second step uses (\Cref{def:noncenter-subset}) $\rmloss(c_{i}) = 2^{2\eps z} \cdot \cost(\cluster_{i} \setminus \cluster_{i}^{*}, C - c_{i}) - \cost(\cluster_{i} \setminus \cluster_{i}^{*}, C)$,
the third step applies (\Cref{def:noncenter-subset}) the defining condition of $M' \ni i$, and
the fourth step applies (\Cref{lem:clustering:vertex} of \Cref{lem:clustering}) $\dist(v, C) \le \DSd[v] \le 2^{2\eps} \cdot \dist(v, C)$.

\vspace{.1in}
\noindent
{\bf Case~2: the index is unmatched $\Bar{M}' \ni i$.}
The definition of $\Bar{M}'$ (\Cref{def:noncenter-subset}) ensures that
\begin{align*}
    \exists j \in L \colon \rmloss(c_{j}) + \rmgain(c_{i}^{*}) \le -2^{2\eps z} \cdot \eps \cdot \DSvolume[c_{j}] \cdot \cost(V, C).
\end{align*}
\Cref{eq:P} turns out to hold for every such maintained center $q = c_{j}$; reusing the arguments for {\bf Case~1} (but for $c_{j}$ rather than $c_{i}$), we can obtain \Cref{eq:P}, for $q = c_{j}$, as follows.
\begin{align*}
    \LHS~\text{of}~\eqref{eq:P}
    & ~\le~ \sum_{v \notin \cluster_{j} \cup \cluster_{i}^{*}} \DSd^{z}[v]
    + 2^{2\eps z} \cdot \cost(\cluster_{j} \setminus \cluster_{i}^{*}, C - c_{j}) + \cost(V_{i}^{*}, C) + \rmgain(c_{i}^{*}) \\
    & ~\le~ \sum_{v \notin \cluster_{j} \cup \cluster_{i}^{*}} \DSd^{z}[v]
    + \cost(V_{j} \cup V_{i}^{*}, C) + \rmloss(c_{j}) + \rmgain(c_{i}^{*}) \\
& ~\le~ \sum_{v \notin \cluster_{j} \cup \cluster_{i}^{*}} \DSd^{z}[v]
    + \cost(\cluster_{j} \cup \cluster_{i}^{*}, C)
    - 2^{2\eps z} \cdot \eps \cdot \DSvolume[c_{j}] \cdot \cost(V, C) \\
    & ~\le~ \sum_{v \notin \cluster_{j} \cup \cluster_{i}^{*}} \DSd^{z}[v]
    + \sum_{v \in \cluster_{j} \cup \cluster_{i}^{*}} \DSd^{z}[v]
    - \eps \cdot \DSvolume[c_{j}] \sum_{v \in V} \DSd^{z}[v] \\
    & ~=~ \RHS~\text{of}~\eqref{eq:P}.
\end{align*}
Here, the second step applies (\Cref{def:noncenter-subset}) $\rmloss(c_{j}) = 2^{2\eps z} \cdot \cost(\cluster_{j}, C - c_{j}) - \cost(\cluster_{j}, C) \ge 2^{2\eps z} \cdot \cost(\cluster_{j}\setminus \cluster_{i}^{*}, C - c_{j}) - \cost(\cluster_{j}\setminus \cluster_{i}^{*}, C)$,
the third step applies the above equation for the considered index $j \in L$ (and drops the last term), and
the fourth step applies (\Cref{lem:clustering:vertex} of \Cref{lem:clustering}) $\dist(v, C) \le \DSd[v] \le 2^{2\eps} \cdot \dist(v, C)$.

Combining both cases accomplishes \Cref{lem:noncenter-subset:1} of \Cref{lem:noncenter-subset}.
\end{proof}

Now we move on to \Cref{lem:noncenter-subset:2} of \Cref{lem:noncenter-subset} (which is rephrased for ease of reference). This turns out to be a direct consequence of the following \Cref{claim:rmcost:1,claim:rmcost:2}.

\begin{restate}[\Cref{lem:noncenter-subset}, \Cref{lem:noncenter-subset:2}]
\begin{flushleft}
The premise of \Cref{lem:sample}, namely $\cost(V, \calD.C) \ge \alpha_{z}(\eps) \cdot \OPT$, ensures that $\cost(\BarC, C) \ge \eps^{z + 2} \cdot \cost(V, C)$.
\end{flushleft}
\end{restate}

\begin{claim}
\label{claim:rmcost:1}
\begin{flushleft}
$\cost(\BarC_{i}, C) \ge \eps^{z + 1} \cdot \cost(\cluster_{i}^{*}, C)$, for every index $i \in M' \cup \Bar{M}'$.
\end{flushleft}
\end{claim}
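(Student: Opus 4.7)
Fix $i \in M' \cup \Bar{M}'$. Abbreviate $A := \cost(\cluster_i^*, \{c_i^*\})$, $B := \cost(\cluster_i^*, C)$, $n_i := |\cluster_i^*|$, $r := (A/n_i)^{1/z}$, and $D := \dist(c_i^*, C)$. The argument proceeds in four steps.

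\emph{Step 1 (unfold the $M' \cup \Bar{M}'$ condition).} From the defining inequalities of $M'$ and $\Bar{M}'$ in \Cref{def:noncenter-subset}, combined with $\rmloss(\cdot), \DSvolume[\cdot] \ge 0$, I deduce $\rmgain(c_i^*) \le 0$; by the formula for $\rmgain$ this rearranges to
\[
B ~\ge~ 2^{(1+4\eps)z} \cdot A.
\]

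\emph{Step 2 (replace $\BarC_i$ by the superset $G$).} Define $G := \{v \in \cluster_i^* : \dist^z(v, c_i^*) \le 2^{\eps z} r^z\}$, so that $\BarC_i = G \setminus C$. Every $v \in G \cap C$ contributes $0$ to $\cost(\cdot, C)$, hence $\cost(\BarC_i, C) = \cost(G, C)$. Markov's inequality applied to $\sum_{v \in \cluster_i^*} \dist^z(v, c_i^*) = A = n_i r^z$ yields $|G| \ge n_i(1 - 2^{-\eps z})$.

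\emph{Step 3 (lower-bound $D$).} For every $v \in \cluster_i^*$, the triangle inequality $\dist(v, C) \le \dist(v, c_i^*) + D$ combined with \Cref{lem:triangle} at $\lambda = 1$ gives $\dist^z(v, C) \le 2^{z-1}\dist^z(v, c_i^*) + 2^{z-1} D^z$. Summing over $\cluster_i^*$ and invoking Step 1,
\[
2^{z-1}(A + n_i D^z) ~\ge~ B ~\ge~ 2^{(1+4\eps)z} A \implies n_i D^z ~\ge~ A \cdot (2 \cdot 2^{4\eps z} - 1) ~\ge~ 2^{4\eps z} A,
\]
so $D \ge 2^{4\eps} r$.

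\emph{Step 4 (combine).} For $v \in G$, the reverse triangle inequality gives
\[
\dist(v, C) ~\ge~ D - \dist(v, c_i^*) ~\ge~ (2^{4\eps} - 2^\eps) r ~=~ 2^\eps(2^{3\eps} - 1)\, r ~\ge~ 2\eps \cdot r,
\]
using $2^x - 1 \ge x \ln 2$ and $\eps \le 1/(10z)$. Therefore
\[
\cost(\BarC_i, C) ~=~ \cost(G, C) ~\ge~ |G| \cdot (2\eps r)^z ~\ge~ n_i \cdot (1 - 2^{-\eps z}) \cdot (2\eps)^z \cdot r^z.
\]
Combining $(1 - 2^{-\eps z}) \ge \tfrac{1}{2}\eps z \ln 2$ (Taylor, valid since $\eps z \le 1/10$) with $n_i r^z = A \ge B / 2^{(1+4\eps)z}$ and collecting the absolute constants gives $\cost(\BarC_i, C) \ge \eps^{z+1} \cdot B$.

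\emph{Main obstacle.} The structural skeleton is clean; the delicate part is the numerical bookkeeping at the end of Step~4, since with $\lambda = 1$ in Step~3 the final multiplicative factor scales like $\tfrac{z \ln 2}{2^{1+4\eps z}}$, which exceeds $1$ comfortably for $z \ge 3$ but is just tight for $z \in \{1, 2\}$. To close that small gap I would either optimize $\lambda = \lambda(B/A, z)$ in Step~3 (e.g., take $(1+\lambda)^{z-1} = \tfrac{1}{2}(B/A)$ for $z \ge 2$ so that the bound on $n_i D^z$ improves) or use the sharper Taylor bound $2^x - 1 \ge x\ln 2 + \tfrac{1}{2}(x \ln 2)^2$ to squeeze out the residual constant; both refinements are routine given the wide slack guaranteed by $B/A \ge 2^{(1+4\eps)z}$.
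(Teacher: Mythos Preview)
There is a genuine gap in Step~4. You write ``$n_i r^z = A \ge B / 2^{(1+4\eps)z}$'', but Step~1 established $B \ge 2^{(1+4\eps)z} A$, i.e., $A \le B/2^{(1+4\eps)z}$ --- the inequality points the other way. This is fatal for the argument as written: a lower bound of the form $\cost(\BarC_i, C) \ge c \cdot A$ cannot be converted into a lower bound $\ge c' \cdot B$ when all you know is that $A$ is \emph{small} relative to $B$. Your ``Main obstacle'' refinements (optimizing $\lambda$ in Step~3, sharper Taylor in Step~4) only adjust the constant multiplying $A$ and do not touch this issue. Incidentally, the paper's printed proof has exactly the same slip: its final displayed line concludes with $\eps^{z+1}\cdot\cost(\cluster_i^*,\{c_i^*\}) = \eps^{z+1} A$ rather than the claimed $\eps^{z+1}\cdot\cost(\cluster_i^*,C) = \eps^{z+1} B$.

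The fix is to avoid collapsing $D - 2^\eps r$ to its minimum value. Keep the bound $\cost(\BarC_i,C)\ge(1-2^{-\eps z})\,n_i\,(D-2^\eps r)^z$ with its full $D$-dependence, and pair it with the \emph{upper} bound on $B$ that the Step-3 triangle inequality already supplies, namely $B \le 2^{z-1}(n_i D^z + A)$. Writing $x=D/r$, it then suffices to verify
\[
(1-2^{-\eps z})\,(x-2^\eps)^z \;\ge\; \eps^{z+1}\cdot 2^{z-1}(x^z+1)
\qquad\text{for all } x \text{ with } x^z\ge 2\cdot 2^{4\eps z}-1,
\]
the constraint on $x$ coming from combining $B\ge 2^{(1+4\eps)z}A$ with that same upper bound on $B$. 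The quotient $(x-2^\eps)^z/(x^z+1)$ is increasing in $x$ (differentiate), so one only needs the endpoint $x_0$; there, for small $\eps$, $x_0 - 2^\eps \approx 7\eps\ln 2$ and $x_0^z + 1 = 2\cdot 2^{4\eps z}\approx 2$, yielding a leading constant $z(\ln 2)\,(7\ln 2/2)^z > 1$ for every $z\ge 1$ (tightest at $z=1$, where it is $\approx 1.68$).
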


\begin{claim}
\label{claim:rmcost:2}
\begin{flushleft}
The premise of \Cref{lem:sample}, namely $\cost(V, \calD.C) \ge \alpha_{z}(\eps) \cdot \OPT$, ensures that $\sum_{i \in M' \cup \Bar{M}'} \cost(\cluster_{i}^{*}, C) \ge \eps \cdot \cost(V, C)$.
\end{flushleft}
\end{claim}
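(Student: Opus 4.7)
The plan is to derive a single ``summed swap inequality'' via an Arya-style pairing argument and then split it using the defining conditions of $M'$ and $\bar{M}'$. Since every maintained center outside $M \cup L$ captures at least two optimal centers, a standard counting argument gives $|L| \ge |\bar{M}|/2$, so one can fix a pairing $\pi \colon \bar{M} \to L$ with $|\pi^{-1}(j)| \le 2$ for every $j \in L$. For each $i \in [k]$, define
\begin{align*}
    \Delta_i ~\eqdef~
    \begin{cases}
        \rmloss(c_i) + \rmgain(c_i^*) & i \in M, \\
        \rmloss(c_{\pi(i)}) + \rmgain(c_i^*) & i \in \bar{M}.
    \end{cases}
\end{align*}
I would then derive two-sided bounds on $\sum_{i \in [k]} \Delta_i$.

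For the upper bound, I would apply the standard detour argument: for $v \in \cluster_i \setminus \cluster_i^*$ with $i \in M$ (and analogously for $v \in \cluster_j$ with $j \in L$), the optimal cluster $\cluster_{i(v)}^*$ containing $v$ has $i(v) \ne i$ (resp., $c_{i(v)}^*$ is not captured by $c_j$), so the maintained center capturing $c_{i(v)}^*$ belongs to $C - c_i$ (resp., $C - c_j$); routing through $c_{i(v)}^*$ and invoking \Cref{lem:triangle} with a parameter $\lambda > 0$ yields $\dist^z(v, C - c_\cdot) \le 2^z(1+\lambda)^{z-1}\dist^z(v, C^*) + (1+1/\lambda)^{z-1}\dist^z(v, C)$. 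Summing (exploiting disjointness of $\{\cluster_i\}_{i \in M}$ from $\{\cluster_j\}_{j \in L}$, so that the weighted contribution $\sum_M (\cdot) + 2\sum_L (\cdot)$ covers each vertex at most twice in both the $C^*$- and $C$-terms) and combining with $\sum_{i \in [k]} \rmgain(c_i^*) = 2^{(1+4\eps)z}\OPT - \cost(V, C)$, I would arrive at
\begin{align*}
    \sum_{i \in [k]} \Delta_i ~\le~ \big(2^{1+z+2\eps z}(1+\lambda)^{z-1} + 2^{(1+4\eps)z}\big)\OPT + \big(2^{1+2\eps z}(1 + 1/\lambda)^{z-1} - 3\big)\cost(V, C).
\end{align*}

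For the lower bound, I would split $\sum_i \Delta_i$ at $M' \cup \bar{M}'$. For $i \notin M' \cup \bar{M}'$, negating the defining condition of $M'$ (resp.~$\bar{M}'$, specialized to $j = \pi(i)$) gives $\Delta_i > -2^{2\eps z}\eps \DSvolume[\cdot]\cost(V, C)$; summing and invoking $\sum_M \DSvolume[c_i] + 2\sum_L \DSvolume[c_j] \le 2$ (which follows from $\sum_{c \in C}\DSvolume[c] = 1$ in \Cref{lem:deletion-estimator}) bounds this part by $-2^{1+2\eps z}\eps\cost(V, C)$. For $i \in M' \cup \bar{M}'$, the trivial $\Delta_i \ge \rmgain(c_i^*) \ge -\cost(\cluster_i^*, C)$ applies (since $\rmloss \ge 0$). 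Combining both sides and rearranging yield
\begin{align*}
    \sum_{i \in M' \cup \bar{M}'} \cost(\cluster_i^*, C) ~\ge~ \big(3 - 2^{1+2\eps z}((1+1/\lambda)^{z-1} + \eps)\big)\cost(V, C) - \big(2^{1+z+2\eps z}(1+\lambda)^{z-1} + 2^{(1+4\eps)z}\big)\OPT.
\end{align*}
Taking $\lambda$ to be the minimizer in the definition of $\alpha_z(\eps)$ and substituting the premise $\cost(V, C) \ge \alpha_z(\eps)\OPT$, the $\OPT$ term cancels exactly $(3 - \eps - 2^{1+2\eps z}((1+1/\lambda)^{z-1} + \eps))\cost(V, C)$ worth of the first term, leaving the residual $\eps \cdot \cost(V, C)$, as desired.

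The main obstacle will be tracking the constants precisely in the summed swap inequality, especially securing the coefficient $3$ on $\cost(V, C)$: this relies on (i) $M \cap L = \emptyset$ together with the disjointness of $\{\cluster_i\}_{i \in M}$ from $\{\cluster_j\}_{j \in L}$; (ii) the pairing bound $|\pi^{-1}(j)| \le 2$; and (iii) retaining (rather than dropping) the negative ``$-B$'' term from the $\rmloss$ expansion under the mild condition $2^{2\eps z}(1+1/\lambda)^{z-1} \ge 1$, which holds automatically for every $z \ge 1$ and $\lambda > 0$. Any slack in these steps would shift the coefficient from $3$ toward $1$ and thereby worsen $\alpha_z(\eps)$ considerably.
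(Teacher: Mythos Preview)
Your proposal is correct and follows essentially the same Arya-style swap analysis as the paper, arriving at the identical final inequality; the only cosmetic difference is that you fix an explicit pairing $\pi\colon \bar{M}\to L$ with $|\pi^{-1}(j)|\le 2$, whereas the paper instead averages the per-$j$ inequality over all $j\in L$ (using $|\bar{M}\setminus\bar{M}'|/|L|\le 2$) when handling unmatched indices. Both routes invoke the same detour bound (the paper's \Cref{claim:rmloss}), the same counting $|L|\ge|\bar{M}|/2$, and the same volume identity $\sum_{c}\DSvolume[c]=1$, so neither buys anything over the other.
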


\begin{proof}[Proof of \Cref{lem:noncenter-subset}, \Cref{lem:noncenter-subset:2} (Assuming \Cref{claim:rmcost:1,claim:rmcost:2})]
Following \Cref{claim:rmcost:1,claim:rmcost:2}:
\begin{align*}
    \cost(\BarC, C)
    ~=~ \sum_{i \in M' \cup \Bar{M}'} \cost(\BarC_{i}, C)
    ~\ge~ \sum_{i \in M' \cup \Bar{M}'} \eps^{z + 1} \cdot \cost(\cluster_{i}^{*}, C)
    ~\ge~ \eps^{z + 2} \cdot \cost(V, C).
\end{align*}
This finishes the proof of \Cref{lem:noncenter-subset:2} of \Cref{lem:noncenter-subset}.
\end{proof}

Let us first establish \Cref{claim:rmcost:1}, as follows.

\begin{proof}[Proof of \Cref{claim:rmcost:1}]
(\Cref{def:noncenter-subset}) Every matched index $i \in M'$ ensures $\rmloss(c_{i}) + \rmgain(c_{i}^{*}) \le 0$,
while every unmatched index $i \in \Bar{M}'$ ensures $\exists j \in L \colon \rmloss(c_{j}) + \rmgain(c_{i}^{*}) \le 0$; as mentioned, we always have $\rmloss(c) \ge 0$, for every maintained center $c \in C$ (of any kind).
Therefore, in either case $i \in M' \cup \Bar{M}'$, we always have $0 \ge \rmgain(c_{i}^{*})
\ge (1 + 2^{4\eps})^{z} \cdot \cost(\cluster_{i}^{*}, \{c_{i}^{*}\}) - \cost(\cluster_{i}^{*}, C)$, which after being rearranged gives
\begin{align*}
    (1 + 2^{4\eps}) \cdot \cost(\cluster_{i}^{*}, \{c_{i}^{*}\})
    & ~\le~ \tfrac{1}{(1 + 2^{4\eps})^{z - 1}} \cdot \cost(\cluster_{i}^{*}, C) \\
    & ~\le~ \tfrac{1}{(1 + 2^{4\eps})^{z - 1}} \sum_{v \in \cluster_{i}^{*}} \big(\dist(v, c_{i}^{*}) + \dist(c_{i}^{*}, C)\big)^{z} \\
    & ~\le~ \cost(\cluster_{i}^{*}, \{c_{i}^{*}\})
    + \tfrac{1}{2^{4\eps(z - 1)}} \cdot |\cluster_{i}^{*}| \cdot \dist^{z}(c_{i}^{*}, C).
\end{align*}
Here, the second step applies the triangle inequality, and the last step applies \Cref{lem:triangle}, by setting the parameter $\lambda = 2^{4\eps}$.
Further rearranging this equation gives
\begin{align*}
    \dist(c_{i}^{*}, C)
    ~\ge~ 2^{4\eps} \cdot \tfrac{1}{|\cluster_{i}^{*}|^{1/z}} \cdot \big(\cost(\cluster_{i}^{*}, \{c_{i}^{*}\})\big)^{1/z}.
\end{align*}
Regarding every noncenter subcluster $\BarC_{i} = \big\{v \in \cluster_{i}^{*} \setminus C \bigmid \dist^{z}(v, c_{i}^{*}) \le 2^{\eps z} \cdot \tfrac{1}{|\cluster_{i}^{*}|} \cdot \cost(\cluster_{i}^{*}, \{c_{i}^{*}\})\big\}$ (\Cref{def:noncenter-subset}), every noncenter $v \in \BarC_{i} \subseteq \cluster_{i}^{*} \setminus C$ therein satisfies that
\begin{align*}
    \dist(v, c_{i}^{*})
~\le~ 2^{\eps} \cdot \tfrac{1}{|\cluster_{i}^{*}|^{1/z}} \cdot \big(\cost(\cluster_{i}^{*}, \{c_{i}^{*}\})\big)^{1/z}.
\end{align*}
Moreover, a simple counting argument implies that $\tfrac{|\cluster_{i}^{*} \setminus \BarC_{i}|}{|\cluster_{i}^{*}|} < 2^{-\eps z} \implies \tfrac{|\BarC_{i}|}{|\cluster_{i}^{*}|} > 1 - 2^{-\eps z}$.
Combining everything together, we can deduce that
\begin{align*}
    \cost(\BarC_{i}, C)
    ~=~ \sum_{v \in \BarC_{i}} \dist^{z}(v, C)
    & ~\ge~ \sum_{v \in \BarC_{i}} \big(\dist(c_{i}^{*}, C) - \dist(v, c_{i}^{*})\big)^{z} \\
    & ~\ge~ (2^{4\eps} - 2^{\eps})^{z} \cdot \tfrac{|\BarC_{i}|}{|\cluster_{i}^{*}|} \cdot \cost(\cluster_{i}^{*}, \{c_{i}^{*}\}) \\
    & ~\ge~ \big(2^{4\eps} - 2^{\eps}\big)^{z} \cdot (1 - 2^{-\eps z}) \cdot \cost(\cluster_{i}^{*}, \{c_{i}^{*}\}) \\
    & ~\ge~ \eps^{z + 1} \cdot \cost(\cluster_{i}^{*}, \{c_{i}^{*}\}).
\end{align*}
Here, the first step applies the triangle inequality, and the last step stems from elementary algebra.
This finishes the proof of \Cref{claim:rmcost:1}. 
\end{proof}

Before moving on to \Cref{claim:rmcost:2}, we first establish the following \Cref{claim:rmloss}, which upper-bounds the function values $\rmloss(c_{i})$ on all {\em matched/lonely} maintained centers $c_{i} \in C_{M} \cup C_{L}$ (\Cref{def:noncenter-subset}).

\begin{claim}
\label{claim:rmloss}
\begin{flushleft}
For every matched/lonely index $i \in M \cup L$ and any parameter $\lambda > 0$:
\begin{align*}
    \rmloss(c_{i})
    ~\le~ 2^{z + 2\eps z} \cdot (1 + \lambda)^{z - 1} \cdot \cost(\cluster_{i}, C^{*})
    + \big(2^{2\eps z} \cdot (1 + 1 / \lambda)^{z - 1} - 1\big) \cdot \cost(\cluster_{i}, C).
\end{align*}
\end{flushleft}
\end{claim}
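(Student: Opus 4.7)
The plan is to bound $\dist(v, C - c_i)$ for each relevant vertex $v$ via a reassignment argument, pass to $z$-th powers by the generalized triangle inequality (Proposition~\ref{lem:triangle}), sum over the relevant vertex subset, and finally rescale by $2^{2\eps z}$ and subtract $\cost(\cdot, C)$ to recover the formula for $\rmloss(c_i)$.

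The geometric core is the pointwise inequality
\[
\dist(v, C - c_i) ~\le~ 2\cdot\dist(v, C^{*}) + \dist(v, C),
\]
which I will establish whenever $c_i$ does not capture $v$'s optimal center $c^{*}_v \in C^{*}$. Since $c_i$ is then not the nearest maintained center to $c^{*}_v$, we have $\dist(c^{*}_v, C - c_i) = \dist(c^{*}_v, C)$; reassigning $v$ via $c^{*}_v$ to the nearest maintained center of $c^{*}_v$ and applying the triangle inequality twice yields
\[
\dist(v, C - c_i) ~\le~ \dist(v, c^{*}_v) + \dist(c^{*}_v, C) ~\le~ \dist(v, c^{*}_v) + \dist(c^{*}_v, v) + \dist(v, C) ~=~ 2\dist(v, C^{*}) + \dist(v, C),
\]
where in the second step I bound $\dist(c^{*}_v, C)$ by $\dist(c^{*}_v, c_v)$ for $c_v \in C$ any true nearest maintained center of $v$, and then apply the triangle inequality through $v$.

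The ``$c_i$ does not capture $c^{*}_v$'' hypothesis is exactly what the matched/lonely classification buys us. If $i \in L$, then $c_i$ captures no optimal center, so the hypothesis holds for every $v \in \cluster_i$. If $i \in M$, then $c_i$ captures only $c^{*}_i$, and for every $v \in \cluster_i \setminus \cluster_i^{*}$ we have $c^{*}_v \ne c^{*}_i$ (since $v \notin \cluster_i^{*}$), so the hypothesis again holds.

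Given the pointwise inequality, I apply Proposition~\ref{lem:triangle} with parameter $\lambda > 0$ to get
\[
\dist^{z}(v, C - c_i) ~\le~ 2^{z}(1+\lambda)^{z-1}\dist^{z}(v, C^{*}) + (1+1/\lambda)^{z-1}\dist^{z}(v, C).
\]
Summing over the relevant vertex set $U$ (equal to $\cluster_i$ when $i \in L$, or to $\cluster_i \setminus \cluster_i^{*}$ when $i \in M$), then multiplying by $2^{2\eps z}$ and subtracting $\cost(U, C)$, yields
\[
\rmloss(c_i) ~\le~ 2^{z + 2\eps z}(1+\lambda)^{z-1}\cost(U, C^{*}) + \bigl(2^{2\eps z}(1+1/\lambda)^{z-1} - 1\bigr)\cost(U, C).
\]
For $i \in L$ this is exactly the claim. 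For $i \in M$, note that $2^{2\eps z}(1+1/\lambda)^{z-1} - 1 \ge 0$, so enlarging $U = \cluster_i \setminus \cluster_i^{*}$ to $\cluster_i$ in both cost terms only loosens the bound. The only non-routine step is the pointwise distance-reassignment inequality, which hinges on correctly exploiting the ``not captured'' condition to keep the reassignment target inside $C - c_i$; everything else is mechanical arithmetic and a direct application of the power-sum bound.
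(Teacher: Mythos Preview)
Your proof is correct and follows the same overall structure as the paper's: establish the pointwise bound $\dist(v, C - c_i) \le 2\dist(v, C^{*}) + \dist(v, C)$ via the capture relationship, apply Proposition~\ref{lem:triangle}, sum, and enlarge the domain to $\cluster_i$ using nonnegativity of the coefficient $2^{2\eps z}(1+1/\lambda)^{z-1} - 1$.

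There is one small but genuine simplification in your argument relative to the paper. The paper introduces the \emph{restricted subclusters} $\cluster'_i \eqdef \{v \in \cluster_i \mid \argmin_{c \in C}\dist(v,c) = c_i\}$, because its triangle-inequality chain routes through $\dist(v, c_i)$ and needs $\dist(v, c_i) = \dist(v, C)$, which only holds on $\cluster'_i$ (recall $\DSc[v] = c_i$ merely gives a $2^{2\eps}$-approximate nearest center). It then separately observes that vertices in $\cluster_i \setminus \cluster'_i$ contribute zero to $\cost(\cdot, C - c_i) - \cost(\cdot, C)$. Your chain instead bounds $\dist(c^{*}_v, C) \le \dist(c^{*}_v, c_v)$ with $c_v$ the \emph{exact} nearest maintained center of $v$, so $\dist(v, c_v) = \dist(v, C)$ holds for every $v \in \cluster_i$ without restriction. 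This yields the same pointwise bound on all of $\cluster_i$ directly and removes the need for $\cluster'_i$ altogether---a tidy improvement.
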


\begin{proof}
Akin to the maintained clusters $\cluster_{i} = \{v \in V \mid \calD.\DSc[v] = c_{i}\}$ (\Cref{def:noncenter-subset}), we also consider the {\em (maintained) restricted subclusters} $\cluster'_{i} \eqdef \{v \in \cluster_{i} \mid \argmin_{c \in C} \dist(v, c) = c_{i}\} \subseteq \cluster_{i}$.\footnote{I.e., regarding this maintained cluster $\cluster_{i}$, its center $c_{i}$ (in comparison with other maintained centers $\in C$) is just a {\em $2^{2\eps}$-approximate nearest center} $\dist(v, c_{i}) \le 2^{2\eps} \cdot \dist(v, C)$ for a {\em generic} vertex $v \in \cluster_{i}$ (\Cref{lem:clustering,def:noncenter-subset}), but is an {\em exact nearest center} $\dist(v, c_{i}) = \dist(v, C)$ for a {\em restricted} vertex $v \in \cluster'_{i}$.}
We prove \Cref{claim:rmloss} by reasoning about a matched index $i \in M$ versus a lonely index $i \in L$ separately.

\vspace{.1in}
\noindent
{\bf Case~1: A matched index $i \in M$.}
We can reformulate the function value $\rmloss(c_{i})$, as follows.
\begin{align*}
    \rmloss(c_{i})
    & ~=~ 2^{2\eps z} \cdot \cost(\cluster_{i} \setminus \cluster_{i}^{*}, C - c_{i}) - \cost(\cluster_{i} \setminus \cluster_{i}^{*}, C) \\
    & ~=~ 2^{2\eps z} \cdot \big(\cost(\cluster_{i} \setminus \cluster_{i}^{*}, C - c_{i}) - \cost(\cluster_{i} \setminus \cluster_{i}^{*}, C)\big)
    + (2^{2\eps z} - 1) \cdot \cost(\cluster_{i} \setminus \cluster_{i}^{*}, C) \\
    & ~=~ 2^{2\eps z} \cdot \big(\cost(\cluster'_{i} \setminus \cluster_{i}^{*}, C - c_{i}) - \cost(\cluster'_{i} \setminus \cluster_{i}^{*}, C)\big)
    + (2^{2\eps z} - 1) \cdot \cost(\cluster_{i} \setminus \cluster_{i}^{*}, C).
    \hspace{.23cm}
\end{align*}
A vertex $v \in \cluster'_{i} \setminus \cluster_{i}^{*}$ locates in the center-$(c_{v}^{*} \eqdef \argmin_{c^{*} \in C^{*}} \dist(v, c^{*}))$ optimal cluster $\cluster_{v}^{*}$;
observe that $c_{v}^{*} \ne c_{i}^{*} \impliedby (v \in \cluster_{v}^{*}) \wedge (v \notin \cluster_{i}^{*})$.
This optimal center $c_{v}^{*}$ is captured by its nearest maintained center $c_{v} \eqdef \argmin_{c \in C} \dist(c_{v}^{*}, c)$;
observe that $c_{v} \in C - c_{i} \impliedby c_{v} \ne c_{i} \impliedby i \in M$,
namely $c_{v}^{*} \ne c_{i}^{*}$ but the latter $c_{i}^{*}$ (\Cref{def:noncenter-subset}) is the unique optimal center $\in C^{*}$ captured by the index-$(i \in M)$ {\em matched} maintained center $c_{i}$.
Consequently, we can deduce that, for any parameter $\lambda > 0$,
\begin{align*}
    \dist^{z}(v, C - c_{i})
    & ~\le~ \dist^{z}(v, c_{v})
    \tag{$c_{v} \in C - c_{i}$} \\
    & ~\le~ \big(\dist(v, c_{v}^{*}) + \dist(c_{v}^{*}, c_{v})\big)^{z}
    \tag{Triangle inequality} \\
    & ~\le~ \big(\dist(v, c_{v}^{*}) + \dist(c_{v}^{*}, c_{i})\big)^{z}
    \tag{Definition of $c_{v}$} \\
    & ~\le~ \big(2\dist(v, c_{v}^{*}) + \dist(v, c_{i})\big)^{z}
    \tag{Triangle inequality} \\
    & ~=~ \big(2\dist(v, C^{*}) + \dist(v, C)\big)^{z}
    \tag{Definitions of $c_{v}^{*}$ and $\cluster'_{i}$} \\
    & ~\le~ (1 + \lambda)^{z - 1} \cdot 2^{z} \cdot \dist^{z}(v, C^{*})
    + (1 + 1 / \lambda)^{z - 1} \cdot \dist^{z}(v, C).
    \hspace{3.21cm}
\end{align*}
Here, the last step applies \Cref{lem:triangle}.
Combining the above two equations gives
\begin{align*}
    \rmloss(c_{i})
    & ~\le~ 2^{2\eps z} \cdot \big((1 + \lambda)^{z - 1} \cdot 2^{z} \cdot \cost(\cluster'_{i} \setminus \cluster_{i}^{*}, C^{*})
    + \big((1 + 1 / \lambda)^{z - 1} - 1\big) \cdot \cost(\cluster'_{i} \setminus \cluster_{i}^{*}, C)\big) \\
    & \phantom{~=~}\qquad + (2^{2\eps z} - 1) \cdot \cost(\cluster_{i} \setminus \cluster_{i}^{*}, C) \\
    & ~\le~ 2^{2\eps z} \cdot \big((1 + \lambda)^{z - 1} \cdot 2^{z} \cdot \cost(\cluster_{i}, C^{*})
    + \big((1 + 1 / \lambda)^{z - 1} - 1\big) \cdot \cost(\cluster_{i}, C)\big) \\
    & \phantom{~=~}\qquad + (2^{2\eps z} - 1) \cdot \cost(\cluster_{i}, C) \\
    & ~=~ 2^{z + 2\eps z} \cdot (1 + \lambda)^{z - 1} \cdot \cost(\cluster_{i}, C^{*})
    + \big(2^{2\eps z} \cdot (1 + 1 / \lambda)^{z - 1} - 1\big) \cdot \cost(\cluster_{i}, C).
\end{align*}
This finishes the proof of \Cref{claim:rmloss}, in case of a {\em matched} index $i \in M$.

\vspace{.1in}
\noindent
{\bf Case~2: A lonely index $i \in L$.}
We likewise reformulate the function value $\rmloss(c_{i})$, as follows.
\begin{align*}
    \rmloss(c_{i})
& ~=~ 2^{2\eps z} \cdot \big(\cost(\cluster'_{i}, C - c_{i}) - \cost(\cluster'_{i}, C)\big)
    + (2^{2\eps z} - 1) \cdot \cost(\cluster_{i}, C).
    \hspace{2.73cm}
\end{align*}
A vertex $v \in \cluster'_{i}$ locates in the center-$(c_{v}^{*} \eqdef \argmin_{c^{*} \in C^{*}} \dist(v, c^{*}))$ optimal cluster $\cluster_{v}^{*}$; this optimal center $c_{v}^{*}$ is captured by its nearest maintained center $c_{v} \eqdef \argmin_{c \in C} \dist(c_{v}^{*}, c)$.
We observe that $c_{v} \in C - c_{i} \impliedby c_{v} \ne c_{i} \impliedby i \in L$, namely no optimal center $\in C^{*}$ (\Cref{def:noncenter-subset}) is captured by the index-$(i \in L)$ {\em lonely} maintained center $c_{i}$.
Consequently, reapplying the arguments for {\bf Case~1}, we can likewise deduce that, for any parameter $\lambda > 0$,
\begin{align*}
    \dist^{z}(v, C - c_{i})
    & ~\le~ (1 + \lambda)^{z - 1} \cdot 2^{z} \cdot \dist^{z}(v, C^{*})
    + (1 + 1 / \lambda)^{z - 1} \cdot \dist^{z}(v, C).
    \hspace{3.20cm}
\end{align*}
Combining the above two equations and following the same steps as in {\bf Case~1}, we likewise have
\begin{align*}
    \rmloss(c_{i})
    & ~\le~ 2^{z + 2\eps z} \cdot (1 + \lambda)^{z - 1} \cdot \cost(\cluster_{i}, C^{*})
    + \big(2^{2\eps z} \cdot (1 + 1 / \lambda)^{z - 1} - 1\big) \cdot \cost(\cluster_{i}, C).
    \hspace{1.38cm}
\end{align*}
This finishes the proof of \Cref{claim:rmloss}, in case of a {\em lonely} index $i \in L$.
\end{proof}

Now we are ready to establish \Cref{claim:rmcost:2}, by leveraging the above \Cref{claim:rmloss}.

\begin{proof}[Proof of \Cref{claim:rmcost:2}]
\Cref{claim:rmcost:2} holds if and only if $\sum_{i \notin M' \cup \Bar{M}'} \cost(\cluster_{i}^{*}, C) < (1 - \eps) \cdot \cost(V, C)$, so we shall upper-bound the $\LHS$ of this equation.

We have $\rmgain(c_{i}^{*}) = 2^{(1 + 4\eps) z} \cdot \cost(\cluster_{i}^{*}, \{c_{i}^{*}\}) - \cost(\cluster_{i}^{*}, C)$, for every optimal center $c_{i}^{*} \in C^{*}$ (\Cref{def:noncenter-subset}).
The defining condition of the {\em matched index subset} $M' \subseteq M$ (\Cref{def:noncenter-subset}) implies that, for every {\em matched} index $i \in M \setminus M'$,
\begin{align*}
    \rmloss(c_{i}) + \rmgain(c_{i}^{*})
    & ~\ge~ -2^{2\eps z} \cdot \eps \cdot \DSvolume[c_{i}] \cdot \cost(V, C) \\
    \implies\qquad\quad
    \cost(\cluster_{i}^{*}, C)
    & ~\le~ \rmloss(c_{i})
    + 2^{2\eps z} \cdot \eps \cdot \DSvolume[c_{i}] \cdot \cost(V, C)
    \hspace{1.56cm} \\
    & \phantom{~=~}\qquad\qquad
    + 2^{(1 + 4\eps) z} \cdot \cost(\cluster_{i}^{*}, \{c_{i}^{*}\}).
\end{align*}
Further, the defining condition of the {\em unmatched index subset} $\Bar{M}' \subseteq \Bar{M}$ (\Cref{def:noncenter-subset}) implies that, for every {\em unmatched} index $i \in \Bar{M} \setminus \Bar{M}'$,
\begin{align*}
    \rmloss(c_{j}) + \rmgain(c_{i}^{*})
    & ~\ge~ -2^{2\eps z} \cdot \eps \cdot \DSvolume[c_{j}] \cdot \cost(V, C),\qquad\qquad
    \forall j \in L, \\
    \implies\qquad\quad
    \cost(\cluster_{i}^{*}, C)
    & ~\le~ \tfrac{1}{|L|} \sum_{j \in L} \big(\rmloss(c_{j}) + 2^{2\eps z} \cdot \eps \cdot \DSvolume[c_{j}] \cdot \cost(V, C)\big) \\
    & \phantom{~=~}\qquad\qquad\qquad\qquad
    + 2^{(1 + 4\eps) z} \cdot \cost(\cluster_{i}^{*}, \{c_{i}^{*}\}). 
\end{align*}
From a simple counting argument about (\Cref{def:noncenter-subset}) the number of centers
$|\Bar{M}| + |M|
= |C|
= k
= |C^{*}|
\ge |M| + 0 \cdot |L| + 2 \cdot |\Bar{M} \setminus L|
= |M| + 2|\Bar{M}| - 2|L|$, we can conclude with $|L| \ge \frac{|\Bar{M}|}{2} \ge \frac{|\Bar{M} \setminus \Bar{M}'|}{2}$.
Also, we have $\sum_{c \in C} \DSvolume[c] = 1$ (\Cref{def:super-effective}).

\noindent
Combining everything together, we can deduce that, for any parameter $\lambda > 0$,
\begin{align}
    \sum_{i \notin M' \cup \Bar{M}'} \cost(\cluster_{i}^{*}, C)
    & ~\le~ \Big(\sum_{i \in M \setminus M'} \rmloss(c_{i}) + \tfrac{|\Bar{M} \setminus \Bar{M}'|}{|L|} \sum_{j \in L} \rmloss(c_{j})\Big)
    \notag \\
    & \phantom{~=~}\qquad + \Big(\sum_{i \in M \setminus M'} \DSvolume[c_{i}] + \tfrac{|\Bar{M} \setminus \Bar{M}'|}{|L|} \sum_{j \in L} \DSvolume[c_{j}]\Big) \cdot 2^{2\eps z} \cdot \eps \cdot \cost(V, C)
    \notag \\
    & \phantom{~=~}\qquad + 2^{(1 + 4\eps) z} \cdot \sum_{i \notin M' \cup \Bar{M}'} \cost(\cluster_{i}^{*}, \{c_{i}^{*}\})
    \notag \\
    & ~\le~ \sum_{i \in M \cup L} 2\rmloss(c_{i})
    + 2^{1 + 2\eps z} \cdot \eps \cdot \cost(V, C)
    + 2^{(1 + 4\eps) z} \cdot \OPT
    \notag \\
    & ~\le~ \sum_{i \in M \cup L} \Big(2^{1 + z + 2\eps z} \cdot (1 + \lambda)^{z - 1} \cdot \cost(\cluster_{i}, C^{*})
    \notag \\
    & \phantom{~=~}\qquad + (2^{1 + 2\eps z} \cdot (1 + 1 / \lambda)^{z - 1} - 2) \cdot \cost(\cluster_{i}, C)\Big)
    \notag \\
    & \phantom{~=~}\qquad + 2^{1 + 2\eps z} \cdot \eps \cdot \cost(V, C)
    + 2^{(1 + 4\eps) z} \cdot \OPT
    \notag \\
    & ~\le~ 2^{1 + z + 2\eps z} \cdot (1 + \lambda)^{z - 1} \cdot \OPT
    \notag \\
    & \phantom{~=~}\qquad + (2^{1 + 2\eps z} \cdot (1 + 1 / \lambda)^{z - 1} - 2) \cdot \cost(V, C)
    \notag \\
    & \phantom{~=~}\qquad + 2^{1 + 2\eps z} \cdot \eps \cdot \cost(V, C)
    + 2^{(1 + 4\eps) z} \cdot \OPT
    \notag \\
    & ~=~ \big(2^{1 + z + 2\eps z} \cdot (1 + \lambda)^{z - 1} + 2^{(1 + 4\eps) z}\big) \cdot \OPT
    \notag \\
    & \phantom{~=~}\qquad + \big(2^{1 + 2\eps z} \cdot ((1 + 1 / \lambda)^{z - 1} + \eps) - 2\big) \cdot \cost(V, C).
    \label{eq:rmcost:1:1}
\end{align}
Here, the third step applies \Cref{claim:rmloss}, for every {\em matched/lonely} index $i \in M \cup L$, and the last two steps stems from elementary algebra.

Rearranging \Cref{eq:rmcost:1:1} gives the premise of \Cref{claim:rmcost:2}, essentially, as follows.
\begin{align*}
    \text{\Cref{claim:rmcost:2}}
    & ~\impliedby~ \RHS~\text{of}~\eqref{eq:rmcost:1:1}
    ~\le~ (1 - \eps) \cdot \cost(V, C) \\
    & ~\iff~
    \frac{\cost(V, C)}{\OPT}
    ~\ge~ \alpha_{z}(\eps, \lambda)
    ~=~ \frac{2^{1 + z + 2\eps z} \cdot (1 + \lambda)^{z - 1} + 2^{(1 + 4\eps) z}}{3 - \eps - 2^{1 + 2\eps z} \cdot ((1 + 1 / \lambda)^{z - 1} + \eps)}
    ~>~ 0.
    \hspace{.03cm}
\end{align*}
As mentioned in \Cref{footnote:alpha}, the only constraint on the parameter $\lambda > 0$ is that $\alpha_{z}(\eps, \lambda) > 0 \iff \lambda > ((\frac{3 - \eps}{2^{1 + 2\eps z}} - \eps)^{1 / (z - 1)} - 1)^{-1}$.
We thus conclude with
\begin{align*}
    \text{\Cref{claim:rmcost:2}}
    & ~\impliedby~
    \frac{\cost(V, C)}{\OPT}
    ~\ge~ \min_{\lambda} \bigg\{\,\alpha_{z}(\eps, \lambda) \,\biggmid\, \lambda > \frac{1}{((3 - \eps) / 2^{1 + 2\eps z} - \eps)^{1 / (z - 1)} - 1}\,\bigg\} \\
    & ~\iff~
    \cost(V, C)
    ~\ge~ \alpha_{z}(\eps) \cdot \OPT.
    \qedhere
\end{align*}
\end{proof}

\subsection{Performance guarantees of our local search}
\label{subsec:local-search}

In this subsection, we will establish the performance guarantees of our algorithm {\LocalSearch}, by combining everything presented thus far with additional arguments.
The reader can reference \Cref{fig:LocalSearch} for the algorithm {\LocalSearch} and the subroutine {\TestEffectiveness}, as well as \Cref{fig:initialize,fig:insert,fig:delete,fig:sample-noncenter} for the operations {\Initialize}, {\Insert}, {\Delete}, and {\SampleNoncenter}.

For ease of presentation, we would establish the correctness of our algorithm {\LocalSearch} first (\Cref{thm:LS-correctness}) and its running time afterward (\Cref{thm:LS-runtime}).

\begin{theorem}[{\LocalSearch}; correctness]
\label{thm:LS-correctness}
\begin{flushleft}
For the randomized algorithm $\LocalSearch(G, k)$:
\begin{enumerate}[font = {\em\bfseries}]
    \item \label{thm:LS-correctness:feasible}
    It will return a feasible solution $\Cterminal \in V^{k}$.
    
    \item \label{thm:LS-correctness:approx}
    This random solution $\Cterminal$ is an $\alpha_{z}(\eps)$-approximation to the optimal solution $C^{*}$, with probability $\ge 1 - n^{-\Theta(1)}$,
    by setting the parameter $\DSnumber = \eps^{-\Theta(z)} \log(n)$ large enough.
\end{enumerate}
\end{flushleft}
\end{theorem}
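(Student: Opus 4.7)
The plan is to split the proof into Item~1 (feasibility) and Item~2 (approximation). Item~1 is immediate by induction on the iteration count: $\Cinitial \in V^{k}$ by \Cref{prop:naive_solution}, and each iteration (Line~\ref{alg:LS:swap}) performs one $\Insert$ followed by one $\Delete$, both of which preserve $|C|=k$ by \Cref{lem:insert,lem:delete}. Hence the returned $\Cterminal = \calD.C$ lies in $V^{k}$.

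For Item~2, I would argue by contrapositive. The algorithm terminates in Line~\ref{alg:LS:return} only when every one of the $\DSnumber$ calls $\TestEffectiveness(\calD, \calD'_{\sigma}, \cinsert_{\sigma})$ returns $\failure$. By the contrapositive of \Cref{lem:test}~(Item~2), each such $\failure$ certifies that the corresponding sample $\cinsert_{\sigma}$ is \emph{not} super-effective. On the other hand, \Cref{lem:sample} guarantees that whenever $\cost(V, \calD.C) \ge \alpha_{z}(\eps) \cdot \OPT$, an independent draw $\cinsert_{\sigma} \gets \calD.\SampleNoncenter()$ is super-effective with probability at least $\eps^{4z}$. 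Conditional on $\calD$, the $\DSnumber$ samples in one iteration are mutually independent, so the probability that every sample fails at an iteration where $\cost(V, \calD.C) \ge \alpha_{z}(\eps) \OPT$ is at most $(1 - \eps^{4z})^{\DSnumber} \le \exp(-\eps^{4z}\DSnumber)$.

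To close the argument, I would cap the total number of iterations $T$ by a polynomial in $n$ and take a union bound. By \Cref{def:effective}, every successful iteration shrinks $\calD.\DScost$ by a factor at most $1 - (\eps/2)\cdot\calD'.\DSvolume[\cdelete]$, and \Cref{lem:deletion-estimator} gives the uniform lower bound $\calD'.\DSvolume[\cdelete] \ge 1/(2m|\calJ|) = \Omega(1/(m\log n))$. Combined with the upper bound $\calD.\DScost \le 2^{2\eps z} n^{z+1}\cdot \OPT$ right after $\Initialize$ (from \Cref{lem:DScost} applied to the $n^{z+1}$-approximation of \Cref{prop:naive_solution}) and the standing lower bound $\calD.\DScost \ge \cost(V, \calD.C) \ge \OPT$, a telescoping-log calculation yields $T = O(\eps^{-1}\,zm\log^{2}(n)) = \poly(n)/\eps$. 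Choosing the hidden constant in $\DSnumber = \eps^{-\Theta(z)}\log n$ large enough forces $\exp(-\eps^{4z}\DSnumber) \le n^{-C'}$ for any prescribed $C'$, and a union bound over the at-most-$T$ potentially-terminating iterations delivers the stated failure probability $n^{-\Theta(1)}$.

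The main obstacle I anticipate is calibrating constants so that $\DSnumber$ simultaneously absorbs $\log T = \Theta(\log n)$ from the union bound and the desired high-probability exponent; this is consistent with the advertised $\DSnumber = \eps^{-\Theta(z)}\log n$ provided the hidden constant is fixed only after the iteration bound $T$. A minor subtlety is that \Cref{lem:sample} must be applied with fresh randomness conditional on the history, which is legitimate because $\SampleNoncenter$ draws new independent samples each call and is otherwise a deterministic function of the maintained data structure, so the conditional failure probabilities compose as claimed.
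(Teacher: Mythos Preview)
Your proposal is correct and follows essentially the same approach as the paper: feasibility by induction on swaps, a bound $T = O(\eps^{-1} z m \log^{2} n)$ on the number of iterations via the multiplicative decrease of $\DScost$ guaranteed by super-effectiveness (the paper writes this as $\bar{\ell} = \lceil 8z\eps^{-1}m|\calJ|\ln n\rceil$), a per-iteration failure bound $(1-\eps^{4z})^{\DSnumber}$ from \Cref{lem:sample} and the contrapositive of \Cref{lem:test}, and a union bound. One small note: the paper uses the exact equality $\calD.\DScost = \cost(V,\Cinitial)$ after $\Initialize$ (\Cref{lem:initialize}, Item~\ref{lem:initialize:DScost}) rather than the looser $2^{2\eps z}$-factor bound you invoke, and it explicitly cites \Cref{lem:test} Item~\ref{lem:test:1} to justify that every returned pair is super-effective---you use this implicitly when asserting the per-iteration decrease of $\DScost$.
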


\begin{proof}
The algorithm {\LocalSearch} starts by finding an initial $n^{z + 1}$-approximate feasible solution $\Cinitial \in V^{k}$, based on  \Cref{prop:naive_solution}, and initializing the data structure $\calD$ (\Cref{fig:initialize}), through the operation $\Initialize(G, \Cinitial)$; likewise for the other data structures $\{\calD'_{\sigma}\}_{\sigma \in [\DSnumber]}$.
\hfill
(Lines~\ref{alg:LS:naive} and \ref{alg:LS:initialize})

\noindent
Afterward, the algorithm {\LocalSearch} starts iterating Lines~\ref{alg:LS:sample} to \ref{alg:LS:return}:
\hfill
(Line~\ref{alg:LS:repeat})

\noindent
Specifically, a single iteration samples a number of $\DSnumber$ noncenters, $\cinsert_{\sigma} \gets \calD.\SampleNoncenter()$ for $\sigma \in [\DSnumber]$, and schedules a number of $\DSnumber$ subroutines, $\TestEffectiveness(\calD, \calD'_{\sigma}, \cinsert_{\sigma})$ for $\sigma \in [\DSnumber]$, step by step using the round-robin algorithm.
\hfill
(Lines~\ref{alg:LS:sample} and \ref{alg:LS:schedule})

\noindent
Depending on the outputs of these $\DSnumber$ subroutines, an iteration falls into either {\bf Case~1} or {\bf Case~2}:

\vspace{.1in}
\noindent
{\bf Case~1:}
One of the $\DSnumber$ subroutines, say $\sigma^{*} \in [\DSnumber]$, first returns a pair $(\cinsert_{\sigma^{*}}, \cdelete_{\sigma^{*}})$.
\hfill
(Line~\ref{alg:LS:positive})

\noindent
In this case, the algorithm terminates all (ongoing) subroutines and modifies the data structure $\calD$ through the operations $\calD.\Insert(\cinsert_{\sigma^{*}})$ and $\calD.\Delete(\cdelete_{\sigma^{*}})$; likewise for $\{\calD'_{\sigma}\}_{\sigma \in [\DSnumber]}$.
\hfill
(Lines~\ref{alg:LS:terminate} and \ref{alg:LS:swap})

\noindent
\Comment{Let us call such a ``{\bf Case~1}'' iteration a {\em positive} iteration.}

\noindent
After this {\em positive} iteration, the algorithm proceeds to the next iteration.

\vspace{.1in}
\noindent
{\bf Case~2:}
All of the $\DSnumber$ subroutines (terminate and) return {\failure}'s.
\hfill
(Line~\ref{alg:LS:negative})

\noindent
In this case, the algorithm returns $\Cterminal \gets \calD.C$ as its solution.
\hfill
(Line~\ref{alg:LS:return})

\noindent
\Comment{Let us call such a ``{\bf Case~2}'' iteration a {\em negative} iteration.}

\noindent
After this {\em negative} iteration, the algorithm terminates.

\vspace{.1in}
\noindent
To summarize, the algorithm {\LocalSearch} experiences first a number of $\ell \ge 0$ positive iterations -- we assume for the moment that $\ell$ is finite, but will verify this soon after -- and then the terminal negative iteration, thus returning a solution $\Cterminal \gets \calD.C$.

\newcommand{\Dinitial}{\calD^{\sf init}}
\newcommand{\Dterminal}{\calD^{\sf term}}

\noindent
\Comment{For ease of presentation, below we denote by $\Dinitial$ the data structure $\calD$ initialized in Line~\ref{alg:LS:initialize} and by $\Dterminal$ the (unmodified) data structure $\calD$ in the terminal negative iteration.}

As the whole process of the algorithm {\LocalSearch} now are clear, we can prove \Cref{thm:LS-correctness}.

\vspace{.1in}
\noindent
{\bf \Cref{thm:LS-correctness:feasible}.}
The initial solution $\Cinitial$ found in Line~\ref{alg:LS:naive} {\em is} feasible $|\Cinitial| = k$ (\Cref{prop:naive_solution}), and so is the maintained center set $|\Dinitial.C| = |\Cinitial| = k$ (\Cref{lem:initialize:center} of \Cref{lem:initialize}) of the data structure $\Dinitial$ initialized in Line~\ref{alg:LS:initialize}.
Afterward, every positive iteration swaps a noncenter $\cinsert_{\sigma^{*}} \notin \calD.C$ and a center $\cdelete_{\sigma^{*}} \in \calD.C$ (Lines~\ref{alg:LS:positive} to \ref{alg:LS:swap}), preserving the feasibility $|\calD.C + \cinsert_{\sigma^{*}} - \cdelete_{\sigma^{*}}| = |\calD.C| = k$.
Moreover, the terminal negative iteration returns $\Cterminal \gets \Dterminal.C$ the center set maintained by the terminal data structure $\Dterminal$, hence $|\Cterminal| = |\Dterminal.C| = k$.
In sum, the returned solution $\Cterminal$ {\em is} feasible.

\vspace{.1in}
\noindent
{\bf \Cref{thm:LS-correctness:approx}.}
To show that the algorithm {\LocalSearch} successes with high probability, we would first prove the following upper bound on the number $\ell$ of positive iterations. (Notice that $0 < \eps < \frac{1}{10z}$.)
\begin{align}
\label{eq:positive-iteration}
    \ell
    ~\le~ \Bar{\ell}
    ~\eqdef~ \lceil 8z \eps^{-1} m |\calJ| \ln(n) \rceil
    ~=~ \eps^{-O(1)} m \log^{2}(n).
\end{align}
In every positive iteration $i \ge 1$, the pair $(\cinsert_{i}, \cdelete_{i})$ (say) obtained in Line~\ref{alg:LS:positive} must be super-effective (\Cref{lem:test:1} of \Cref{lem:test}), so the subsequent invocation of the operations $\calD.\Insert(\cinsert_{i})$ and $\calD.\Delete(\cdelete_{i})$ in Line~\ref{alg:LS:swap} decreases $\calD.\DScost$ (\Cref{def:effective} and \Cref{lem:deletion-estimator:DSvolume} of \Cref{lem:deletion-estimator}) by a multiplicative factor of $\le 1 - (\eps / 2) \cdot \calD'.\DSvolume[\cdelete_{i}]
\le e^{-(\eps / 2) \cdot \calD'.\DSvolume[\cdelete_{i}]}$.
We thus deduce that
\begin{align}
    &\hspace{-1.09cm} \sum_{i \in [\ell]} \calD'.\DSvolume[\cdelete_{i}]
    ~\le~ 4z \eps^{-1} \ln(n)
    ~=~ \eps^{-O(1)} \log(n).
    \label{eq:total-volume} \\
    \impliedby \OPT
    & ~\le~ \cost(V, \Dterminal.C)
    \tag{Optimality of $\OPT$} \\
    & ~\le~ \Dterminal.\DScost
    \tag{\Cref{lem:DScost}} \\
    & ~\le~ \frac{\Dinitial.\DScost}{\exp((\eps / 2) \sum_{i \in [\ell]} \calD'.\DSvolume[\cdelete_{i}])}
    \tag{$\ell$ positive iterations} \\
    & ~=~ \frac{\cost(V, \Cinitial)}{\exp((\eps / 2) \sum_{i \in [\ell]} \calD'.\DSvolume[\cdelete_{i}])}
    \tag{\Cref{lem:initialize:center,lem:initialize:DScost} of \Cref{lem:initialize}} \\
    & ~\le~ \frac{n^{z + 1} \cdot \OPT}{\exp((\eps / 2) \sum_{i \in [\ell]} \calD'.\DSvolume[\cdelete_{i}])}.
    \tag{\Cref{prop:naive_solution}}
\end{align}
So we can infer \Cref{eq:positive-iteration} from a combination of \Cref{eq:total-volume} and that $\calD'.\DSvolume[\cdelete_{i}] \ge \frac{1}{2m |\calJ|}$, for every positive iteration $i \ge 1$ (\Cref{lem:deletion-estimator:DSvolume} of \Cref{lem:deletion-estimator}).
The above arguments also indicate that, if our algorithm {\LocalSearch} could experience the iteration $(\Bar{\ell} + 1)$, then:\\
(i)~The maintained center set $\calD.C$ must be an $\alpha_{z}(\eps)$-approximation to the optimal solution $C^{*}$.\\
(ii)~This iteration $(\Bar{\ell} + 1)$ must be the terminal negative iteration and returns the maintained center set $\Cterminal \gets \calD.C$ the solution. Thus, our algorithm {\LocalSearch} succeeds.

The returned random solution $\Cterminal \in V^{k}$ (Line~\ref{alg:LS:return}) fails to be an $\alpha_{z}(\eps)$-approximation to the optimal solution $C^{*}$ if and only if both events $\calE^{1}_{i}$ and $\calE^{2}_{i}$ occur in some iteration $i \in [\Bar{\ell}]$:
\begin{align*}
    \calE^{1}_{i}
    & ~\eqdef~ \big\{\,\text{$\cost(V, \calD.C) \geq \alpha_{z}(\eps) \cdot \OPT$ in the iteration $i$}\,\big\}. \\
    \calE^{2}_{i}
    & ~\eqdef~ \big\{\,\text{all of the $\DSnumber$ subroutines in the iteration $i$ return {\failure}'s}\,\big\}.
\end{align*}
I.e., the maintained center set $\calD.C \in V^{k}$
(provided $\calE^{1}_{i}$) has yet been an $\alpha_{z}(\eps)$-approximation to the optimal solution $C^{*}$ but
(provided $\calE^{2}_{i}$) this iteration {\em is} the terminal negative iteration and returns the maintained center set $\Cterminal \gets \calD.C$ the solution.

Combining all above observations, we can upper-bound the overall failure probability as follows:
\begin{align*}
    \Pr\big[\,\text{{\LocalSearch} fails}\,\big]
    & ~=~ \Pr\big[\,\cup_{i \in [\Bar{\ell}]} (\calE^{1}_{i} \cap \calE^{2}_{i})\,\big] \\
    & ~\le~ \Bar{\ell} \cdot (1 - \eps^{4z})^{\DSnumber}
    \tag{\Cref{lem:test,lem:sample} and union bound} \\
    & ~\le~ \eps^{-O(1)} m \log^{2}(n) \cdot (1 - \eps^{4z})^{\DSnumber}
    \tag{\Cref{eq:positive-iteration}} \\
    & ~\le~ n^{-\Theta(1)}.
\end{align*}
Here, the last step holds whenever the parameter $\DSnumber = \eps^{-\Theta(z)} \log(n)$ is large enough.

This finishes the proof of \Cref{thm:LS-correctness}.
\end{proof}

\newcommand{\Phiinitial}{\Phi^{\sf init}}
\newcommand{\Phiterminal}{\Phi^{\sf term}}

Finally, we turn to bounding the running time of our algorithm {\LocalSearch} (\Cref{thm:LS-runtime}).
For ease of reference, we rephrase our previous results about running time and potential.

\begin{restate}[\Cref{prop:naive_solution}]
\begin{flushleft}
An $n^{z + 1}$-approximate feasible solution $\Cinitial \in V^{k}$ to the {\kzC} problem can be found (deterministically) in time $O(m \log(n))$.
\end{flushleft}
\end{restate}

\begin{restate}[\Cref{lem:potential}]
\begin{flushleft}
$0 \leq \Phi \le \Phi_{\max}$, for the $\Phi_{\max} = 2m |\calJ| \cdot \log_{2}(1 + \dmax) = O(z m \log^{2}(n))$.
\end{flushleft}
\end{restate}

\begin{restate}[\Cref{lem:sample-noncenter}]
\begin{flushleft}
The operation $\calD.\SampleNoncenter()$ has worst-case running time $O(\log(n))$.
\end{flushleft}
\end{restate}

\begin{restate}[\Cref{lem:initialize}, \Cref{lem:initialize:runtime}]
\begin{flushleft}
The operation $\calD.\Initialize(\Cinitial)$ has worst-case running time $\Tinitialize = O(m \log(n) + n \log^{2}(n))$.
\end{flushleft}
\end{restate}

\begin{restate}[\Cref{lem:insert}, \Cref{lem:insert:runtime}]
\begin{flushleft}
The operation $\calD' \gets \calD.\Insert(\cinsert)$ has worst-case running time $\Tinsert = (\Phi - \Phi') \cdot \eps^{-O(1)} \beta \log(n) \le \Phi_{\max} \cdot \eps^{-O(1)} \beta \log(n)$.
\end{flushleft}
\end{restate}

\begin{restate}[\Cref{lem:delete}, \Cref{lem:delete:runtime,lem:delete:potential}]
\begin{flushleft}
The operation $\calD'' \gets \calD'.\Delete(\cdelete)$ has worst-case running time $\Tdelete = \DSvolume[\cdelete] \cdot O(m \log^{3}(n))$ and potential change $\Phi'' - \Phi' \le \DSvolume[\cdelete] \cdot \Phi_{\max}$.
\end{flushleft}
\end{restate}

\begin{theorem}[{\LocalSearch}; running time]
\label{thm:LS-runtime}
\begin{flushleft}
The randomized algorithm $\LocalSearch(G, k)$ has worst-case running time $T = \DSnumber \cdot \eps^{-O(1)} m \beta \log^{4}(n)$.
\end{flushleft}
\end{theorem}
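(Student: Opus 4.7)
The plan is to decompose the total running time into four contributions: (i) preprocessing in Lines~\ref{alg:LS:naive}--\ref{alg:LS:initialize}; (ii) sampling in Line~\ref{alg:LS:sample}; (iii) the updates $\Insert$ and $\Delete$ in Line~\ref{alg:LS:swap}, applied to the main $\calD$ and all $\DSnumber$ auxiliary copies $\{\calD'_\sigma\}_{\sigma \in [\DSnumber]}$; and (iv) the round-robin-scheduled $\TestEffectiveness$ subroutines in Line~\ref{alg:LS:schedule}. Contributions (i) and (ii) are easy: by \Cref{prop:naive_solution} and \Cref{lem:initialize}, preprocessing all $\DSnumber + 1$ data structures costs $\DSnumber \cdot O(m \log^2(n))$; and using the upper bound $\Bar{\ell} = \eps^{-O(1)} m \log^2(n)$ on the number of positive iterations (derived in the proof of \Cref{thm:LS-correctness}) together with \Cref{lem:sample-noncenter}, the total sampling cost is $\DSnumber \cdot \Bar{\ell} \cdot O(\log(n)) = \DSnumber \cdot \eps^{-O(1)} m \log^3(n)$.

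For contribution (iii), I would handle the main $\calD$ via an amortized potential argument. By \Cref{lem:delete:runtime} of \Cref{lem:delete}, the total deletion cost summed across positive iterations is $\sum_i \calD'.\DSvolume[\cdelete_i] \cdot O(m \log^3(n))$, which is $\eps^{-O(1)} m \log^4(n)$ thanks to Equation~\eqref{eq:total-volume}. For insertions, \Cref{lem:delete:potential} of \Cref{lem:delete} gives that each deletion raises the potential by at most $\calD'.\DSvolume[\cdelete] \cdot \Phi_{\max}$, so with \Cref{lem:potential}'s bound $\Phi \le \Phi_{\max} = O(zm\log^2(n))$, the total potential increase across the entire execution is at most $\Phi_{\max} \cdot (1 + \sum_i \calD'.\DSvolume[\cdelete_i]) = \eps^{-O(1)} m \log^3(n)$. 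Since every insertion can only decrease the potential and, by \Cref{lem:insert:runtime} of \Cref{lem:insert}, costs $(\Phi - \Phi') \cdot \eps^{-O(1)} \beta \log(n)$, the total insertion cost telescopes to $\eps^{-O(1)} m \beta \log^4(n)$. Multiplying by $\DSnumber + 1$ for the $\DSnumber + 1$ data structures updated identically in Line~\ref{alg:LS:swap} gives a total contribution of $\DSnumber \cdot \eps^{-O(1)} m \beta \log^4(n)$.

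The main obstacle is contribution (iv), and my plan there is to exploit that round-robin scheduling over $\DSnumber$ subroutines spends total time at most $\DSnumber \cdot T^*$ in each positive iteration, where $T^*$ is the standalone running time of the subroutine $\sigma^*$ that first returns. Each $\TestEffectiveness(\calD, \calD'_\sigma, \cinsert_\sigma)$ is dominated by the $\calD'_\sigma.\Insert(\cinsert_\sigma)$ of Line~\ref{alg:test:insert} plus the $O(\log^2(n))$ loop over the $\CGnumber = O(\log(n))$ groups plus its backtrack (which has cost comparable to the insert, since it merely reverses the insert's modifications and can be implemented using a log of changes). Crucially, $\calD$ and all $\calD'_\sigma$ are identical at the start of each iteration, so the winner's insert cost equals the cost of the $\calD.\Insert(\cinsert_{\sigma^*})$ subsequently performed in Line~\ref{alg:LS:swap}, which has already been accounted for in contribution (iii). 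Summing $\DSnumber \cdot T^*$ across all $\Bar{\ell}$ positive iterations therefore yields $\DSnumber$ times the total $\calD$-insert cost from (iii), i.e., $\DSnumber \cdot \eps^{-O(1)} m \beta \log^4(n)$. The single negative iteration contributes $\DSnumber$ subroutines each running to completion with worst-case cost $\Phi_{\max} \cdot \eps^{-O(1)} \beta \log(n) = \eps^{-O(1)} m \beta \log^3(n)$, i.e., $\DSnumber \cdot \eps^{-O(1)} m \beta \log^3(n)$. Summing all four contributions yields $T = \DSnumber \cdot \eps^{-O(1)} m \beta \log^4(n)$, as claimed.
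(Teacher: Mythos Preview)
Your proposal is correct and follows essentially the same approach as the paper's proof: the same four-way decomposition, the same telescoping potential argument for the total $\Insert$ cost (via \Cref{lem:insert:runtime}, \Cref{lem:delete:potential}, \Cref{lem:potential}, and \Cref{eq:total-volume}), the same volume-sum bound for the total $\Delete$ cost, and the same round-robin observation that each positive iteration's scheduling cost is $\DSnumber$ times the winner's standalone time, which equals the $\Tinsert$ subsequently charged in Line~\ref{alg:LS:swap} because all copies $\calD'_\sigma$ start identical to $\calD$. The paper packages the $\TestEffectiveness$ analysis into a separate claim (\Cref{claim:test-runtime}) and handles the backtrack via a stack, exactly as you sketch.
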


\begin{proof}
Without ambiguity, here we readopt the terminologies and the notations from the proof of \Cref{thm:LS-correctness}.
As mentioned, the algorithm {\LocalSearch} works as follows:

First, we find a feasible solution $\Cinitial \in V^{k}$ and initialize $\DSnumber + 1$ data structures $\calD$ and $\{\calD'_{\sigma}\}_{\sigma \in [s]}$,
which takes time $O(m \log(n)) + (\DSnumber + 1) \cdot O(m \log(n) + n \log^{2}(n)) = \DSnumber \cdot O(m \log^{2}(n))$ by \Cref{prop:naive_solution} and \Cref{lem:initialize:runtime} of \Cref{lem:initialize}.
\hfill
(Lines~\ref{alg:LS:naive} and \ref{alg:LS:initialize})

Afterward, we start iterating Lines~\ref{alg:LS:sample} to \ref{alg:LS:return}:
\hfill
(Line~\ref{alg:LS:repeat})

\noindent
Specifically, a single iteration samples a number of $\DSnumber$ noncenters, $\cinsert_{\sigma} \gets \calD.\SampleNoncenter()$ for $\sigma \in [\DSnumber]$, and schedules a number of $\DSnumber$ subroutines, $\TestEffectiveness(\calD, \calD'_{\sigma}, \cinsert_{\sigma})$ for $\sigma \in [\DSnumber]$, step by step using the round-robin algorithm.
\hfill
(Lines~\ref{alg:LS:sample} and \ref{alg:LS:schedule})

\noindent
The following \Cref{claim:test-runtime} upper-bounds the running time of a single subroutine.

\begin{claim}
\label{claim:test-runtime}
\begin{flushleft}
The subroutine $\TestEffectiveness(\calD, \calD', \cinsert)$ has worst-case running time $O(\Tinsert_{\calD}) + O(\log^{2}(n))$, where the $\Tinsert_{\calD}$ denotes the running time of the operation $\calD.\Insert(\cinsert)$.
\end{flushleft}
\end{claim}

\begin{proof}
We go through the subroutine $\TestEffectiveness(\calD, \calD', \cinsert)$ step by step.

\noindent
First, modify the second input data structure $\calD'$ via the operation $\calD'.\Insert(\cinsert)$.
\hfill
(Line~\ref{alg:test:insert})\\
\Comment{Line~\ref{alg:test:insert} runs in time $\Tinsert_{\calD}$, since both data structures $\calD'$ and $\calD$ fed to the subroutine are identical.}

\noindent
Then, for every center group $\calD'.G_{\tau}$, $\forall \tau \in [\CGnumber]$:
\hfill
(Line~\ref{alg:test:for})

\noindent
(i)~Find the ``$\cinsert$-excluded $\calD'.\DSloss[c]$-minimizer'' $\cdelete_{\tau} = \argmin_{c \in \calD'.G_{\tau} - \cinsert} \calD'.\DSloss[c]$.\footnote{We simply skip this center group $\tau \in [\CGnumber]$, when its ``$\cinsert$-excluded $\calD'.\DSloss[c]$-minimizer'' $\cdelete_{\tau}$ does not exist, namely when $(\calD'.G_{\tau} = \emptyset) \vee (\calD'.G_{\tau} = \{\cinsert\})$.}
\hfill
(Line~\ref{alg:test:cdelete})\\
(ii)~Test the condition in Line~\ref{alg:test:if} for this center $\cdelete_{\tau} \in \calD'.G_{\tau}$.
\hfill
(Line~\ref{alg:test:if})\\
(iii)~Return the pair $(\cinsert, \cdelete) \gets (\cinsert, \cdelete_{\tau})$ if the test passes.
\hfill
(Line~\ref{alg:test:return-pair})\\
\Comment{Line~\ref{alg:test:cdelete} runs in time $O(\log(n))$, by \Cref{lem:grouping}; the center group $\calD'.G_{\tau}$ is stored in the red-black tree $\calD'.\calG_{\tau}$, which has size $|\calD'.\calG_{\tau}| = |\calD'.G_{\tau}| \le |\calD'.C| \le n$ and supports searching in time $O(\log(n))$. Lines~\ref{alg:test:if} and \ref{alg:test:return-pair} clearly run in time $O(1)$.}

\noindent
If no pair has ever be returned in Line~\ref{alg:test:return-pair}, then return a {\failure}.
\hfill
(Line~\ref{alg:test:return-failure})\\
\Comment{Line~\ref{alg:test:return-failure} clearly runs in time $O(1)$.}

\noindent
Finally, the second data structure $\calD'$ backtracks such that $\calD \equiv \calD'$ again.\\
\Comment{This backtracking can be easily implemented by a {\em stack} data structure and runs in time $O(\Tinsert_{\calD})$.}

Overall, given that $\CGnumber = \lceil \log_{2}(2m|\calJ|) + 1\rceil = O(\log(n))$ (Invariant~\ref{invar:grouping:group}), the subroutine takes time $\Tinsert_{\calD} + \CGnumber \cdot O(\log(n)) + O(\Tinsert_{\calD}) = O(\Tinsert_{\calD}) + O(\log^{2}(n))$.
This finishes the proof of \Cref{claim:test-runtime}.
\end{proof}

Depending on the outputs of these $\DSnumber$ subroutines, $\TestEffectiveness(\calD, \calD'_{\sigma}, \cinsert_{\sigma})$ for $\sigma \in [\DSnumber]$, an iteration $i \ge 1$ falls into either {\bf Case~1} or {\bf Case~2} (i.e., a positive or negative iteration):

\begin{flushleft}
{\bf Case~1:}
One of the $\DSnumber$ subroutines, say $\sigma^{*} \in [\DSnumber]$, first returns a pair $(\cinsert_{\sigma^{*}}, \cdelete_{\sigma^{*}})$.

Then, we terminate all (ongoing) subroutines and modify the data structure $\calD$, via the operations $\calD.\Insert(\cinsert_{\sigma^{*}})$ and $\calD.\Delete(\cdelete_{\sigma^{*}})$; likewise for $\{\calD'_{\sigma}\}_{\sigma \in [\DSnumber]}$.
\hfill
(Lines~\ref{alg:LS:sample} and \ref{alg:LS:schedule} $\to$ Lines~\ref{alg:LS:positive} to \ref{alg:LS:swap})

\vspace{.1in}
(Line~\ref{alg:LS:sample})~The sampling of all noncenters $\{\cinsert_{\sigma}\}_{\sigma \in [\DSnumber]}$ takes time $\DSnumber \cdot O(\log(n))$, by \Cref{lem:sample-noncenter}.

(Line~\ref{alg:LS:schedule} $\to$ Lines~\ref{alg:LS:positive} and \ref{alg:LS:terminate})~All subroutines together take time $\DSnumber \cdot O(\Tinsert_{i}) + \DSnumber \cdot O(\log^{2}(n))$, where the $\Tinsert_{i}$ denotes the running time of the operation $\calD.\Insert(\cinsert_{\sigma^{*}})$, by \Cref{claim:test-runtime} and the nature of the round-robin algorithm.

(Line~\ref{alg:LS:swap})~The operations $\calD.\Insert(\cinsert_{\sigma^{*}})$ and $\calD.\Delete(\cdelete_{\sigma^{*}})$ take time $\Tinsert_{i}$ and $\Tdelete_{i}$, respectively; likewise for each of the other $\DSnumber$ data structures $\{\calD'_{\sigma}\}_{\sigma \in [\DSnumber]}$.

\vspace{.1in}
{\bf Case~2:}
All of the $\DSnumber$ subroutines (terminate and) return {\failure}'s.

Then, we output $\Cterminal \gets \calD.C$ as the solution.
\hfill
(Lines~\ref{alg:LS:sample} and \ref{alg:LS:schedule} $\to$ Lines~\ref{alg:LS:negative} and \ref{alg:LS:return})

\vspace{.1in}
(Line~\ref{alg:LS:sample})~The sampling of all noncenters $\{\cinsert_{\sigma}\}_{\sigma \in [\DSnumber]}$ takes time $\DSnumber \cdot O(\log(n))$, by \Cref{lem:sample-noncenter}.

(Line~\ref{alg:LS:schedule} $\to$ Line~\ref{alg:LS:negative})~All subroutines together take time $\DSnumber \cdot \Phi_{\max} \cdot \eps^{-O(1)} \beta \log(n) + \DSnumber \cdot O(\log^{2}(n))$, by a combination of \Cref{claim:test-runtime} \Cref{lem:insert:runtime} from \Cref{lem:insert}.

(Line~\ref{alg:LS:return})~Returning $\Cterminal \gets \calD.C$ as the solution clearly takes time $O(n)$.
\end{flushleft}

\noindent
{\bf The total running time.}
\`{A} la the proof of \Cref{thm:LS-correctness}, denote by $\ell \ge 0$ the number of positive iterations, after which the algorithm {\LocalSearch} returns $\Cterminal \gets \calD.C$ as the solution in the terminal negative iteration.
Combining the above arguments, the total running time is
\begin{align*}
    T
    & ~=~ \underbrace{\DSnumber \cdot O(m \log^{2}(n))}_{\text{Lines~\ref{alg:LS:naive} and \ref{alg:LS:initialize}}} \\
& \phantom{~=~} + \sum_{i \in [\ell]} \big(\underbrace{\DSnumber \cdot O(\log(n))
    + \DSnumber \cdot O(\Tinsert_{i} + O(\log^{2}(n))
    + (\DSnumber + 1) \cdot (\Tinsert_{i} + \Tdelete_{i})}_{\text{Lines~\ref{alg:LS:sample} and \ref{alg:LS:schedule} $\to$ Lines~\ref{alg:LS:positive} to \ref{alg:LS:swap}}}\big) \\
    & \phantom{~=~} + \underbrace{\DSnumber \cdot O(\log(n))
    + \DSnumber \cdot \Phi_{\max} \cdot \eps^{-O(1)} \beta \log(n) + \DSnumber \cdot O(\log^{2}(n))
    + O(n)}_{\text{Lines~\ref{alg:LS:sample} and \ref{alg:LS:schedule} $\to$ Lines~\ref{alg:LS:negative} and \ref{alg:LS:return}}} \\
    & ~=~ \DSnumber \cdot O(m \log^{2}(n))
    ~+~ (\ell + 1) \cdot \DSnumber \cdot O(\log^{2}(n)) \\
    & \phantom{~=~} + O(s) \sum_{i \in [\ell]} (\Tinsert_{i} + \Tdelete_{i})
    ~+~ \DSnumber \cdot \Phi_{\max} \cdot \eps^{-O(1)} \beta \log(n).
\end{align*}

In every positive iteration $i \in [\ell]$, let us denote by $\Phi_{i}$, $\Phi'_{i}$, and $\Phi''_{i}$ (\Cref{fig:DS}) the potential of the data structure $\calD$ before both operations, immediately after the first operation $\calD.\Insert(\cinsert_{\sigma^{*}})$, and immediately after the second operation $\calD.\Delete(\cdelete_{\sigma^{*}})$, respectively; obviously, we have $\Phi''_{i} = \Phi_{i + 1}$, $\forall i \in [\ell - 1]$.
Also, let us simply write $\DSvolume'_{i} \ge 0$ for the deletion volume estimator $\calD.\DSvolume[\cdelete_{\sigma^{*}}]$ of the data structure $\calD$ immediately after the first operation $\calD.\Insert(\cinsert_{\sigma^{*}})$.
We observe that:

\noindent
(i)~Over all positive iterations $i \in [\ell]$, the data structure $\calD$ has total insertion time
\begin{align*}
    \sum_{i \in [\ell]} \Tinsert_{i}
    & ~=~ \sum_{i \in [\ell]} (\Phi_{i} - \Phi'_{i}) \cdot \eps^{-O(1)} \beta \log(n)
    \tag{\Cref{lem:insert:runtime} of \Cref{lem:insert}} \\
    & ~=~ \Big((\Phi_{1} - \Phi''_{\ell}) + \sum_{i \in [\ell]} (\Phi''_{i} - \Phi'_{i})\Big) \cdot \eps^{-O(1)} \beta \log(n)
    \hspace{1.26cm}
\tag{$\Phi''_{i} = \Phi_{i + 1},\ \forall i \in [\ell - 1]$} \\
    & ~=~ \Big(1 + \sum_{i \in [\ell]} \DSvolume'_{i}\Big) \cdot \Phi_{\max} \cdot \eps^{-O(1)} \beta \log(n)
    \tag{\Cref{lem:potential,lem:delete}} \\
    & ~=~ \Phi_{\max} \cdot \eps^{-O(1)} \beta \log^{2}(n).
    \tag{\Cref{eq:total-volume}}
\end{align*}
(ii)~Over all positive iterations $i \in [\ell]$, the data structure $\calD$ has total insertion time
\begin{align*}
    \sum_{i \in [\ell]} \Tdelete_{i}
    & ~=~ \sum_{i \in [\ell]} \DSvolume'_{i} \cdot O(m \log^{3}(n))
    \hspace{4.49cm}
\tag{\Cref{lem:delete:runtime} of \Cref{lem:delete}} \\
    & ~=~ \eps^{-O(1)} m \log^{4}(n).
    \tag{\Cref{eq:total-volume}}
\end{align*}
Combining the above three equations gives
\begin{align*}
    T
    & ~=~ \DSnumber \cdot \Big(O(m \log^{2}(n))
    ~+~ (\ell + 1) \cdot O(\log^{2}(n)) \\
    & \phantom{~=~}\phantom{\DSnumber \cdot \Big(} + \Phi_{\max} \cdot \eps^{-O(1)} \beta \log^{2}(n)
    ~+~ \eps^{-O(1)} m \log^{4}(n)\Big) \\
    & ~=~ \DSnumber \cdot \Big(O(m \log^{2}(n))
    ~+~ \eps^{-O(1)} m \log^{4}(n)
    \tag{\Cref{eq:positive-iteration}} \\
    & \phantom{~=~}\phantom{\DSnumber \cdot \Big(} + \eps^{-O(1)} m \beta \log^{4}(n)
    ~+~ \eps^{-O(1)} m \log^{4}(n)\Big)
    \hspace{1.93cm}
\tag{\Cref{lem:potential}} \\
    & ~=~ \DSnumber \cdot \eps^{-O(1)} m \beta \log^{4}(n).
\end{align*}
This finishes the proof of \Cref{thm:LS-runtime}.
\end{proof}

Combining \Cref{thm:LS-correctness,thm:LS-runtime} gives \Cref{thm:LS} (restated below for ease of reference).

\begin{restate}[\Cref{thm:LS}]
\begin{flushleft}
Provided \Cref{assumption:LS:hop-bounded,assumption:edge-weight}, the randomized algorithm {\LocalSearch} has worst-case running time $\eps^{-O(z)} m \beta \log^{5}(n)$ and, with probability $\ge 1 - n^{-\Theta(1)}$,\\
returns an $\alpha_{z}(\eps)$-approximate feasible solution $\Cterminal \in V^{k}$ to the {\kzC} problem.
\begin{align*}
    \alpha_{z}(\eps)
    ~=~ \min_{\lambda} \bigg\{\,\frac{2^{1 + z + 2\eps z} \cdot (1 + \lambda)^{z - 1} + 2^{(1 + 4\eps) z}}{3 - \eps - 2^{1 + 2\eps z} \cdot ((1 + 1 / \lambda)^{z - 1} + \eps)} \,\biggmid\, \lambda > \frac{1}{((3 - \eps) / 2^{1 + 2\eps z} - \eps)^{1 / (z - 1)} - 1}\,\bigg\}.
\end{align*}
\end{flushleft}
\end{restate}

\begin{remark}[Aspect ratio]
\label{remark:edge-weight}
The {\em aspect ratio} $\Delta = \Delta(G) \ge 1$ of a graph $G = (V, E, w)$ refers to the ratio $\Delta(G) \eqdef \frac{\max_{u \ne v \in V} \dist(u, v)}{\min_{u \ne v \in V} \dist(u, v)}$ of the maximum versus minimum pairwise distances.
Regarding our proof, the aspect ratio $\Delta \ge 1$ essentially only impacts (\Cref{lem:potential}) the potential upper bound $\Phi_{\max} = 2m |\calJ| \cdot \log_{2}(1 + \dmax / \dmin)$.

Given a generic aspect ratio $\Delta \ge 1$ instead of (\Cref{prop:distance}) $\Delta = \dmax / \dmin = n^{O(z)}$, the potential upper bound will be $\Phi_{\max} = O(m \log(n) \log(\Delta))$ instead of $\Phi_{\max} = O(z m \log^{2}(n))$.\footnote{Here and after, we follow the convention of simply writing $\log(\Delta)$ for $\log(1 + \Delta)$.}
Accordingly, the algorithm {\LocalSearch} will have worst-case running time $\eps^{-O(z)} m \beta \log^{4}(n) \log(\Delta)$ instead of $\eps^{-O(z)} m \beta \log^{5}(n)$, while the approximation ratio and the success probability keep the same.
In this manner, unlike \Cref{thm:LS}, we remove \Cref{assumption:LS:hop-bounded} and only impose \Cref{assumption:edge-weight}.
\end{remark}

\section{Applications}
\label{sec:applications}

In this section, we apply our \Cref{thm:LS} to various clustering scenarios. In \Cref{subsec:general-graph}, we address generic graphs (without restrictions on their hop-boundedness and edge weights).
In \Cref{subsec:metric-clustering}, we study canonical families of metric spaces, aiming to incorporate known results about metric spanner constructions into our local search algorithm.

\subsection{Clustering on general graphs}
\label{subsec:general-graph}

The following \Cref{cor:general-graph} strengthens \Cref{thm:LS}, by removing (\Cref{assumption:LS:hop-bounded,assumption:edge-weight}) the restrictions on hop-boundedness and edge weights.

\begin{corollary}[Clustering on general graphs]
\label{cor:general-graph}
\begin{flushleft}
Given constants $z \ge 1$ and $\delta > 0$, there is a randomized $(\alpha_{z}^{*} + \delta)$-approximation {\kzC} algorithm that has worst-case running time $m \cdot 2^{O(\sqrt{\log n \log\log n})}$ and success probability $\ge 1 - n^{-\Theta(1)}$.
\begin{align*}
    \alpha_{z}^{*}
    ~\eqdef~ \min_{\lambda} \bigg\{\,\frac{2^{1 + z} \cdot (1 + \lambda)^{z - 1} + 2^{z}}{3 - 2 \cdot (1 + 1 / \lambda)^{z - 1}} \,\biggmid\, \lambda > \frac{1}{(3 / 2)^{1 / (z - 1)} - 1}\,\bigg\}.
\end{align*}
\end{flushleft}
\end{corollary}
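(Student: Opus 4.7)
The plan is to remove the two restrictive assumptions (bounded edge weights in \Cref{assumption:edge-weight} and hop-boundedness in \Cref{assumption:LS:hop-bounded}) by two successive black-box preprocessing steps, and then invoke \Cref{thm:LS}.

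First, given an arbitrary weighted graph $G$, I would apply \Cref{prop:edge-weight} with a parameter $\eps_{1} > 0$ (to be set later in terms of $\delta$ and $z$). This yields a graph $G'$ on the same vertex set that satisfies \Cref{assumption:edge-weight}, and guarantees that any $\alpha$-approximate solution for $G'$ is a $2^{\eps_{1}} \alpha$-approximate solution for $G$. This step costs $O(m \log n)$ time and blows up the aspect ratio only to $\dmax/\dmin \le n^{O(z)}$, hence $\log \Delta = O(z \log n)$.

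Second, to meet \Cref{assumption:LS:hop-bounded}, I would build a low-hop-diameter spanner of $G'$ using \cite[Theorem~3.8]{ElkinN19} (cited in \Cref{sec:tech_overview}): for any $0 < \eps_{2} < 1$, one can in near-linear time produce a graph $G''$ on $V(G')$ with $|E(G'')| = O(m)$ that is $(\beta, \eps_{2})$-hop-bounded for $\beta = 2^{O(\sqrt{\log n \log \log n})}$ and whose shortest-path metric $2^{\eps_{2}}$-approximates that of $G'$. A standard bound (e.g.\ \Cref{lem:triangle}) then shows that any $\alpha$-approximate {\kzC} solution on $G''$ is a $2^{O(\eps_{2} z)} \alpha$-approximate solution on $G'$.

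Third, I would invoke \Cref{thm:LS} (with the refinement in \Cref{remark:edge-weight} for general $\Delta$) on $G''$ with the single small parameter $\eps \eqdef \min(\eps_{1}, \eps_{2})$. The returned solution $\Cterminal \in V^{k}$ is, with probability $\ge 1 - n^{-\Theta(1)}$, an $\alpha_{z}(\eps)$-approximation for {\kzC} on $G''$, which pulled back through the two preprocessing steps is a
\begin{equation*}
    2^{O(\eps z)} \cdot \alpha_{z}(\eps)\text{-approximation}
\end{equation*}
for {\kzC} on $G$. Inspecting the closed form of $\alpha_{z}(\eps)$ in \Cref{thm:LS}, its limit as $\eps \to 0^{+}$ is exactly $\alpha_{z}^{*}$, and the prefactor $2^{O(\eps z)} \to 1$. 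Hence for any target $\delta > 0$, there exists a constant $\eps = \eps(\delta, z) > 0$ making the ratio at most $\alpha_{z}^{*} + \delta$.

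For the running time, the two preprocessing steps run in time $O(m \log n)$ and $m \cdot 2^{O(\sqrt{\log n \log \log n})}$, respectively. The local search itself, via \Cref{remark:edge-weight}, costs $\eps^{-O(z)} \cdot m \cdot \beta \cdot \log^{4}(n) \cdot \log(\Delta) = m \cdot 2^{O(\sqrt{\log n \log \log n})}$ since $\eps, z$ are constants, $\beta = 2^{O(\sqrt{\log n \log \log n})}$, and $\log \Delta = O(\log n)$. The dominant term is the hop-bounded spanner plus local search, giving the advertised complexity. The main technical obstacle I anticipate is verifying that both distortion factors compose cleanly so that the limit $\lim_{\eps \to 0^{+}} \alpha_{z}(\eps) = \alpha_{z}^{*}$ is actually attained in the infimum form stated in the corollary; this amounts to a continuity argument on the one-variable optimization over $\lambda$ and checking that the feasibility constraint for $\lambda$ (from \Cref{footnote:alpha}) converges to $\lambda > (\tfrac{3}{2}^{1/(z-1)} - 1)^{-1}$ as $\eps \to 0$, which is immediate from the explicit formulas.
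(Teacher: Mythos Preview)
Your approach is essentially the paper's: preprocess via \Cref{prop:edge-weight}, then apply the Elkin--Neiman hopset (stated in the paper as \Cref{prop:EN19}), then run \Cref{thm:LS}, and finally choose $\eps$ small enough via continuity of $\alpha_z(\eps)$ so that the composed ratio is $\le \alpha_z^* + \delta$. Two small corrections: the hopset preserves shortest-path distances \emph{exactly} (so your extra $2^{O(\eps_2 z)}$ factor is unnecessary) and adds $n\cdot 2^{O(\sqrt{\log n})}$ edges rather than $O(m)$, and it only succeeds with constant probability, so the paper repeats $\Theta(\log n)$ times and selects the best candidate via Dijkstra to reach $1 - n^{-\Theta(1)}$; none of these affect your overall argument or the stated running time.
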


\begin{proof}
Basically, \Cref{cor:general-graph} follows from a combination of \Cref{thm:LS}, \Cref{prop:edge-weight}, and the following \Cref{prop:EN19}, which summarizes the hopset construction by \cite[Theorem~3.8]{ElkinN19}.\footnote{Specifically, \Cref{prop:EN19} follows by setting $\kappa = \sqrt{\log n}$ and $\rho = \sqrt{\frac{\log\log n}{2\log n}}$ for the parameters $\kappa$ and $\rho$ in the statement of \cite[Theorem~3.8]{ElkinN19}.}

\begin{proposition}[{\cite[Theorem~3.8]{ElkinN19}}]
\label{prop:EN19}
\begin{flushleft}
For any parameter $\eps \ge \log^{-\Theta(1)}(n)$,\textsuperscript{\textnormal{\ref{footnote:exponent}}},
there exists an $m \cdot 2^{O(\sqrt{\log n \log\log n})}$-time algorithm that, with constant probability, converts a given graph $G = (V,E, w)$ to a new graph $G'=(V,E',w')$ such that
\begin{enumerate}[font = {\em\bfseries}]
    \item \label{prop:EN19:solution}
    $\dist_{G'}(u, v) = \dist_{G}(u, v)$, for any pair of vertices $u, v \in V$.
    
    \item \label{prop:EN19:hop} $G'$ is a $(\beta, \eps)$-hop-bounded graph, where the parameter $\beta = 2^{O(\sqrt{\log n \log\log n})}$.
    
    \item \label{prop:EN19:edge} $|E'| \le m + n \cdot 2^{O(\sqrt{\log n})}$.
\end{enumerate}
\end{flushleft}
\end{proposition}

Our algorithm for \Cref{cor:general-graph} takes four steps, as follows.

Firstly, we identify a suitable constant $0 < \eps' < \frac{1}{10z}$.
The function $\alpha_{z}(\eps)$ given in the statement of \Cref{thm:LS} clearly is continuous on the range $0 \le \eps \le \frac{1}{10z}$ and its left endpoint $\alpha_{z}(0) = \alpha_{z}^{*}$.
Thus, there must exist a small enough constant $0 < \eps' < \frac{1}{10z}$ such that $2^{\eps'} \cdot \alpha_{z}(\eps') \le \alpha_{z}^{*} + \delta$.\\
\Comment{Finding such a constant $\eps'$ clearly takes time $O(1)$.}

Secondly, based on the above constant $\eps'$, we utilize \Cref{prop:edge-weight} to convert the given graph $G = (V, E, w)$ into an interim graph $G' = (V, E, w')$ that satisfies \Cref{assumption:edge-weight}.
Regarding the {\kzC} problem, (\Cref{prop:edge-weight:solution} of \Cref{prop:edge-weight}) any $\alpha_{z}(\eps')$-approximate solution $C \in V^{k}$ to (the shortest-path metric induced by) the interim graph $G' = (V, E, w')$ is a $2^{\eps'}\alpha_z(\eps')$-approximate solution to (the shortest-path metric induced by) the input graph $G = (V, E, w)$.\\
\Comment{(\Cref{prop:edge-weight}) This step runs in time $O(m \log(n))$.}

Thirdly, based on the above constant $\eps'$, we utilize \Cref{prop:EN19} to convert the interim graph $G' = (V, E, w')$ into a $(\beta, \eps')$-hop-bounded graph $G'' = (V, E'', w'')$, with $\beta = 2^{O(\sqrt{\log n \log \log n})}$ and $|E''| \le m + n \cdot 2^{O(\sqrt{\log n})}$, that satisfies both \Cref{assumption:edge-weight,assumption:DS:hop-bounded}.
Regarding the {\kzC} problem, (\Cref{prop:EN19:solution} of \Cref{prop:EN19}) any $\alpha_{z}(\eps')$-approximate solution $C \in V^{k}$ to this $(\beta, \eps')$-hop-bounded graph $G'' = (V, E'', w'')$ is also an $\alpha_{z}(\eps')$-approximate solution to the interim graph $G' = (V, E, w')$, thus a $2^{\eps'}\alpha_z(\eps')$-approximate solution to the given graph $G = (V, E, w)$.\\
\Comment{(\Cref{prop:EN19}) This step runs in time $m \cdot 2^{O(\sqrt{\log n \log \log n})}$ and succeeds with constant probability.}

Fourthly, based on the above constant $\eps'$, we run the algorithm for \Cref{thm:LS} on the $(\beta, \eps')$-hop-bounded graph $G'' = (V, E'', w'')$, aiming to obtain an $\alpha_{z}(\eps')$-approximate solution $C \in V^{k}$ to the {\kzC} problem.\\
\Comment{(\Cref{thm:LS}) This step runs in time $O(|E''|\beta\log^5(n)) = m \cdot 2^{O(\sqrt{\log n \log\log n})}$ and succeeds with probability $\ge 1 - n^{-\Theta(1)}$.}

Overall, the above algorithm has worst-case running time $m \cdot 2^{O(\sqrt{\log n \log\log n})}$ and, with constant probability, obtains a $2^{\eps'} \cdot \alpha_{z}(\eps') \le \alpha_{z}^{*} + \delta$-approximate solution $C \in V^{k}$ to the {\kzC} problem on the given graph $G = (V, E, w)$.
It is easy to boost the success probability to $1 - n^{-\Theta(1)}$, by repeating $\Theta(\log(n))$ times and utilizing Dijkstra's algorithm to find the best among the $\Theta(\log(n))$ candidate solutions.
This finishes the proof of \Cref{cor:general-graph}.
\end{proof}

\begin{remark}[Clustering on general graphs]
    \label{remark:general-graph}
    Specifically, the ratio for {\kMedian} ($z = 1$) is $\alpha_{1}^{*} = 6$ and the ratio for {\kMeans} ($z = 2$) is $\alpha_{2}^{*} = \min_{\lambda} \big\{8(\lambda - 2) + \frac{56}{\lambda - 2} + 44 \bigmid \lambda > 2\big\} = 44 + 16\sqrt{7} \approx 86.33$.
\end{remark}

\subsection{Clustering in canonical families of metric spaces}
\label{subsec:metric-clustering}

We further apply \Cref{cor:general-graph} (and \Cref{thm:LS}) to the metric clustering problem in various canonical families of metric spaces.
In metric clustering problem, we allow algorithms to access the given metric space $(V, \dist)$ through queries to a distance oracle, which returns $\dist(u, v)$ for any pair of points $u, v \in V$ in constant time. This differs from the graph clustering setting, where we do not assume the presence of a distance oracle. 
Moreover, our goal is to cluster a given dataset $X \subseteq V$ with $n$ points, rather than considering the entire space $V$ as in shortest-path metrics.

We apply our algorithms to metric clustering by running them on top of {\em metric spanners}.
We provide a formal definition of metric spanners below.

\begin{definition}[Metric spanner] \label{def:spanner}
    Given a stretch parameter $t \geq 1$, a (metric) $t$-spanner for a dataset $X \subseteq V$ from a metric space $(V, \dist)$ is a weighted undirected graph $H = (X, E, w)$, where $E \subseteq X \times X$ and the edge weights are defined as $w(x, y) \eqdef \dist(x, y)$ for $(x, y) \in E$, such that the shortest-path distance function $\dist_H$ satisfies:
    \begin{equation*}
    \forall x,y\in X, \quad \dist_H(x, y) \leq t \cdot \dist(x, y).
    \end{equation*}
\end{definition}

The defining condition of \Cref{def:spanner} ensures that any $\alpha$-approximation to {\kzC} on a $t$-spanner $H$ for $X$ will be an $\alpha \cdot t^z$-approximation in the original metric space. 
Next, we summarize our results in various metric spaces by applying \Cref{cor:general-graph} (or \Cref{thm:LS}) to the existing metric spanners.

\vspace{.1in}
\noindent
{\bf Metric spaces that admit LSH.}
We apply \Cref{cor:general-graph} to a wide range of metric spaces, specifically those that admit Locality-Sensitive Hashing.
Notably, this includes (high-dimensional) Euclidean spaces, general $\ell_p$ spaces, and Jaccard metrics.
Below, we give the formal definition of the LSH family.

\begin{definition}[LSH families \cite{IndykM98, Har-PeledIM12}]
\label{def:lsh}
\begin{flushleft}
For a metric space $(V, \dist)$, a family $\calH$ of hash functions $h \colon V \to U$ is called {\em $(r, c r, p_{1}, p_{2})$-sensitive} when, for a uniform random hash function $h' \sim \calH$ and any pair of points $u, v \in V$:
\begin{itemize}
    \item $\Pr_{h' \sim \calH}[h'(u) = h'(v)] \ge p_{1}$ if $\dist(u,v) \le r$.
    \item $\Pr_{h' \sim \calH}[h'(u) = h'(v)] \le p_{2}$ if $\dist(u,v) \ge c r$.
\end{itemize}
\end{flushleft}
\end{definition}

We employ the LSH-based spanner construction from~\cite{HIS13}, which was originally designed for Euclidean spaces but can be easily generalized to any metric space that admits LSH.
We formalize this generalized construction in \Cref{thm:spanner-lsh} and, for completeness, provide a proof in \Cref{sec:proof-spanner-lsh}.

\begin{proposition}[{Spanners via LSH \cite[Theorem 3.1]{HIS13}}]
\label{thm:spanner-lsh}
\begin{flushleft}
Fix any $c\ge 1$. 
Suppose that for any $r > 0$, the underlying metric space $(V, \dist)$ admits an $(r, c r, p_{1}, p_{2})$-sensitive family of hash functions such that 
\begin{enumerate}[font = {\em\bfseries}]
    \item \label{thm:spanner-lsh:condition1} $1 > p_{1} > p_{2} > 0$, $p_1^{-1} = o(n)$ and $\rho \eqdef \frac{\log(1 / p_{1})}{\log(1 / p_{2})} < 1/3$; and
    \item \label{thm:spanner-lsh:condition2} each function in this family can be sampled and evaluated in time $\TLSH$.
\end{enumerate}
There exists a randomized algorithm that, given as input an $n$-point dataset $X \subseteq V$ with aspect ratio $\Delta \ge 1$, runs in time
\begin{equation*}
    O\left(\frac{n^{1+3\rho}\log^2(n)\log(\Delta)}{ p_1\log(1/p_2)}\cdot \TLSH\right)
\end{equation*}
to compute an $O(c)$-spanner for $X$ with 
\begin{equation*}
    O\left(\frac{n^{1 + 3\rho}\log(n)\log(\Delta)}{ p_1}\right)
\end{equation*}
edges.
The algorithm succeeds with probability $\ge 1 - \frac{1}{n^{\Theta(1)}}$.

\end{flushleft}
\end{proposition}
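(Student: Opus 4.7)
The plan is to construct the spanner scale-by-scale, exploiting that the aspect ratio of $X$ is bounded by $\Delta$. Discretize the range of relevant distances into $O(\log\Delta)$ geometric scales $r_i = r_{\min}\cdot 2^i$, and at each scale $r = r_i$ build a ``partial'' spanner designed to handle pairs at distance in $[r/2, r]$. Taking the union over scales yields a graph $H$ in which every pair is handled at some scale, so establishing the stretch bound reduces to analyzing a single scale.

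At a fixed scale $r$, instantiate the given $(r, cr, p_1, p_2)$-sensitive LSH family and build $L$ independent hash tables. Each table uses a concatenation $g = (h_1,\dots, h_k)$ of $k$ i.i.d.\ samples from the family. The key parameter choices are $k = \Theta(\log n / \log(1/p_2))$ and $L = \Theta(n^{3\rho}\log(n)/p_1)$. The choice of $k$ ensures that for a pair $(u,v)$ with $\dist(u,v) \ge cr$, the per-table collision probability is $p_2^k \le 1/n^{\Theta(1)}$, so by a union bound over $n^2$ pairs, $L$ tables, and $O(\log\Delta)$ scales, with probability $\ge 1 - n^{-\Theta(1)}$ \emph{no} ``far'' pair collides in any bucket. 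The choice of $L$ ensures that for a pair with $\dist(u,v) \le r$, the per-table collision probability $p_1^k = n^{-3\rho}$ is large enough that the pair collides in \emph{some} table with probability $\ge 1 - e^{-L p_1^k} \ge 1 - n^{-\Theta(1)}$. After hashing, for each non-empty bucket $B$ at each scale in each table, pick an arbitrary representative $p_B \in B$ and add the ``star'' edges $\{(p_B, q) : q \in B\setminus\{p_B\}\}$, each weighted by the true metric distance (computable in $O(1)$ via the distance oracle).

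With these edges, the spanner guarantee is immediate: for any pair $(u,v)$ with $d := \dist(u,v)$, choose the scale $r_i$ with $r_i/2 \le d \le r_i$; by the high-probability event above, $(u,v)$ co-occupy some bucket $B$ at scale $r_i$, and since the same high-probability event excludes far pairs, every edge of the star at $B$ has weight $\le cr_i \le 2cd$; hence $\dist_H(u,v) \le \dist(u, p_B) + \dist(p_B, v) \le 2cr_i = O(c)\cdot d$, as required. Counting edges: per table per scale, the star construction adds at most $|X| = n$ edges, so the total edge count is $O(nL\log\Delta) = O(n^{1+3\rho}\log(n)\log(\Delta)/p_1)$, matching the stated bound. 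The running time is dominated by hashing each of the $n$ points in each of the $L\cdot O(\log\Delta)$ (table, scale) pairs, which costs $O(k \cdot \TLSH) = O(\TLSH \log(n)/\log(1/p_2))$ per evaluation, giving the claimed $O(n^{1+3\rho}\log^2(n)\log(\Delta)\cdot \TLSH / (p_1 \log(1/p_2)))$.

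The delicate part of the argument, which I expect to be the main obstacle, is the balance of parameters. A naive parameter choice would use $k$ such that $p_2^k = 1/n$, which wastes only an $n^{\rho}$ factor rather than $n^{3\rho}$; but this lets roughly $n^2 \cdot p_2^k \cdot L = n^{1+\rho}\log(n)$ spurious ``far'' collisions slip through per scale, which would break the $O(c)$ stretch guarantee of the star construction since a bucket representative might then be arbitrarily far from other members. Pushing $k$ up to $\Theta(\log n/\log(1/p_2))$ so that $p_2^k \le 1/n^3$ cleanly kills all spurious collisions via a union bound, but this is exactly what forces $L$ up to $n^{3\rho}\log(n)/p_1$ and accounts for the $n^{1+3\rho}$ factor in both the edge count and running time. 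The assumption $\rho < 1/3$ in the hypothesis is precisely what makes the expected number of far-pair collisions $n^{2}\cdot L\cdot\log(\Delta) \cdot p_2^k = n^{3\rho-1}\log(n)\log(\Delta)\cdot\poly\log$ vanish, so the high-probability union bound succeeds; tracking the factors of $p_1$ and $\log(1/p_2)$ precisely through the parameter-setting calculation is the main bookkeeping task, but once $k$ and $L$ are fixed as above, everything falls into place.
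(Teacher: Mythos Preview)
Your construction (geometric scales, concatenated LSH with $k = \Theta(\log n/\log(1/p_2))$, star edges per bucket, $L = \Theta(n^{3\rho}\log(n)/p_1)$ independent tables) is exactly the paper's, but your analysis has a real gap in the far-pair step.

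You try to union-bound over \emph{all} far pairs in \emph{all} tables at \emph{all} scales. With $p_2^k \le n^{-3}$ this gives
\[
n^2 \cdot L \cdot O(\log\Delta) \cdot p_2^k \;=\; \frac{n^{3\rho-1}\log(n)\log(\Delta)}{p_1},
\]
not ``$n^{3\rho-1}\log(n)\log(\Delta)\cdot\polylog$'' as you wrote --- you dropped the $1/p_1$ factor, and nothing in the hypotheses bounds $\log\Delta$ by $\polylog(n)$. The assumptions $\rho<1/3$ and $p_1^{-1}=o(n)$ are far too weak to make this quantity $o(1)$, let alone $n^{-\Theta(1)}$: take $\rho$ just below $1/3$, $p_1^{-1}$ close to $n$, and $\log\Delta$ large. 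Pushing $k$ higher to rescue the union bound forces $L$ up correspondingly and destroys the $n^{1+3\rho}$ edge/time bounds.

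The paper sidesteps this by never asserting that \emph{no} far pair collides globally. Instead it argues per pair $(u,v)$ and per single table at the relevant scale $\ell$: let $A$ be ``$u,v$ collide'' and $B$ be ``no point at distance $>c\cdot 2^\ell$ from $u$ lands in $u$'s bucket''. Event $B$ needs a union bound only over $n$ points in \emph{one} table, so $\Pr[\neg B]\le n\cdot p_2^k \le n^{-2}$; hence $\Pr[A\wedge B]\ge p_1 n^{-3\rho}-n^{-2}=\Theta(p_1 n^{-3\rho})$, and \emph{this} is precisely where $\rho<1/3$ and $p_1^{-1}=o(n)$ are used. Given $A\wedge B$, the bucket has diameter $\le 2c\cdot 2^\ell$ and the star gives the $O(c)$ stretch for $(u,v)$. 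Now boost over the $L$ independent tables to drive the per-pair failure to $n^{-\Theta(1)}$, then union over $n^2$ pairs. The $\log\Delta$ factor never enters the probability calculation --- it only multiplies the edge count and running time.
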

\begin{proof}
The proof can be found in \Cref{sec:proof-spanner-lsh}.
\end{proof}

The parameter $\rho = \frac{\log(1 / p_{1})}{\log(1 / p_{2})}$ is a key measure of LSH performance, and optimizing this parameter is one of the main research directions in LSH.
In \Cref{thm:spanner-lsh}, the factors $p_{1}^{-1}$ and $(\log(1 / p_{2}))^{-1}$ appear in the running time and spanner size, which is common in the LSH literature; similar dependencies are observed in the query/space complexity of approximate nearest neighbor algorithms based on LSH (see~\cite[Theorem 3.4 \& Remark 3.5]{Har-PeledIM12}).

The following corollary directly follows from applying \Cref{cor:general-graph} to the spanner constructed by \Cref{thm:spanner-lsh}, which is a fast clustering algorithm for metric spaces that admit LSH.

\begin{corollary}[Clustering on metrics that admit LSH]
\begin{flushleft}
    \label{cor:metric-lsh}
    Fix any $c \geq 1$. Suppose the underlying metric space $(V, \dist)$ admits an $(r, cr, p_1, p_2)$-sensitive family of hash functions that satisfies Conditions~\ref{thm:spanner-lsh:condition1} and \ref{thm:spanner-lsh:condition2} of \Cref{thm:spanner-lsh} for any $r > 0$.
    Given a constant $z\ge 1$, there is a randomized $O(c^z)$-approximation {\kzC} algorithm that has worst-case running time 
    \begin{equation*}
        O\left(\frac{n^{1+3\rho + o(1)}\log(\Delta)}{ p_1\log(1/p_2)}\cdot \TLSH \right)
    \end{equation*}
    and success probability $\ge 1 - n^{-\Theta(1)}$. Here, $\rho$ and $\TLSH$ are defined as in \Cref{thm:spanner-lsh}, and the factor $n^{o(1)}$ in the running time is $2^{O(\sqrt{\log n\log\log n})}$.
\end{flushleft}
\end{corollary}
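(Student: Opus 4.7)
The plan is to combine \Cref{thm:spanner-lsh} and \Cref{cor:general-graph} in the obvious two-stage fashion. First, I would invoke \Cref{thm:spanner-lsh} on the input dataset $X \subseteq V$ to build an $O(c)$-spanner $H = (X, E_H, w_H)$ with $|E_H| = O(n^{1+3\rho}\log(n)\log(\Delta)/p_1)$ edges, in time $O(n^{1+3\rho}\log^2(n)\log(\Delta)/(p_1\log(1/p_2)) \cdot \TLSH)$, succeeding with probability $\ge 1 - n^{-\Theta(1)}$. Then I would feed the shortest-path metric of $H$ into the algorithm guaranteed by \Cref{cor:general-graph} (with $\delta$ chosen to be any positive constant), which produces an $O(1)$-approximate solution $C \subseteq X$ for {\kzC} on $H$ in time $|E_H| \cdot 2^{O(\sqrt{\log n\log\log n})}$, again succeeding with high probability.

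Second, I would convert the approximation guarantee on $H$ back to the original metric $(V, \dist)$. By \Cref{def:spanner}, $\dist(x,y) \le \dist_H(x,y) \le O(c) \cdot \dist(x,y)$ for all $x,y \in X$, which implies that for any center set $C' \subseteq X$ we have $\cost(X, C') \le \cost_H(X, C') \le O(c)^z \cdot \cost(X, C')$. To compare with the unrestricted optimum $\OPT = \min_{C^* \in V^k} \cost(X, C^*)$, I would apply the standard ``snap-to-dataset'' reduction: replacing each optimal center $c^* \in C^*$ by its nearest point in $X$ loses at most a factor of $2^z$ in cost, by the triangle inequality on $z$-th powers (which is exactly the form given in \Cref{lem:triangle}, with $\lambda = 1$). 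Chaining these bounds, the returned $C \subseteq X$ satisfies $\cost(X, C) \le \cost_H(X, C) \le O(1) \cdot \cost_H(X, C^*_{\text{snap}}) \le O(1)\cdot O(c)^z \cdot 2^z \cdot \OPT = O(c^z) \cdot \OPT$.

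Finally, I would account for the running time by summing the two phases. The spanner construction contributes $O(n^{1+3\rho}\log^2(n)\log(\Delta)/(p_1\log(1/p_2)) \cdot \TLSH)$, and the clustering phase contributes $|E_H| \cdot 2^{O(\sqrt{\log n\log\log n})} = O(n^{1+3\rho+o(1)}\log(\Delta)/p_1)$. Both are dominated by $O(n^{1+3\rho+o(1)}\log(\Delta)/(p_1\log(1/p_2)) \cdot \TLSH)$ because the subpolynomial factor $n^{o(1)} = 2^{O(\sqrt{\log n\log\log n})}$ absorbs all remaining polylogarithmic terms (and $\TLSH \ge 1$, $\log(1/p_2) \ge 1$). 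A union bound over the two high-probability events gives overall success probability $\ge 1 - n^{-\Theta(1)}$.

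The main obstacle is not mathematical but bookkeeping: one must verify that the center-restriction step loses only a constant factor (independent of $n$ and $\Delta$), that the spanner's vertex set coincides with the clustering dataset so that \Cref{cor:general-graph} applies off-the-shelf, and that the $n^{o(1)}$ slack is indeed large enough to swallow the $\log^2(n)$ overhead from \Cref{thm:spanner-lsh}. The only subtlety worth a line of care is the snap-to-dataset step, since for $z > 1$ the distortion constant there is $2^z$ rather than $2$, but this remains $O(1)$ for any fixed $z$ and is absorbed into the $O(c^z)$ ratio.
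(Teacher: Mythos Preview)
Your proposal is correct and follows essentially the same two-stage approach as the paper: build the $O(c)$-spanner via \Cref{thm:spanner-lsh}, run \Cref{cor:general-graph} on it (the paper picks $\delta = 1$), and transfer the guarantee back using the spanner distortion together with the $O(2^z)$ snap-to-dataset loss. One small remark: your claim that $\log(1/p_2) \ge 1$ is not guaranteed by the hypotheses, but this is harmless --- from $p_1^{-1} = o(n)$ and $\rho = \log(1/p_1)/\log(1/p_2)$ one gets $\log(1/p_2) = \log(1/p_1)/\rho = o(\log n)/\rho$, which the $n^{o(1)}$ factor absorbs anyway (the paper makes the same silent absorption).
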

\begin{proof}
    Our algorithm first constructs an $O(c)$-spanner $G$ using \Cref{thm:spanner-lsh}, which contains $m \eqdef O\left(\frac{n^{1 + 3\rho}\log(\Delta)\log(n)}{ p_1}\right)$ edges, in time $O\left(\frac{n^{1+3\rho}\log^2(n)\log(\Delta)}{ p_1\log(1/p_2)}\cdot \TLSH\right)$.
    Then, it runs the algorithm from \Cref{cor:general-graph} on $G$ with parameter $\delta = 1$.
    The overall running time is
    \begin{equation*}
        O\left(\frac{n^{1+3\rho}\log^2(n)\log(\Delta)}{ p_1\log(1/p_2)}\cdot \TLSH\right) + m\cdot 2^{O\left(\sqrt{\log n \log\log n}\right)} ~\le~ O\left(\frac{n^{1+3\rho + o(1)}\log(\Delta)}{ p_1\log(1/p_2)}\cdot \TLSH \right), 
    \end{equation*}
    where the factor $n^{o(1)} = 2^{O(\sqrt{\log n\log\log n})}$.

    Let $C$ denote the solution returned by \Cref{cor:general-graph}, and let $C^{\star}$ denote the optimal $k$-point center set selected from the dataset $X$, i.e., $C^{\star} \eqdef \argmin_{C \in X^k} \cost(X, C)$, in the original metric space. 
    It is well-known that $C^{\star}$ is an $O(2^z)$-approximation to the optimal center set in the metric space $(V,\dist)$, i.e., $\cost(X,C^{\star}) \le O(2^z)\cdot \OPT = O(2^z) \cdot \min_{C \in V^k} \cost(X,C)$.
    The spanner guarantee (\Cref{def:spanner}) ensures that 
    \begin{equation*}
        \cost(X, C) ~\le~ \cost^{(G)}(X, C),\quad \cost^{(G)}(X, C^{\star}) \le O(c^z)\cdot \cost(X, C^{\star}) \le O(c^z)\cdot \OPT,
    \end{equation*}
    where we are using $\cost^{(G)}(X,C):=\sum_{x\in X} \dist_G^z(x, C)$ to denote the clustering objective on graph $G$. Since the solution $C$ satisfies $\cost^{(G)}(X, C) \le O(2^z)\cdot \cost^{(G)}(X, C^{\star})$ by \Cref{cor:general-graph}, we conclude that $C$ is an $O(c^z)$-approximate solution in the original metric space.
\end{proof}

\begin{remark}[Hop-diameter]
\label{remark:hop-diameter}
The work of \cite{HIS13} constructs metric spanners with a {\em hop-diameter} of $2$. However, their notion of hop-diameter differs from the hop-boundedness we consider (\Cref{def:hop_bounded}).
Specifically, their result states that for any pair of points, there exists a path with at most $2$ edges that serves as a $t$-spanner path, i.e., a path whose length approximates the distance in the {\em original metric space} within a factor of $t$.
In contrast, our notion of $(\beta, \epsilon)$-hop-boundedness requires that for any pair of points, there exists a path with at most $\beta$ edges that approximates the {\em shortest-path distance in the spanner itself} within a factor of $(1 + \epsilon)$.

\end{remark}

\vspace{.1in}
\noindent
{\bf Implications to various metric spaces.}
LSH schemes are known to exist for various metric spaces, including Euclidean spaces \cite{DatarIIM04, AndoniI06},
$\ell_{p}$ spaces \cite{IndykM98, DatarIIM04, AndoniI06Efficient, Har-PeledIM12}, and
Jaccard metric spaces~\cite{Broder97, BroderGMZ97};
see also the survey \cite{AndoniI08}.
By incorporating the corresponding LSH results into \Cref{cor:metric-lsh}, we obtain fast clustering algorithms for all these metric spaces, as summarized below. 
We postpone the proof to \Cref{sec:proof-cor-lsh}, which includes a discussion on the choice of LSH and the specification of LSH parameters (e.g., $p_1$, $\TLSH$) for different metric spaces.

\begin{corollary}[Clustering on various metric spaces] \label{cor:lsh}
\begin{flushleft}
    Given constants $c\ge 1$ and $z\ge 1$, there is a randomized $O(c^z)$-approximation algorithm for {\kzC} that has success probability $\ge 1 - n^{-\Theta(1)}$ and worst-case running time:
    \begin{itemize}
        \item {\bf (Euclidean space)} $O(dn^{1 + 1 / c^2 + o(1)}\log(\Delta))$ if the underlying metric space $(V,\dist) = (\R^d, \ell_2)$ is a $d$-dimensional Euclidean space;
        
        \item {\bf ($\ell_p$ metric)} $O(dn^{1 + 1 / c + o(1)}\log(\Delta))$ if the underlying metric space $(V,\dist) = (\R^d, \ell_p)$ for constant $p\in [1,2)$;
        
        \item {\bf (Jaccard metric)} $O(n^{1 + 1 / c + o(1)}\log(\Delta)\cdot |U|^2)$ if the underlying metric space is a Jaccard metric $(2^U, \dist)$, where $U$ is some universe and $\dist$ is the Jaccard distance.
    \end{itemize}
\end{flushleft}
\end{corollary}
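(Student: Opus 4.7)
My plan is to derive each of the three bounds by specializing \Cref{cor:metric-lsh} to a well-known LSH family for the corresponding metric space, and then reparameterizing $c$ to absorb the factor $3$ that appears in the exponent $1+3\rho$ of \Cref{cor:metric-lsh}. Concretely, for Euclidean $(\R^d,\ell_2)$ I will invoke the Andoni--Indyk LSH~\cite{AndoniI06}, which for every $c\ge 1$ is $(r,cr,p_1,p_2)$-sensitive with $\rho=1/c^2+o(1)$ and sampling/evaluation time $\TLSH=O(d)$. For $\ell_p$ with $p\in[1,2)$ I will use the $p$-stable LSH of Datar--Immorlica--Indyk--Mirrokni~\cite{DatarIIM04} (together with bit sampling in the $\ell_1$ case~\cite{IndykM98}), which yield $\rho\le 1/c+o(1)$ and $\TLSH=O(d)$. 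For the Jaccard metric on $2^U$ I will use Broder's MinHash~\cite{Broder97,BroderGMZ97}: for sets $A,B\subseteq U$ it satisfies $\Pr[h(A)=h(B)]=1-\dist(A,B)$, so with $p_1=1-r$ and $p_2=1-cr$ one obtains $\rho=\log(1/(1-r))/\log(1/(1-cr))\to 1/c$ as $r\to 0$, and sampling a random permutation of $U$ and evaluating it on an input set takes $\TLSH=O(|U|)$ time.

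The first step of the argument is to verify the hypotheses of \Cref{cor:metric-lsh}, namely $1>p_1>p_2>0$, $p_1^{-1}=o(n)$, and $\rho<1/3$. For sufficiently large $c$ this is routine; one can compose a constant number of independent hash functions to push $p_1,p_2$ into a convenient range without changing $\rho$, and one rescales $r$ in the spanner construction of \Cref{thm:spanner-lsh} across $O(\log\Delta)$ scales exactly as in \Cref{cor:metric-lsh}. For small $c$ the target $O(c^z)$ is $O(1)$ and is achieved by running the algorithm with any fixed larger effective parameter, so we may assume $c$ is as large as we wish. The second step is simply to substitute the LSH parameters into the running time bound $O\!\bigl(n^{1+3\rho+o(1)}\log(\Delta)\cdot \TLSH/(p_1\log(1/p_2))\bigr)$ from \Cref{cor:metric-lsh}; the $1/(p_1\log(1/p_2))$ factor contributes only $n^{o(1)}$ in each case and is absorbed into the $n^{o(1)}$ already present.

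The third step is a clean reparameterization: writing $c=\sqrt{3}\,\tilde c$ in the Euclidean case and $c=3\tilde c$ in the $\ell_p$ and Jaccard cases turns $3\rho$ into $1/\tilde c^{\,2}+o(1)$ and $1/\tilde c+o(1)$ respectively, while $c^z=O(\tilde c^z)$ because $z$ is a fixed constant. Renaming $\tilde c$ back to $c$ then yields the advertised exponents $1+1/c^2$ (Euclidean) and $1+1/c$ ($\ell_p$ and Jaccard), with the factors of $d$ appearing via $\TLSH=O(d)$ in the two vector cases.

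The only genuine bookkeeping subtlety — and the closest thing to an obstacle — is the $|U|^2$ bound claimed for the Jaccard case. This arises because, on top of the $O(|U|)$ cost per LSH evaluation used inside \Cref{thm:spanner-lsh}, every spanner edge $(A,B)$ must be assigned its exact weight $\dist(A,B)$, costing another $O(|U|)$ per pair; combining these two sources of $|U|$ gives the stated $|U|^2$ factor. Beyond this, no step requires ideas beyond \Cref{cor:metric-lsh} plus standard LSH facts, and the main care needed is simply matching the LSH probability thresholds to the hypotheses of \Cref{cor:metric-lsh} at every scale.
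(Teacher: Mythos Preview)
Your Euclidean and $\ell_p$ cases are essentially the paper's argument (modulo the minor imprecision that the Andoni--Indyk hash has $\TLSH = d\cdot n^{o(1)}$ rather than $O(d)$, which is harmless since it is absorbed into $n^{o(1)}$). The Jaccard case, however, has a real gap and a mis-explanation.

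\textbf{The gap.} MinHash with $p_1=1-r$, $p_2=1-cr$ is only an $(r,cr,p_1,p_2)$-sensitive family when $cr<1$; for $r\ge 1/c$ you get $p_2\le 0$ and there is no LSH at all. But the spanner construction behind \Cref{cor:metric-lsh} iterates over \emph{all} distance scales up to the diameter, and in the Jaccard metric the diameter is $1$, so you must cover scales $r\in[1/(2c),1]$ too. Your sentence ``one rescales $r$ \ldots across $O(\log\Delta)$ scales exactly as in \Cref{cor:metric-lsh}'' therefore does not go through. The paper fixes this by running the LSH-based spanner only on scales $r\le 1/(2c)$ and then adding a star (one fixed vertex $s$ connected to all others with exact weights); since every Jaccard distance is at most $1$, any pair at distance $>1/(4c)$ is handled by the two star edges with stretch $O(c)$.

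\textbf{The $|U|^2$ factor.} Your accounting is off. The paper works in the distance-oracle model (constant-time $\dist$ queries), so assigning edge weights is not the source of a $|U|$ factor. The second $|U|$ comes from the $1/\log(1/p_2)$ term in the running-time bound of \Cref{thm:spanner-lsh}: at the smallest scale $r\approx 1/|U|$ one has $p_2=1-cr\approx 1-c/|U|$, hence $\log(1/p_2)\approx c/|U|$ and $(\log(1/p_2))^{-1}=O(|U|)$. Combined with $\TLSH=O(|U|)$ this yields $|U|^2$. In particular, your blanket claim that ``the $1/(p_1\log(1/p_2))$ factor contributes only $n^{o(1)}$ in each case'' is false for Jaccard.
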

\begin{proof}
The proof can be found in \Cref{sec:proof-cor-lsh}.
\end{proof}

\noindent
{\bf Metric spaces with low doubling dimensions.}
We then consider a special class of metric spaces with low {\em doubling dimensions} \cite{GuptaKL03}, which is an important generalization of Euclidean space that also includes other well-known metrics, such as $\ell_{p}$ spaces. 
The doubling dimension of a metric space is the smallest integer $\ddim \ge 1$, such that every ball can be covered by at most $2^{\ddim}$ balls of half the radius.
For example, a $d$-dimensional Euclidean space has a doubling dimension $\ddim = \Theta(d)$.

For metric spaces with low doubling dimensions, spanner constructions have been extensively studied, and they achieve different parameter tradeoffs; see \cite{GaoGN06, Har-PeledM06, GottliebR08, ChanG09, ChanLN15, ElkinS15, Solomon14, ChanLNS15, ChanGMZ16, BorradaileLW19, KahalonLMS22, LeST23} and the references therein.
Among these known results,
we choose to use the construction from~\cite{Solomon14} which achieves a suitable parameter tradeoff for our application, restated below.\footnote{In fact, \cite{Solomon14} considers a more powerful $k$-fault-tolerant spanner, but for our purposes, a non-fault-tolerant version of their result (i.e., $k = O(1)$) is sufficient.
}

\begin{proposition}[{Spanners in doubling metrics \cite[Theorem~1.1]{Solomon14}}]
\begin{flushleft}
\label{prop:Solomon14}
Given a metric space $(V, \dist)$ with a doubling dimension $\ddim \ge 1$,
for any parameter $0 < \eps < 1$ and every $n$-point dataset $X \subseteq V$ with an aspect ratio $\Delta(X) \ge 1$,
an $(O(\log(n)), \eps)$-hop-bounded $(1+\eps)$-spanner $H$ with $\eps^{-O(\ddim)} n$ edges and an aspect ratio of $\Delta(H) \le (1+\eps) \cdot \Delta(X)$ can be found in time $\eps^{-O(\ddim)} n \log(n)$. 
\end{flushleft}
\end{proposition}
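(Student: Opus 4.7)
The plan is to follow the standard hierarchical net-based construction for spanners in doubling metrics, as pioneered by works like \cite{Har-PeledM06, GottliebR08, ChanG09} and refined by Solomon \cite{Solomon14}. Let $\Delta = \Delta(X)$ and consider geometrically-scaled levels $r_i \eqdef 2^i$ for $i = 0, 1, \ldots, \lceil \log_{2} \Delta \rceil$. At each level $i$, I would compute an $r_i$-net $N_i \subseteq X$, i.e., a maximal subset of $X$ whose pairwise distances exceed $r_i$ and which $r_i$-covers $X$. By a standard exchange-argument packing lemma, the doubling property implies that any ball of radius $\rho \cdot r_i$ contains at most $\rho^{O(\ddim)}$ points of $N_i$, which will control edge counts throughout.

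Next, I would install two families of edges. First, the \emph{intra-level} edges: for every $u \in N_i$, connect $u$ to every $v \in N_i$ with $\dist(u, v) \le c \cdot r_i / \eps$ for an appropriate constant $c$; the packing lemma bounds each such vertex's degree by $\eps^{-O(\ddim)}$, so the total count is $\eps^{-O(\ddim)} n$. Second, the \emph{inter-level} edges: each point of $N_{i-1}$ is linked to a nearest point in $N_i$. A standard induction on scales then shows that any pair $u, v \in X$ at distance $d$ admits a path in this graph of length at most $(1+\eps) \cdot d$, by routing $u$ and $v$ up the net hierarchy until both endpoints lie in a common ``$\eps$-neighborhood'' at some level $i$ with $r_i \approx \eps d$, then traversing a single intra-level edge at that level. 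This already yields a $(1+\eps)$-spanner with $\eps^{-O(\ddim)} n$ edges, and the aspect ratio of the spanner is at most $(1+\eps) \Delta(X)$ by the stretch bound and the preservation of minimum distances.

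The main obstacle is achieving hop-diameter $O(\log n)$ rather than the naive $O(\log \Delta)$, since $\Delta$ is unbounded in general. I would address this by adding shortcut edges \`{a} la a $1$-spanner for trees/laminar families: view the inter-level edges as inducing a rooted hierarchy on $X$ (each point attached to its parent in the next-coarser net), and install an auxiliary tree $1$-spanner (or $(1+\eps)$-hop-spanner) on each root-to-leaf chain so that any ancestor-descendant pair is connected by an $O(\log n)$-hop path whose length approximates the chain distance within $(1+\eps)$. A known construction (e.g., centroid decomposition of the net tree) achieves $O(\log n)$ hops with only $O(n \log n)$ additional edges, which is dominated by the $\eps^{-O(\ddim)} n$ term after merging. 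Combining the intra-level detour at the critical scale $r_i \approx \eps d$ with two $O(\log n)$-hop ancestor paths then produces a $O(\log n)$-hop $(1+\eps)$-approximate path between any $u, v \in X$, establishing $(O(\log n), \eps)$-hop-boundedness.

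For the runtime, constructing all nets takes $\eps^{-O(\ddim)} n \log n$ time using standard net-tree data structures that exploit the doubling property (see e.g.\ \cite{Har-PeledM06}); the edge installations are charged against $\eps^{-O(\ddim)} n$ edges with $O(\log n)$ work per edge. The delicate point in the plan is simultaneously preserving the $(1+\eps)$-stretch and attaining $O(\log n)$ hops without blowing up the edge count by more than an $\eps^{-O(\ddim)}$ factor; this is exactly where Solomon's construction \cite{Solomon14} is invoked as a black box, and the proposition follows by setting its parameters to the regime above.
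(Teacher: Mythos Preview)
The paper does not prove this proposition at all: it is stated purely as a black-box citation of \cite[Theorem~1.1]{Solomon14} and then applied directly in the proof of \Cref{cor:doubling}. So there is no ``paper's own proof'' to compare against; your sketch is effectively a summary of the ideas behind the cited result rather than an alternative to anything in the present paper.

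That said, your outline is broadly faithful to the hierarchical-net approach of \cite{Har-PeledM06, GottliebR08, Solomon14}, and you correctly identify the key difficulty (reducing the hop-diameter from $O(\log \Delta)$ to $O(\log n)$ via tree shortcuts). Two small points are worth flagging. First, the claim that $O(n\log n)$ shortcut edges are ``dominated by the $\eps^{-O(\ddim)} n$ term'' is not justified in general, since no relation between $\log n$ and $\eps^{-O(\ddim)}$ is assumed; the actual constructions in \cite{Solomon14} and predecessors use $1$-spanners for trees with $O(n)$ (not $O(n\log n)$) edges and $O(\log n)$ hops, which is what keeps the edge count at $\eps^{-O(\ddim)} n$. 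Second, your final sentence invokes \cite{Solomon14} ``as a black box'' to finish --- but the proposition \emph{is} \cite[Theorem~1.1]{Solomon14}, so this is circular as a proof; it is fine as a pointer to where the remaining details live, but you should be aware that you have not given a self-contained argument.
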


Here, we apply \Cref{thm:LS} (instead of \Cref{cor:general-graph}) to the spanner constructed using \Cref{prop:Solomon14}, as this spanner inherently satisfies the hop-boundedness assumption (\Cref{assumption:LS:hop-bounded}). Therefore, applying \Cref{thm:LS} yields a running time that is better by a factor of $n^{o(1)}$ compared to applying \Cref{cor:general-graph}.

\begin{corollary}[Clustering in doubling metrics]
\label{cor:doubling}
\begin{flushleft}
Given constants $z \ge 1$, $\delta > 0$ and a metric space $(V, \dist)$ with a doubling dimension $\ddim \ge 1$, 
there is a randomized algorithm that computes an $(\alpha_{z}^{*} + \delta)$-approximate solution to {\kzC} for every $n$-point dataset $X \subseteq V$ with an aspect ratio $\Delta = \Delta(X) \ge 1$, which has a worst-case running time of $O(2^{O(\ddim)} n \log^{8}(n))$ and a success probability of $1 - n^{-\Theta(1)}$.

\end{flushleft}
\end{corollary}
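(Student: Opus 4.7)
The plan is to compose the doubling-metric spanner of \Cref{prop:Solomon14} with the $\LocalSearch$ algorithm of \Cref{thm:LS}, in close analogy to \Cref{cor:general-graph} but bypassing the hopset construction of \cite{ElkinN19}: since \Cref{prop:Solomon14} already yields a spanner with hop-diameter $O(\log n)$, the general-graph detour that incurs the $n^{o(1)} = 2^{O(\sqrt{\log n \log\log n})}$ overhead becomes unnecessary.

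First, by continuity of $\alpha_z(\cdot)$ together with $\alpha_z(0) = \alpha_z^{*}$, fix a small constant $\eps' \in (0, \tfrac{1}{10z})$ satisfying $(1+\eps')^{z} \cdot 2^{2\eps'} \cdot \alpha_z(\eps') \le \alpha_z^{*} + \delta$. Next, invoke \Cref{prop:Solomon14} with parameter $\eps'$ on $X$ to obtain, in time $\eps'^{-O(\ddim)} n \log n = 2^{O(\ddim)} n \log n$, an $(O(\log n), \eps')$-hop-bounded $(1+\eps')$-spanner $H = (X, E, w)$ with $m = \eps'^{-O(\ddim)} n = 2^{O(\ddim)} n$ edges and aspect ratio $\Delta(H) \le (1+\eps')\Delta$. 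Normalize the edge weights of $H$ via \Cref{prop:edge-weight} (with parameter $\eps'$) to produce a graph $H' = (X, E, w')$ satisfying \Cref{assumption:edge-weight} with $\wmax \le n^{O(z)}$, at the expense of a $2^{\eps'}$-multiplicative loss in the $(k,z)$-Clustering approximation ratio. Because this modification perturbs each pairwise shortest-path distance by only a $2^{\eps'/z}$-factor (readable from the cost-preservation guarantee of \Cref{prop:edge-weight}), the spanner $H'$ remains $(O(\log n), O(\eps'))$-hop-bounded and hence satisfies \Cref{assumption:LS:hop-bounded} after absorbing the constant back into $\eps'$.

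Finally, run the $\LocalSearch$ algorithm of \Cref{thm:LS} on $H'$ with parameter $\eps'$. Its running time is
\begin{equation*}
\eps'^{-O(z)} \cdot m \cdot \beta \cdot \log^{5} n \;=\; 2^{O(\ddim)}\, n\, \log^{6} n,
\end{equation*}
and with probability $\ge 1 - n^{-\Theta(1)}$ it returns an $\alpha_z(\eps')$-approximate solution to $\kzC$ on $H'$. Chaining the $2^{\eps'}$-factor loss from the weight normalization and the $(1+\eps')^{z}$-factor loss from the spanner's stretch, the output is a $(1+\eps')^{z} \cdot 2^{\eps'} \cdot \alpha_z(\eps') \le \alpha_z^{*} + \delta$-approximation for $X$ in $(V,\dist)$. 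The remaining $\log^{2} n$ slack in the claimed $\log^{8} n$ bound accommodates a standard $O(\log n)$-fold probability amplification (evaluating each candidate clustering on $H'$ in $O(m \log n)$ time), together with the minor logarithmic overheads absorbed along the way.

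The main technical obstacle is verifying that the weight normalization of \Cref{prop:edge-weight} preserves hop-boundedness to the precision required by \Cref{assumption:LS:hop-bounded}. If the distance-distortion implicit in \Cref{prop:edge-weight} turns out to be too coarse for this, a clean fallback is to invoke \Cref{remark:edge-weight}, which removes \Cref{assumption:edge-weight} at the cost of an extra $\log \Delta$ factor in the running time, in conjunction with an initial truncation of the dataset to aspect ratio $\poly(n)$: a coarse $n^{z+1}$-approximation $L$ of $\OPT$ from \Cref{prop:naive_solution} allows one to snap distances lying outside $[L^{1/z}/n^{\Theta(z)},\; n\cdot L^{1/z}]$ without changing the $\kzC$ cost by more than a $(1 + o(1))$-factor, yielding $\Delta(H) \le n^{\Theta(z)}$ and therefore $\log \Delta(H) \le O(\log n)$, which completes the analysis within the same $2^{O(\ddim)}\, n\, \log^{8} n$ budget.
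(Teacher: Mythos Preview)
Your overall strategy---build the Solomon spanner (already $(O(\log n),\eps')$-hop-bounded) and run \Cref{thm:LS} on it directly, bypassing the hopset of \cite{ElkinN19}---is exactly the paper's. The divergence is in how to handle \Cref{assumption:edge-weight} and the resulting $\log\Delta$ dependence, and here your main route has a real gap.

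You assert that \Cref{prop:edge-weight} ``perturbs each pairwise shortest-path distance by only a $2^{\eps'/z}$-factor,'' but that proposition says no such thing: it only guarantees that an $\alpha$-approximate \kzC solution on the clipped graph remains a $2^{\eps'}\alpha$-approximation on the original. Clipping edge weights can distort individual shortest-path distances unboundedly. Concretely, take a pair $u,v$ with $\dist_H(u,v)\ll w_{\min}$: every edge on its $\beta$-hop witness path has weight below $w_{\min}$ and is raised to $w_{\min}$, so the witness has length $\approx\beta\,w_{\min}$ in $H'$, whereas $\dist_{H'}(u,v)$ can be as small as $w_{\min}$. The hop-bounded stretch for that pair becomes $\Theta(\beta)=\Theta(\log n)$, far exceeding the $2^{\eps'}$ with $\eps'<\tfrac{1}{10z}$ that \Cref{assumption:LS:hop-bounded} requires.

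Your fallback is essentially the paper's actual route: invoke \Cref{remark:edge-weight} to run {\LocalSearch} on the un-clipped hop-bounded spanner, incurring a $\log\Delta$ factor and obtaining $2^{O(\ddim)}n\log^{5}(n)\log\Delta$, then reduce the aspect ratio. But your ad-hoc reduction is not quite closed: \Cref{prop:naive_solution} is stated for graphs, so one must build the spanner before computing $L$; snapping the spanner's edge weights afterward is exactly the clipping you just worried about, and snapping the underlying metric instead requires an argument that the result is still a doubling metric (so that \Cref{prop:Solomon14} can be reapplied). The paper sidesteps all of this by citing \cite[Lemmas~32 and~33]{Cohen-AddadFS21}, a black-box aspect-ratio reduction for doubling metrics that replaces $\log\Delta$ by $\log^{3}n$; this is where the stated $\log^{8}n$ comes from ($\log^{5}n\cdot\log^{3}n$), not from probability amplification, which is unnecessary since \Cref{thm:LS} already succeeds with probability $1-n^{-\Theta(1)}$.
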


\begin{proof}
Similar to the proof of \Cref{cor:general-graph}, we can find a constant $0 < \eps' < \frac{1}{10z}$ such that $(1+\eps')^{z}\cdot\alpha_{z}(\eps')\le \alpha_{z}^{*} + \delta$ in constant time $O(1)$. Then, we apply \Cref{prop:Solomon14} with input $\eps'$ and the dataset $X$ to compute a $(1+\eps')$-spanner $H$ with $m \eqdef \eps'^{-O(\ddim)} n = 2^{O(\ddim)}n$ edges and an aspect ratio $\Delta(H) \le O(\Delta)$, in time $\eps'^{-O(\ddim)}n\log(n) = 2^{O(\ddim)}n\log(n)$.

    This spanner $H$ is guaranteed to be $(\beta,\eps')$-hop-bounded with $\beta = O(\log(n))$, allowing us to run \Cref{thm:LS} on it to compute an $\alpha_{z}(\eps')$-approximate solution to the \kzC problem on $H$. 
The nature of the spanner $H$ (\Cref{def:spanner}) ensures that the final solution is a $((1+\eps')^z \cdot \alpha_{z}(\eps') \le \alpha_{z}^{*} + \delta)$-approximate solution to the \kzC problem on the dataset $X$ in the original metric space $(V, \dist)$. The overall running time is 
    \begin{align*}
        \eps'^{-O(\ddim)}n\log(n) + \eps'^{-O(z)}m\beta \log^4(n)\log(\Delta(H)) = 2^{O(\ddim)}n\log^{5}(n)\log(\Delta),
    \end{align*}
    according to \Cref{remark:edge-weight}. Finally, we can use \cite[Lemmas~32 and 33]{Cohen-AddadFS21} to transform the above algorithm into one that achieves the same approximation but replaces the $\log(\Delta)$ factor with $\log^3(n)$ in the running time, resulting in the time complexity shown in \Cref{cor:doubling}.
\end{proof}

\ifanonym
\else
\section*{Acknowledgements}
We are grateful to Hengjie Zhang for helpful discussions on Locality Sensitive Hashing.
\fi

\bibliographystyle{alphaurl}
\bibliography{ref}

\appendix

\section{Missing Proofs in \texorpdfstring{\Cref{sec:prelim}}{}}
\label{sec:appendix}

\subsection{Finding a naive solution}
\label{subsec:naive_solution}

Below we present the proof of \Cref{prop:naive_solution} (which is restated for ease of reference).

\begin{restate}[{\Cref{prop:naive_solution}}]
\begin{flushleft}
An $n^{z + 1}$-approximate feasible solution $\Cinitial \in V^{k}$ to the {\kzC} problem can be found in time $O(m \log(n))$.
\end{flushleft}
\end{restate}

Our strategy for constructing a new graph $G' = (V, E, w')$ needs a $\poly(n)$-approximate solution $C'$ to \kzC on $G$. Such a solution can be constructed

\newcommand{\Gaux}{G^{\sf aux}}

\begin{proof}
Regarding an undirected, connected, and non-singleton graph $G = (V, E, w)$ and its induced shortest-path metric space $(V,\dist)$, we find a solution $\Cinitial \in V^{k}$ in time $O(m \log(n))$, as follows.

\noindent
(i)~Initialize an edgeless {\em auxiliary subgraph} $\Gaux \gets (V, \emptyset, w)$ on the same vertices $V$; so its initialized number of components $c(\Gaux) = n$. \\
\Comment{This step trivially takes time $O(n)$.}

\noindent
(ii)~Sort and reindex all edges $E = \{e_{1}, e_{2}, \dots, e_{m}\}$ in increasing order of their nonnegative weights $0 \le w(e_{1}) \le w(e_{2}) \le \dots \le w(e_{m})$, using an arbitrary tie-breaking rule. \\
\Comment{This step trivially takes time $O(m \log(m)) = O(m \log(n))$.}

\noindent
(iii)~Add edges $e_{1}, e_{2}, \dots, e_{m}$ one by one to the auxiliary subgraph $\Gaux$ -- every addition decreases its number of components $c(\Gaux)$ by $1$ -- until $c(\Gaux) = k$. \\
\Comment{This step takes time $O(m + n \log(n))$. Namely, we repeats at most $m$ iterations and can implicitly maintain the components of the auxiliary subgraph $\Gaux$, using a {\em disjoint-set} data structure, in time $O(m + n \log(n))$ \cite[Chapter~19]{CLRS22}.}

\noindent
(iv)~Build our feasible solution $\Cinitial = \{c_{(i)}\}_{i \in [k]}$ by choosing one arbitrary vertex $c_{(i)}$ from each of the $k$ components (akin to $k$ clusters) of the ultimate auxiliary subgraph $\Gaux$. \\
\Comment{This step trivially takes time $O(n)$.}

It remains to establish that, regarding the {\kzC} problem, our solution $\Cinitial$ is an $n^{z + 1}$-approximation to the optimal solution $C^{*} \eqdef \argmin_{C \in V^{k}} \cost(V, C)$.
Consider the {\em maximum edge weight} $\Tilde{w}$ of the ultimate auxiliary subgraph $\Gaux$.
Regarding our solution $\Cinitial$, we must have $\dist(v, \Cinitial) \le (n - 1) \cdot \Tilde{w}$, for every vertex $v \in V$. It follows that
\begin{align*}
    \cost(V, \Cinitial)
    ~\le~ n \cdot (n - 1)^{z} \cdot \Tilde{w}^{z}
    ~=~ n^{z + 1} \cdot \Tilde{w}^{z}.
\end{align*}
Second, regarding the optimal solution $C^{*}$, we have $\dist(v, C^{*}) \ge \Tilde{w}$, for at least one vertex $v \in V$, since the restriction $\Tilde{G} = (V, \Tilde{E}, w)$ to the weight-$(< \Tilde{w})$ edges $\Tilde{E} \eqdef \{e \in E \mid w(e) < \Tilde{w}\}$ even cannot enable $k$ clusters, namely this restricted subgraph $\Tilde{G}$ must have {\em strictly more} components $c(\Tilde{G}) > k$.
It follows that
\begin{align*}
    \OPT ~=~ \cost(V, C^{*}) ~\ge~ \Tilde{w}^{z}.
\end{align*}
Combining the above two equations completes the proof of \Cref{prop:naive_solution}.
\end{proof}

\subsection{Bounding the edge weights}
\label{subsec:edge-weight}

Below we present the proof of \Cref{prop:edge-weight} (which is restated for ease of reference).

\begin{restate}[\Cref{prop:edge-weight}]
\begin{flushleft}
For any $0 < \eps < 1$, a graph $G = (V, E, w)$
can be converted into a new graph $G' = (V, E, w')$
in time $O(m \log(n))$, such that:
\begin{enumerate}[font = {\em\bfseries}]
    \item 
    $G'$ satisfies \Cref{assumption:edge-weight}
    with parameters
$\wmin = 1$ and $\wmax \le 32z^{2} \eps^{-2} n^{z + 5}$, i.e., all edges $(u, v) \in E$ have bounded weight $w'(u, v) \in [\wmin, \wmax]$, where (up to scale) the parameters $\wmin = 1$ and $\wmax \le 32z^{2} \eps^{-2} n^{z + 6}$.
    
    \item 
    Any $\alpha \in [1, n^{z + 1}]$-approximate solution $C$ to {\kzC} for $G'$
is a $2^{\eps}\alpha$-approximate solution
    to that for $G$.
\end{enumerate}

\end{flushleft}
\end{restate}

\newcommand{\costinitial}{\cost^{\sf init}}

\begin{proof}
In the original metric space $(V, \dist)$, consider the $n^{z + 1}$-approximate solution $\Cinitial$ found by \Cref{prop:naive_solution} in time $O(m \log(n))$; we can derive its clustering objective $\costinitial \eqdef \cost(V, \Cinitial)$, based on Dijkstra's algorithm, in time $O(m + n \log(n))$ \cite[Chapter~22]{CLRS22}.
Hence, the clustering objective $\OPT$ of the optimal solution $C^{*} \eqdef \argmin_{C \in V^{k}} \cost(V, C)$ is bounded between
\begin{align*}
    \OPT
    ~\eqdef~ \cost(V, C^{*})
    ~\in~ [\tfrac{1}{n^{z + 1}} \cdot \costinitial,\ \costinitial].
\end{align*}

We define the new edge weights $w'(u, v)$ as follows, based on two parameters $w_{\min} \le w_{\max}$.
\begin{align*}
    w_{\min}
    & ~\eqdef~ (\tfrac{1}{(1 + 3z / \eps) \cdot n^{2}})^{1 + 1 / z} \cdot (\costinitial)^{1 / z}, \\
    w_{\max}
    & ~\eqdef~ (2^{\eps}\alpha \cdot \costinitial)^{1 / z}, \\
    w'(u, v)
    & ~\eqdef~ \min\big(\max(w(u, v),\ w_{\min}),\ w_{\max}\big),
    && \forall (u, v) \in E.
\end{align*}
In total, our construction of the new graph $G'$ takes time $O(m \log n)$. We next verify \Cref{prop:edge-weight:bound,prop:edge-weight:solution}.

\vspace{.1in}
\noindent
{\bf \Cref{prop:edge-weight:bound}.}
The new edge weights $w'(u, v)$, for $(u, v) \in E$, differ at most by a multiplicative factor of
\begin{align*}
    \frac{w_{\max}}{w_{\min}}
    ~=~ 2^{\eps}\alpha \cdot ((1 + 3z / \eps) \cdot n^{2})^{1 + 1 / z}
    ~\le~ 32z^{2} \eps^{-2} n^{z + 5},
\end{align*}
given that $z \ge 1$, $0 < \eps < 1$, and $\alpha \in [1, n^{z + 1}]$.
Clearly, \Cref{prop:edge-weight:bound} follows, after scaling both parameters $w_{\min} \le w_{\max}$ and the new edge weights $w'(u, v)$.

\vspace{.1in}
\noindent
{\bf \Cref{prop:edge-weight:solution}.}
In the new metric space $(V, \dist')$, likewise, let $\cost'(V, C) \eqdef \sum_{v \in V} {\dist'}^{z}(v, C)$ be the new clustering objective and $\OPT' \eqdef \min_{C \in V^{k}} \cost'(V, C)$ be the new optimum.
It suffices to prove the following: For every solution $C \in V^{k}$ such that $\cost(V,C) > 2^{\eps}\alpha \cdot \OPT$,
\begin{align}
    \cost'(V, C)
    & ~>~ 2^{\eps}\alpha \cdot \OPT
    \label{eq:edge-weight:1} \\
    & ~\ge~ \alpha \cdot \OPT'.
    \label{eq:edge-weight:2}
\end{align}
We establish both \Cref{eq:edge-weight:1,eq:edge-weight:2} in the rest of this proof.

\vspace{.1in}
\noindent
{\bf \Cref{eq:edge-weight:1}.}
We focus on the case that $\dist'(v, C) < w_{\max}$, for every vertex $v \in V$; the opposite case is trivial $\cost'(V,C) \ge \max_{v \in V} {\dist'}^{z}(v, C) \ge w_{\max}^{z} = 2^{\eps}\alpha \cdot \costinitial \ge 2^{\eps}\alpha \cdot \OPT$.

Regarding the considered solution $C$ in the new metric space $(V, \dist')$, for every vertex $v \in V$, consider its shortest path $(u_{0} \equiv v), u_{1}, \dots, (u_{\ell} \in C)$.
In our focal case, all these edges $(u_{i - 1}, u_{i})$, for $i \in [\ell]$, have upper-bounded new weights $w'(u_{i - 1}, u_{i}) \le \dist'(v, C) < w_{\max}$;
given our construction, their original weights must be smaller $w(u_{i - 1}, u_{i}) \le w'(u_{i - 1}, u_{i}) < w_{\max}$.
As a consequence,
\begin{align*}
    \dist(v, C)
    ~\le~ \sum_{i\in [\ell]} w(u_{i - 1}, u_{i})
    ~\le~ \sum_{i\in [\ell]} w(u_{i - 1}, u_{i})
    ~=~ \dist'(v, C).
\end{align*}
This equation holds for every vertex $v \in V$.
Hence, we can infer \Cref{eq:edge-weight:1} as follows.
\begin{align*}
    \cost'(V, C)
    ~=~ \sum_{v \in V} {\dist'}^{z}(v, C)
    ~\ge~ \sum_{v \in V} \dist^{z}(v, C)
    ~=~ \cost(V, C)
    ~>~ 2^{\eps}\alpha \cdot \OPT.
\end{align*}
Here, the last step applies the premise of the considered solution $C \in V^{k}$.

\vspace{.1in}
\noindent
{\bf \Cref{eq:edge-weight:2}.}
Regarding the optimal solution $C^{*}$ in the original metric space $(V, \dist)$, for every vertex $v \in V$, consider its original shortest path $(v \equiv u_{0}), u_{1}, \dots, (u_{\ell} \in C^{*})$; these edges $(u_{i - 1}, u_{i})$, for $i \in [\ell]$, all have upper-bounded original weights $w(u_{i - 1}, u_{i}) \le (\OPT)^{1 / z} \le (\costinitial)^{1 / z} \le w_{\max}$.
Given our construction, the new metric space $(V, \dist')$ ensures that
\begin{align*}
    \dist'(v, C^{*})
    ~\le~ \sum_{i\in [\ell]} w'(u_{i - 1}, u_i)
    ~\le~ \sum_{i\in [\ell]} \big(w(u_{i - 1}, u_i) + w_{\min}\big)
    ~\le~ \dist(v, C^{*}) + n w_{\min}.
\end{align*}
This equation holds for every vertex $v \in V$.
Hence, we can infer \Cref{eq:edge-weight:2} as follows.
\begin{align*}
    \OPT'
    ~\le~ \sum_{v \in V} {\dist'}^{z}(v,C^{*})
    & ~\le~ \sum_{v \in V} \big(\dist(v, C^{*}) + n w_{\min}\big)^{z} \\
    & ~\le~ (1 + \tfrac{\eps}{3z})^{z - 1} \cdot \OPT
    + n \cdot (1 + \tfrac{3z}{\eps})^{z - 1} \cdot (n w_{\min})^{z} \\
    & ~\le~ (1 + \tfrac{\eps}{3z})^{z - 1} \cdot \OPT
    + \tfrac{\eps}{3} \cdot \OPT \\
    & ~\le~ 2^{\eps} \cdot \OPT.
\end{align*}
Here, the second step applies \Cref{lem:triangle}, by setting the parameter $\lambda = \frac{\eps}{3z}$.
The third step uses the formula of $w_{\min}$ and the fact $\costinitial \le n^{z + 1} \cdot \OPT$ (\Cref{prop:naive_solution}).
And the last step follows from elementary algebra.
Combining both \Cref{eq:edge-weight:1,eq:edge-weight:2} gives \Cref{prop:edge-weight:solution}.

This finishes the proof of \Cref{prop:edge-weight}.
\end{proof}

\section{Missing Proofs in \texorpdfstring{\Cref{sec:applications}}{}}
\label{sec:spanner}

\subsection{\texorpdfstring{Proof of \Cref{thm:spanner-lsh}}{}}
\label{sec:proof-spanner-lsh}

\begin{restate}[{\Cref{thm:spanner-lsh}}]
\begin{flushleft}
Fix any $c\ge 1$. 
Suppose that for any $r > 0$, the underlying metric space $(V, \dist)$ admits an $(r, c r, p_{1}, p_{2})$-sensitive family of hash functions such that 
\begin{itemize}
    \item $1 > p_{1} > p_{2} > 0$, $p_1^{-1} = o(n)$ and $\rho \eqdef \frac{\log(1 / p_{1})}{\log(1 / p_{2})} < 1/3$; and
    \item each function in this family can be sampled and evaluated in time $\TLSH$.
\end{itemize}
There exists a randomized algorithm that, given as input an $n$-point dataset $X \subseteq V$ with aspect ratio $\Delta \ge 1$, runs in time
\begin{equation*}
    O\left(\frac{n^{1+3\rho}\log^2(n)\log(\Delta)}{ p_1\log(1/p_2)}\cdot \TLSH\right)
\end{equation*}
to compute an $O(c)$-spanner for $X$ with 
\begin{equation*}
    O\left(\frac{n^{1 + 3\rho}\log(n)\log(\Delta)}{ p_1}\right)
\end{equation*}
edges.
The algorithm succeeds with probability $\ge 1 - \frac{1}{n^{\Theta(1)}}$.
\end{flushleft}
\end{restate}

\begin{proof}
    We assume w.l.o.g. that $\min_{u\neq v\in V} \dist(u,v) = 1$ and $\max_{u,v\in V} \dist(u,v) = \Delta$. Let $L\eqdef \log(\Delta) + 1\le O(\log(\Delta))$.

    \vspace{.1in}
    \noindent
    {\bf A weaker spanner.} We begin by constructing a weaker spanner such that, for a fixed pair of points $u, v \in X$, the resulting spanner preserves the distance between $u$ and $v$ with probability $n^{-\Omega(1)}$.
    The construction is as follows. 
    
    Initially, we construct a graph $H = (V,\emptyset)$.
For every integer scale $0 \leq \ell < L$, we perform the following:
We sample $M \eqdef \lceil \log(n^3) / \log(1/p_2) \rceil$ independent hash functions from the $(2^{\ell}, c\cdot 2^{\ell}, p_1, p_2)$-sensitive hashing family, denoted by $\{h^{(\ell)}_{i}\}_{i\in [M]}$, and define a new hash function $\hat{h^{(\ell)}}$ based on these, specifically: 
    \begin{equation}
        \label{eq:LSH-combine}
        \forall x\in V,\quad \hat{h^{(\ell)}}(x) := \left(h^{(\ell)}_1(x),\dots, h^{(\ell)}_{M}(x)\right).
    \end{equation}
    This hash function maps points in $X$ into buckets
    \begin{equation*}
        \calB^{(\ell)}:=\left\{ \hat{h^{(\ell)}}^{-1}(y): y \in \hat{h^{(\ell)}}(X)  \right\}.
    \end{equation*}
    Finally, for each bucket $B\in \calB^{(\ell)}$, we select an arbitrary point $s_B\in B$ as the center point, and add an edge $(s,u)$ with weight $\dist(s,u)$ to the graph $H$ for every point $u\in B - s_B$.
    
    The following claim summarizes the weaker spanner guarantee of the above construction.
    
    \begin{claim}
        \label{claim:weaker-spanner}
        For every pair of points $u,v\in X$, the graph $H$ constructed as above satisfies the following with probability $\Theta(p_1\cdot n^{-3\rho})$:
        \begin{equation*}
            \dist_H(u,v) \le 8c\cdot \dist(u,v).
        \end{equation*}
    \end{claim}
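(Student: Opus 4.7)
The plan is to identify a single scale $\ell^\star$ at which the LSH resolution straddles $d \eqdef \dist(u,v)$, lower-bound the probability that $u$ and $v$ land in a common bucket at that scale, and then use a high-probability ``no-far-collision'' event to ensure the bucket center is close to both endpoints, so that the edges incident to the center in $H$ form a short path. Concretely, I would set $\ell^\star \eqdef \lceil \log_2 d\rceil \in \{0,1,\dots,L-1\}$, which satisfies $d \le 2^{\ell^\star} < 2d$. The scale-$\ell^\star$ family is $(2^{\ell^\star}, c\cdot 2^{\ell^\star}, p_1, p_2)$-sensitive, so since $d \le 2^{\ell^\star}$ each component hash $h_i^{(\ell^\star)}$ collides on $(u,v)$ with probability at least $p_1$, and independence across $i \in [M]$ gives $\Pr[\hat h^{(\ell^\star)}(u) = \hat h^{(\ell^\star)}(v)] \ge p_1^M$. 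Plugging in $M = \lceil \log(n^3)/\log(1/p_2)\rceil$ together with the identity $p_1 = p_2^{\rho}$ (which is immediate from the definition of $\rho$) simplifies this to a lower bound of $p_1 \cdot n^{-3\rho}$ on this ``good collision'' event $\calE_1$.

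Next I would define the ``bad collision'' event $\calE_2$ as the existence of any pair $x,y \in X$ with $\dist(x,y) \ge c\cdot 2^{\ell^\star}$ that collides under $\hat h^{(\ell^\star)}$. The LSH guarantee combined with independence across the $M$ components gives collision probability at most $p_2^M \le n^{-3}$ per such pair, and a union bound over at most $n^2$ candidate pairs yields $\Pr[\calE_2] \le n^{-1}$. On $\calE_1 \cap \calE_2^c$, $u$ and $v$ lie in a common bucket $B \in \calB^{(\ell^\star)}$ with center $s_B$, and every intra-bucket pair must have distance strictly less than $c\cdot 2^{\ell^\star} < 2cd$; hence both edges $(s_B,u)$ and $(s_B,v)$ added to $H$ have weights strictly below $2cd$ (or, in the degenerate case $s_B \in \{u,v\}$, the single edge $(u,v)$ has weight $d$), certifying $\dist_H(u,v) < 4cd \le 8cd$. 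A final union bound then gives $\Pr[\calE_1 \cap \calE_2^c] \ge p_1 n^{-3\rho} - n^{-1} = \Theta(p_1 n^{-3\rho})$ under the hypotheses $\rho < 1/3$ and $p_1^{-1} = o(n)$.

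The main point I expect to verify carefully is this final absorption of the union-bound slack: namely, that $n^{-1}$ is of strictly smaller order than $p_1 n^{-3\rho}$, so the asymptotic rate stays $\Theta(p_1 n^{-3\rho})$. This is precisely where the combination $\rho < 1/3$ (so that $n^{1-3\rho}$ is polynomially large) and $p_1^{-1} = o(n)$ (so the good-collision rate dominates the failure rate coming from far-pair collisions) together do the work. Apart from that, every step is a routine calculation based on independence of the $M$ concatenated hashes and a single union bound over pairs in $X$.
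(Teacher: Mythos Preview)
Your overall strategy matches the paper's, but the final absorption step does not go through as written. You define $\calE_2$ as ``some pair $x,y\in X$ with $\dist(x,y)\ge c\cdot 2^{\ell^\star}$ collides'' and union-bound over all $\binom{n}{2}$ pairs to get $\Pr[\calE_2]\le n^{-1}$. You then claim $p_1 n^{-3\rho}-n^{-1}=\Theta(p_1 n^{-3\rho})$ from $\rho<1/3$ and $p_1^{-1}=o(n)$. This is false in general: it would require $p_1^{-1}=o(n^{1-3\rho})$, which is strictly stronger than $p_1^{-1}=o(n)$. For instance, take $\rho=1/4$ and $p_1^{-1}=n^{1/2}$ (with $p_2=n^{-2}$, so all stated hypotheses hold); then $p_1 n^{-3\rho}=n^{-5/4}$ while $n^{-1}$ dominates it, and your lower bound is negative.

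The fix is exactly what the paper does: weaken $\calE_2$ to concern only points far from the fixed vertex $u$, i.e., ``some $u'\in X$ with $\dist(u,u')>c\cdot 2^{\ell^\star}$ collides with $u$''. This is a union bound over at most $n$ points, giving $\Pr[\calE_2]\le n\cdot n^{-3}=n^{-2}$. On the complement, every point of the bucket lies within $c\cdot 2^{\ell^\star}$ of $u$, so the bucket has diameter at most $2c\cdot 2^{\ell^\star}\le 4cd$, and the two star edges yield $\dist_H(u,v)\le 8cd$. Now the absorption does follow from the stated hypotheses: $p_1^{-1}n^{3\rho}=o(n)\cdot n^{3\rho}=o(n^{1+3\rho})=o(n^2)$ since $\rho<1/3$, hence $n^{-2}=o(p_1 n^{-3\rho})$.
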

    
    We defer the proof of \Cref{claim:weaker-spanner} to later, and now we show how to use it to construct a spanner. The construction is simple: we construct $N = O(p_1^{-1}\cdot n^{3\rho} \cdot \log(n))$ independent weaker spanners using the above construction. Denote these weaker spanners by $H_i = (X, E_i)$, for $i \in [N]$. Our final spanner is then $G = (X, \bigcup_{i \in [N]} E_i)$. We then verify that with high probability, $G$ preserves every pairwise distances.
    
    To see this, consider a fixed pair of points $u, v \in V$. 
    By \Cref{claim:weaker-spanner}, we have
    \begin{equation*}
        \Pr\bigg[\forall i\in [N], \dist_{H_i}(u,v) > 8c \cdot \dist(u,v) \bigg] \le \left(1 - \Theta(p_1\cdot n^{-3\rho}) \right)^N \le e^{-\Theta(p_1\cdot n^{-3\rho}\cdot N)}.
    \end{equation*}
    Notice that $\dist_G(u,v) \le \dist_{H_i}(u,v)$ for all $i\in [N]$. The above implies that 
    \begin{equation*}
        \Pr\bigg[\dist_{G}(u,v) > 8c \cdot \dist(u,v) \bigg] \le e^{-\Theta(p_1\cdot n^{-3\rho}\cdot N)}.
    \end{equation*}
    Hence, by applying the union bound over all pairs of points in $X$ and choosing $N = O(p_1^{-1}\cdot n^{3\rho} \cdot \log(n))$ to be sufficiently large, we obtain
    \begin{equation*}
        \Pr\bigg[\exists u,v\in X, \dist_G(u,v) > 8c \cdot \dist(u,v) \bigg]\le e^{-\Theta(p_1\cdot n^{-3\rho}\cdot N)}\cdot n^2 \le \frac{1}{n^{\Theta(1)}}.
    \end{equation*}
    Thus $G$ is a $8c$-spanner with high probability. Finally, according to the construction, the edge number of $G$ is at most 
    \begin{equation*}
        N\cdot L\cdot n = O\left(\frac{n^{1 + 3\rho}\log(\Delta)\log(n)}{ p_1}\right),
    \end{equation*}
    and the running time of constructing $G$ is dominated by
    \begin{equation*}
        N\cdot L\cdot M\cdot n\cdot \TLSH = O\left(\frac{n^{1+3\rho}\log^2(n)\log(\Delta)}{ p_1\log(1/p_2)}\right)\cdot \TLSH.
    \end{equation*}
    
    This finishes the proof.
    \end{proof}
    \begin{proof}[Proof of \Cref{claim:weaker-spanner}]
        Suppose that $u \neq v$ and $2^{\ell - 1} \le \dist(u, v) \le 2^{\ell}$ for some $0 \le \ell < L$. 
        Note that such an integer $\ell$ must exist, since $1 \le \dist(u, v) \le \Delta \le 2^{L - 1}$.
        Let us consider the buckets $\calB^{(\ell)}$ at scale $\ell$, and the following two events:
    \begin{itemize}
        \item {\bf Event A:} $\hat{h^{(\ell)}}(u) =\hat{h^{(\ell)}}(v)$, i.e., $u$ and $v$ are mapped into the same bucket. By the definition of $\hat{h^{(\ell)}}$ (see \eqref{eq:LSH-combine}) and the definition of $(2^\ell, c\cdot 2^\ell, p_1, p_2)$-sensitive (\Cref{def:lsh}), we have 
        \begin{equation*}
            \Pr\left[\hat{h^{(\ell)}}(u) =\hat{h^{(\ell)}}(v)\right] = \Pr\left[\forall i\in [M], h^{(\ell)}_{i}(u) = h_{i}^{(\ell)}(v)\right] \ge p_1^M \ge p_1\cdot n^{-3\rho}
        \end{equation*}
        
        \item {\bf Event B:} Every point $u' \in X$ with $\dist(u, u') > c \cdot 2^{\ell}$ is not mapped to the same bucket as $u$.
        Fix such a point $u'\in X$ with $\dist(u, u') > c\cdot 2^{\ell}$. Again, by the definition of $\hat{h^{(\ell)}}$ (see \eqref{eq:LSH-combine}) and the definition of $(2^\ell, c\cdot 2^\ell, p_1, p_2)$-sensitive (\Cref{def:lsh}), we have 
        \begin{equation*}
            \Pr\left[\hat{h^{(\ell)}}(u) =\hat{h^{(\ell)}}(u')\right] = \Pr\left[\forall i\in [M], h^{(\ell)}_{i}(u) = h_{i}^{(\ell)}(u')\right]\le p_2^M \le n^{-3}.
        \end{equation*}
    \end{itemize}
    Notice that {\bf Event B} is a consequence of the event ``all such distant points are mapped into buckets different from that of $u$''. By the union bound over all such distant points (at most $n$ points), we have that the latter event happens with probability at least $1 - n^{-2}$, thus $\Pr[{\bf Event B}] \ge 1 - n^{-2}$.
    
    Now, assume that both {\bf Event A} and {\bf Event B} occur simultaneously, which happens with probability at least $p_1\cdot n^{-3\rho} - n^{-2} = \Theta(p_1n^{-3\rho})$ (recalling that $p_1^{-1} = o(n)$ and $\rho < 1 / 3$). 
Let $B \in \calB^{(\ell)}$ be the bucket that contains both $u$ and $v$. {\bf Event B} implies that $\diam(B) \le c\cdot 2^{\ell + 1}$. Let $s_B$ denote the selected center point from $B$.
    According to our construction, the edges $(u, s_B)$ and $(v, s_B)$ are added to the graph $H$, and therefore
    \begin{equation*} 
        \dist_H(u, v) \le \dist(u, s_B) + \dist(v, s_B) \le 2\cdot c\cdot 2^{\ell + 1} \le 8c \cdot \dist(u, v),
    \end{equation*} 
    where the last step follows from $\dist(u, v) \ge 2^{\ell - 1}$. This finishes the proof.
    \end{proof}

\subsection{\texorpdfstring{Proof of \Cref{cor:lsh}}{}}
\label{sec:proof-cor-lsh}

\begin{restate}[{\Cref{cor:lsh}}]
    Given constants $c\ge 1$ and $z\ge 1$, there is a randomized $O(c^z)$-approximation {\kzC} algorithm that has success probability $\ge 1 - n^{-\Theta(1)}$ and worst-case running time:
    \begin{itemize}
        \item {\bf (Euclidean space)} $O(dn^{1 + 1 / c^2 + o(1)}\log(\Delta))$ if the underlying metric space $(V,\dist) = (\R^d, \ell_2)$ is a $d$-dimensional Euclidean space;
        
        \item {\bf ($\ell_p$ metric)} $O(dn^{1 + 1 / c + o(1)}\log(\Delta))$ if the underlying metric space $(V,\dist) = (\R^d, \ell_p)$ for constant $p\in [1,2)$;
        
        \item {\bf (Jaccard metric)} $O(n^{1 + 1 / c + o(1)}\log(\Delta)\cdot |U|^2)$ if the underlying metric space is a Jaccard metric $(2^U, \dist)$, where $U$ is some universe and $\dist$ is the Jaccard distance.
    \end{itemize}
\end{restate}

\begin{proof}
We prove each case separately by plugging the corresponding LSH bounds into \Cref{cor:metric-lsh}.

\vspace{.1in}
\noindent
{\bf Euclidean spaces.} 
We employ the LSH construction from~\cite{AndoniI06}, which provides an $(r, c \cdot r, p_1, p_2)$-sensitive hashing family with $p_1^{-1} = 2^{O(\log^{2/3}n)}$, $(\log(1/p_2))^{-1} = O(1)$, $\rho = \frac{\log(1/p_1)}{\log(1/p_2)} \le 1/c^2 + O(\log\log n / \log^{1/3} n)$, and an evaluation time of $\TLSH = d\cdot 2^{O(\log^{2/3}n \log\log n)}$, for any $r > 0$. 

Note that Condition~\ref{thm:spanner-lsh:condition1} of \Cref{thm:spanner-lsh} is satisfied when $c > \sqrt{3}$ and $n$ is sufficiently large.
By plugging this LSH construction into \Cref{cor:metric-lsh} and scaling $c$ by a constant, we obtain an $O(c^z)$-approximation algorithm that runs in time $O(dn^{1 + 1 / c^2 + o(1)}\log(\Delta))$, where the factor $n^{o(1)}$ is $n^{O(\log\log n / \log^{1/3} n)}$.

\vspace{.1in}
\noindent
{\bf $\ell_p$ metric spaces.}
In an $\ell_p$ metric space with $p \in [1,2]$, we use the LSH construction from~\cite{DatarIIM04}, which provides an $(r, c \cdot r, p_1, p_2)$-sensitive hashing family, where both $p_1$ and $p_2$ are constants that depend only on $c$ and $p$. This construction achieves $\rho = \frac{\log(1/p_1)}{\log(1/p_2)} \le 2 c^{-1}$, with an evaluation time of $O(d)$ for any constant $c \ge 1$.
By plugging this LSH construction into \Cref{cor:metric-lsh} and scaling $c$ by a constant, we obtain a running time of $O\left(d n^{1 + 1 / c + o(1)}\cdot \log(\Delta)\right)$,
where the factor $n^{o(1)}$ is $2^{O(\sqrt{\log n\log\log n})}$.

\vspace{.1in}
\noindent
{\bf Jaccard metric spaces.}
In a Jaccard metric space, for any sets $A, B \in V \eqdef 2^U$, where $U$ is a universe, the distance is defined as $\dist(A,B) = 1 - \frac{|A\cap B|}{|A\cup B|}$. Therefore, we must have $\min_{A\neq B \in V} \dist(A,B)\ge \frac{1}{|U|}$, and thus $\Delta \le |U|$.
We use a classical \emph{min-hash}~\cite{Broder97, BroderGMZ97}, which is proven to be $(r, c\cdot r, 1 - r, 1 - cr)$-sensitive with $\rho = \frac{\log(1/p_1)}{\log(1/p_2)} \le 1 /c $  for any constant $c\ge 1$ and $r\in [\frac{1}{|U|},\frac{1}{2c}]$ (see e.g. \cite[Proposition 33]{BateniEFHJMW23}). 
The evaluation time for each hash function is $O(|U|)$.
However, since it only works for a specific range of values of $r$, the spanner construction of \Cref{thm:spanner-lsh} cannot be directly applied as a black box. To address this, we use a slightly modified spanner construction, which we sketch below:

For any $c > 3$, we apply the spanner construction from \Cref{thm:spanner-lsh} only for integer scales $\ell$ such that $\frac{1}{|U|} \le 2^\ell \le \frac{1}{2c}$. For every such scale $\ell\ge 0$, we use the $(2^{\ell}, c \cdot 2^{\ell}, 1 - 2^{\ell}, 1 - c \cdot 2^{\ell})$-sensitive min-hash family. 
This min-hash family is an LSH that satisfies $\rho = \frac{\log(1/p_1)}{\log(1/p_2)} \le \frac{1}{c}$, $p_1^{-1} \le \frac{1}{1 - 1/(2c)} = O(1)$, and $(\log(1/p_2))^{-1} \le \left(\log\left(\frac{1}{1 - c/|U|}\right)\right)^{-1} \le O(|U|)$.
Therefore, this construction gives a spanner $G$ with
\begin{equation*}
    O\left(\frac{n^{1 + 3\rho} \log(n) \log(\Delta)}{{p_1}}\right) = O\left(n^{1+3/c} \log(n) \log(\Delta)\right)
\end{equation*}
edges, and runs in time 
\begin{equation*}
O\left(\frac{n^{1+3\rho}\log^2(n)\log(\Delta)}{ p_1\log(1/p_2)}\cdot \TLSH\right) =  O\left( n^{1+3/c} \log^2(n) \log(\Delta)\cdot |U|^2\right).
\end{equation*}

Using the same argument as in \Cref{thm:spanner-lsh}, this spanner $G$ satisfies that for any $u, v \in V$ with $\dist(u, v) \le \frac{1}{4c}$, we have $\dist_G(u, v) \le O(c)\cdot \dist(u, v)$. However, for pairs of points $u, v \in V$ with $\dist(u, v) > \frac{1}{4c}$, we do not have such guarantees, and these two points may not even be connected. 
To address this, we pick an arbitrary point $s \in V$ and add an edge $(s, v)$ to $G$ for every $v \in P$, weighted by $\dist(s, v)$. This adds $n$ edges to $G$. After this modification, for any pair of points $u, v \in V$ that are far apart, with $\dist(u, v) > \frac{1}{4c}$, we have
\begin{equation*}
    \dist_G(u,v) \le \dist_G(u,s) + \dist_G(v,s) = \dist(u,s) + \dist(v,s)\le 2\le O(c)\cdot \dist(u,v),
\end{equation*}
where we use the fact that $\dist(u, s) \le 1$ in the Jaccard metric. Therefore, $G$ is now a $O(c)$-spanner. 

Finally, following a similar approach as in the Euclidean and $\ell_p$ spaces, we run \Cref{cor:general-graph} on the spanner $G$ and obtain a running time of $O\left(n^{1 + 1 / c + o(1)}\cdot \log(\Delta) \cdot |U|^2 \right)$,
where the factor $n^{o(1)}$ is $2^{O(\sqrt{\log n\log\log n})}$.
\end{proof}

\end{document}